\def\mB{\mathcal{B}}
\def\mC{\mathcal{C}}
\def\mD{\mathcal{D}}
\def\mE{\mathcal{E}}
\def\mF{\mathcal{F}}
\def\mH{\mathcal{H}}
\def\mI{\mathcal{I}}
\def\mK{\mathcal{K}}
\def\mN{\mathcal{N}}
\def\mS{\mathcal{S}}
\def\mT{\mathcal{T}}
\def\mY{\mathcal{Y}}
\newtheorem{theorem}{Theorem}
\newtheorem{lemma}[theorem]{Lemma}
\newtheorem{corollary}[theorem]{Corollary}
\newtheorem{proposition}[theorem]{Proposition}
\newtheorem{claim}[theorem]{Claim}
\theoremstyle{remark}
\newtheorem{remark}[theorem]{Remark}
\def\Box{\hbox{\hskip 1pt \vrule width 4pt height 8pt depth 1.5pt \hskip 1pt}}
\renewenvironment{proof}[1][Proof. ]{\medskip\noindent\textbf{#1}}{{}\hfill$\Box$\\}
\newcommand{\norm}[1]{\left\lVert#1\right\rVert}
\title{A Dense Neighborhood Lemma: Applications of Partial Concept Classes to Domination and Chromatic Number}
\author[1]{Romain Bourneuf}
\affil[1]{Univ. Bordeaux, CNRS, Bordeaux INP, LaBRI, UMR 5800, F-33400 Talence, France.}
\author[2]{Pierre Charbit}
\affil[2]{Université de Paris, CNRS, IRIF, F-75006, Paris, France.}
\author[3]{Stéphan Thomassé}
\affil[3]{Univ. Lyon, ENS de Lyon, UCBL, CNRS, LIP, France.}
\date{}
\begin{document}

\maketitle
\begin{abstract}
In its Euclidean form, the Dense Neighborhood Lemma (DNL) asserts that if $V$ is a finite set of points of $\mathbb{R}^N$ such that for each $v \in V$ the ball $B(v,1)$ intersects $V$ on at least $\delta |V|$ points, then for every $\varepsilon >0$, the points of $V$ can be covered with $f(\delta,\varepsilon)$ balls $B(v,1+\varepsilon)$ with $v \in V$.
DNL also applies to other metric spaces and to abstract set systems, where elements are compared pairwise with respect to (near) disjointness. 
In its strongest form, DNL provides an $\varepsilon$-clustering with size exponential in $\varepsilon^{-1}$, which amounts to a Regularity Lemma with 0/1 densities of some \emph{trigraph}.

Trigraphs are graphs with additional \emph{red} edges.  
They are natural instances of \emph{partial concept classes}, introduced by Alon, Hanneke, Holzman and Moran [FOCS 2021]. This paper is mainly a combinatorial study of the generalization of Vapnik-Cervonenkis dimension to partial concept classes. The main point is to show how trigraphs can sometimes explain the success of random sampling even though the VC-dimension of the underlying graph is unbounded. 
All the results presented here are effective in the sense of computation: they primarily rely on uniform sampling with the same success rate as in classical VC-dimension theory.

Among some applications of DNL, we show that $\left(\frac{3t-8}{3t-5}+\varepsilon\right)\cdot n$-regular $K_t$-free graphs have bounded chromatic number. Similarly, triangle-free graphs with minimum degree $n/3-n^{1-\varepsilon}$ have bounded chromatic number (this does not hold with $n/3-n^{1-o(1)}$). For tournaments, DNL implies that the domination number is bounded in terms of the fractional chromatic number. Also, $(1/2-\varepsilon)$-majority digraphs have bounded domination, independently of the number of voters.
\end{abstract}

\section{Introduction}

A classical way to form a graph from a metric space $(X,d)$ is to fix a threshold $\tau$ (usually 1) and connect two vertices by an edge if they are at distance at most $\tau$. 
To get a finite graph, we consider the induced subgraph on some finite subset $V$ of $X$. 
The study of the properties of such (\emph{threshold}) graphs is generally delicate as some edges or non-edges may be very close to the threshold and thus prone to being flipped by a small perturbation. 
A convenient way to overcome this is to consider \emph{trigraphs} instead of graphs, in which a new edge status, \emph{red edges}, is introduced. 
In general, a trigraph $T=(V,E,R)$ consists of a set $V$ of vertices, a set $E$ of edges, and a set $R$ of \emph{red} edges disjoint from $E$. 
Allowing a second status for edges offers a buffer zone which is very convenient to study graphs both from a structural and from an algorithmic perspective.

Trigraphs were first introduced in \cite{Chud06} to study Berge Graphs, and red edges represented undecided edges. Trigraphs also appear in the definition of twin-width~\cite{BKTW21}, and red edges represent errors occurring when both an edge and a non-edge connect two contracted sets of vertices. 
In both cases the red edges in some sense represent a frontier between the edges and the non-edges. This is exactly the nature of pairs of vertices at distance close to the threshold in a metric space. 
However, whereas red edges are neutral objects for Berge trigraphs and bad features for twin-width, in this paper we view them as assets to bridge the integrality gap. 
Indeed, we allow the red edges of a trigraph to be used to form a minimum dominating set (making it smaller) while not considering them when computing the fractional domination number. 
With this convention, the gap between fractional domination and domination is easier to bridge in trigraphs than in graphs.

In this paper, we will focus on two main types of trigraphs. The most intuitive ones, \emph{metric-trigraphs}, are obtained by selecting a metric space $(X,d)$, a threshold $\tau$ and a sensitivity $\varepsilon>0$, and connecting two points $x,y\in X$ by an edge if $d(x,y)\leq \tau$ and by a red edge if $\tau<d(x,y)\leq \tau+\varepsilon$. A less intuitive class of trigraphs, \emph{disjointness-trigraphs}, are more adapted to the study of graphs. 
They are based on set systems. 

Let us now illustrate the main point of the paper on the first result of the abstract, which, in the Euclidean setting, explores the structure of sets of points in which every neighborhood is dense.
Here, when $V$ is a set, a \emph{$V$-ball} is a ball with center in $V$. The next result is proved in \cref{sec:euclidean}. 

\begin{restatable}{lemma}{DNLeuclidean}\label{lem:realdensecor}
Let $V$ be a finite subset of $\mathbb{R}^N$. If every $V$-ball of radius 1 intersects $V$ on at least $\delta |V|$ points, then for all $\varepsilon >0$, the set $V$ can be covered by $\textup{poly}(\delta^{-1},\varepsilon^{-1})$ $V$-balls with radius $1+\varepsilon$.
\end{restatable}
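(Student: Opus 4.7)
The plan is to recast the problem in the partial-concept-class language and apply the sampling theorem for such classes developed earlier in the paper. I would consider the metric-trigraph $T=(V,E,R)$ with threshold $1$ and sensitivity $\varepsilon$: $uv \in E$ iff $d(u,v) \leq 1$ and $uv \in R$ iff $1 < d(u,v) \leq 1+\varepsilon$. To each $v \in V$ I associate the partial concept $c_v : V \to \{0,1,*\}$ equal to $1$ on $E$-neighbors, $0$ on non-neighbors, and $*$ on $R$-neighbors. The density hypothesis then reads $|c_v^{-1}(1)| \geq \delta|V|$, and by symmetry of $d$, covering $V$ by $V$-balls of radius $1+\varepsilon$ amounts to finding a small $S \subseteq V$ such that, for every $v \in V$, some $s \in S$ satisfies $c_v(s) \in \{1,*\}$; equivalently, $S$ must hit each set $\{w \in V : d(v,w) \leq 1+\varepsilon\}$, all of size at least $\delta|V|$.

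The central step is to bound the partial VC dimension of $\mathcal{F} = \{c_v : v \in V\}$ by $\mathrm{poly}(\varepsilon^{-1})$, independently of the ambient dimension $N$. I would use the classical quadratic lift $x \mapsto (x, \|x\|^2) \in \mathbb{R}^{N+1}$, under which a ball $B(v, r)$ pulls back to an affine halfspace and the red buffer $1 < d(v, \cdot) \leq 1+\varepsilon$ becomes a slab about the separating hyperplane. A set partially shattered by $\mathcal{F}$ is then fat-shattered by halfspaces in a Hilbert space at margin proportional to $\varepsilon$ (once the data diameter is normalized), and the classical orthogonalization bound $O(\varepsilon^{-2})$ on the fat-shattering dimension of halfspaces supplies the desired polynomial bound. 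This step is where the red edges earn their keep: unadorned balls of radius $1$ have VC dimension growing with $N$, but the $\varepsilon$-gap forces a dimension-free margin condition.

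With the partial VC dimension bounded by $d = \mathrm{poly}(\varepsilon^{-1})$, the conclusion follows from the sampling theorem for partial concept classes in the paper's DNL machinery: a uniformly random subset $S \subseteq V$ of size $\mathrm{poly}(d, \delta^{-1})$ hits every $B(v, 1+\varepsilon) \cap V$ with positive probability, and any such $S$ is the desired cover. The main obstacle will be the VC dimension bound, because the margin in the lifted space degrades with $\|v\|$, so the fat-shattering argument is not scale-invariant. I would handle this by a preliminary reduction to the bounded-diameter regime, using the density hypothesis itself: any set of points of $V$ pairwise at distance greater than $2$ has size at most $1/\delta$ (their radius-$1$ balls would otherwise be disjoint yet each contain at least $\delta|V|$ points), so $V$ is covered by at most $1/\delta$ clusters of diameter $O(1)$, inside each of which the lifted margin bound applies cleanly. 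Summing the cluster-wise covers preserves the $\mathrm{poly}(\delta^{-1}, \varepsilon^{-1})$ bound.
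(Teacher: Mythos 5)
Your proposal is correct in outline, and its skeleton matches the paper's: the same reduction to bounded diameter via a maximal $2$-separated subset of $V$ (at most $1/\delta$ centers, since their unit balls are disjoint and each holds $\geq \delta|V|$ points), followed by the trigraph/partial-concept $\delta$-net theorem (\cref{thm:triHW}, or \cref{thm:randomsample} for the sampling form) applied to the metric-trigraph with threshold $1$ and sensitivity $\varepsilon$. Where you genuinely diverge is the central VC-dimension bound. The paper proves it internally: it embeds each bounded-diameter piece into the Hamming cube by random Gaussian projections with random thresholds (\cref{lem:euclidean-to-hamming}), obtaining a Hamming-trigraph with sensitivity $\varepsilon^2$, and then invokes the random-coordinate-cut plus Hoeffding argument of \cref{thm:VCdim-Hamming}, which yields VC-dimension $O(\varepsilon^{-4})$. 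You instead use the quadratic lift $x \mapsto (x,\norm{x}^2)$, under which $d(v,x)\leq 1$ versus $d(v,x) > 1+\varepsilon$ becomes a bounded-norm affine functional separated by a functional gap of $\geq 2\varepsilon$; after the translation making $\norm{v}\leq O(1)$ this is a genuine margin, a trigraph-shattered set becomes a uniformly-witnessed $\Omega(\varepsilon)$-fat-shattered set for bounded-norm halfspaces, and the classical $O(R^2B^2/\gamma^2)$ fat-shattering bound gives VC-dimension $O(\varepsilon^{-2})$. This buys a cleaner and quantitatively better dependence on $\varepsilon$, at the price of importing an external result rather than staying within the paper's self-contained machinery (which the paper reuses for its Hamming, spherical and clustering variants). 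One detail to make explicit when writing it up: take the clusters of radius $3$ around the centers but only require domination of the vertices within distance $2$ of their center (every vertex of $V$ is such a vertex for some center, and its full unit ball then lies inside the cluster, giving the needed degree $\geq \delta|V_i|$); vertices in the annulus between radii $2$ and $3$ need not be dominated inside that cluster, which is exactly why the paper phrases the conclusion as a $\delta$-net rather than a dominating set of each $T_i$.
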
  

This Dense Neighborhood Lemma is a well-known result when the dimension $N$ is bounded, where it even holds with $\varepsilon = 0$.
Indeed, in that case the VC-dimension is bounded, hence any $\delta$-net provides such a covering. 
This is no longer true when $N$ is unbounded, hence the relaxation on the radius of the balls of the covering. 
The reason why it is possible to trade dimension for larger radius is easily interpreted in the associated metric-trigraph $T$ with threshold $1$ and sensitivity $\varepsilon$. 
The key observation is that red edges are needed to shatter a large subset of $T$ using neighborhoods.
A more intuitive point of view is that no large random-like bipartite subgraph $G$ of $T$ can be constructed without using red edges. 
The reason for this is that the gap of $\varepsilon$ between the length of edges and non edges would produce an excessively sparse cut in $G$. The key point then is that VC-dimension theory applies to trigraphs with no large shattered set, where we do not allow red edges when shattering a set. 
Hence we can again find $\delta$-nets, just like when the dimension $N$ is bounded.

We could have chosen another metric space. Natural examples include the \emph{spherical setting} where vertices belong to the unit sphere $\mathbb S^N$ and the distance between $x$ and $y$ is the angle $xOy$, or the \emph{Hamming setting} in which vertices are 0/1 valued words and the distance is the Hamming distance. All these cases admit equivalent statements of \cref{lem:realdensecor}. Hamming distance is well-suited for combinatorial applications (we can also show the main results of this paper via Hamming distance), but using disjointness-trigraphs gives better bounds and smoother proofs. 

Let $\mF$ be a set system on ground set $V$. Given a vertex $x$, we denote by $\mF_{x}$ the set of all sets in $\mF$ which contain $x$, and write $\mF_{xy}=\mF_x\cap \mF_y$.
We denote by $D_\varepsilon(x)$ the set $\{y\in V : |\mF_{xy}|\leq \varepsilon |\mF|\}$. When $\varepsilon =0$, we simply write $D(x)\coloneqq D_0(x)$. By extension, when $X \subseteq V$ we write  $D_\varepsilon(X)=\bigcup_{x\in X} D_\varepsilon (x)$. The notation $D$ stands for disjoint.
An \emph{$\varepsilon$-covering} of $\mF$ is a subset $X$ of vertices such that $D_\varepsilon(X)=V$.
The \emph{disjointness-ratio} of $\mF$ is the minimum of $|D(x)|/|V|$ over all $x$ in $V$.

\begin{restatable}{lemma}{DNLSanti}\label{lem:DNL-set-anti}
Every set system with disjointness-ratio $\delta$ has an $\varepsilon$-covering of size $O\left(\frac{1}{\varepsilon \delta}\cdot\log \frac{1}{\delta}\right)$.
\end{restatable}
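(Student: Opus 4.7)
The plan is to recast the covering problem as a $\delta$-net question for a suitable partial concept class on $V$ and to bound its partial VC-dimension by $1/\varepsilon$; the conclusion will then follow from the partial-concept-class analogue of the $\varepsilon$-net theorem.

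For each $v \in V$, I would define a partial concept $c_v\colon V \to \{0,1,\ast\}$ by setting $c_v(u) = 1$ if $\mF_{uv} = \emptyset$, $c_v(u) = 0$ if $|\mF_{uv}| > \varepsilon|\mF|$, and $c_v(u) = \ast$ otherwise. The disjointness-ratio assumption then reads: the positive region $c_v^{-1}(1) = D(v)$ has measure at least $\delta$ under the uniform distribution on $V$. Since $u \in D_\varepsilon(v)$ is a symmetric relation, an $\varepsilon$-covering is exactly a set $X$ that meets $c_v^{-1}(\{1,\ast\}) = D_\varepsilon(v)$ for every $v$, i.e.\ a partial $\delta$-net for the family $\mC = \{c_v : v \in V\}$.

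The heart of the argument is to show that the (partial) VC-dimension of $\mC$ is strictly less than $1/\varepsilon$. Assume that $u_1,\ldots,u_k \in V$ are shattered by concepts $c_{v_S}$, $S \subseteq [k]$, giving $\mF_{u_i} \cap \mF_{v_S} = \emptyset$ for $i \in S$ and $|\mF_{u_i} \cap \mF_{v_S}| > \varepsilon|\mF|$ for $i \notin S$. Writing $I(F) = \{i \in [k] : u_i \in F\}$, I would double-count
\[
\Sigma \;=\; \sum_{S \subseteq [k]}\;\sum_{i \notin S} |\mF_{u_i} \cap \mF_{v_S}|.
\]
Summing the negative constraints yields $\Sigma > \varepsilon|\mF|\cdot k\cdot 2^{k-1}$, using $\sum_S(k-|S|) = k\cdot 2^{k-1}$. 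For the matching upper bound, whenever $v_S \in F$ the positive constraint forces $I(F)\cap S = \emptyset$ (an index $i \in I(F)\cap S$ would place $F$ in $\mF_{u_i v_S}$), so
\[
\Sigma \;=\; \sum_{F \in \mF}|I(F)|\cdot|\{S : v_S \in F\}| \;\leq\; \sum_{F \in \mF} |I(F)|\cdot 2^{k-|I(F)|} \;\leq\; |\mF|\cdot 2^{k-1},
\]
where the last step uses the elementary bound $j\cdot 2^{k-j} \leq 2^{k-1}$ for $j \in \{0,\ldots,k\}$. Comparing the two estimates yields $\varepsilon k < 1$.

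With partial VC-dimension $d < 1/\varepsilon$ in hand, I would invoke the partial-concept-class version of the $\varepsilon$-net theorem (a uniform-sampling statement the paper establishes in its foundational section on partial concepts): a random uniform sample from $V$ of size $O\!\bigl(\tfrac{d}{\delta}\log\tfrac{1}{\delta}\bigr) = O\!\bigl(\tfrac{1}{\varepsilon\delta}\log\tfrac{1}{\delta}\bigr)$ is, with positive probability, a partial $\delta$-net for $\mC$, which by the reformulation above is an $\varepsilon$-covering. The delicate point in the plan is the double-counting step: it relies crucially on the positive value of $c_v$ meaning \emph{strict} disjointness (to conclude $I(F)\cap S = \emptyset$ and thus $|I(F)\setminus S| = |I(F)|$), together with the fact that $j\cdot 2^{k-j}$ is maximized at $j \in \{1,2\}$; weakening the positive case to an almost-disjointness would break both of these and lose the $2^{k-1}$ factor.
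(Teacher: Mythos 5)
Your proposal is correct and follows essentially the same route as the paper: you build the disjointness partial concept class (the paper's disjointness-trigraph with sensitivity $\varepsilon$), bound its VC-dimension by $1/\varepsilon$, and then invoke the Haussler--Welzl-type $\delta$-net theorem for partial concepts (\cref{thm:triHW}/\cref{thm:HW-trihyper}), exactly as the paper does. Your double-counting of $\Sigma$ is just a deterministic rephrasing of the paper's proof of \cref{lem:VCdim-disj-trigraph} (random set $S\in\mF$ versus summing over $F\in\mF$, with $|I(F)|\cdot 2^{k-|I(F)|}\leq 2^{k-1}$ playing the role of the bound on complete bipartite subgraphs), so there is no substantive difference.
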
 

We can apply \cref{lem:DNL-set-anti} in the particular case where the set system is the set of neighborhoods of a graph $G$. For any $\varepsilon\geq 0$ and any vertex $v$, define $D_\varepsilon(v)=\{u\in V(G):\ |N(v)\cap N(u)|\leq \varepsilon |V(G)|\}$.

\begin{restatable}{lemma}{DNLGanti}\label{lem:DNL-graph-anti}
Let $G$ be a graph such that $|D_0(v)|\geq \delta |V(G)|$ for any $v\in V(G)$.  Then for any $\varepsilon>0$ there exists a set $X$ of size $O\left(\frac{1}{\varepsilon \delta}\cdot\log \frac{1}{\delta}\right)$ such that $\bigcup_{x\in X} D_\varepsilon(x)=V$.
\end{restatable}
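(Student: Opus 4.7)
The plan is to realize \cref{lem:DNL-graph-anti} as a direct specialization of \cref{lem:DNL-set-anti}, applied to the family of open neighborhoods of $G$ viewed as a set system on the vertex set.

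Concretely, I would set $V\coloneqq V(G)$ and let $\mF$ be the family $\{N(v): v\in V(G)\}$, regarded as an indexed family (equivalently, a multiset) so that $|\mF|=|V(G)|=n$; this is needed so that vertices with identical neighborhoods are still counted with multiplicity. Next I would check that, under this identification, the two notions of $D_\varepsilon$ coincide. For $x,y\in V$, the members of $\mF$ containing both $x$ and $y$ are exactly the neighborhoods $N(v)$ with $v\in N(x)\cap N(y)$, so $|\mF_{xy}|=|N(x)\cap N(y)|$. Dividing through by $|\mF|=n=|V(G)|$, the set-system threshold $|\mF_{xy}|\leq \varepsilon|\mF|$ becomes $|N(x)\cap N(y)|\leq \varepsilon |V(G)|$, which is precisely the graph definition of $D_\varepsilon(x)$ given just above the statement. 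In particular $D_0$ in the two senses coincide as well, so the graph hypothesis $|D_0(v)|\geq \delta |V(G)|$ is exactly the statement that $\mF$ has disjointness-ratio at least $\delta$.

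Having matched the hypotheses, I would then invoke \cref{lem:DNL-set-anti} on $\mF$ to obtain an $\varepsilon$-covering $X\subseteq V$ of size $O\!\left(\frac{1}{\varepsilon\delta}\log\frac{1}{\delta}\right)$, and the translation of definitions from the previous paragraph turns this covering into the set $X$ required by the lemma: $\bigcup_{x\in X}D_\varepsilon(x)=V$.

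There is no real obstacle here beyond bookkeeping; the one subtlety worth flagging is the multiset issue. If \cref{lem:DNL-set-anti} is read as requiring a family of pairwise distinct sets, one should either note that its proof is indifferent to repetitions (the quantities $|\mF_{xy}|/|\mF|$ are defined identically for multisets), or argue directly that identifying vertices with equal neighborhoods only shrinks $n$ and can only make the problem easier, after which the lemma applies to the reduced, genuinely set-valued family.
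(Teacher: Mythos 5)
Your proposal is correct and is essentially the paper's own argument: the paper proves \cref{lem:DNL-graph-anti} precisely by taking $\mF=\{N(v):v\in V(G)\}$ (implicitly as an indexed family, so $|\mF|=n$ and $|\mF_{xy}|=|N(x)\cap N(y)|$) and invoking \cref{lem:DNL-set-anti}. Your first resolution of the multiset point is the right one — the proof of \cref{lem:DNL-set-anti} is indifferent to repeated sets — whereas the fallback of collapsing equal neighborhoods would change the threshold $\varepsilon|\mF|$ and does not obviously recover the bound $|N(x)\cap N(y)|\leq\varepsilon|V(G)|$, so it is best dropped.
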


Note that \cref{lem:DNL-graph-anti} does not only apply in the linear-density setting, but also provides meaningful information to all graphs in which the minimum degree of disjointness is at least $\log^2 n$.
An application of \cref{lem:DNL-graph-anti} to graph theory is exemplified in the following result, first proved by Thomassen~\cite{CT02} for $\varepsilon>0$. The very short proof shows how DNL naturally applies to combinatorics.

\begin{restatable}{theorem}{thomassen}\label{thm:thomassen}
Every $n$-vertex triangle-free graph $G=(V,E)$ with minimum degree $(1/3+\varepsilon)n$ has chromatic number $O\left(\frac{1}{\varepsilon}\right)$.
\end{restatable}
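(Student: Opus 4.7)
My plan is to apply \cref{lem:DNL-graph-anti} directly, with the triangle-free hypothesis supplying both the disjointness premise and, less obviously, the independence of the pieces of the resulting cover. First, observe that for any edge $xu \in E$, a common neighbor of $x$ and $u$ would form a triangle, so $N(x) \cap N(u) = \emptyset$ and hence $u \in D_0(x)$. Thus $N(x) \subseteq D_0(x)$, and $|D_0(x)| \geq \deg(x) \geq (1/3 + \varepsilon)n$ for every $x \in V$. Applying \cref{lem:DNL-graph-anti} with $\delta = 1/3$ and parameter $\varepsilon$ yields a set $X \subseteq V$ of size $O(1/\varepsilon)$ with $\bigcup_{x \in X} D_\varepsilon(x) = V$.

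The crux is to show that each $D_\varepsilon(x)$ is an independent set of $G$: once this is established, partitioning $V$ by assigning each vertex to some covering $x \in X$ yields a proper coloring with $|X| = O(1/\varepsilon)$ color classes. Suppose for contradiction that $u, v \in D_\varepsilon(x)$ are adjacent. Triangle-freeness on $uv$ forces $N(u) \cap N(v) = \emptyset$, hence
\[
|N(u) \cup N(v)| = |N(u)| + |N(v)| \geq (2/3 + 2\varepsilon)n.
\]
On the other hand, $|(N(u) \cup N(v)) \cap N(x)| \leq |N(u) \cap N(x)| + |N(v) \cap N(x)| \leq 2\varepsilon n$ while $|V \setminus N(x)| \leq (2/3 - \varepsilon)n$, so
\[
|N(u) \cup N(v)| \leq 2\varepsilon n + (2/3 - \varepsilon)n = (2/3 + \varepsilon)n,
\]
contradicting the previous lower bound.

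The only step requiring a genuinely new idea is the independence claim; the supporting inclusion--exclusion is routine. The main conceptual hurdle before seeing the argument is to notice that the $\varepsilon n$ slack in the minimum-degree hypothesis is precisely what is needed to absorb the $\varepsilon n$ tolerance in the $D_\varepsilon$-covering, so that no edge can fit inside a single cover piece and the cover becomes automatically a coloring.
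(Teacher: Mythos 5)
Your proposal is correct and follows essentially the same route as the paper: both use \cref{lem:DNL-graph-anti} via the observation $N(v)\subseteq D_0(v)$, and then show each $D_\varepsilon(x)$ is independent by the same counting (your two-sided bound on $|N(u)\cup N(v)|$ is just a rephrasing of the paper's inclusion--exclusion bound $|N(u)\cup N(v)\cup N(x)|\geq 3(1/3+\varepsilon)n-2\varepsilon n>n$).
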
 

\begin{proof}
Note that $N(v)\subseteq D_0(v)$ for all $v\in V$ since $G$ is triangle-free, so $|D_0(v)|\geq n/3$. 
Let $X$ be the set obtained from \cref{lem:DNL-graph-anti} with sensitivity $\varepsilon$. 
For every vertex $x$, there is no edge $uv$ inside $D_\varepsilon (x)$ since we would then have $|N(u)\cup N(v)\cup N(x)|\geq 3(1/3+\varepsilon) n-2\varepsilon n=n+\varepsilon n$. 
So $V$ is covered by $|X|=O\left(\frac{1}{\varepsilon}\right)$ independent sets.
\end{proof}

We can alternatively see this proof in the following way: form the trigraph $T=(V,E,R)$ where $E=\{xy: N(x)\cap N(y)=\emptyset\}$  and $R=\{xy: 1\leq |N(x)\cap N(y)|\leq \varepsilon n\}$.
The key is the following \emph{randomness transversal argument}. 
In any large random-like bipartite subgraph of $G$ with parts $A$ and $B$, there must exist some non-edge $ab$ with $a\in A$ and $b\in B$ such that $a$ and $b$ have at most $\varepsilon n$ common neighbors (otherwise $G[A,B]$ would contain the complement of an excessively large biclique).
Therefore $ab$ is either an edge or a red edge of $T$.
In particular, $T$ is a trigraph in which there is no large random-like bipartite subgraph avoiding red edges (in other words, the red edges form an edge-transversal of large random-like bipartite subgraphs). Since $G$ has minimum degree more than $n/3$, its fractional domination is less than 3 and thus by an adaptation to trigraphs of a classical result in VC-dimension, $G$ has a small dominating set $X$ (possibly using red edges). Thus, every vertex shares at most $\varepsilon n$ common neighbors with some vertex of $X$, and we conclude as previously. 
This strategy can be thought of as starting from a graph with potentially large VC-dimension and turning it into a trigraph with bounded VC-dimension. We will see several other applications of this strategy.

The study of the VC-dimension of trigraphs (under the name of \emph{partial concept classes}) was launched by Alon, Hanneke, Holzman and Moran \cite{AHHM21}. In particular, they proved that PAC learnable partial concept classes are exactly those with bounded VC dimension. In this paper, we give several examples of partial concept classes with bounded VC-dimension, thereby extending the scope of \cite{AHHM21}. The authors of \cite{AHHM21} also emphasized the additional expressivity of partial concept classes when compared to concept classes. We observe a similar phenomenon with the trigraph point of view of metric spaces, which reduces distances to only three values (close, intermediate, far). 
This offers much more flexibility than merely threshold graphs, and opens the way to the more general translation of graph parameters to trigraphs. 
We study here VC-dimension, but the general idea of using red edges to concentrate the difficulty of the problem (here the integrality gap of domination) could be applied to other optimization questions.

DNL also admits a much stronger clustering form, in which the set $V$ is partitioned into clusters consisting of similar elements.
This clustering version directly implies a result of O'Rourke~\cite{O'R14} asserting that every $(1/4+\varepsilon)n$-regular triangle-free graph has bounded chromatic number. Our bound is exponential in 
$1/\varepsilon$ rather than tower-type in O'Rourke's proof, and the argument is considerably simpler. 
This ease of use of DNL allows us to extend this result: every $\left(\frac{3t-8}{3t-5}+\varepsilon\right)\cdot n$-regular $K_t$-free graph has bounded chromatic number (and the constant is optimal). 
Another application of the clustering form of DNL allows us to strengthen the $n/3$ threshold of triangle-free graphs: minimum degree $n/3-n^{1-\varepsilon}$ is enough to obtain bounded chromatic number, and this is essentially best possible. 
We also apply DNL to homomorphism thresholds of $K_t$-free graphs.

The second area in which DNL naturally applies is tournaments, which are dense structures per se. Using again a randomness transversal argument, we show that in tournaments, domination is bounded in terms of fractional chromatic number. 
These two very basic parameters were not known to be related. 
This implies by Farkas Lemma that if a tournament $T$ satisfies that all its out-neighborhoods have bounded chromatic number, then $T$ has bounded domination, hence bounded chromatic number. 
This was already known with a tower-type bound~\cite{HLTW19}, the use of DNL reduces it to factorial-type. 
Finally, going back to one of the very first application of VC-dimension to graphs, we extend a result of~\cite{ALON2006374} on majority digraphs.

We feel that these applications are just a glimpse of the spectrum of use of Dense Neighborhood type Lemmas, and more generally trigraphs for optimization problems. Here is a detailed overview of the paper.

\subsection{Examples of Dense Neighborhood Lemma}

In all the various forms of DNL, the central definition is always the definition of the neighborhood of a vertex. It can simply be based on the distance in the metric case, or on some counting as in the case of disjointness-trigraphs. 
In all cases, two different sizes of neighborhoods are considered, with a \emph{gap} depending of some $\varepsilon$, and DNL asserts that if the small neighborhoods are dense (i.e. contain a linear fraction of the vertex set), then a bounded union of large neighborhoods covers the whole vertex set. 
In the language of optimization this can be reformulated as: the minimum cover by large neighborhoods is bounded by a function of the minimum fractional cover by small neighborhoods. A series of examples of such small/large neighborhoods is listed in \cref{table:exmples-nbhds}.

\newcolumntype{M}[1]{>{\centering\arraybackslash}m{#1}}
\newcolumntype{N}{@{}m{0pt}@{}}

\begin{table}[h]
\centering
\begin{tabular}{M{4cm}|M{5cm}|M{5cm} N}
\textbf{Structure} & \textbf{Small neighborhood of $v$} & \textbf{Large neighborhood of $v$} \\[1pt]
\hline
&&&\\[1pt]
$V \subseteq \mathbb{R}^N$ & $B(v, 1) = \{u : d(u, v) \leq 1\}$ & $B(v, 1+\varepsilon)$\\[10pt]
\hline
&&&\\[1pt]
$V \subseteq \mathbb{S}^N$ & $\Bigl\{ u : \langle u, v \rangle \geq \tau \Bigr\}$ & $\Bigl\{ u : \langle u, v\rangle \geq \tau - \varepsilon \Bigr\}$\\[10pt]
\hline
&&&\\[1pt]
$V \subseteq \{0,1\}^N$ & $\Bigl\{ u : d_H(u,v) \geq \tau \cdot N \Bigr\}$ & $\Bigl\{ u : d_H(u, v) \geq (\tau - \varepsilon) \cdot N \Bigr\}$\\[10pt]
\hline
&&&\\[1pt]
Set system $\mF = (V, E)$ & $\{u : \mF_u \cap \mF_v = \emptyset \}$ & $\{u : |\mF_u \cap \mF_v| \leq \varepsilon|\mF|\}$\\[10pt]
\hline
&&&\\[1pt]
Graph $G = (V,E)$ & $\{ u : N(u) \cap N(v) = \emptyset\}$ & $\{ u : |N(u) \cap N(v)| \leq \varepsilon n\}$\\[10pt]
\hline
&&&\\[1pt]
Digraph $D = (V, A)$ & $\{u : N^-(u) \cap N^+(v) = \emptyset\}$ & $\{ u : |N^-(u) \cap N^+(v)| \leq \varepsilon n\}$\\[10pt]
\hline
&&&\\[1pt]
0-1 random variables $X_v$ & $\{X_u : \mathbb{P}[X_u = X_v = 1] \geq \tau \}$ & $\{X_u : \mathbb{P}[X_u = X_v = 1] \geq \tau - \varepsilon \}$\\[10pt]
\hline
&&&\\[1pt]
Majority voting on $V$ & $\{ u : v \text{ is } 1/2 \text{-preferred to } u \}$ & $\{ u : v \text{ is } (1/2 - \varepsilon) \text{-preferred to } u \}$\\[5pt]
\end{tabular}
\caption{Typical examples of small and large neighborhoods in various structures. 
}
\label{table:exmples-nbhds}

\end{table}

DNL also holds when considering complements of neighborhoods, as illustrated in the following \emph{antipodal} Hamming version, where the Hamming distance is denoted by $d_H$. 

\begin{restatable}{lemma}{DNLHamanti}\label{lem:DNL-Ham-antipodal}
Let $V \subseteq \{0,1\}^N$ be such that for every $v \in V$, the set $\{u\in V:d_H(v,u)\geq c\cdot N\}$ has size at least $\delta |V|$. Then, there is a set $X \subseteq V$ of size $\textup{poly}(\varepsilon^{-1}, \delta^{-1})$ such that for every $v\in V$, there exists $x\in X$ such that $d_H(x,v)\geq (c-\varepsilon)N$.
\end{restatable}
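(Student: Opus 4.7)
The plan is to reduce the antipodal Hamming version to the set-system DNL (\cref{lem:DNL-set-anti}) through an agreement-count encoding, combined with a partial VC-dimension bound tailored to the Hamming structure.

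I would first define the set system $\mF := \{F_{i,b} : i \in [N],\ b \in \{0,1\}\}$ on the ground set $V$, with $F_{i,b} := \{v \in V : v_i = b\}$, so that $|\mF| = 2N$ and $|\mF_{uv}| = |\{i : u_i = v_i\}| = N - d_H(u, v)$. Setting $\alpha := (1-c)/2$ and $\beta := (1 - c + \varepsilon)/2$, the conditions $d_H(u, v) \geq cN$ and $d_H(x, v) \geq (c-\varepsilon)N$ become $|\mF_{uv}| \leq \alpha|\mF|$ and $|\mF_{xv}| \leq \beta|\mF|$ respectively. Hence the hypothesis becomes: every vertex has at least $\delta|V|$ neighbors in the trigraph $T$ with edges $\{uv : |\mF_{uv}| \leq \alpha|\mF|\}$ and red edges $\{uv : \alpha|\mF| < |\mF_{uv}| \leq \beta|\mF|\}$, and the desired conclusion is that $T$ admits a closed-neighborhood covering of size $\mathrm{poly}(\varepsilon^{-1}, \delta^{-1})$. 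I would then run the same argument as in the proof of \cref{lem:DNL-set-anti}: bound the partial VC-dimension of $T$, then invoke the partial $\varepsilon$-net theorem of Alon, Hanneke, Holzman, and Moran~\cite{AHHM21}.

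The main obstacle is the partial VC-dimension bound for $T$, since the ``small-neighborhood'' threshold is now shifted away from $0$. The Hamming structure handles this cleanly. If $S = \{s_1, \dots, s_k\} \subseteq V$ were partially shattered by $T$, then for every $A \subseteq S$ there would exist $v_A \in V$ with $\tilde{v}_A \cdot \tilde{s}_i \leq (1-2c)N$ for $s_i \in A$ and $\tilde{v}_A \cdot \tilde{s}_i > (1-2c+2\varepsilon)N$ for $s_i \notin A$, where $\tilde{v} := 2v - \mathbf{1} \in \{-1,1\}^N$. This is precisely $(2\varepsilon N)$-fat shattering of $S$ by linear functionals of weight norm at most $\sqrt{N}$ on points of norm $\sqrt{N}$, so the classical fat-shattering bound for linear classifiers yields $k = O(\varepsilon^{-2})$. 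Plugging this $N$-free partial VC-dimension bound into the partial $\varepsilon$-net theorem produces $|X| = \mathrm{poly}(\varepsilon^{-1}, \delta^{-1})$, as required.
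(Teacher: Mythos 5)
Your reduction is sound and reaches the right bound, but it takes a genuinely different route for the key VC-dimension step. First, note that your coordinate set system $\mF=\{F_{i,b}\}$ is an affine re-parametrization of Hamming distance ($|\mF_{uv}|=N-d_H(u,v)$), so the trigraph you build (edges $=$ far pairs, red $=$ the $\varepsilon$-margin in between) is exactly the complement of the Hamming-trigraph that the paper uses; since the small threshold is shifted away from $0$, you correctly cannot invoke \cref{lem:VCdim-disj-trigraph}, so the ``same argument as \cref{lem:DNL-set-anti}'' really only survives in its outer shell. Where you diverge is in bounding the partial VC-dimension: the paper proves $O(\varepsilon^{-2})$ via \cref{thm:VCdim-Hamming} (a random-coordinate cut plus Hoeffding and a union bound over cuts, then complementation-invariance), whereas you pass to $\{-1,1\}^N$ and observe that a shattered set is $(\varepsilon N)$-fat-shattered by linear functionals of norm $\sqrt{N}$ on points of norm $\sqrt{N}$, importing the classical margin bound to get $O(\varepsilon^{-2})$. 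Your translation of trigraph shattering (no red edges to the shattered set) into the two-threshold margin condition is correct, and the uniform threshold is a special case of fat-shattering, so this step is fine; it buys a shorter, off-the-shelf argument, while the paper's cut argument is self-contained and is the same engine it reuses for spherical-trigraphs.

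One caveat: the final step should not be attributed to \cite{AHHM21}. To my knowledge that paper characterizes PAC learnability of partial concept classes but does not prove a Haussler--Welzl-type $\delta$-net/transversal theorem for them; indeed, the present paper emphasizes that the classical double-sampling proof breaks for partial classes and supplies its own replacement (\cref{thm:triHW}, in sharper form \cref{thm:HW-trihyper}). Your argument goes through verbatim once you cite that theorem instead: with partial VC-dimension $O(\varepsilon^{-2})$ and plain minimum degree $\delta|V|$, it yields a dominating set of size $O(\varepsilon^{-2}\delta^{-1}\log\delta^{-1})=\textup{poly}(\varepsilon^{-1},\delta^{-1})$, and domination via $N[x]\cup R(x)$ translates back into $d_H(x,v)\geq(c-\varepsilon)N$ as you intend.
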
 

For set systems, we saw how to apply \cref{lem:DNL-set-anti} when the sets are the neighborhoods of a graph $G$ on $n$ vertices. 
In that case, the small neighborhood $D(v)$ is the set of vertices $u$ such that $N(u)\cap N(v)=\emptyset$, in other words there is no path of length 2 from $u$ to $v$. The large neighborhood $D_\varepsilon(v)$ consists of vertices $u$ admitting at most $\varepsilon n$ such paths. This surprising choice of neighborhoods becomes clear in the context of triangle-free graphs: the set $D(v)$ contains $N(v)$. DNL then asserts that every triangle-free graph with linear degree contains a bounded size set $X$ such that every vertex shares very few common neighbors with some vertex in $X$.

Interestingly, DNL also holds for the oriented analogue of the previous result. In a directed graph we can define $D^+(v)$ as the set of vertices non reachable by a directed path of length 2 from $v$ (and define $D_\epsilon^+(v)$ analogously). Again, the minimum cover by sets of the form $D_\epsilon^+(v)$ is bounded in terms of the minimum fractional cover by sets of the form $D^+(v)$. 

For another example, consider $n$ boolean random variables $(X_v)_{v\in V}$ and say that $X_u,X_v$ are \emph{$\tau$-close} if $\mathbb{P}[X_u=X_v=1]\geq \tau$. Then if every $X_v$ is $\tau$-close to $\delta n$ variables, there is a set $Y$ of bounded size such that every $X_v$ is $(\tau-\varepsilon)$-close to some $X_y$ with $y\in Y$.

Let us finish with a voting process where an odd set of voters ranks a set of applicants. For every applicant $x$, its small neighborhood $N_{1/2}(x)$ is the set of applicants $y$ such that at least half (hence the majority) of the voters prefer $x$ to $y$. Its large neighborhood $N_{1/2-\varepsilon}(x)$ is the set of applicants $y$ such that at least a $(1/2-\varepsilon)$-fraction of the voters prefer $x$ to $y$. An application of Farkas Lemma shows that the minimum fractional cover by sets $N_{1/2}(x)$ is at most 2. DNL then implies that a bounded number of large neighborhoods $N_{1/2-\varepsilon}(x)$ covers all applicants.

\subsection{VC-dimension of Trigraphs}

We consider \emph{trigraphs} $T=(V,E,R)$ where $V$ is the set of vertices, $E$ is the set of (\emph{plain}) edges, and $R$ is the set of \emph{red} edges, with $R\cap E=\emptyset$. We will always denote by $n$ the number of vertices. In graphs, we usually speak of ``non-edges'' to indicate pairs of nonadjacent vertices, we do the same for trigraphs, and denote by $W$ the set of non-edges. This stands for ``without edge'' or more visually the fact that two nonadjacent vertices drawn on a whiteboard are connected by a ``white edge''. Therefore $(E,R,W)$ is a partition of the set of pairs of vertices. Given a vertex $v$, we denote by $R(v)$ its set of red neighbors and by $W(v)$ its set of non-neighbors. We still denote the (plain) neighborhood of $v$ by $N(v)$, or by $N[v]$ for the closed neighborhood, including $v$. 

Let us reinterpret some classical graphs parameters. The \emph{minimum degree} $\delta(T)$ of $T$ is the minimum size of $N(v)$ over all $v\in V$. A \emph{dominating set} of $T$ is a set $X$ of vertices such that for every $y \in V$, there exists $x \in X$ such that $y\in N[x]\cup R(x)$.
We say that a subset $X$ of $V$ is \emph{shattered} if there exists a set $S$ of $2^{|X|}$ vertices with no red neighbor in $X$ such that for every $Y\subseteq X$, there exists $v_Y\in S$ such that $N[v_Y]\cap X=Y$. The \emph{VC-dimension} of a trigraph is the largest size of a shattered set.

The Sauer-Shelah Lemma, see \cite{Sauer72,Shelah72}, asserts that when a (usual) graph has  VC-dimension $d$, then for every subset $X$ of vertices, the number of distinct neighborhoods of the vertices of $V$ on $X$ is at most $|X|^d$. 
Sadly, we were not able to adapt this result for trigraphs of bounded VC-dimension. 
Consequently, we could not lift some classical results from VC-theory including notably approximations. However, the existence of $\delta$-nets, asserting that every trigraph with minimum degree $\delta n$ and bounded VC-dimension has bounded domination, is still valid, see \cref{thm:triHW}.

Let us now argue why trigraphs with bounded VC-dimension are natural combinatorial objects. The following proof is based on the random hyperplane argument of Goemans and Williamson~\cite{Goe95}. Suppose that $V$ is a set of points in the $N$-dimensional sphere $\mathbb{S}^N$. We measure the distance $d(x,y)$ between two vertices $x,y$ of $V$ by the angle $x0y$, or equivalently by the length of the circular arc $xy$ (e.g. $\pi$ for antipodal points). Fix some $\varepsilon >0$ and some angle $c \in [\varepsilon ,\pi]$ and form the trigraph $T=(V,E,R)$ where $E$ consists of all pairs of vertices $x,y$ such that $d(x,y)\geq c$, and $R$ consists of all pairs such that $c-\varepsilon \leq d(x,y)<c$. We claim that $T$ has bounded VC-dimension.

To see it, consider a shattered set $X$ of $T$ and its corresponding shattering set $S$ of size $2^{|X|}$. Recall that $S$ contains vertices realizing all possible neighborhoods on $X$, without any red edge. Draw from $S$ a random subset $Z$ of size $|X|$.
Observe that (with high probability) $Z$ is disjoint from $X$ and that the set of edges between $X$ and $Z$ forms a random (semi induced) bipartite graph $G$ on $X \cup Z$ (say with $|X| \cdot|Z|/2$ edges). 
Consider a random linear hyperplane $P$ in $\mathbb{R}^{N+1}$. Every pair $(x, z)$ is separated by $P$ with probability $d(x,z)/\pi$. 
This means that in expectation, the number of edges $xz$ which are cut by $P$ is at least $c \cdot|X|\cdot|Z|/(2\pi)$, and the number of non-edges cut by $P$ is at most $(c-\varepsilon) \cdot|X|\cdot|Z|/(2\pi)$. 
This cannot happen in a large random bipartite graph, hence showing that the size of $X$, and hence the VC-dimension of $T$ is bounded. A more careful analysis  gives a $O\left(1/\varepsilon ^2\right)$ bound, see \cref{thm:VCdim-Hamming,,thm:VCdim-spherical} in \cref{sec:vcdimtri}.

In the case of set systems on a ground set $V$, we can also form a (disjointness) trigraph $T = (V, E, R)$ by letting $uv\in E$ if $v\in D(u)$ and $uv\in R$ if $v\in D_\varepsilon(u)\setminus D(u)$. The bound on the VC-dimension, proved in \cref{lem:VCdim-disj-trigraph}, is much cleaner: it is at most $\frac{1}{\varepsilon}$. Because of that, disjointness-trigraphs are often better choices than the more intuitive metric-trigraphs.

We generically call a trigraph \emph{gap-representable} if it is obtained from the above constructions based on metric spaces and set systems. 

\subsection{Tri-hypergraphs}

Tri-hypergraphs offer an alternative point of view to trigraphs, which is sometimes more adapted to some specific proofs. 
The idea is very simple: instead of considering that a hyperedge is a subset of vertices (hence implicitely describing a bi-partition), a hyperedge in a tri-hypergraph is a tripartition $(B,R,W)$, standing for black, red, white, of the vertex set $V$. Therefore, a \emph{tri-hypergraph} $H$ is a pair $(V,\mathcal{E})$ where $V$ is a finite set and $\mathcal{E}$ is a collection of partitions of $V$ into three (possibly empty) parts $(B, R, W)$. Each such triple is called a \emph{tri-edge} of $H$. A standard hypergraph can be seen as a tri-hypergraph in which $R = \emptyset$ for every $(B, R, W)\in \mE$. When defining a tri-edge, we will often only define $B$ and $R$ explicitly, $W$ will then implicitly be defined as the complement of $B \cup R$.

Tri-hypergraphs are a combinatorial way of defining partial concept classes~\cite{AHHM21}. Indeed, each tri-hyperedge $(B, R, W)$ can be interpreted as a partial concept $c: V \to \{0, 1, \star\}$ with $c^{-1}(0) = W$, $c^{-1}(1) = B$ and $c^{-1}(\star) = R$. The definition of VC-dimension for tri-hypergraphs is then exactly the same as the corresponding definition for partial concept classes, see \cite{AHHM21}.

A set of vertices $X \subseteq V$ is a \emph{transversal} (or \emph{hitting set}) of $H$ if  $X \cap (B \cup R) \neq \emptyset$ for every $(B, R, W)\in \mE$. One of the most central optimization problems on hypergraphs is to compute a minimum (or at least a small) transversal. 
This is a considerably easier task when the VC-dimension is bounded, as uniform sampling then gives very good results. The same holds for tri-hypergraphs, for a suitable definition of VC-dimension. 
A notion closely related to tri-hypergraphs, \emph{paired hypergraphs}, was introduced in \cite{Luczak2010coloringdensegraphsvcdimension} to study chromatic thresholds, see also \cite{ABGKM13}. 
However, the transversality of paired hypergraphs was not investigated. In a sense, this paper continues this study with a slightly more adapted structure.

Given a tri-hypergraph $H=(V,\mE)$ and a set $X \subseteq V$ we define the \emph{trace} of $H$ on $X$ as 
$$tr_H(X)=\{ Y\subseteq X \ |\ \exists (B, R, W)\in \mE \text{  such that }  Y=X\cap B=X\cap (B \cup R) \}.$$
A set $X$ is \emph{shattered} by $H$ if $tr_H(X)=2^{X}$. The \emph{VC-dimension} of $H$ is the largest size of a shattered set. 
Note that when $R = \emptyset$ for every tri-edge, we recover the usual definition of VC-dimension and transversal for hypergraphs. 

Given a trigraph $T=(V,E,R)$ we can form a tri-hypergraph $H_T$ on $V$ with the tri-edges $(N[v],R(v),W(v))$ for all vertices $v\in V$. Note that $T$ and $H_T$ have the same VC-dimension. Also, the dominating sets of $T$ are exactly the transversals of $H_T$. This allows us to conveniently switch between the different points of view. 

\subsection{Domination in Trigraphs}

Given a set $X$ of vertices in a trigraph $T=(V,E,R)$, a partition $(X_0, X_1)$ of $X$ is a \emph{clean separation} if there exists a vertex $v$ such that $W(v) \cap X=X_0$ and $N[v] \cap X=X_1$. 
The Sauer-Shelah Lemma can be restated as: if $T$ has VC-dimension $d$, the number of clean separations of a subset $X$ of vertices is at most $|X|^d$. 
It is crucial here that no red edge is involved in the separations, i.e. that $T$ induces a proper graph between $X$ and the vertices forming the clean separations.
 
We now show that every trigraph with bounded VC-dimension and linear minimum degree has a small dominating set. This is a generalization of a classical result of Haussler and Welzl~\cite{HW87}, but the usual proof involves picking two samples of equal size, which is not suited for trigraphs as we want to avoid having to consider red edges. We provide here an alternative proof of independent interest. 
 
Haussler-Welzl type results are based on sampling, and it is simpler to think of samples as sets rather than as multisets. To this end, the following trick is very convenient: Substitute all vertices by very large independent sets of equal sizes. 
Both the degree ratios (i.e. $\deg(x)/n$) and the VC-dimension are unchanged, and the domination does not decrease. Samples can now be almost surely considered as sets.
Another convenient trick to manipulate inequalities involving logarithms is the following basic calculus fact, where $a\geq 0$ and $x\geq 1$  : if $x\leq a\ln(x)$, then
$x\leq 2a\ln(a)$, see \cref{lem:tech} for a more general form.
Ceilings and floors in the following proof are hidden in the $O(\cdot)$ notation. The short proof we present here gives a slightly weaker bound $O(\frac{d}{\delta}\cdot\log \frac{d}{\delta})$. The $O(\frac{d}{\delta}\cdot\log \frac{1}{\delta})$ bound is proved in \cref{sec:trans}.

\begin{restatable}{theorem}{trigraphHW}\label{thm:triHW}
Every $n$-vertex trigraph $T$ with minimum degree $\delta n$ and VC-dimension $d$ has domination number  $\gamma (T)=O(\frac{d}{\delta}\cdot\log \frac{1}{\delta})$
\end{restatable}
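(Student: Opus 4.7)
The plan is to adapt the classical Haussler--Welzl double-sampling proof of the $\varepsilon$-net theorem to the trigraph setting, using the clean-separations form of Sauer--Shelah recalled just before the theorem. First I would pass to the tri-hypergraph $H_T$ and apply the vertex-to-independent-set substitution trick from the excerpt so that random samples are genuine sets. The target is to show that a uniform random sample of size $m=\lceil C(d/\delta)\log(1/\delta)\rceil$ is a dominating set of $T$ with positive probability, for a sufficiently large absolute constant $C$.

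For the double sampling, I would draw $T^\ast\subseteq V$ of size $2m$ uniformly at random, then partition $T^\ast$ uniformly into $R\sqcup R'$ with $|R|=|R'|=m$. Let $\mathcal{B}$ be the event that $R$ is not dominating---equivalently, some $v$ satisfies $R\cap(N[v]\cup R(v))=\emptyset$---and let $\mathcal{B}^\ast$ refine $\mathcal{B}$ by additionally requiring $|R'\cap(N[v]\cup R(v))|\geq \delta m/2$ for that witness $v$. The minimum-degree assumption $|N[v]\cup R(v)|\geq \delta n$ together with a tail bound on the hypergeometric distribution yields $\Pr[\mathcal{B}]\leq 2\Pr[\mathcal{B}^\ast]$. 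Conditioning on $T^\ast$ and taking probability only over the uniform partition, for each fixed witness $v$ the event $T^\ast\cap(N[v]\cup R(v))\subseteq R'$ has probability at most $2^{-\delta m/2}$ by a standard counting; a union bound then gives $\Pr[\mathcal{B}^\ast\mid T^\ast]\leq N_{T^\ast}\cdot 2^{-\delta m/2}$, where $N_{T^\ast}$ is the number of distinct traces $A_v:=T^\ast\cap(N[v]\cup R(v))$ coming from bad witnesses $v$.

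The hard part will be to bound $N_{T^\ast}\leq m^{O(d)}$. The clean-separations form of Sauer--Shelah bounds only the traces coming from witnesses with no red edge to $T^\ast$, whereas a bad witness $v$ may still have red neighbors in $R'$; what we do get for free is that $R(v)\cap R\subseteq(N[v]\cup R(v))\cap R=\emptyset$, so the red contamination is automatically confined to $R'$. I would split the bad witnesses accordingly. Those with no red neighbor in $T^\ast$ at all contribute traces $A_v=N[v]\cap R'$, which are clean separations of $R'$ and hence bounded by $m^d$ via trigraph Sauer--Shelah. Those with a red neighbor in $R'$ are controlled through the exchangeability of $R$ and $R'$ in the uniform random partition: swapping $R\leftrightarrow R'$ would move one of $v$'s red neighbors from $R'$ into $R$, contradicting $R(v)\cap R=\emptyset$, so such ``dirty'' witnesses can contribute to $\mathcal{B}^\ast$ in at most half of the paired outcomes, costing only an extra factor of $2$. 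Combining everything with the $x\leq a\ln x\Rightarrow x\leq 2a\ln a$ calculus lemma mentioned just before the theorem, the calibration $m=C(d/\delta)\log(1/\delta)$ for $C$ large enough forces $\Pr[\mathcal{B}]<1$, yielding a dominating set of the desired size.
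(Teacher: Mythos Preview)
Your double-sampling strategy hits a real obstacle at exactly the step you flag as ``the hard part,'' and the swap argument you propose does not close the gap. The involution $R\leftrightarrow R'$ shows only that any \emph{fixed} dirty witness $v$ (one with a red neighbor in $R'$) fails to be bad for the swapped partition. That gives $\Pr[\,v\text{ is bad}\mid T^\ast\,]\le 1/2$ for each such $v$ individually, but it says nothing about the \emph{number} of distinct traces $A_v=(N[v]\cup R(v))\cap T^\ast$ coming from dirty $v$'s, which is what the union bound actually needs. Trigraph Sauer--Shelah only controls clean separations; there is no polynomial bound on traces that involve red edges, and your pairing of partitions does not convert the total dirty contribution into a multiple of the clean one. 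The ``extra factor of $2$'' is simply not what the swap delivers.

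The paper in fact remarks that picking two equal-size samples ``is not suited for trigraphs as we want to avoid having to consider red edges,'' and replaces it by an asymmetric argument. Assuming $\gamma(T)>k$, for each $0\le i\le k$ draw a random $i$-set $X$; since $X$ is not dominating there is some $v$ with $X\subseteq W(v)$. Now draw a random $(k-i)$-set $X'$; with probability at least $\delta^{k-i}$ it lands entirely in $N(v)$, and then $(X,X')$ is automatically a \emph{clean} separation of $X\cup X'$: no red edges to $X$ because $X\subseteq W(v)$, and none to $X'$ because $X'\subseteq N(v)$. Summing over $i$, the expected number of clean separations of a random $k$-set is at least $(1+\delta)^k$, which contradicts the Sauer--Shelah bound $k^d$ once $k$ exceeds $O((d/\delta)\log(d/\delta))$; the sharper $O((d/\delta)\log(1/\delta))$ follows from a refinement of the same idea carried out on words. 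The asymmetry between the two samples---one forced into $W(v)$ by non-domination, the other into $N(v)$ by the density hypothesis---is precisely what sidesteps red edges; a symmetric second sample has no reason to avoid $R(v)$, and that is where your proof breaks.
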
 

\begin{proof}
Let $k > \frac{4d}{\delta}\ln \frac{2d}{\delta}$ and assume for contradiction that $\gamma (T)> k$. Consider a random subset $X$ of vertices, of size $i\leq k$. There exists a vertex $v$ such that $N[v]\cup R(v)$ is disjoint from $X$. Since $|N(v)|\geq \delta n$, if we draw another random subset $X'$ of size $k-i$, we have $\mathbb{P}[X'\subseteq N(v)]\geq \delta ^{k-i}$, in which case $(X, X')$ is a clean separation of $X\cup X'$.
By double counting, the average number of clean separations of a set $Z$ of size $k$ is at least $\sum_{i=0}^k \binom{k}{i} \delta ^{k-i}=(1+\delta)^k$. 
By the Sauer-Shelah Lemma,  $(1+\delta)^k \leq k^d$ or equivalently $k\ln(1+\delta)\leq d\ln k$. Since $2\ln(1+\delta)\geq \delta$, we have $k\delta\leq 2d\ln k$. By the calculus trick $k\leq \frac{4d}{\delta}\ln \frac{2d}{\delta}$, a contradiction.
\end{proof}

A similar argument shows that the transversal number $\tau$ (with respect to $B\cup R$) of a tri-hypergraph is bounded in terms of the linear programming relaxation $\tau^*$ (with respect to $B$), and its VC-dimension $d$.

\begin{restatable}{theorem}{boundedintegralitygap}\label{thm:bounded-integrality-gap}
Every tri-hypergraph of VC-dimension $d$ satisfies $\tau = O(d\tau^*\log \tau^*)$.
\end{restatable}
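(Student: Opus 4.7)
The plan is to adapt the short proof of \cref{thm:triHW} from trigraphs with linear minimum degree to tri-hypergraphs, with the role of the minimum degree ratio $\delta$ played by $1/\tau^*$ coming from the LP relaxation. Let $x \in \mathbb{R}_+^V$ be an optimal fractional transversal, so $\sum_v x_v = \tau^*$ and $\sum_{v \in B} x_v \geq 1$ for every tri-edge $(B, R, W) \in \mE$. Using the blow-up trick from the discussion preceding \cref{thm:triHW} (replace each vertex $v$ by $\lceil x_v M \rceil$ copies for a large integer $M$ and extend each tri-edge componentwise), we may assume sampling is uniform, while $|B|/|V| \geq 1/\tau^* - o(1)$ for every tri-edge. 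The blow-up preserves $\mathrm{VC}(H)$, $\tau$ and $\tau^*$, so it suffices to prove the bound in this setting.

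Assume for contradiction that $\tau > k$ for some $k$ to be chosen. For each $i \in \{0,\dots,k\}$, draw independently a uniform $X \subseteq V$ of size $i$ and a uniform $X' \subseteq V$ of size $k-i$. Since $|X| < \tau$, $X$ is not a transversal, so there is a tri-edge $(B, R, W) \in \mE$ with $X \cap (B \cup R) = \emptyset$; fix such a tri-edge deterministically. Then $\mathbb{P}[X' \subseteq B \mid X] \geq (1/\tau^*)^{k-i}$, and in this event $(X, X')$ is a clean separation of $X \cup X'$ witnessed by that tri-edge (with $X \subseteq W$, $X' \subseteq B$, and hence no vertex of $X \cup X'$ lying in $R$). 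Summing over $i$ by the same double counting as in the proof of \cref{thm:triHW}, the expected number of clean separations of a uniform random size-$k$ set $Z$ is at least
\[
\sum_{i=0}^{k}\binom{k}{i}\bigl(1/\tau^*\bigr)^{k-i} \;=\; \bigl(1 + 1/\tau^*\bigr)^{k}.
\]

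For the matching upper bound, we note that the Sauer-Shelah Lemma lifts to tri-hypergraphs with our notion of trace: applying the classical Sauer-Shelah Lemma to the set system $tr_H(Z) \subseteq 2^Z$, any classical shattering of a $(d{+}1)$-subset $Z' \subseteq Z$ automatically lifts to a tri-hypergraph shattering of $Z'$, since every witnessing set $S \in tr_H(Z)$ comes from a tri-edge $(B, R, W) \in \mE$ with $Z \cap R = \emptyset$ (hence $Z' \cap R = \emptyset$) and satisfies $S \cap Z' = Z' \cap B$; this contradicts $\mathrm{VC}(H) = d$. Therefore $|tr_H(Z)| \leq |Z|^d = k^d$, and since the number of clean separations of $Z$ equals $|tr_H(Z)|$ we deduce $(1 + 1/\tau^*)^k \leq k^d$. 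Taking logarithms and using $\ln(1 + 1/\tau^*) \geq 1/(2\tau^*)$, this rearranges to $k \leq 2 d \tau^* \ln k$, and the calculus fact from the paper gives $k \leq 4 d \tau^* \ln(2 d \tau^*)$, a contradiction for $k$ slightly larger than this quantity. This proves $\tau = O(d \tau^* \log(d \tau^*))$.

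The main obstacle is to sharpen this to the claimed $\tau = O(d \tau^* \log \tau^*)$, removing the extra $\log d$ inside the logarithm. We expect this to follow from the very same refinement that turns the short proof of \cref{thm:triHW} (giving only $O(\frac{d}{\delta}\log\frac{d}{\delta})$) into the sharp $O(\frac{d}{\delta}\log\frac{1}{\delta})$ bound proved in \cref{sec:trans}, presumably via a two-sample $\varepsilon$-net style argument applied to clean separations rather than to bad un-dominated vertices.
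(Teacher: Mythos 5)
Your reduction is the same as the paper's: scale an optimal fractional transversal to integer multiplicities, blow up the vertex set accordingly (the paper uses the rationality of the LP optimum to take exact copies, so the density is exactly $1/\tau^*$ rather than $1/\tau^*-o(1)$, but this is cosmetic), note that the VC-dimension does not increase and that every black part of the blown-up tri-hypergraph has density at least $1/\tau^*$, and then find a small transversal of a dense tri-hypergraph of bounded VC-dimension. The difference is in this last step, and that is where your argument falls short of the statement: you re-run the ``short'' double-counting proof of \cref{thm:triHW} and only obtain $\tau = O(d\tau^*\log(d\tau^*))$, and you explicitly leave the removal of the $\log d$ open. So, as written, the claimed bound $O(d\tau^*\log\tau^*)$ is not proved. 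This is not an innocuous loss in this paper: \cref{thm:HWtournaments} applies \cref{thm:bounded-integrality-gap} with $\tau^*\le 2$ to get domination $O(d)$ for tri-tournaments, which your bound would degrade to $O(d\log d)$, and this propagates to the tournament and majority-digraph applications.

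The gap is mild, however, because the sharpening you are missing does not require any new sampling idea, and your guess about the mechanism is off target. The paper's proof simply applies \cref{thm:HW-trihyper} (already proved via the word form of Sauer--Shelah, \cref{lem:gen-sauer-shelah}) to the blown-up tri-hypergraph with $\delta = 1/\tau^*$; the bound $2\cdot\frac{d}{\ln(1+\delta)}\cdot\ln\left(\frac{e}{\ln(1+\delta)}\right)$ has no $d$ inside the logarithm, which is exactly what gives $O(d\tau^*\log\tau^*)$. Alternatively, you can repair your own computation in place: replace the crude estimate $|tr_H(Z)|\le k^d$ by $\sum_{i\le d}\binom{k}{i}\le\left(\frac{ek}{d}\right)^d$; then $(1+1/\tau^*)^k\le\left(\frac{ek}{d}\right)^d$ yields $k\le \frac{d}{\ln(1+1/\tau^*)}\ln\left(\frac{ek}{d}\right)$, and \cref{lem:tech} applied with $a=e/d$ and $b=\frac{d}{\ln(1+1/\tau^*)}\le 2d\tau^*$ makes the $d$ cancel inside the logarithm, giving $k\le 4d\tau^*\ln(2e\tau^*)=O(d\tau^*\log\tau^*)$. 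In particular, no ``two-sample $\varepsilon$-net style'' refinement is needed --- indeed the paper deliberately avoids two-sample arguments for trigraphs, since red edges obstruct them. Your lifting of the classical Sauer--Shelah Lemma to set traces of tri-hypergraphs is correct (it is the set case of \cref{lem:gen-sauer-shelah}), and your informal treatment of samples as sets is at the same level of rigor as the paper's own short proof of \cref{thm:triHW}; the only substantive defect is the final bound.
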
 

The previous results focus on dominating all vertices in trigraphs or intersecting all hyperedges in tri-hypergraphs, but we can ask to only dominate large degree vertices, or intersect large hyperedges. A set $X$ of vertices of a tri-hypergraph $H=(V,\mE)$ is a \emph{$\delta$-net} if $X \cap (B \cup R) \neq \emptyset$ for every $(B, R, W) \in \mE$ such that $|B| \geq \delta|V|$. We use here the term $\delta$-net instead of $\varepsilon$-net since we already use $\varepsilon$ for the distance slack. In the next result about $\delta$-nets, we explicitly give the probability that a uniform sample forms a $\delta$-net for a tri-hypergraph of bounded VC-dimension.

\begin{restatable}{theorem}{randomsample}\label{thm:randomsample}
Let $H=(V,\mE)$ be a tri-hypergraph of VC-dimension $d$ and let $\delta > 0$. 
With probability $1-p$, a random sample of $\tau_p$ elements from $V$ is a $\delta$-net of $H$, where
$$
\tau_p = 2 \cdot \frac{d}{\ln(1+\delta)} \cdot \ln\left(\frac{e}{\ln(1+\delta) \cdot p^{1/d}}\right).
$$
\end{restatable}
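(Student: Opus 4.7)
The plan is to adapt the double-counting argument used to prove \cref{thm:triHW}, this time tracking the failure probability of a uniformly random sample of size $k := \tau_p$. Let $p_k$ denote the probability that a uniformly random $k$-subset $Z$ of $V$ is not a $\delta$-net; the goal is to show $p_k \leq p$. The key preliminary observation is the monotonicity $p_i \geq p_k$ for all $i \leq k$: any bad tri-edge $(B,R,W)$ with $Z \subseteq W$ is equally a witness of failure for every subset of $Z$.

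For each $i \in \{0,1,\dots,k\}$, draw $X$ uniformly of size $i$ and then $X'$ uniformly of size $k-i$ in $V \setminus X$, so that $Z := X \sqcup X'$ is a uniformly random $k$-subset. If $X$ fails to be a $\delta$-net---which happens with probability $p_i \geq p_k$---fix a canonical bad witness $\phi(X) = (B,R,W)$ with $|B| \geq \delta n$ and $X \subseteq W$. Using the blow-up trick so that sampling can be treated as essentially with replacement, $\mathbb{P}[X' \subseteq B \mid X] \geq \delta^{k-i}$ conditional on $\phi(X)$. In the joint event, the partition $(X,X')$ is a clean separation of $Z$ witnessed by $\phi(X)$. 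Summing over $i$ by linearity of expectation yields
\[
\mathbb{E}\bigl[\#\text{clean separations of }Z\bigr] \;\geq\; p_k \sum_{i=0}^{k} \binom{k}{i}\delta^{k-i} \;=\; p_k\,(1+\delta)^k.
\]
On the other hand, the Sauer--Shelah lemma applied to tri-hypergraphs (as already used in the proof of \cref{thm:triHW}) bounds the number of clean separations of any $k$-set by $\sum_{j=0}^d \binom{k}{j} \leq (ek/d)^d$. Combining the two inequalities gives $p_k \leq (ek/d)^d / (1+\delta)^k$.

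What remains is a calculus verification that this last quantity is at most $p$ when $k = \tau_p$. Writing $a := \ln(1+\delta)$ and $c := 1 + \ln(1/a) + \tfrac{1}{d}\ln(1/p)$, the formula for $\tau_p$ is precisely $\tfrac{2d}{a}c$, so $y := ka/d = 2c$. Taking logarithms in the desired $(ek/d)^d \leq p(1+\delta)^k$ and dividing by $d$ reduces it to $c + \ln y \leq y$, i.e.\ $c + \ln(2c) \leq 2c$, equivalently $e^c \geq 2c$, which is elementary for all $c \geq 0$. The main (mild) obstacle is arranging the symmetric accounting between the probabilistic factor $p_k$ and the deterministic clean-separation count; once the monotonicity $p_i \geq p_k$ is spotted, the proof of \cref{thm:triHW} extends essentially verbatim, the factor $(1+\delta)^k$ being the common engine of both results.
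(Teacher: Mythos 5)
Your proposal is correct and is essentially the argument the paper has in mind: \cref{thm:randomsample} is stated there without a separate proof precisely because the double-counting of clean separations against the Sauer--Shelah bound used for \cref{thm:HW-trihyper} carries the failure probability $p_k$ through exactly as you do (monotonicity in the sample size, the factor $p_k(1+\delta)^k$, then the same calculus reducing to $e^c \geq 2c$). The only cosmetic difference is that you phrase the sampling via the blow-up/subset formalism of \cref{thm:triHW}, whereas the paper's Section~3 machinery uses words (sampling with replacement), which yields the $\delta^{k-i}$ factor exactly and avoids the blow-up caveat; this does not affect correctness.
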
 

\subsection{Clustering}

In VC-theory, the natural step after transversals is clustering. In the case of usual graphs with bounded VC-dimension, the key-observation is that the hypergraph of differences of neighborhoods, that is all $N(x)\setminus N(y)$, still has bounded VC-dimension.
Hence taking an $\varepsilon$-net $X$ of this hypergraph, and partitioning the vertices according to their neighborhoods on $X$, leads to a partition in which clusters consist of vertices which are near twins. This corresponds to an ultra-strong regularity partition which can be a good partition to consider when VC-dimension is not bounded.

We tried in vain to obtain a clustering version of DNL for general trigraphs with bounded VC-dimension. We show that VC-dimension remains bounded under intersections and differences. The difficulty then resides mainly in the classification step since red edges in the difference are not refined enough to distinguish vertices. 
Fortunately, a clustering result holds in the case of gap-representable trigraphs. The reason is that they offer the opportunity to partition further using the midpoint threshold $\varepsilon/2$. 
This extra precision enables to classify the vertices and then get a clustering. Here are three examples of clustering versions of DNL:

Let $\mF$ be a set system on ground set $V$. 
An \emph{$\varepsilon$-cluster} is a subset $X$ of $V$ such that $|D(x)\setminus D_\varepsilon (y)|\leq \varepsilon |V|$ for all $x,y\in X$. An \emph{$\varepsilon$-clustering} of $\mF$ is a partition of $V$ into $\varepsilon$-clusters. The \emph{size} of a clustering is its number of clusters. 

\begin{restatable}{lemma}{DNLScluster}\label{lem:DNLS-cluster-intro}
Every set system has an $\varepsilon$-clustering of size $2^{\textup{poly}(\varepsilon^{-1})}$.
\end{restatable}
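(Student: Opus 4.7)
The plan is to classify vertices by their $\{0,1,\star\}$-trace on a small witness set $X\subseteq V$ of size $\textup{poly}(\varepsilon^{-1})$ built from an $(\varepsilon/2)$-net of an auxiliary hypergraph, and to exploit the \emph{midpoint} threshold $\varepsilon/2$ to convert trace-agreement on $X$ into the $\varepsilon$-cluster inequality. This is an instance of the general strategy flagged earlier in the paper for gap-representable trigraphs: partitioning further by a midpoint threshold enables a classification step that the Sauer-Shelah lemma does not directly provide for trigraphs.

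Work at sensitivity $\varepsilon/2$. By \cref{lem:VCdim-disj-trigraph}, the disjointness tri-hypergraph whose tri-edge at $v$ is $\bigl(D(v),\,D_{\varepsilon/2}(v)\setminus D(v),\,V\setminus D_{\varepsilon/2}(v)\bigr)$ has VC-dimension at most $2/\varepsilon$. For each ordered pair $(x,y)\in V^2$ set
\[
\Delta_{x,y}\;=\;D(x)\setminus D_{\varepsilon/2}(y),
\]
and let $\mH=(V,\mathcal E)$ be the hypergraph whose hyperedges are the $\Delta_{x,y}$ of size at least $(\varepsilon/2)|V|$. Each $\Delta_{x,y}$ is a Boolean combination (intersection of a black part with a white part) of two tri-edges of the disjointness tri-hypergraph, so by the closure of VC-dimension under intersections and differences stated in the introduction, $\mH$ has VC-dimension polynomial in $1/\varepsilon$. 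Apply \cref{thm:randomsample} to $\mH$ with parameter $\varepsilon/2$: a uniform sample $X\subseteq V$ of size $\textup{poly}(\varepsilon^{-1})$ is, with positive probability, an $(\varepsilon/2)$-net, i.e.\ meets every hyperedge of $\mH$. Fix such an $X$.

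Partition $V$ according to the trace
\[
t_v\colon X\longrightarrow\{0,1,\star\},\qquad t_v(x)=\begin{cases}1&\text{if }x\in D(v),\\\star&\text{if }x\in D_{\varepsilon/2}(v)\setminus D(v),\\0&\text{otherwise.}\end{cases}
\]
There are at most $3^{|X|}=2^{\textup{poly}(\varepsilon^{-1})}$ traces, hence at most that many classes. To see that each class is an $\varepsilon$-cluster, suppose $x,y$ lie in the same class but $|D(x)\setminus D_\varepsilon(y)|>\varepsilon|V|$. Since $D_{\varepsilon/2}(y)\subseteq D_\varepsilon(y)$, this forces $|\Delta_{x,y}|>(\varepsilon/2)|V|$, so $\Delta_{x,y}\in\mathcal E$ and $X$ contains some $x_0\in\Delta_{x,y}$. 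By the symmetry of the disjointness relations, $x_0\in D(x)$ gives $t_x(x_0)=1$ while $x_0\notin D_{\varepsilon/2}(y)$ gives $t_y(x_0)=0$, contradicting $t_x=t_y$ on $X$.

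The main obstacle is the VC-bound on $\mH$: because the tri-hypergraph analogue of Sauer-Shelah fails, one cannot simply quote a black-box Boolean-closure statement, and the closure result for trigraph VC-dimension mentioned in the introduction has to be leveraged carefully for the specific combination $D(x)\setminus D_{\varepsilon/2}(y)$. I expect the cleanest route to revisit the random-hyperplane/sampling argument that bounds the VC-dimension of the disjointness tri-hypergraph itself and to run it on the product $V\times V$; alternatively, one may embed $\mH$ into a union of polynomially (in $1/\varepsilon$) many hypergraphs each of controlled VC-dimension. Once this VC-bound is secured, the net-and-trace argument above is routine and yields the desired $2^{\textup{poly}(\varepsilon^{-1})}$ clustering.
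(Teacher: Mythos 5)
Your classification step (the three–valued trace on $X$ with thresholds $0$ and $\varepsilon/2$) is sound and is exactly the paper's ``midpoint'' idea, but the proof has a genuine gap at its central technical claim: the VC bound on $\mH$. The closure statements (\cref{lem:VCdim-intersection-tri-hyper,lem:VCdim-difference-tri-hyper}) bound the VC-dimension of the difference \emph{tri-hypergraph}, where a shattered set must be realized with no red element on it; they say nothing about the ordinary VC-dimension of the plain hypergraph whose edges are the black parts $\Delta_{x,y}=D(x)\setminus D_{\varepsilon/2}(y)$, which is what you would need to net $\mH$ as an ordinary hypergraph via \cref{thm:randomsample}. In fact that ordinary VC-dimension is unbounded in general, for every fixed sensitivity: take $V=S\cup\{x_Y: Y\subseteq S\}\cup\{y\}\cup P$ with $|P|$ a constant fraction of $|V|$, let half of the members of $\mF$ each contain $\{y\}\cup S$ together with all but one (distinct) element of $P$, and for each $Y\subseteq S$ and each $s\in S\setminus Y$ add the two-element set $\{x_Y,s\}$. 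Then $\mF_{x_Y s}=\emptyset$ exactly for $s\in Y$, while $|\mF_{y w}|\geq |\mF|/2-1$ for every $w\in S\cup P$, so $\Delta_{x_Y,y}\cap S=Y$ and $P\subseteq \Delta_{x_Y,y}$ for every $Y$; hence all these $\Delta$'s are large edges of $\mH$ and they shatter $S$, with $|S|$ arbitrary. So the ``Boolean-closure'' route cannot deliver the net you want; this is precisely the obstacle you flag at the end, and your fallback suggestions (rerunning a sampling argument on $V\times V$, or splitting $\mH$ into few bounded-VC hypergraphs) do not close it.

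The paper closes it differently. If you stay inside the tri-hypergraph framework, \cref{thm:HW-trihyper} only guarantees that $X$ hits $B\cup R$ of each tri-edge with large black part, so you must control what a red hit tells you. For the unrefined difference of the sensitivity-$\varepsilon/2$ disjointness trigraph with itself, a red element $x_0$ only satisfies $|\mF_{x x_0}|\leq \varepsilon|\mF|/2$ and $\mF_{y x_0}\neq\emptyset$, so $t_x(x_0)$ and $t_y(x_0)$ can both be $\star$ and your contradiction disappears (the paper makes exactly this remark after \cref{lem:DNLS-cluster}). The fix is the \emph{refined} tri-hypergraph of differences $T_1\setminus T_2$, obtained by splitting the red class at the midpoint, so that every $x_0\in B\cup R$ satisfies $|\mF_{x x_0}|\leq \varepsilon|\mF|/2<|\mF_{y x_0}|$; with that object your trace argument works verbatim. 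The real content is then \cref{lem:VCdim-set-diff}, which bounds the VC-dimension of this refined difference by a separate argument (a greedy transversal of the dense pairwise intersections yields a sub-family of $O(\varepsilon^{-1}\log |X|)$ sets ``representing'' the adjacency between the shattered set and the witnesses, which a uniformly random $|X|\times|X|$ bipartite pattern almost surely does not admit); it is not a consequence of \cref{lem:VCdim-disj-trigraph} plus closure, since $T_2$ is not a disjointness-trigraph. That lemma is the missing ingredient in your proposal.
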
 

Here is the Hamming clustering: 

\begin{lemma}\label{lem:DNLH-cluster-intro}
Let $V$ be a subset of $\{0,1\}^N$. There is a partition of $V$ into $2^{\textup{poly}(\varepsilon^{-1})}$ clusters such that if $u, v$ are in the same cluster, there are at most $\varepsilon \cdot |V|$ points $w$ of $V$ such that $d_H(u,w)\geq c\cdot N$ and $d_H(v,w)\leq (c-\varepsilon)\cdot N$.
\end{lemma}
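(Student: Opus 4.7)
The plan is to mimic the clustering argument underlying \cref{lem:DNLS-cluster-intro}, but applied to the Hamming metric-trigraph rather than a disjointness-trigraph. Form $T = (V, E, R)$ with $E = \{uv : d_H(u,v) \geq cN\}$ and $R = \{uv : (c-\tfrac{\varepsilon}{2})N \leq d_H(u,v) < cN\}$. By \cref{thm:VCdim-Hamming}, $T$ is gap-representable with VC-dimension $d = O(\varepsilon^{-2})$. I will cluster $V$ by the trace on a well-chosen sample $X$ in the tri-hypergraph of $T$.

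To make sure $X$ separates any pair $(u,v)$ with many "witnesses", I introduce an auxiliary pair tri-hypergraph $\mathcal H$ on $V$, with one tri-edge per ordered pair $(u,v) \in V^2$ defined by
\[
B_{u,v} = \{w \in V : d_H(u,w) \geq cN \text{ and } d_H(v,w) \leq (c-\varepsilon)N\},
\]
\[
B_{u,v} \cup R_{u,v} = \{w \in V : d_H(u,w) \geq (c-\tfrac{\varepsilon}{2})N \text{ and } d_H(v,w) \leq (c-\tfrac{\varepsilon}{2})N\}.
\]
Each tri-edge of $\mathcal H$ is an intersection of two tri-edges coming from copies of $T$ (one recording "far from $u$", one recording "close to $v$"), with the refinement provided by the midpoint threshold $c - \tfrac{\varepsilon}{2}$. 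Since the paper establishes that tri-VC-dimension is preserved under intersections, $\mathcal H$ has tri-VC-dimension $\textup{poly}(\varepsilon^{-1})$.

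I then apply \cref{thm:randomsample} to $\mathcal H$ with parameter $\varepsilon$: a uniform random sample $X \subseteq V$ of size $\textup{poly}(\varepsilon^{-1})$ is, with positive probability, an $\varepsilon$-net of $\mathcal H$. Partition $V$ into equivalence classes of the relation "having the same trace on $X$ in $T$": two vertices $u, v$ are equivalent iff for every $x \in X$ they belong to the same part among $B_x, R_x, W_x$ of the tri-edge of $x$ in $T$. Since each trace lies in $\{B,R,W\}^{|X|}$, this gives at most $3^{|X|} = 2^{\textup{poly}(\varepsilon^{-1})}$ clusters.

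To verify the cluster property, suppose $u, v$ share a trace and let $S_{u,v} = \{w : d_H(u,w) \geq cN,\ d_H(v,w) \leq (c-\varepsilon)N\}$. If $|S_{u,v}| > \varepsilon|V|$, then $|B_{u,v}| > \varepsilon|V|$, so $X$ meets $B_{u,v} \cup R_{u,v}$; any such $x$ satisfies $d_H(u,x) \geq (c-\tfrac{\varepsilon}{2})N$ (placing $u$ in $B_x \cup R_x$) and $d_H(v,x) \leq (c-\tfrac{\varepsilon}{2})N$ (placing $v$ in $W_x$), contradicting the shared trace at coordinate $x$. The main obstacle will be establishing the tri-VC-dimension bound for $\mathcal H$ cleanly: the sets $B_{u,v}$ are intersections of "far" and "close" sets whose individual classical VC-dimensions are unbounded, and what saves us is the gap-representability at the midpoint $c - \tfrac{\varepsilon}{2}$, which keeps $\mathcal H$ within the tri-hypergraph closure framework and avoids any Sauer–Shelah-type lifting that the paper explicitly notes is unavailable for general trigraphs.
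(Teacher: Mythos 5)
Your proposal is correct and essentially identical to the paper's proof of \cref{lem:DNLH-cluster}: your pair tri-hypergraph $\mathcal{H}$ is exactly the refined tri-hypergraph of differences (a difference being an intersection with a complement), its VC-dimension is bounded the same way via \cref{thm:VCdim-Hamming} together with the intersection/difference preservation lemmas (\cref{lem:VCdim-intersection-tri-hyper,lem:VCdim-diff}), and the net-then-classify-at-the-midpoint-threshold step is the paper's argument, with your three-valued trace on $X$ giving $3^{|X|}$ clusters instead of the paper's $2^{|X|}$. The only slip is a boundary case in the verification (if $d_H(v,x)=(c-\varepsilon/2)N$ while $(c-\varepsilon/2)N\leq d_H(u,x)<cN$, both $u$ and $v$ lie in $R_x$ and no contradiction arises); this is repaired by making one of the midpoint inequalities strict, exactly as the paper's proof of \cref{lem:DNLH-cluster} does.
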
 

A matching exponential lower bound on Hamming clustering is provided in \cref{subsec:clustering}. The Euclidean version of clustering is:

\begin{restatable}{lemma}{DNLEclusterintro}\label{lem:DNLE-cluster-intro}
Let $V$ be a finite subset of $\mathbb{R}^N$. There is a partition of $V$ into $2^{\textup{poly}(\varepsilon^{-1})}$ clusters such that if $u, v$ are in the same cluster, there are at most $\varepsilon \cdot |V|$ points $w$ of $V$ such that $w \in B(u, 1) \setminus B(v, 1+ \varepsilon)$.
\end{restatable}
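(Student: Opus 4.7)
The plan is to emulate the set-system clustering proof (Lemma~\ref{lem:DNLS-cluster-intro}) by forming two gap-representable metric-trigraphs on $V$, extracting a small $\varepsilon$-net for their pair-difference tri-hypergraph via Theorem~\ref{thm:randomsample}, and clustering vertices by their joint trigraph-type on the net. The midpoint threshold $1+\varepsilon/2$ is what provides the classification slack.

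First I would form the metric-trigraphs $T_1$ (threshold $1$, sensitivity $\varepsilon/2$) and $T_2$ (threshold $1+\varepsilon/2$, sensitivity $\varepsilon/2$). Both are gap-representable, so by the Euclidean analog of Theorem~\ref{thm:VCdim-spherical} each has VC-dimension $d=O(\varepsilon^{-2})$. The key identity is
\[
B(u,1)\setminus B(v,1+\varepsilon) \;=\; N_{T_1}[u]\cap W_{T_2}(v),
\]
and every element of the right-hand side is a \emph{plain} $T_1$-neighbor of $u$ and a \emph{plain} $T_2$-non-neighbor of $v$. I then build the tri-hypergraph $\mH$ on $V$ whose tri-edges, indexed by ordered pairs $(u,v)$, have black part $B_{u,v}=N_{T_1}[u]\cap W_{T_2}(v)$ and red part $R_{u,v}=R_{T_1}(u)\cup R_{T_2}(v)$. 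The main technical claim is $\textup{VCdim}(\mH)=\textup{poly}(\varepsilon^{-1})$: the red fuzz $R_{u,v}$ forces every shattering witness pair $(u_Y,v_Y)$ to have plain-only adjacencies on a shattered $X$ in both trigraphs, and the plain-pattern variant of Sauer–Shelah (which does follow from bounded trigraph VC-dim, since any red-free pattern family directly shatters a set in the trigraph) bounds the number of plain $T_i$-patterns on $X$ by $|X|^d$, yielding $2^{|X|}\le|X|^{2d}$ and thus $|X|=O(d\log d)$.

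Applying Theorem~\ref{thm:randomsample} to $\mH$ produces a set $X\subseteq V$ of size $\textup{poly}(\varepsilon^{-1})$ hitting $B_{u,v}\cup R_{u,v}$ whenever $|B_{u,v}|\ge\varepsilon|V|$. Cluster $u,v$ together iff they share their joint three-valued $(T_1,T_2)$-type on $X$: at most $9^{|X|}=2^{\textup{poly}(\varepsilon^{-1})}$ classes. For $u,v$ in the same cluster with $|B(u,1)\setminus B(v,1+\varepsilon)|\ge\varepsilon|V|$, the net provides $x\in X\cap(B_{u,v}\cup R_{u,v})$. If $x\in B_{u,v}$, type equality transfers the plain $T_1$-edge from $u$ to $v$ and forces $d(v,x)\le 1$, contradicting $d(v,x)>1+\varepsilon$. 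The red-witness case $x\in R_{u,v}$ is handled by the midpoint: a type-preserved red adjacency locks $d(u,x)$ and $d(v,x)$ in the same narrow window of width $\varepsilon/2$, so $x$ itself is not in $B(u,1)\setminus B(v,1+\varepsilon)$, and combining this with a standard refinement of the cluster definition (tracking $T_i$-colors at the $\varepsilon/2$-granularity, which multiplies the cluster count by only a constant factor) reduces this case to the plain one.

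The principal hurdle is the bounded VC-dim of $\mH$: the paper explicitly flags that Sauer–Shelah does not extend to general trigraphs, so the argument must be carried out on plain patterns only, which is exactly what the choice of red fuzz $R_{u,v}$ enforces in the shattering. The secondary subtlety is the $x\in R_{u,v}$ verification, where the midpoint threshold $1+\varepsilon/2$ is essential: without the extra precision it affords, cluster-type equality would not be fine enough to separate plain from red witnesses and the argument would collapse.
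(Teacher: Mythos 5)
Your VC-dimension step and your classification step rest on incompatible choices of the red part, and the proposed reconciliation does not work. With $R_{u,v}=R_{T_1}(u)\cup R_{T_2}(v)$ your Sauer--Shelah factoring into clean $T_1$- and $T_2$-traces is indeed valid (this is the ``clean separation'' form of Sauer--Shelah the paper states), but the clustering conclusion then fails: when the $\delta$-net point $x$ lands in $R_{u,v}$ there is no contradiction to extract. For instance $d(u,x)$ and $d(v,x)$ may both lie in $(1,1+\varepsilon/2]$, so $u$ and $v$ have identical three-valued $(T_1,T_2)$-types at every point of the net and yet $x\in R_{T_1}(u)\subseteq R_{u,v}$; the net guarantee is ``used up'' by this red hit and tells you nothing about $B_{u,v}$, so $|B(u,1)\setminus B(v,1+\varepsilon)|\ge\varepsilon|V|$ is never refuted. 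Refining the cluster types at $\varepsilon/2$-granularity cannot help, because the joint $(T_1,T_2)$-type already is exactly that refinement. The fix the paper uses is to shrink the red part to the \emph{refined difference} $\{x:\ d(u,x)\le 1+\varepsilon/2 \text{ and } d(v,x)>1+\varepsilon/2\}\setminus B_{u,v}$, so that equality of midpoint types excludes every net point from $B\cup R$ and the $\delta$-net property forces $|B_{u,v}|<\varepsilon|V|$. But with that red part your easy VC argument collapses: $Y$ is then determined by the pair of midpoint traces $\{x\in X:\ d(\cdot,x)\le 1+\varepsilon/2\}$, which are traces of plain balls with no margin and may have VC-dimension as large as the ambient dimension, so no product bound follows. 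Bounding the VC-dimension of the refined difference is precisely what \cref{lem:VCdim-intersection-tri-hyper,lem:VCdim-difference-tri-hyper} (the bipartite double-counting argument) or \cref{lem:VCdim-set-diff}-type arguments supply; that ingredient, or a substitute, is missing from your sketch.

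A second, independent gap is the assertion that the Euclidean trigraphs $T_1,T_2$ have VC-dimension $O(\varepsilon^{-2})$ ``by the Euclidean analog of \cref{thm:VCdim-spherical}''. No such theorem is proved: \cref{thm:VCdim-Hamming,thm:VCdim-spherical} cover the cube and the sphere, and the Euclidean case is handled in the paper only through the Gaussian-projection embedding \cref{lem:euclidean-to-hamming}, which requires the points to have bounded norm; this is why the paper first covers $V$ by at most $1/\varepsilon$ balls of radius $3$ around a maximal $2$-separated family of dense centers, clusters each piece via \cref{lem:DNLH-cluster}, and intersects the resulting partitions. To make your global approach work you would need to supply this ingredient, e.g.\ by observing that any shattered set of a threshold-$1$ Euclidean trigraph has diameter at most $2$ and then invoking the bounded-norm embedding locally; as written, the VC claim is an unproved step whose proof is the technical heart of the paper's Euclidean section.
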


We can further detail the structure between the clusters. This is done in the form of a Strong Regularity Lemma discussed in \cref{subsec:ultrastrong}. Roughly speaking, all pairs $(C,C')$ of clusters apart from an $\eta$-fraction are homogeneous, in the sense that apart from an $\eta$-fraction of pairs $(c,c')$ of elements of $(C,C')$, all are at distance at most $\tau N$ or all are at distance at least $(\tau -\varepsilon) N$.

\subsection{Chromatic thresholds of \texorpdfstring{$K_t$}{Kt}-free graphs}

Triangle-free graphs are the perfect playground for DNL. Indeed, when each vertex $x$ of a triangle-free graph $G$ is represented by its corresponding 0/1 column vector $V_x$ in the adjacency matrix $A_G$ of $G$, then every edge $xy$ of $G$ corresponds to vectors $V_x$ and $V_y$ which never agree on a 1 coordinate.
In other words, edges in triangle-free graphs correspond to vertices which are far apart with respect to the Hamming distance. 
This basic remark gives another short proof of \cref{thm:thomassen} using the Hamming version of DNL.

The existence of $\varepsilon$-clusterings, given by \cref{lem:DNLS-cluster-intro,lem:DNLH-cluster-intro,lem:DNLE-cluster-intro}, also offers surprisingly short arguments. The following result was proved in the Master's Thesis of O'Rourke \cite{O'R14}. He used the fact that $n$-vertex regular graphs have no independent set of size more than $n/2$.
We revisit his proof using DNL instead of the Regularity Lemma.

    \begin{restatable}{theorem}{thresholdregular}\label{thm:regulartrianglefree}
    Every $(1/4 + \varepsilon)n$-regular triangle-free graph has  chromatic number at most $2^{\textup{poly}(\varepsilon^{-1})}$.
    \end{restatable}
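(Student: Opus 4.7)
The plan is to apply the clustering form of DNL to partition $G$ into clusters on which the structure is homogeneous, then exploit the regularity of $G$ to bound the sizes of the clusters that could possibly contain internal edges, and finally quotient onto a triangle-free graph of bounded size.

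First, I apply \cref{lem:DNLS-cluster-intro} to the neighborhood set system $\mF=\{N(v):v\in V(G)\}$ with $\varepsilon'$ polynomially small in $\varepsilon$, and further refine so that every pair $(C_i,C_j)$, including $i=j$, is homogeneous: either \emph{disjoint}, meaning almost every pair $(c,c')\in C_i\times C_j$ satisfies $N(c)\cap N(c')=\emptyset$, or \emph{overlapping}, meaning almost every pair has $|N(c)\cap N(c')|\geq \varepsilon' n$. The resulting partition has $k=2^{\textup{poly}(\varepsilon^{-1})}$ parts. Two key structural facts follow. First, overlapping pairs contain no edges of $G$, since triangle-freeness rules out any edge whose endpoints share a common neighbor. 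Second, after discarding an $\varepsilon$-fraction of exceptional vertices inside each disjoint cluster, the remaining vertices have pairwise disjoint neighborhoods; by the regularity $d=(1/4+\varepsilon)n$, the surviving part has at most $n/d<4$ vertices.

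Next, I form the reduced graph $R$ on $[k]$ by placing an edge $ij$ whenever the pair $(C_i,C_j)$ is disjoint and supports positive edge density in $G$. Triangle-freeness of $G$ lifts to $R$ by a standard counting argument. Pulling back a proper coloring of $R$ via $c\mapsto i$ for $c\in C_i$ yields a proper coloring of $G$ except on internal edges of disjoint clusters; since after the trimming above these clusters have at most three vertices each, distinguishing their vertices with three extra colors per cluster color gives a proper coloring of $G$ with at most $3k=2^{\textup{poly}(\varepsilon^{-1})}$ colors.

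The main obstacle I anticipate is the handling of the discarded exceptional vertices: the few vertices per disjoint cluster that have overlapping neighborhoods with their cluster-mates, and the vertices sitting in mixed pairs that are not homogeneous. These together form a sparse set $B$ whose coloring I treat either by iterating the whole argument on $G[B]$ (using that $B$ inherits approximate regularity thanks to the $n/2$ independent set bound for regular graphs) or by absorbing them directly using an auxiliary palette of $2^{\textup{poly}(\varepsilon^{-1})}$ colors. It is precisely here that the regularity, and not merely a min-degree condition, is essential: without it, disjoint clusters could be arbitrarily large, the quotient $R$ would fail to describe $G$, and the chromatic threshold would jump from $1/4$ to $1/3$.
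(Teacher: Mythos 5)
There is a genuine gap, and it starts at the very first step. \cref{lem:DNLS-cluster-intro} gives you an $\varepsilon$-clustering, i.e.\ $|D(x)\setminus D_\varepsilon(y)|\leq \varepsilon n$ for $x,y$ in the same cluster; it does \emph{not} give the dichotomy you assert, namely that every pair of clusters (including a cluster with itself) is homogeneous in the sense that almost all vertex pairs have disjoint neighborhoods or almost all have $|N(c)\cap N(c')|\geq \varepsilon' n$. No lemma in the paper provides this, and no bounded refinement can make \emph{all} pairs homogeneous (regularity-type statements always leave exceptional pairs); the Ultra-Strong Regularity Lemma of \cref{subsec:ultrastrong} only controls all but an $\eta$-fraction of pairs and additionally needs the red edges to be sparse. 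The downstream steps inherit this problem: trimming a ``disjoint'' cluster so that the survivors have \emph{exactly} pairwise disjoint neighborhoods is not possible from an almost-all-pairs statement (you would at best get pairwise small intersections, which is repairable, but the hypothesis feeding it is not available). Most seriously, your exceptional set $B$ — the discarded vertices plus everything sitting in non-homogeneous pairs — can have size linear in $n$, and neither proposed treatment closes the argument: $G[B]$ is an induced subgraph and inherits no approximate regularity whatsoever (its degrees can be arbitrary), and ``absorbing $B$ with an auxiliary palette of $2^{\textup{poly}(\varepsilon^{-1})}$ colors'' assumes exactly what the theorem is supposed to prove, namely that this linear-size set has bounded chromatic number.

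You have also not used regularity where it is actually needed. The paper's proof is much shorter and shows directly that \emph{every} cluster of the $\varepsilon/2$-clustering is an independent set: if $x,y$ in one cluster were adjacent, then since $N(x)\subseteq D(x)$ the sets $X'=N(x)\setminus D_{\varepsilon/2}(y)$ and $Y'=N(y)\setminus D_{\varepsilon/2}(x)$ have size at most $\varepsilon n/2$ each, so $(N(x)\setminus X')\cup(N(y)\setminus Y')$ has size more than $n/2$; because an $n$-vertex regular graph has no independent set of size more than $n/2$ (this is the only place regularity enters), this set contains an edge $zt$, and then $N(x),N(y),N(z),N(t)$ pairwise intersect in at most $\varepsilon n/2$ vertices while each has size $(1/4+\varepsilon)n$, forcing $|N(x)\cup N(y)\cup N(z)\cup N(t)|>n$, a contradiction. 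Coloring each cluster with one color then gives $\chi(G)\leq 2^{\textup{poly}(\varepsilon^{-1})}$ with no reduced graph, no trimming, and no exceptional set. If you want to salvage your write-up, the fix is to drop the homogenization/quotient machinery and prove cluster-independence directly along these lines.
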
 
    
    \begin{proof}
    Let $\mF = \{N(v) : v \in V\}$ and let $\mathcal{P}$ be an $\varepsilon/2$-clustering of $\mF$ as in \cref{lem:DNLS-cluster-intro}. Assume for contradiction that some cluster contains an edge $xy$. Since $N(x)\subseteq D(x)$, the set  $X'=N(x)\setminus D_{\varepsilon/2}(y)$ has size at most $\varepsilon n/2$ (analogous definition for $Y'$). 
    Note that $(N(x)\setminus X')\cup (N(y)\setminus Y')$ has size at least $2(1/4 + \varepsilon)n -2\varepsilon n/2>n/2$, so contains an edge $zt$.
    As $N(x), N(y), N(z), N(t)$  all pairwise intersect on at most $\varepsilon n/2$ elements, their union has size at least $4(1/4+\varepsilon)n-\binom{4}{2}\varepsilon n/2>n$, a contradiction. Thus clusters are independent sets, and $\chi(G)\leq |\mathcal{P}|$. 
    \end{proof}

Extremal constructions matching this $1/4$ threshold are provided in \cref{sec:TFG}.
We do not know if the bound on the chromatic number can be reduced. This threshold could even be a hard one: every $n/4$-regular triangle-free graph might have bounded chromatic number (this holds with $n/3$ in the non-regular setting).
\cref{thm:regulartrianglefree} extends to regular $K_t$-free graphs, where the (sharp) threshold is $\frac{3t-8}{3t-5}$.
The first difficulty (which we believe is the main reason why the regular threshold was not more investigated in the past) is that the proofs in this area rely on edge-maximality, which is not suited for regular graphs.
To this end, adding red edges is very convenient to materialize would-be edges. The second difficulty is that in the $K_t$-free case, the set system to consider is no longer the set of neighborhoods of vertices, but the set of common neighborhoods of cliques of size $t-2$, making computations more tedious. Our proof is rather technical and can be found in \cref{subsec:regthreshold}. We strongly believe that a much shorter argument exists.

Applications to homomorphism thresholds are discussed in \cref{subsec:homthreshold}. We obtain a bound singly exponential in $1/\varepsilon^2$ with DNL, which is slightly worse than the singly exponential in $1/\varepsilon$ dependency obtained in~\cite{LSSZ24}.

Finally, we unearth a surprising link between odd girth and chromatic thresholds. Let $og_c(n)$ be the minimum integer $k$ such that every $n$-vertex graph with odd girth at least $k$ has chromatic number at most $c$. We show that every triangle-free graph with minimum degree $n/3-n/8og_c(n)$ has chromatic number bounded in terms of $c$. By a result of Kierstead, Szemerédi and Trotter~\cite{DBLP:journals/combinatorica/KiersteadST84}, $og_c(n)=O\left(n^{\frac{1}{c-1}}\right)$. This implies that triangle-free graphs with minimum degree  $n/3-n^{1-\varepsilon}$ have bounded chromatic number, when $\varepsilon>0$ is fixed. This fully characterizes how far from $n/3$ we can go since the so-called \emph{Schrijver-Hajnal} graphs show that triangle-free graphs with minimum degree $n/3-n^{1-o(1)}$ do not have bounded chromatic number. This is discussed in \cref{subsec:digging}.
\subsection{Domination versus fractional chromatic number in tournaments}

Tournaments (orientations of complete graphs) are genuinely simpler than graphs as they only admit one type of simple substructure, transitive tournaments, whereas graphs admit both stable sets and cliques.
By analogy with graphs, the \emph{chromatic number} of a tournament is the minimum number of transitive tournaments needed to partition its vertex set. This parameter was investigated for instance in \cite{Abo24,Chu18,Gir24,Kli24,NSS24}. In the paper initiating this study \cite{BCCFLSST13}, the following question was raised: If the out-neighborhood of every vertex of a tournament $T$ has chromatic number at most $k$ (say that $T$ is \emph{locally $k$-bounded}), does $T$ itself have chromatic number at most $f(k)$?

This ``local to global'' property does not hold for graphs, since triangle-free graphs with large chromatic number have the property that the neighborhood of every vertex has chromatic number 1. The outcome is completely different for tournaments and the existence of such a function $f(k)$ was proved in \cite{HLTW19}. The proof is constructive, but not completely satisfactory as the function $f$ is tower-type. We propose here a completely different argument based on a very simple relation between tournament parameters: 

\begin{restatable}{theorem}{domfracchi}\label{thm:domfracchi}
Every tournament $T$ has domination number bounded in terms of its fractional chromatic number.
\end{restatable}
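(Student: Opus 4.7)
The plan is to feed the tri-hypergraph machinery of \cref{thm:bounded-integrality-gap} with a suitable relaxation of the domination hypergraph of $T$.

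The first step is pure LP duality. A fractional $k$-colouring of $T$ by transitive sub-tournaments yields, after transferring each sub-tournament's weight onto its unique source (which dominates the entire sub-tournament), non-negative weights $(q_v)_{v\in V(T)}$ with $\sum_v q_v\le k$ and $\sum_{u\in N^-[v]}q_u\ge 1$ for every $v$. Thus the domination hypergraph $\mathcal{H}=\{N^-[v]:v\in V(T)\}$ has fractional transversal number $\tau^*(\mathcal{H})\le k$, and transversals of $\mathcal{H}$ are exactly the dominating sets of $T$.

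If $\mathcal{H}$ had bounded VC-dimension we would be finished by \cref{thm:bounded-integrality-gap}, but random tournaments shatter $\Theta(\log n)$ vertices. I would therefore relax $\mathcal{H}$ to a tri-hypergraph $H_\varepsilon$ whose tri-edges are indexed by $v\in V(T)$, with $B_v=N^-[v]$ and $R_v$ collecting the vertices $u\notin N^-[v]$ which are \emph{close} to $v$ in the directed-disjointness sense of \cref{table:exmples-nbhds}, e.g. $|N^-(u)\cap N^+(v)|\le\varepsilon n$. The central technical claim is that $H_\varepsilon$ has VC-dimension bounded by some function $d(\varepsilon)$ of $\varepsilon$ alone. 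This should follow from a randomness-transversal argument in the spirit of the proof of \cref{thm:thomassen}: a red-free shattered configuration $(X,S)$ with $|S|=2^{|X|}$ would produce a dense pseudo-random bipartite sub-tournament between $X$ and $S$ in which the expected number of directed 2-paths is too large for every pair to avoid the red-gap $\varepsilon n$, forcing some pair to be red and contradicting the shattering.

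Assuming $d(\varepsilon)<\infty$, \cref{thm:bounded-integrality-gap} delivers a tri-transversal $X$ of $H_\varepsilon$ of size $O(d(\varepsilon)\cdot k\log k)$. By symmetry of the closeness relation, every $u\in X$ witnesses $u\in R_v$ for at most $\varepsilon n$ vertices $v$, so $X$ dominates $T$ except for a set of at most $|X|\cdot\varepsilon n$ vertices. Since the fractional chromatic number is hereditary on induced sub-tournaments, the same argument can be iterated on the undominated sub-tournament, and with slacks $\varepsilon$ carefully scheduled the iteration terminates after a number of rounds depending only on $k$, producing a dominating set of size depending only on $\chi_f(T)$.

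The principal obstacle is the definition of $R_v$: it must be rich enough that a large red-free shattered set cannot exist (giving the VC-dimension bound), yet sparse enough in the symmetric sense that "each $u$ sits in few $R_v$'s" to keep the approximate-domination error controlled at every round. Calibrating $\varepsilon$ along the iteration so that the final bound carries no $n$-dependence is the second point to verify; the paper's treatment of $(1/2-\varepsilon)$-majority digraphs suggests that the same scheduling strategy will go through in the $\chi_f$-bounded tournament setting.
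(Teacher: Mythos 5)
Your opening step (pushing each transitive subtournament's weight onto its source to get a fractional transversal of the in-neighborhood hypergraph of weight at most $k$) is fine, but your central technical claim is false as stated. If $R_v$ is defined by the directed-disjointness condition $|N^-(u)\cap N^+(v)|\le\varepsilon n$, then a uniformly random tournament is a counterexample to ``VC-dimension bounded by a function of $\varepsilon$ alone'': with high probability every pair satisfies $|N^-(u)\cap N^+(v)|=n/4+O(\sqrt{n\log n})$, so for fixed $\varepsilon$ no red arcs are created at all, while the in-neighborhood traces still shatter sets of size $\Theta(\log n)$. The flaw in your randomness-transversal sketch is that the quantity $|N^-(u)\cap N^+(v)|$ is measured in the whole vertex set $V$, whereas pseudo-randomness of the bipartite subtournament between $X$ and $S$ only controls arcs inside $X\cup S$ and says nothing about common $2$-path intermediaries in $V$ (this is unlike the triangle-free situation of \cref{thm:thomassen}, where an edge forces the two neighborhoods to be disjoint in all of $V$). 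This is precisely why the paper does not use a neighborhood-based closeness: there, an arc $xy$ gets a red reverse arc exactly when $x$ and $y$ lie together in at most $c^2t/2$ of the transitive subtournaments $T_1,\dots,T_t$ of the fractional cover (where $c=1/\chi^a_f(T)$), and the VC bound follows because a transitive tournament meeting a shattered set $X$ in $k$ vertices realizes at most $k+1$ traces on $X$, hence contains at most $\frac{k(k+1)}{2^k}\cdot 2^{|X|}\le 2^{|X|+1}$ arcs between $X$ and $S$; averaging over the $T_i$'s then forces some arc of every shattered configuration to be red. So the bounded VC-dimension is obtained relative to the transitive cover, and the fractional chromatic number is exactly what makes the red arcs exist; it cannot come from $\varepsilon$ alone.

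The iteration is also not justified. You dominate all but at most $|X|\cdot\varepsilon n$ vertices and recurse, but nothing forces the undominated subtournament to have smaller fractional chromatic number, so the recursion depth is of order $\log n/\log\bigl(1/(\varepsilon|X|)\bigr)$ and the final bound retains a dependence on $n$; no scheduling of $\varepsilon$ fixes this by itself. (Moreover, ``each $u$ lies in few $R_v$'s'' is unsupported: your closeness relation is not symmetric, and a vertex can be close to linearly many others.) The paper closes this hole with a density-increase step: if $y$ lies in the red out-neighborhood $R^+(x)$ of a vertex $x$ of the dominating set of the tri-tournament (obtained from \cref{thm:HWtournaments}, whose fractional input is \cref{thm:fisherryan} rather than your coloring-based bound), then $y$ shares at most $c^2t/2$ of the $T_i$'s with $x$, hence appears in at least a $\frac{c-c^2/2}{1-c}$-fraction of the $T_i$'s avoiding $x$, giving $\chi^a_f\bigl(T[R^+(x)]\bigr)\le\chi^a_f(T)-\frac{1}{2}$; induction on $\chi^a_f$ then yields a bound depending only on the fractional chromatic number. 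To salvage your plan you would need red arcs defined relative to a structure whose traces on a shattered set are provably few (such as the transitive cover), and a recursion driven by a parameter that provably decreases.
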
 

Here, the fractional chromatic number of a tournament $T$ is the minimum fractional cover by transitive subtournaments. This is the exact analogue of the fractional chromatic number of a graph, based on independent sets. Measuring how large and well-distributed the transitive subtournaments are is particularly interesting, as the notorious Erd\H{o}s-Hajnal conjecture for graphs is equivalent to its tournament counterpart, see~\cite{Alo01}. A study of a link between the number of transitive subtournaments and the global structure can be found in~\cite{Fox21}. Let us sketch now why \cref{thm:domfracchi} holds. The first ingredient was proved in~\cite{Fish92}. 

\begin{theorem}\label{thm:fisherryan}
Every tournament $T$ has fractional domination number at most 2.
\end{theorem}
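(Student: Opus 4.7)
The plan is to follow Fisher's original argument, which derives the bound from the von Neumann minimax theorem applied to a natural two-player zero-sum game on the tournament. Let $A$ be the $n\times n$ matrix indexed by $V(T)$ with $A_{uv}=1$ if $u=v$ or $u\to v$ in $T$, and $A_{uv}=0$ otherwise; thus $A_{uv}=1$ records that $u$ dominates $v$. Because $T$ is a tournament, for every pair of distinct vertices exactly one of the two possible arcs is present, so
\[
A + A^T = J + I,
\]
where $J$ is the all-ones matrix and $I$ is the identity. The fractional domination number $\gamma^*(T)$ is the optimum of the LP $\min \mathbf{1}^T x$ subject to $A^T x \geq \mathbf{1}$ and $x\geq 0$.

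Next I would consider the matrix game with payoff matrix $A$: the row player selects $u$, the column player selects $v$, and the row player collects $A_{uv}$, i.e.\ wins iff $u$ dominates $v$. Let $V$ be the value of this game. The central claim is that $V \geq 1/2$. To prove it, let $q^*$ be an optimal mixed strategy for the column player; by the minimax theorem $(Aq^*)_u \leq V$ for every row $u$, whence $q^{*T} A q^* \leq V$, and symmetrically $q^{*T} A^T q^* = (Aq^*)^T q^* \leq V$. Summing these and substituting $A + A^T = J + I$ gives
\[
2V \;\geq\; q^{*T}(J+I)q^* \;=\; 1 + \|q^*\|_2^2 \;\geq\; 1,
\]
hence $V \geq 1/2$ (in fact $V \geq (n+1)/(2n)$ using $\|q^*\|_2^2 \geq 1/n$).

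Finally, let $p^*$ be an optimal mixed strategy for the row player. By the minimax theorem, $(A^T p^*)_v = \sum_u p^*_u A_{uv} \geq V \geq 1/2$ for every vertex $v$, so $x := 2p^*$ is feasible for the LP with objective $\mathbf{1}^T x = 2$. This gives $\gamma^*(T)\leq 2$, as desired. The only non-routine step is the symmetry trick producing $V \geq 1/2$ from the identity $A + A^T = J + I$, which is exactly where the tournament hypothesis is used; everything else is just unpacking the minimax theorem, which plays here the same role that LP duality plays in several of the integer-to-fractional translations appearing earlier in the paper.
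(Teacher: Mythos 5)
Your proof is correct and is essentially the paper's argument: both apply minimax/LP duality to the natural vertex-picking game on the tournament to obtain a probability distribution that ``wins'' against every vertex, and then double it to get a fractional dominating function of weight $2$. The only cosmetic difference is that the paper invokes Farkas to get a distribution $p$ with $p(N^+(v)) \le p(N^-(v))$ for all $v$ (so $p(N^-[v]) \ge 1/2$), whereas you bound the value of the $0/1$ closed-domination game directly via the identity $A + A^T = J + I$; the two formulations are interchangeable.
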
 

Indeed, a direct application of Farkas Lemma provides a probability distribution $p$ on $V$ such that $p(N^+(v))\leq p(N^-(v))$ for all $v\in V$ (i.e. the weight of the in-neighborhood of every vertex is at least the weight of the out-neighborhood). 
Call this $p$ a \emph{winning strategy} and observe that if we consider the weight function $\omega \coloneqq 2p$, then $\omega(N^-[v])\geq 1$ for all vertices $v$, hence $\omega$ is a fractional domination function with total weight 2.

The second ingredient is an adaptation of the notion of trigraphs to tournaments. 
A \emph{tri-tournament} $T=(V,A,R)$ consists of a set $V$ of vertices, a set $A$ of arcs and a set $R$ of red arcs which are not in $A$. 
Moreover, for every pair of vertices $x,y$ in $V$, exactly one of the arcs $xy$ or $yx$ belongs to $A$. 
We can then see $T$ as a usual tournament with some additional red arcs forming circuits of length two. 
A set $X\subseteq V$ is a \emph{dominating set} of $T$ if for every vertex $y\in V\setminus X$, there exists $x\in X$ such that $xy\in A$ or $xy\in R$. 
In a tri-tournament $T$, a set of vertices $X\subseteq V$ is \emph{shattered} if there exists a set $S\subseteq V$ of size $2^{|X|}$ such that there is no red arc (in any direction) between $X$ and $S$, and all the closed in-neighborhoods in $X$ of the vertices of $S$ are pairwise distinct. 
The \emph{VC-dimension} of $T$ is the largest size of a shattered set. 
A direct application of \cref{thm:bounded-integrality-gap,thm:fisherryan} gives:

\begin{restatable}{theorem}{HWtournaments}\label{thm:HWtournaments}
Every tri-tournament with VC-dimension $d$ has domination number $O(d)$.
\end{restatable}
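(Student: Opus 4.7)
The proof should follow the recipe already indicated: translate the tri-tournament into a tri-hypergraph on the same vertex set, invoke the Fisher-Ryan bound on fractional domination as a bound on $\tau^*$, and then apply \cref{thm:bounded-integrality-gap}.

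More concretely, given a tri-tournament $T=(V,A,R)$, the plan is to associate to it the tri-hypergraph $H_T=(V,\mE)$ whose tri-edges are, for each $v\in V$, the triple $(B_v,R_v,W_v)$ with $B_v = \{v\}\cup \{u\in V: uv\in A\}$ (the closed in-neighborhood for the plain arcs), $R_v = \{u\in V: uv\in R\}$ (the red in-neighbors), and $W_v$ the complementary set. A set $X\subseteq V$ is a transversal of $H_T$ (i.e. meets $B_v\cup R_v$ for all $v$) if and only if it is a dominating set of $T$ in the sense defined just before the theorem. Because the red arcs of a tri-tournament form length-two circuits, red adjacency is symmetric, and one checks directly that a subset $X$ of $V$ is shattered in the sense of \cref{thm:HWtournaments} exactly when it is shattered by $H_T$ in the sense of the tri-hypergraph definition; in particular the VC-dimensions coincide and $H_T$ has VC-dimension $d$.

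Next I would use \cref{thm:fisherryan} to control the fractional transversal number $\tau^*(H_T)$. By that theorem, there is a probability distribution $p$ on $V$ (a winning strategy) such that $p(N^-[v])\geq 1/2$ for every $v$; equivalently the weight function $\omega := 2p$ satisfies $\omega(B_v)\geq 1$ for all $v$ and has total weight $2$. Thus $\omega$ is a feasible fractional transversal for $H_T$, witnessing $\tau^*(H_T)\leq 2$.

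Plugging these into \cref{thm:bounded-integrality-gap} yields
\[
\gamma(T) \;=\; \tau(H_T) \;=\; O\bigl(d\cdot \tau^*(H_T)\cdot \log \tau^*(H_T)\bigr) \;=\; O(d),
\]
which is the desired bound. The argument is really just a translation, so there is no genuine obstacle — the only point that requires care is verifying that ``shattered'' as defined for tri-tournaments matches ``shattered'' as defined for tri-hypergraphs, which rests on the symmetry of red arcs and on unpacking the condition $Y = X\cap B_s = X\cap(B_s\cup R_s)$ as saying that $s$ has no red in-neighbor in $X$.
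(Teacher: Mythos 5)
Your proposal is essentially the paper's own proof: the same tri-hypergraph $H_T$ with tri-edges $\left(N_A^-[v],\,N_R^-(v),\,W(v)\right)$, the same use of \cref{thm:fisherryan} to produce a fractional transversal of weight $2$, and the same application of \cref{thm:bounded-integrality-gap} to get a transversal, i.e.\ dominating set, of size $O(d)$. One small caution: your stated reason that the two shattering notions coincide, namely that ``red adjacency is symmetric,'' is inaccurate (a red arc $uv$ forces $vu\in A$, so the reverse arc is never red, and the tri-tournament definition forbids red arcs in \emph{both} directions between $X$ and $S$ while the trace condition in $H_T$ only rules out red in-neighbors of the witnesses), but identifying the VC-dimension of $H_T$ with that of $T$ is exactly the observation the paper itself makes at this step, so the approach matches.
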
 

The proof of \cref{thm:domfracchi} follows from a randomness transversal argument followed by a density increase argument. We consider a tournament $T$ with fractional chromatic number $1/c$, i.e. there exists a family ${\mathcal F}$ of  transitive subtournaments $T_1,\dots ,T_t$ such that every vertex belongs to $ct$ of them. 
Consider two large (but bounded) equal size disjoint subsets of vertices $A,B$ of $T$ which induce a random-like bipartite subtournament in the following sense: all subsets $A',B'$ of $A,B$ of size $\varepsilon |A|$ induce a (directed) cycle of length 4. 
This implies that no $T_i$ intersects both $A$ and $B$ on $\varepsilon |A|$ vertices, hence that there is an arc $ab$ with $a\in A$ and $b\in B$ such that $a$ and $b$ appear together in at most $\varepsilon t$ subtournaments $T_i$. 
For all such arcs $ab$, add the red arc $ba$, and call the resulting tri-tournament $T'$. By construction, every bipartite random-like subtournament of $T'$ has a red arc, hence $T'$ has bounded VC-dimension, therefore bounded domination by \cref{thm:HWtournaments}. Let $X$ be a dominating set of $T'$.
The vertices which are not dominated by $X$ in the original tournament $T$ are dominated in red in $T'$. The key-fact is that the red out-neighborhood $R^+(x)$ of a vertex $x$ in $T'$ consists of vertices which do not share more than $\varepsilon t$ subtournaments $T_i$'s with $x$. This means that they appear more on the $T_i$'s not containing $x$ (density increase argument). The fractional chromatic number of $T[R^+(x)]$ is then at most $(1-c+\varepsilon)/c$, which is close to $1/c-1$, and we conclude by induction.
The full proof is in \cref{sec:domtour}.

This result directly implies the local to global property of the chromatic number. Indeed, if $T$ is locally-$k$-bounded, we can fractionally distribute, via \cref{thm:fisherryan}, the acyclic partitions induced by the out-neighborhoods over the vertices in order to achieve fractional chromatic number $2k$. 
Hence $T$ has domination bounded by some function $g(k)$ by \cref{thm:domfracchi}, and finally chromatic number at most $kg(k)$. The proof gives a factorial-type bound instead of a tower function and the algorithmic construction of the dominating set is easy to describe: 
At each step we compute the winning strategy, and draw some (large enough) subset $X$ of vertices according to this distribution. 
We then delete the vertices dominated by $X$, partition the other ones into at most $|X|$ subtournaments with smaller fractional chromatic number, and iterate the process in each of these subtournaments.

\subsection{Domination in majority directed graphs}

This section revisits one of the earliest applications of VC-dimension to graph theory, due to Alon, Brightwell, Kierstead, Kostochka and Winkler in~\cite{ALON2006374}.

A committee of $2k+1$ referees is in charge of awarding some grants. Each referee proposes a total order on the set of applicants, and, as a result, an applicant $x$ is \emph{preferred} to $y$ if it is ranked before $y$ by at least $k+1$ voters. The key-result of~\cite{ALON2006374} is that the committee can select a bounded (in terms of $k$) set $X$ of applicants such that for every applicant $y\notin X$, some $x \in X$ is preferred to $y$. This indeed certifies (in some way) why $y$ was not chosen among the happy few selected for a grant. Unfortunately, the bound on the size of $X$ (and thus the number of grants a fair committee needs to award) grows with $k$, a fact popularized by the authors as: ``Large committees need more money''.

To generalize the voting process, we consider $n$ applicants, a committee of $m$ voters, and some $c\in [0,1]$. Say that the applicant $x$ is \emph{$c$-preferred} to $y$ if $x$ is preferred to $y$ by at least $cm$ voters. We obtain the \emph{$c$-majority} directed graph $D_c$ whose vertex set is the set of all applicants, with an arc $xy$ if $x$ is $c$-preferred to $y$. The original result of \cite{ALON2006374} asserts that when $m=2k+1$ and $c=1/2$, the digraph $D_{1/2}$ is a tournament with bounded domination. However, when the number $m$ of voters is not bounded (but still odd), the tournament $D_{1/2}$ no longer has bounded domination number. A key observation is that when the number of voters is $2k+1$, the candidate $x$ is preferred to $y$ if and only if more than $k$ voters prefer $x$ to $y$. This corresponds to a preference ratio greater than $k/(2k+1)<1/2$. This means that $D_{1/2}$ is in fact a $(1/2-\varepsilon)$-majority digraph, for any  $0\leq \varepsilon<\frac{1}{4k+2}$. One can wonder if, rather than from the fixed number of voters, bounded domination follows from the fact that $D$ is a $c$-majority directed graph with $c<1/2$. This is indeed the case:

\begin{restatable}{theorem}{majosloppy}\label{thm:majosloppy}
Every $\left(\frac{1}{2}-\varepsilon\right)$-majority digraph has domination number at most $O\left(\frac{1}{\varepsilon^2}\right)$.
\end{restatable}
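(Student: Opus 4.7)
The strategy is to mimic the proof of \cref{thm:domfracchi}: lift the $(1/2-\varepsilon)$-majority digraph $D$ to a tri-tournament $T'$ in which the near-tie pairs are encoded as red arcs, show that $T'$ has bounded VC-dimension, and conclude via \cref{thm:HWtournaments}. Write $p_{vw}$ for the fraction of voters preferring $v$ to $w$. Define $T'=(V,A,R)$ by putting the strict majority direction of each pair into $A$ (breaking ties $p_{vw}=1/2$ arbitrarily), and adding the opposite direction to $R$ whenever $|p_{vw}-1/2|\le \varepsilon$. A direct check shows that the dominating sets of $T'$ (with respect to $A\cup R$) are exactly the dominating sets of $D$. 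The tournament obtained by forgetting the red arcs has fractional domination at most $2$ by Fisher's theorem (\cref{thm:fisherryan}), and $T'$ inherits this bound.

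The heart of the argument is to show that the VC-dimension of $T'$ is $O(1/\varepsilon^2)$, independently of the number $m$ of voters. Suppose $X=\{x_1,\ldots,x_d\}\subseteq V$ is shattered by $S=\{s_Y:Y\subseteq X\}$ of size $2^d$. The absence of red arcs between $X$ and $S$ translates into the margin condition $|p_{x_is_Y}-1/2|>\varepsilon$ for every $(x_i,s_Y)\in X\times S$, and the distinct-in-neighborhoods condition gives $\mathrm{sign}(p_{x_is_Y}-1/2)=+1$ iff $x_i\in Y$, so the sign matrix $(\mathrm{sign}(p_{x_is_Y}-1/2))_{i,Y}$ realizes all $2^d$ patterns in $\{-1,+1\}^d$. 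Associate to each applicant $s$ the real-valued function $f_s:V\to[-1,1]$ defined by $f_s(x)=2p_{xs}-1$; then $f_s$ is the expectation, over a uniformly random voter $\pi$, of the $\pm 1$-valued function $g_{\pi,s}(x)$ that equals $+1$ exactly when $\pi$ prefers $x$ to $s$. The crucial structural observation is that for each fixed voter $\pi$, as $s$ varies, $g_{\pi,s}$ restricted to $X$ is a threshold function with respect to $\pi$'s ranking of $X$ (its value depends only on the insertion position of $s$ among the $d+1$ slots); hence the family $\{g_{\pi,s}\}_{\pi,s}$ restricted to $X$ lies in a class of VC-dimension at most $2$. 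Thus $\{f_s\}$ lies in the convex hull of a VC-dimension-$2$ class and witnesses $\varepsilon$-fat-shattering of $X$.

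By the standard margin bound for convex hulls of VC-classes (obtainable through a Rademacher/Maurey argument: the empirical Rademacher complexity of a convex hull equals that of its base class, so is $O(\sqrt{1/d})$ on a set of size $d$, and the $\varepsilon$-fat-shattering of $X$ forces Rademacher complexity at least $\Omega(\varepsilon)$), we obtain $d=O(1/\varepsilon^2)$. Applying \cref{thm:HWtournaments} to $T'$ yields $\gamma(T')=O(\text{VC}(T'))=O(1/\varepsilon^2)$, which equals the domination number of $D$.

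The main obstacle is precisely the VC-dimension bound: the naive Hoeffding + union-bound argument reducing to a $k$-voter majority tournament is circular because the VC-dimension of a $k$-voter majority tournament grows like $k\log k$. The correct argument requires the sharper margin bound for convex hulls of bounded-VC classes, exploiting the fact that each preference average is built out of voter-level threshold functions of constant VC-dimension. This step is the voting-theoretic analogue of the spherical embedding argument used in \cref{thm:VCdim-spherical} to bound the VC-dimension of metric-trigraphs on the sphere.
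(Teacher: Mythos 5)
Your reduction to bounding the VC-dimension of the tri-tournament $T'$ is the same first move as the paper's, but the step you call the heart of the argument contains a genuine error. For a \emph{fixed} voter $\pi$, the functions $g_{\pi,s}$ restricted to $X$ are indeed prefixes of $\pi$'s order on $X$, a class of VC-dimension $1$; but the family $\{g_{\pi,s}\}_{\pi,s}$ with $\pi$ ranging over all $m$ voters is a union of $m$ threshold classes with respect to $m$ \emph{different} orders, and such a union has no bounded VC-dimension: for any $A \subseteq X$, a voter who ranks the elements of $A$ before those of $X \setminus A$, together with a candidate $s$ inserted between the two blocks, realizes the trace $A$, so with adversarial voters the union class can shatter all of $X$. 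Hence $\{f_s\}$ is not contained in the convex hull of a VC-dimension-$2$ class, and the bound you invoke (``Rademacher complexity of the convex hull is $O(\sqrt{1/d})$'') fails for your chosen base class --- that convex hull contains each $g_{\pi,s}$ itself and can have empirical Rademacher complexity $\Theta(1)$ on $X$. So the inequality $\varepsilon \lesssim \sqrt{1/d}$ does not follow as written, and the proof breaks at this point. (Your closing remark about circularity is also off target: the paper does not reduce to a $k$-voter majority tournament.)

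The margin idea itself is salvageable, but the aggregation must be done per voter rather than through a single base class: write $f_s = \frac{1}{m}\sum_{\pi} g_{\pi,s}$ and use that Rademacher complexity is subadditive under Minkowski sums/averages, so the complexity of $\{f_s\}$ on $X$ is at most the average over voters of the complexities of the classes $\mathcal{G}_\pi = \{g_{\pi,s}\}_s$; each $\mathcal{G}_\pi$ restricted to $X$ has at most $|X|+1$ elements, so Massart's finite-class lemma gives $O\bigl(\sqrt{\log |X| / |X|}\bigr)$ and hence $|X| = O\bigl(\varepsilon^{-2}\log(1/\varepsilon)\bigr)$, with chaining recovering $O(\varepsilon^{-2})$. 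Such a corrected argument would be genuinely different from the paper's proof of \cref{thm:majosloppy}, which instead samples a random multiset $Z$ from the shattering set $S$, completes $X \cup Z$ to a uniformly random tournament, shows that a uniformly random voter order is consistent with a $\bigl(\tfrac{1}{2}+\Omega(\varepsilon)\bigr)$-fraction of the $X$--$Z$ arcs, and contradicts De La Vega's bound (\cref{lem:delavega}) on the maximum number of consistent arcs in a random tournament, before concluding via \cref{thm:HWtournaments} exactly as you propose.
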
 

\cref{thm:majosloppy} was proved independently by Charikar, Ramakrishnan and Wang \cite{CRW25}.
This result can also be seen as an application of the Dense Neighborhood Lemma to some oriented metric-trigraph on the $n$-torus. However, we prove it without this geometric point of view. It again follows from a randomness transversal argument. Observe that when $D$ is a $(1/2-\varepsilon)$-majority digraph, it is a supergraph of the $1/2$-majority tournament $T$ on the same set of voters. 
If the arcs of $D$ which are not in $T$ are colored in red, every large random-like bipartite subtournament of $T$ contains a red edge. Thus $D$ can be seen as a tri-tournament with bounded VC-dimension, hence it has bounded domination.

\cref{thm:majosloppy} can be interpreted in a very concrete way: assume that the committee allows some slack in the decision process, for instance selecting $X$ such that for every applicant $y$ not in $X$, there some $x\in X$ is $0.49$-preferred to $y$. In such a case, the set $X$, and thus the number of grants, can be chosen of bounded size, hinting that ``Sloppy committees need less money''. Unfortunately, our method so far only provides an upper bound ($f(0.49)$) of a few million grants. Improving this value would be necessary before any application in policy making.

\subsection*{Organisation of the paper}

All relevant definitions are collected in \cref{sec:preliminaries}. 
We start by proving the analogue of the result of Haussler and Welzl in \cref{sec:trans}.
In \cref{sec:vcdimtri}, we show how to bound the VC-dimension of various classes of trigraphs and tri-hypergraphs. 
Combining the results of these two sections, we give all statements of DNL and its variants in \cref{sec:DNL}. 
We then move to the applications. 
In \cref{sec:TFG}, we investigate the chromatic and homomorphism thresholds of $K_t$-free graphs.
We then turn to tournaments in \cref{sec:domtour}, and explore the connection between the fractional chromatic number and domination. 
Finally, in \cref{sec:majo} we prove that $(1/2-\varepsilon)$-majority digraphs have bounded domination.
\section{Preliminaries}\label{sec:preliminaries}

\paragraph{Set systems.}
A set system $\mF$ is a family of sets on some ground set $V$. 
Given a vertex $x$, we denote by $\mF_{x}$ the set of all sets in $\mF$ which contain $x$, and write $\mF_{xy}=\mF_x\cap \mF_y$.
We denote by $D_\varepsilon(x)$ the set $\{y\in V : |\mF_{xy}|\leq \varepsilon  |\mF|\}$. When $\varepsilon =0$, we simply write $D(x) \coloneqq D_0(x)$. By extension, when $X \subseteq V$ we write  $D_\varepsilon(X)=\bigcup_{x\in X} D_\varepsilon (x)$. The notation $D$ stands for disjoint.
An \emph{$\varepsilon$-covering} of $\mF$ is a subset $X$ of vertices such that $D_\varepsilon(X)=V$.
The \emph{disjointness-ratio} of $\mF$ is the minimum of $|D(x)|/|V|$ over all $x\in V$. We usually denote it by $\delta$.
An $(\varepsilon, \eta)$-cluster is a subset $X$ of $V$ such that $|D(x)\setminus D_\varepsilon (y)|\leq \eta |V|$ for all $x, y\in X$, and an $(\varepsilon, \eta)$-clustering is a partition of $V$ into $(\varepsilon, \eta)$-clusters. 
The \emph{size} of a clustering is its number of clusters.
We will often refer to an $(\varepsilon, \varepsilon)$-clustering as an $\varepsilon$-clustering.

\paragraph{Graphs.} If $G = (V, E)$ is a graph and $V_1, V_2$ are subsets of $V$, we denote by $E(V_1, V_2)$ the set of edges with one endpoint in $V_1$ and the other in $V_2$, and we set $e(V_1, V_2) = |E(V_1, V_2)|$.  We also set $\overline{e}(V_1, V_2) = |V_1| \cdot |V_2| - e(V_1, V_2)$. Observe that $\overline{e}(V_1, V_2)$ is the number of non-edges between $V_1$ and $V_2$. We denote by $e(G)$ the number of edges of $G$.
A \emph{dominating set} of $G$ is a set $X \subseteq V$ of vertices such that every vertex not in $X$ has a neighbor in $X$. The size of a smallest dominating set of $G$ is denoted by $\gamma(G)$.

\paragraph{Trigraphs.}
A \emph{trigraph} is a triple $T=(V,E,R)$ where $V$ is the set of vertices, $E$ is the set of (\emph{plain}) edges, and $R$ is the set of \emph{red} edges, with $R\cap E=\emptyset$. We will always denote by $n$ the number of vertices. In graphs, we usually speak of ``non-edges'' to indicate pairs of nonadjacent vertices, we do the same for trigraphs, and denote by $W$ the set of non-edges. This stands for ``without edge'' or more visually the fact that two nonadjacent vertices drawn on a whiteboard are connected by a ``white edge''. Therefore $(E,R,W)$ is a partition of the set of pairs of vertices. Given a vertex $v$, we denote by $R(v)$ its set of red neighbors and by $W(v)$ its set of non-neighbors. We still denote the (plain) neighborhood of $v$ by $N(v)$, and by $N[v]$ for the closed neighborhood, including $v$. 

\paragraph{Trigraph parameters.}
The \emph{minimum degree} $\delta(T)$ of a trigraph $T$ is the minimum size of $N(v)$ over all $v\in V$. A \emph{dominating set} of $T$ is a set $X$ of vertices such that for every $y \in V$, there exists $x\in X$ such that $y\in N[x]\cup R(x)$. The size of a smallest dominating set of $T$ is denoted by $\gamma(T)$.
We say that a subset $X$ of $V$ is \emph{shattered} if there exists a set $S$ of $2^{|X|}$ vertices with no red neighbor in $X$ and such that for every $Y\subseteq X$, there exists $v_Y\in S$ such that $N[v_Y]\cap X=Y$. The \emph{VC-dimension} of a trigraph is the largest size of a shattered set.

\paragraph{Classes of trigraphs.}\label{par:classes-trigraphs}
A trigraph $T = (V, E, R)$ is a \emph{metric-trigraph} if there exists a metric space $(X, d)$ such that $V \subseteq X$, a threshold $\tau$ and a sensitivity $\varepsilon$ such that $E = \{uv : d(u, v) \leq \tau\}$ and $R = \{uv : \tau < d(u, v) \leq \tau + \varepsilon\}$. Equivalently, for every $u_1, u_2, v_1, v_2$ such that $u_1v_1 \in E$ and $u_2v_2 \notin E \cup R$, we have $d(u_2, v_2) - d(u_1, v_1) \geq \varepsilon$.
In particular, $T = (V, E, R)$ is a \emph{Hamming-trigraph} if the metric space is $(\{0, 1\}^N, d_H)$, where $d_H$ is the Hamming distance. 
Similarly, $T = (V, E, R)$ is a \emph{spherical-trigraph} if the metric space is $(\mathbb{S}^{N}, d_S)$, where $\mathbb{S}^{N} \subseteq \mathbb{R}^{N+1}$ is the unit sphere, and $d_S$ is the spherical distance, i.e. the angle $x0y$.
Finally, $T$ is a \emph{disjointness-trigraph} if there exists a set system $\mF$ on ground set $V$ and a sensitivity $\varepsilon > 0$ such that $E = \{xy : \mF_{xy} = \emptyset\}$ and $R = \{xy : 0 < |\mF_{xy}| \leq \varepsilon |\mF|\}$.

\paragraph{Tri-hypergraphs.}
A \emph{tri-hypergraph} $H$ is a pair $(V,\mathcal{E})$ where $V$ is a finite set and $\mathcal{E}$ is a collection of partitions of $V$ into three (possibly empty) parts $(B, R, W)$. Each such triple is called a \emph{tri-edge} of $H$. A standard hypergraph can be seen as a tri-hypergraph in which $R = \emptyset$ for every $(B, R, W)\in \mE$. When defining a tri-edge, we will often only define $B$ and $R$ explicitly, $W$ will then implicitly be defined as the complement of $B \cup R$.

\paragraph{Parameters on tri-hypergraphs.}
Given a tri-hypergraph $H=(V,\mE)$ and a set $X \subseteq V$ we define the \emph{trace} of $H$ on $X$ as 
$$tr_H(X)=\{ Y\subseteq X \ |\ \exists (B, R, W)\in \mE \text{  such that }  Y=X\cap B=X\cap (B \cup R) \}.$$
A set $X \subseteq V$ is \emph{shattered} by $H$ if $tr_H(X)=2^{X}$. The \emph{VC-dimension} of $H$ is the largest size of a shattered set. 
A partition $(X_0, X_1)$ of a set $X \subseteq V$ is a \emph{clean separation} if there exists an edge $(B, R, W) \in \mE$ such that $W \cap X=X_0$ and $B \cap X=X_1$. Then, $X$ is shattered if every partition $(X_0, X_1)$ of $X$ is a clean separation.
A set of vertices $X \subseteq V$ is a \emph{transversal} (or \emph{hitting set}) of $H$ if  $X \cap (B \cup R) \neq \emptyset$ for every $(B, R, W)\in \mE$. A set of vertices $X \subseteq V$ is a \emph{$\delta$-net} of $H$ if $X \cap (B \cup R) \neq \emptyset$ for every $(B, R, W) \in \mE$ such that $|B| \geq \delta|V|$.
Note that when $R = \emptyset$ for every tri-edge, we recover the definition of VC-dimension and transversal for hypergraphs. 

Given a tri-hypergraph $H = (V, \mE)$, a weight function $\omega : V \to \mathbb{R}^+$ is a \emph{fractional transversal} if for every $(B, R, W) \in \mE$, it holds that $\omega(B) \coloneqq \sum_{b \in B} \omega(b) \geq 1$. Its \emph{weight} is $\omega(V) \coloneqq \sum_{v \in V} \omega(v)$. The minimum weight of a fractional transversal of $H$ is denoted by $\tau^*(H)$ (or simply $\tau^*$ if $H$ is clear from the context), and the minimum size of a transversal of $H$ is denoted by $\tau(H)$.
Note that if $R = \emptyset$ for every $(B, R, W) \in \mE$, this coincides with the standard definitions of $\tau$ and $\tau^*$ in hypergraphs.
However, there are some key differences when compared to hypergraphs, one of them being that the inequality $\tau^* \leq \tau$ does not always hold, as $\tau^*$ is not the fractional relaxation of $\tau$.

\paragraph{Operations on tri-hypergraphs.}
If $H = (V, \mE)$ is a tri-hypergraph, let $\overline{\mE} = \{(W, R, B) : (B, R, W) \in \mE\}$ and $\overline{H} = (V, \overline{\mE})$.
We say that $\overline{H}$ is the \emph{complement} of $H$. Observe that $H$ and $\overline{H}$ shatter the same sets, and thus have the same VC-dimension.
If $e_1 = (B_1, R_1, W_1)$ and $e_2 = (B_2, R_2, W_2)$ are two tri-edges on the same ground set, the tri-edge $e_1 \cap e_2$ is the tri-edge $(B, R, W)$ where $B = B_1 \cap B_2$ and $R = ((B_1 \cup R_1) \cap (B_2 \cup R_2)) \setminus B$. Similarly, the tri-edge $e_1 \setminus e_2$ is the the tri-edge $(B, R, W)$ where $B = B_1 \cap W_2$ and $R = ((B_1 \cup R_1) \cap (W_2 \cup R_2)) \setminus B$.
Then, if $H_1 = (V, \mE_1)$ and $H_2 = (V, \mE_2)$ are two tri-hypergraphs on the same ground set, the tri-hypergraph $H_1 \cap H_2$ is the tri-hypergraph $(V, \{e_1 \cap e_2 : e_1 \in \mE_1, e_2 \in \mE_2\})$, and the tri-hypergraph $H_1 \setminus H_2$ is the tri-hypergraph $(V, \{e_1 \setminus e_2 : e_1 \in \mE_1, e_2 \in \mE_2\})$.
Given two tri-hypergraphs $H = (V, \{(B_e, R_e, W_e): e \in \mE\})$ and $H' = (V, \{(B'_e, R'_e, W'_e): e \in \mE\})$, we say that $H$ \emph{refines} $H'$ if for every $e \in \mE$ we have $B'_e \subseteq B_e$ and $W'_e \subseteq W_e$. Intuitively, if $H$ refines $H'$ then each tri-edge of $H$ can be obtained from the corresponding tri-edge of $H'$ by moving some vertices from the red part to either the black or the white part. Observe that if $H$ refines $H'$ then the VC-dimension of $H'$ is upper bounded by the one of $H$.
If $T = (V, E, R)$ is a trigraph, its \emph{corresponding tri-hypergraph} is $H_T = (V, \mE)$ with $\mE = \{(N[v], R(v), W(v)) : v \in V\}$. Observe that $T$ and $H_T$ have the same VC-dimension, and that the transversals of $H_T$ are exactly the dominating sets of $T$.

\paragraph{Digraphs.}
In a directed graph (or digraph) $D=(V,A)$, $A$ is the set of \emph{arcs} (oriented edges). We denote by $N^-(v)$ the set of in-neighbors of a vertex $v$ and by $N^+(v)$ its set of out-neighbors. We adopt the notation $N^-[v]$ and $N^+[v]$ for the closed neighborhoods, that is, including $v$. The \emph{domination number} $\gamma ^+(D)$ of $D$ is the minimum size of a (\emph{dominating}) set $X$ of vertices satisfying that $N^-[v]\cap X\neq \emptyset$ for every vertex $v$.
The \emph{fractional dominating number} $\gamma ^+_f(D)$ is the minimum of $\omega(V)$ taken over all weight functions $\omega : V \to \mathbb{R}^+$ such that $\omega (N^-[v])\geq 1$ for all $v\in V$. 
The \emph{absorption number} $\gamma ^-(D)$ of $D$ is the minimum size of a (\emph{absorbing}) set $X$ of vertices satisfying that $N^+[v]\cap X\neq \emptyset$ for every vertex $v$.
The \emph{fractional absorption number} $\gamma ^-_f(D)$ is the minimum of $\omega(V)$ taken over all weight functions $\omega : V \to \mathbb{R}^+$ such that $\omega (N^+[v])\geq 1$ for all $v\in V$.
The \emph{acyclic chromatic number} $\chi ^a(D)$ of a directed graph $D=(V,A)$ is the minimum number of acyclic induced subgraphs of $D$ whose union covers $V$ (or equivalently partitions $V$). 
The \emph{fractional acyclic chromatic number} $\chi ^a_f(D)$ is the minimum total sum of a weight function $\omega$ on acyclic induced subgraphs, satisfying that for every vertex $v$, the sum of the weights of the acyclic subgraphs containing $v$ is at least 1.
A \emph{tournament} is a digraph where any two vertices are connected by exactly one arc.

\paragraph{Tri-tournaments.}
A \emph{tri-tournament} $T=(V,A,R)$ consists of a set of vertices $V$, a set $A$ of arcs and a set $R$ of red arcs which are not in $A$. Moreover for every pair of vertices $x,y$ in $V$, exactly one of the arcs $xy$ or $yx$ belongs to $A$. We can then see $T$ as a usual tournament with some additional red arcs forming circuits of length two. A set $X\subseteq V$ is a \emph{dominating set} of $T$ if for every vertex $y\in V\setminus X$, there exists $x\in X$ such that $xy\in A$ or $xy\in R$. In a tri-tournament $T$, a set of vertices $X\subseteq V$ is \emph{shattered} if there exists a set $S\subseteq V$ of size $2^{|X|}$ such that there is no red arc (in any direction) between $X$ and $S$, and all closed in-neighborhoods in $X$ of vertices of $S$ are pairwise distinct. The \emph{VC-dimension} of $T$ is the largest size of a shattered set.

\paragraph{Miscellaneous.}
We conclude this section with a technical Lemma, which we often use.
\begin{lemma}\label{lem:tech}
If $a, b, x>0$ satisfy $ax\geq 1$ and $x\leq b \ln(ax)$, then $x\leq 2b \ln(ab)$
\end{lemma}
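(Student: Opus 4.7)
The plan is to reduce this two-variable inequality to a one-variable inequality by setting $y = ax$ and $c = ab$. Under this substitution, multiplying $x \leq b \ln(ax)$ by $a$ turns the hypothesis into $y \geq 1$ and $y \leq c \ln y$, while the desired conclusion $x \leq 2b \ln(ab)$ becomes $y \leq 2c \ln c$. So it suffices to prove the one-variable statement: if $c > 0$ and $y \geq 1$ satisfy $y \leq c \ln y$, then $y \leq 2c \ln c$.

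Next I would study the function $f(y) = y - c \ln y$ on $[1, \infty)$. Its derivative $f'(y) = 1 - c/y$ vanishes only at $y = c$, so $f$ is strictly decreasing on $[1, c]$ and strictly increasing on $[c, \infty)$ with a unique minimum $f(c) = c(1 - \ln c)$. Hence the set where $f \leq 0$ is some interval $[y_1, y_2]$, which is nonempty only when $c \geq e$; in particular the hypothesis forces $c \geq e > 1$. Since $f$ is increasing on $[c, \infty)$ and $2c \ln c \geq c$ (as $\ln c \geq 1$), it is enough to verify $f(2c \ln c) \geq 0$: that would give $y \leq y_2 \leq 2c \ln c$.

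The remaining calculation is routine:
\[
f(2c \ln c) \;=\; 2c \ln c - c\bigl(\ln 2 + \ln c + \ln \ln c\bigr) \;=\; c\bigl(\ln c - \ln 2 - \ln \ln c\bigr),
\]
so $f(2c \ln c) \geq 0$ is equivalent to $c \geq 2 \ln c$. This last inequality holds for all $c > 0$, because $g(c) = c - 2 \ln c$ has derivative $1 - 2/c$, so it is minimized at $c = 2$ with value $2(1 - \ln 2) > 0$; in our regime $c \geq e$ there is even more slack. There is no real obstacle in this proof; the only point to be careful about is remembering that the assumption $y \leq c \ln y$ by itself forces $c \geq e$, which is what legitimizes the inequality $2c \ln c \geq c$ used when comparing $2c \ln c$ to $y_2$.
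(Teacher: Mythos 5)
Your proof is correct, but it takes a different route from the paper's. The paper's argument is a two-line extremal trick: increase $x$ until the hypothesis is tight, i.e.\ $x = b\ln(ax)$, then eliminate $b$ by substituting $b = x/\ln(ax)$ into the target $\ln(ax)\leq 2\ln(ab)$, which reduces everything to the single inequality $\ln(t)^2 \leq t$ for $t = ax \geq 1$. You instead rescale to eliminate both parameters at once ($y = ax$, $c = ab$), and then run a monotonicity analysis of $f(y) = y - c\ln y$: the admissible $y$ form an interval $[y_1, y_2]$ (nonempty only when $c \geq e$, which you correctly note is forced by the hypothesis), and $y_2 \leq 2c\ln c$ follows from $f(2c\ln c) \geq 0$, which boils down to $2\ln c \leq c$. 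All the steps check out, including the implicit use of $\ln c \geq 1$ to expand $\ln(2c\ln c)$ and to compare $2c\ln c$ with $c$. What each approach buys: the paper's proof is shorter and needs no case analysis or discussion of where the sublevel set lives, at the cost of the slightly informal "increase $x$ to equality" step; yours is more systematic and self-contained, makes the structure of the solution set explicit, and records the side fact that the hypotheses force $ab \geq e$, but it is longer for the same bound.
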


\begin{proof}
Increase the value of $x$ until reaching $x= b \ln(ax)$. We then want to show that $b \ln(ax)\leq 2b \ln(ab)$, which is simply $\ln(ax)\leq 2 \ln(ab)$, hence $ax\leq a^2b^2$. Substituting $b$ by $x/\ln(ax)$, our target inequality is equivalent to $\ln(ax)^2\leq ax$, which holds since $ax \geq 1$.
\end{proof}
\section{Transversals in bounded VC-dimension}\label{sec:trans}

The main result of this section is a generalization to tri-hypergraphs of the celebrated $\delta$-net theorem of Haussler and Welzl on hypergraphs \cite{HW87}, stating that in an $n$-vertex hypergraph of VC-dimension $d$, there is a set of size $O\left(\frac{d}{\delta}\log\left(\frac{1}{\delta}\right)\right)$ which intersects all hyperedges of size at least $\delta n$.

\begin{theorem}\label{thm:HW-trihyper}
Let $H=(V,\mE)$ be a tri-hypergraph of VC-dimension $d$ and let $\delta > 0$. 
There is a $\delta$-net $X$ such that
$$
|X|< 2 \cdot \frac{d}{\ln(1+\delta)} \cdot \ln\left(\frac{e}{\ln(1+\delta)}\right).
$$
\end{theorem}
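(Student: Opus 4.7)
The plan is to follow the same double-counting template as the proof of Theorem \ref{thm:triHW}, but to replace the crude Sauer-Shelah estimate $|tr_H(Z)|\le k^d$ by the sharper form $|tr_H(Z)| \le \binom{k}{\leq d} \le (ek/d)^d$. The extra factor $e/d$ inside the logarithm is precisely what kills the spurious $\ln d$ term and delivers the bound stated in the theorem.

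As in the proof of Theorem \ref{thm:triHW}, I would first apply the blow-up trick: replace each vertex by a large number of copies so that $|V|$ becomes arbitrarily large. The VC-dimension, the ratios $|B|/|V|$ attached to tri-edges, and the minimum $\delta$-net size are all preserved, and uniform sampling from the blown-up ground set behaves like sampling with replacement for bounded-size subsets. Assuming for contradiction that $H$ admits no $\delta$-net of size strictly less than $k := \lceil 2(d/\ln(1+\delta))\ln(e/\ln(1+\delta))\rceil$, I draw a uniform random size-$k$ subset $Z$, and for each $0\le i\le k$ consider all decompositions $Z = Y\sqcup Y'$ with $|Y|=i$. Since $Y$ is not a $\delta$-net, there exists a tri-edge $(B,R,W)$ with $|B|\ge\delta|V|$ disjoint from $Y$; the probability that $Y'\subseteq B$ is then at least $\delta^{k-i}$, and in that case $(Y,Y')$ is a clean separation of $Z$ (witnessed by that very tri-edge). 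Summing over $i$ yields
\[
\mathbb{E}\bigl[\#\text{clean separations of } Z\bigr] \;\ge\; \sum_{i=0}^{k} \binom{k}{i}\delta^{k-i} \;=\; (1+\delta)^k.
\]

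On the other hand, $tr_H(Z)$ only involves tri-edges $(B,R,W)$ with $R\cap Z=\emptyset$, so it is the trace on $Z$ of an ordinary hypergraph of VC-dimension at most $d$. The standard refined Sauer-Shelah bound therefore gives $|tr_H(Z)| \le \binom{k}{\le d} \le (ek/d)^d$. Combining the two estimates, $(1+\delta)^k \le (ek/d)^d$, which after taking logarithms and setting $y := k\ln(1+\delta)/d$ simplifies to $y \le \ln\bigl(ey/\ln(1+\delta)\bigr)$. Applying Lemma \ref{lem:tech} with $a := e/\ln(1+\delta)$ and $b := 1$ yields $y \le 2\ln(e/\ln(1+\delta))$, which translates back to $k < 2(d/\ln(1+\delta))\ln(e/\ln(1+\delta))$, contradicting the choice of $k$. (The sub-case $y < 1$ already gives $k < d/\ln(1+\delta)$, which is smaller than the target bound.)

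The only subtle point is the refined Sauer-Shelah bound for tri-hypergraphs: it matters that by definition $tr_H(Z)$ discards every tri-edge placing a vertex of $Z$ in the red part, so that $tr_H(Z)$ is genuinely the trace of a $2$-valued hypergraph on $Z$ of VC-dimension at most $d$, for which the classical bound $\binom{k}{\le d}$ applies. Once this is noted, the argument is a cleaner, tighter version of the short proof of Theorem \ref{thm:triHW}, the only new ingredient being the use of $(ek/d)^d$ in place of $k^d$.
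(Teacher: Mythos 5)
Your proposal is correct and is essentially the paper's own argument: the same double counting of clean separations of a random size-$k$ sample (each split $(Y,Y')$ contributing via a tri-edge with $|B|\ge\delta|V|$ missed by $Y$) gives $(1+\delta)^k \le \sum_{i\le d}\binom{k}{i} \le (ek/d)^d$, the observation that red-free traces are traces of an ordinary hypergraph of VC-dimension $d$ is exactly the content of the paper's Lemma~\ref{lem:gen-sauer-shelah}, and the conclusion via Lemma~\ref{lem:tech} is the same up to your change of variables. The only difference is cosmetic: the paper formalizes the repeated-sampling computation with words (multisets viewed as sequences) instead of your blow-up trick, and it assumes outright that no set of size $t<\tau$ is a $\delta$-net, which also cleans up your boundary split $i=k$ (where ``$Y$ is not a $\delta$-net'' needs the hypothesis ``no $\delta$-net of size at most $k$'' rather than ``strictly less than $k$'').
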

Using that $\ln(1+x) \geq x/2$ for any $0\leq x \leq 1$, we recover the more classical upper bound $O\left(\frac{d}{\delta}\ln\left(\frac{1}{\delta}\right)\right)$.

Observe that \cref{thm:triHW} follows immediately from \cref{thm:HW-trihyper} by considering the corresponding tri-hypergraph $H_T$, using that $T$ and $H_T$ have the same VC-dimension, and that the $\delta$-nets of $H_T$ are the dominating sets of $T$.

To simplify some computations in the proof of \cref{thm:HW-trihyper}, we will consider multisets of vertices, which we will view as words on the alphabet $V$. For a word $Z$, we denote its length by $|Z|$, and for any $I\subseteq \{1,\ldots,|Z|\}$ we denote by $Z_I$ (resp. $\overline{Z_I}$) the subword of $Z$ obtained by keeping (resp. removing) all letters with indices in $I$. 

Given a tri-hypergraph $H=(V,\mE)$, we say that a word $X$ is \emph{separated} from a word $Y$ if there exists an edge $(B, R, W)$ of $H$ such that all letters of $X$ belong to $B$ and all letters of $Y$ belong to $W$. For a word $Z$ we define the \emph{trace} of $H$ on $Z$, denoted $tr_H(Z)$ by 
$$
tr_H(Z)=\left\{\text{$I\subseteq \{1,\ldots,|Z|\}$ such that  $Z_I$ is separated from  $\overline{Z_I}$}\right\}.
$$

The following result, proved independently by Sauer \cite{Sauer72} and Shelah \cite{Shelah72} is the central ingredient of the theory of VC-dimension.

\begin{lemma}\label{lem:SS}
If $\mH = (V, E)$ is a hypergraph of VC-dimension $d$ then $|E| \leq \sum_{i = 0}^d\binom{|V|}{i}$.
\end{lemma}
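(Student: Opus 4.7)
The plan is to prove the classical Sauer--Shelah lemma by induction on $|V|$, splitting the hyperedges according to whether they contain a fixed vertex. The base case $|V|=0$ is trivial: either $\mathcal{E}=\emptyset$ or $\mathcal{E}=\{\emptyset\}$, and $\sum_{i=0}^{d}\binom{0}{i}\geq 1$.

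For the inductive step, I would fix an arbitrary vertex $v\in V$ and split the count $|\mathcal{E}|$ as $|\mathcal{E}_1|+|\mathcal{E}_2|$, where
\[
\mathcal{E}_1 = \{E\setminus\{v\} : E\in\mathcal{E}\} \text{ (viewed as a set of distinct subsets of } V\setminus\{v\}\text{)},
\]
and
\[
\mathcal{E}_2 = \{E\subseteq V\setminus\{v\} : E\in\mathcal{E} \text{ and } E\cup\{v\}\in\mathcal{E}\}.
\]
The identity $|\mathcal{E}|=|\mathcal{E}_1|+|\mathcal{E}_2|$ holds because $\mathcal{E}_1$ loses exactly one representative for each ``paired'' subset counted by $\mathcal{E}_2$.

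Next I would show the two VC-dimension bounds, viewing $\mathcal{E}_1$ and $\mathcal{E}_2$ as hypergraphs on $V\setminus\{v\}$. First, $\mathcal{E}_1$ still has VC-dimension at most $d$: any subset $X\subseteq V\setminus\{v\}$ shattered by $\mathcal{E}_1$ is also shattered by $\mathcal{E}$, since $Y = X\cap(E\setminus\{v\}) = X\cap E$ for every $E$ because $v\notin X$. Second, $\mathcal{E}_2$ has VC-dimension at most $d-1$: if $X\subseteq V\setminus\{v\}$ is shattered by $\mathcal{E}_2$, then $X\cup\{v\}$ is shattered by $\mathcal{E}$, because for every $Y\subseteq X$ the witness $E\in\mathcal{E}_2$ for $Y$ provides both an edge $E\in\mathcal{E}$ realizing the trace $Y$ on $X\cup\{v\}$ and an edge $E\cup\{v\}\in\mathcal{E}$ realizing the trace $Y\cup\{v\}$, which together cover all $2^{|X|+1}$ subsets of $X\cup\{v\}$.

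Finally I would apply the inductive hypothesis to each part and combine with Pascal's identity:
\[
|\mathcal{E}| \;\leq\; \sum_{i=0}^{d}\binom{|V|-1}{i} + \sum_{i=0}^{d-1}\binom{|V|-1}{i} \;=\; \sum_{i=0}^{d}\binom{|V|}{i}.
\]
The only real subtlety, and thus the main step to execute carefully, is the VC-dimension bound on $\mathcal{E}_2$: one must verify that pairing a witness $E$ (giving $Y$) with $E\cup\{v\}$ (giving $Y\cup\{v\}$) correctly realizes every subset of $X\cup\{v\}$, and in particular that these edges belong to $\mathcal{E}$ because $E\in\mathcal{E}_2$ forces both. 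Everything else is bookkeeping.
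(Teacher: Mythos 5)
Your proof is correct. Note that the paper does not actually prove \cref{lem:SS}: it states the Sauer--Shelah Lemma as a known result with citations to Sauer and Shelah, so there is no internal argument to compare yours against. What you give is the standard induction on $|V|$ with the split into $\mE_1$ (traces on $V\setminus\{v\}$) and $\mE_2$ (sets paired with their $v$-extensions), and all the steps check out: the count $|\mE|=|\mE_1|+|\mE_2|$ holds because the trace map is two-to-one exactly over $\mE_2$; the bound $\mathrm{VC}(\mE_1)\le d$ is immediate since $v\notin X$; the bound $\mathrm{VC}(\mE_2)\le d-1$ follows because a set $X$ shattered by $\mE_2$ yields $X\cup\{v\}$ shattered by $\mE$ (each witness $E\in\mE_2$ supplies both $E$ and $E\cup\{v\}$ in $\mE$, realizing $Y$ and $Y\cup\{v\}$); and Pascal's identity closes the induction. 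The degenerate case $d=0$ is also handled by your argument, since then $\mE_2$ must be empty (otherwise $\{v\}$ would be shattered), matching the empty sum $\sum_{i=0}^{-1}$.
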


We extend the result of Sauer and Shelah to words.

\begin{lemma}\label{lem:gen-sauer-shelah}
If $H=(V,\mE)$ be a tri-hypergraph of VC-dimension $d$ then every word $Z$ satisfies $|tr_H(Z)|\leq \sum_{i=0}^{d}\binom{|Z|}{i}$.
\end{lemma}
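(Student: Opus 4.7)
The plan is to reduce the statement to the classical Sauer--Shelah Lemma (\cref{lem:SS}) by lifting the picture to the index set of $Z$. I would introduce the auxiliary hypergraph $\mH$ on ground set $\{1, \ldots, |Z|\}$ whose edges are exactly the elements of $tr_H(Z)$, viewed now as subsets of indices rather than as separations. Then $|tr_H(Z)|$ equals the number of edges of $\mH$, so the desired bound follows directly from \cref{lem:SS} once I show that the VC-dimension of $\mH$ is at most $d$.

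The core step is therefore to check that any set $J \subseteq \{1, \ldots, |Z|\}$ shattered by $\mH$ has size at most $d$. The main obstacle here is that $Z$ is a word (i.e.\ a multiset), not a set, so several indices of $J$ could in principle carry the same letter of $V$. I would first rule this out: if $z_{j_1} = z_{j_2}$ for two distinct $j_1, j_2 \in J$, then every $I \in tr_H(Z)$ witnessed by a tri-edge $(B, R, W)$ satisfies that all letters of $Z$ lie in $B \cup W$, so $j_1 \in I$ iff $z_{j_1} \in B$ iff $z_{j_2} \in B$ iff $j_2 \in I$; hence $\mH$ could not realize $\{j_1\}$ as $I \cap J$, contradicting shattering.

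Once distinctness is established, I would transfer the shattering of $J$ in $\mH$ to shattering of $X \coloneqq \{z_j : j \in J\}$ in $H$. The set $X$ has size exactly $|J|$. Given any $Y \subseteq X$, I would let $J' = \{j \in J : z_j \in Y\}$ and pick $I \in tr_H(Z)$ with $I \cap J = J'$; a witnessing tri-edge $(B, R, W)$ then satisfies $X \cap R = \emptyset$ (because no letter of $Z$ lies in $R$, in particular none of the $z_j$ for $j \in J$) and $X \cap B = \{z_j : j \in J, \, j \in I\} = \{z_j : j \in J'\} = Y$ by distinctness. Therefore $Y \in tr_H(X)$, so $X$ is shattered by $H$, giving $|J| = |X| \leq d$. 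Applying \cref{lem:SS} to $\mH$ on a ground set of size $|Z|$ then concludes with $|tr_H(Z)| \leq \sum_{i=0}^{d} \binom{|Z|}{i}$.
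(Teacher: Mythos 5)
Your proposal is correct and follows essentially the same route as the paper: both pass to the hypergraph on the index set $\{1,\ldots,|Z|\}$ with edge set $tr_H(Z)$, apply the classical Sauer--Shelah Lemma, and use that a shattered set of indices must carry pairwise distinct letters which then form a subset of $V$ shattered by $H$. The only difference is cosmetic — you bound the VC-dimension of the auxiliary hypergraph directly (and spell out the distinctness and transfer steps), whereas the paper phrases the same argument as a proof by contradiction.
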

\begin{proof}
Assume by contradiction that $|Z|=t$ and  $|tr_H(Z)|> \sum_{i=0}^{d}\binom{t}{i}$.  Consider the hypergraph on ground set $[t]$ with edge set $tr_H(Z)$. By \cref{lem:SS}, there exists a shattered set $I\subseteq [t]$ of size at least $d+1$. The corresponding subword $Z_I$ must contain pairwise distinct letters, and thus corresponds to a subset of size $d+1$ of $V$ that is shattered by $H$, contradicting the definition of $d$.
\end{proof}

\begin{lemma}
Let $H=(V,\mE)$ be an $n$-vertex tri-hypergraph and $\delta > 0$. Let $\tau$ denote the minimum size of a $\delta$-net of $H$. Then for every $t<\tau$,
$$(1+\delta)^{t} \leq \frac{1}{n^t}\sum_{|Z|=t} |tr_H(Z)|.$$
\end{lemma}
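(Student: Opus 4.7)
The plan is to interpret the quantity $\frac{1}{n^t}\sum_{|Z|=t}|tr_H(Z)|$ probabilistically as the expected size of $tr_H(Z)$ when $Z$ is a uniformly random word of length $t$ from $V^t$. Using linearity and grouping $I \in tr_H(Z)$ by its size $i = |I|$, I would rewrite this expectation as
$$\mathbb{E}\bigl[|tr_H(Z)|\bigr] = \sum_{i=0}^{t} \binom{t}{i}\, p_i,$$
where $p_i$ denotes the probability that two independent uniformly random words $A \in V^i$ and $B \in V^{t-i}$ are \emph{cleanly separable}, meaning that there exists some $(B',R',W')\in\mE$ with $A\subseteq B'$ and $B\subseteq W'$. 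This rewriting is valid because, for any fixed $I$ of size $i$, the restrictions $Z_I$ and $\overline{Z_I}$ of a uniform word $Z\in V^t$ are themselves independent uniform words of lengths $i$ and $t-i$.

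Since $(1+\delta)^t = \sum_{i=0}^{t} \binom{t}{i}\, \delta^{i}$, the target inequality reduces to proving $p_i \geq \delta^{i}$ for each $0\leq i \leq t$. I would do this by conditioning on $B$. The word $B$ has length $t-i \leq t < \tau$, so the set of distinct letters appearing in $B$ has size strictly less than $\tau$ and hence is \emph{not} a $\delta$-net of $H$. By definition of a $\delta$-net, there must exist a witness tri-edge $(B^*,R^*,W^*)\in\mE$ with $|B^*|\geq \delta n$ such that $B$ avoids $B^*\cup R^*$, i.e.\ $B\subseteq W^*$. For this fixed $B$, the probability that the independent word $A$ of length $i$ lands entirely inside $B^*$ is $(|B^*|/n)^i \geq \delta^i$, and such an $A$ together with $B$ is certainly cleanly separated by $(B^*,R^*,W^*)$. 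Averaging over $B$ then gives $p_i \geq \delta^i$, which summed with the binomial coefficients yields the claimed lower bound.

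There is no real obstacle here: the bookkeeping is routine once one recognizes the binomial identity and the right way to split the random word. The only minor points to check are that the argument works uniformly at the boundary cases $i=0$ and $i=t$ (both are handled with the same witness-tri-edge construction, using the vacuous containments), and that the witness tri-edge is selected \emph{after} seeing $B$ but \emph{before} $A$, which is what legitimates the independence used to lower bound $\mathbb{P}_A[A\subseteq B^*]$ by $\delta^i$.
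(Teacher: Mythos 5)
Your proof is correct and is essentially the paper's argument in probabilistic dress: the identity $\frac{1}{n^t}\sum_{|Z|=t}|tr_H(Z)| = \sum_{i=0}^{t}\binom{t}{i}p_i$ is the paper's double counting over pairs $(Z,I)$, and your conditioning step (the subword of length $t-i<\tau$ is not a $\delta$-net, so some tri-edge with $|B^*|\geq\delta n$ has it inside $W^*$, giving $\mathbb{P}[A\subseteq B^*]\geq\delta^i$) is exactly the paper's count of at least $(\delta n)^{i}$ words inside $B^*$. The boundary cases you flag are handled the same way in both versions, so there is nothing to add.
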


\begin{proof}
Any word $X$ of length less than $\tau$ contains less than $\tau$ distinct letters, hence there exists an edge $(B, R, W) \in \mE$ such that $X \cap (B \cup R) = \emptyset$ and $|B| \geq \delta n$. For any integer $y$, there exist at least $(\delta n)^y$ words of length $y$ using only letters from $B$. Hence, for any word $X$ with $|X|<\tau$:
$$
\text{$|\{Y\in V^{y}$, such that $X$ is separated from $Y$}\} | \geq (\delta n)^y.
$$
Now for $t<\tau$ we have:
\belowdisplayskip=-12pt\begin{align*}
\frac{1}{n^t}\sum_{|Z|=t} |tr_H(Z)|
&=\frac{1}{n^t}\sum_{Z\in V^t} \left|\left\{\text{$I\subseteq [t]$ such that   $Z_I$ is separated from $\overline{Z_I}$}\right\}\right|\\
&= \frac{1}{n^t} \sum_{I\subseteq [t]} \left|\left\{\text{$Z\in V^t$ such that   $Z_I$ is separated from $\overline{Z_I}$}\right\}\right|\\
&\geq \frac{1}{n^t} \sum_{i=0}^t  \binom{t}{i} n^i (\delta n)^{t-i}= (1+\delta)^t.
\end{align*}
\end{proof}

\begin{proof}[Proof of \cref{thm:HW-trihyper}.]
We combine the two previous Lemmas. For every $t<\tau$ we have $(1+\delta)^{t} \leq \sum_{i=0}^{d}\binom{t}{i}\leq \left(\frac{et}{d}\right)^d$.
By denoting $a=\frac{e}{d}$ and $b=\frac{d}{\ln(1+\delta)}$, we get $t\leq b \ln(at)$, which yields the desired bound by \cref{lem:tech}.

\end{proof}

A similar proof actually yields that a random sample of $O\left(\frac{d}{\log(1+\delta)} \cdot \log\left(\frac{e}{\log(1 + \delta)}\right)\right)$ elements is a $\delta$-net with constant probability. This gives a unified explanation for why various proofs based on random sampling work so well. We state the result for completeness.  

\randomsample*

Using the generalization of $\delta$-nets to tri-hypergraphs, we can easily prove that in tri-hypergraphs of VC-dimension $d$, the inequality $\tau \leq f(d, \tau^*)$ also holds.

\boundedintegralitygap*

\begin{proof}
Since $\tau^*$ is the solution of a linear program with integer constraints, there exists a fractional transversal $\omega : V \to \mathbb{Q}$ of weight $\tau^*$.
Let $N$ be an integer such that $N \cdot \omega(v) \in \mathbb{N}$ for every $v \in V$, and set $\omega'(v) = N \cdot \omega(v)$ for every $v \in V$.
Let $H' = (V', \mE')$ be the tri-hypergraph obtained from $H$ by creating $\omega'(v)$ copies of every vertex $v \in V$ (if $\omega'(v) = 0$ we simply delete $v$).
Observe that the VC-dimension of $H'$ is at most $d$ since there cannot be two copies of the same vertex in a shattered set.
Note also that $|V'| = \sum_{v \in V}\omega'(v) = N \cdot \tau^*$, and for every $(B', R', W') \in \mE'$ corresponding to $(B, R, W) \in \mE$, we have $|B'| = \sum_{b \in B}\omega'(b) \geq N$ since $\omega$ is a fractional transversal.
Thus, $H'$ is a tri-hypergraph of VC-dimension $d$ where every $(B', R', W') \in \mE'$ satisfies $|B'| \geq \frac{1}{\tau^*} \cdot |V'|$.
By \cref{thm:HW-trihyper}, $H'$ has a transversal of size $O(d\cdot \tau^* \log(\tau^*))$. 
This transversal is also a transversal of $H$.
\end{proof}
\section{Tri-hypergraphs with bounded VC-dimension}\label{sec:vcdimtri}

The aim of this section is to give bounds on the VC-dimension of several natural tri-hypergraphs, and to provide various methods to prove such bounds.
In order to give some geometric intuition, we start with the most visual proofs, by considering metric-trigraphs.

We first show that metric-trigraphs have bounded VC-dimension. The proof is based on a random cut argument to argue that some random-like graph has a separation which is too unbalanced. 
We then prove that having bounded VC-dimension is preserved by taking intersections or differences. The proof of this result uses double counting coupled with the Sauer-Shelah Lemma.
With that result in hand, we easily deduce a more subtle result about the intersection of metric-trigraphs.

We then consider disjointness-trigraphs, for which we provide similar results.
To show that they have bounded VC-dimension, we argue that some random-like graph has a complete bipartite subgraph which is too large.
We finally show that a refined variant of the difference tri-hypergraph of disjointness-trigraphs also has bounded VC-dimension. The proof of this results relies on the existence of small transversals of dense set systems to obtain adjacency labellings which are too short for random-like graphs.

\subsection{Metric-trigraphs}\label{sec:VCdim-trigraphs}

We start by showing that Hamming-trigraphs have bounded VC-dimension. Importantly, the bound on their VC-dimension depends on the sensitivity parameter $\varepsilon$ but not on the dimension of the ambient space.

\begin{theorem}\label{thm:VCdim-Hamming}
Every Hamming-trigraph with sensitivity $\varepsilon$ has VC-dimension $O(\varepsilon ^{-2})$.
\end{theorem}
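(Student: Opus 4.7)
The plan is to encode the shattering in the Hamming-trigraph as a large-margin linear classification problem on the sphere, and conclude by a Rademacher averaging argument (essentially the classical fat-shattering bound for linear classifiers).

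Embed each $v \in \{0,1\}^N$ into the unit sphere of $\mathbb{R}^N$ via $\hat v := (2v - \mathbf{1})/\sqrt{N}$. A short computation gives $\langle \hat u, \hat v\rangle = 1 - 2\, d_H(u,v)/N$, so, setting $c := 1 - 2\tau$ and interpreting $\tau,\varepsilon$ as relative (normalized) threshold and sensitivity, the edges of the Hamming-trigraph correspond to pairs with $\langle \hat u,\hat v\rangle \geq c$, and the non-edges to pairs with $\langle \hat u,\hat v\rangle < c - 2\varepsilon$. The shifted threshold $c-\varepsilon$ thus separates edges from non-edges with margin exactly $\varepsilon$ on each side.

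Suppose $X = \{x_1,\ldots,x_k\}$ is shattered, witnessed by vertices $\{s_Y : Y \subseteq X\} \subseteq V$ having no red neighbor in $X$. For each $Y \subseteq X$, set $\epsilon_i^Y := 2\cdot\mathbf{1}[x_i \in Y] - 1$ and $T_Y := \sum_{i=1}^k \epsilon_i^Y\, \hat x_i$. The margin condition $\epsilon_i^Y(\langle \hat s_Y, \hat x_i\rangle - (c-\varepsilon)) \geq \varepsilon$, summed over $i$ and followed by Cauchy-Schwarz applied to the unit vector $\hat s_Y$, yields $\|T_Y\| \geq k\varepsilon + (c-\varepsilon)\sum_i \epsilon_i^Y$. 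Applying the same argument to the shattering witness $s_{X\setminus Y}$ of the complement (whose signed sum equals $-T_Y$) gives the same inequality with the bias term flipped in sign. Combining the two produces the clean bound $\|T_Y\| \geq k\varepsilon$ uniformly in $Y$.

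For the contradiction, let $Y$ be a uniformly random subset of $X$, so that the $\epsilon_i^Y$ are independent Rademacher variables; the classical identity then gives $\mathbb{E}_Y\|T_Y\|^2 = \sum_i \|\hat x_i\|^2 = k$. Hence some $Y^*$ achieves $\|T_{Y^*}\| \leq \sqrt{k}$, and combined with the deterministic lower bound this forces $k\varepsilon \leq \sqrt{k}$, i.e.\ $k \leq \varepsilon^{-2}$. The step I expect to require the most care is the bias term $(c-\varepsilon)\sum_i\epsilon_i^Y$ appearing in the single-sided Cauchy-Schwarz inequality: its sign depends on $|Y|$, so a single shattering witness does not yield a bias-free lower bound. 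Exploiting both $Y$ and $X\setminus Y$ (both of whose witnesses lie in the shattering set) is what cancels the bias and makes the argument succeed regardless of the value of $\tau$.
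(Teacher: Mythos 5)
Your proof is correct, and it takes a genuinely different route from the paper. The paper's argument samples a random multiset of shattering witnesses, views the edges between $X$ and the sample as a uniformly random bipartite graph, and then uses a random-coordinate cut: edges (Hamming distance at most $\tau N$) are cut with probability at most $\tau$ while non-edges are cut with probability at least $\tau+\varepsilon$, so some cut is far too unbalanced for a random bipartite graph; Hoeffding plus a union bound over all partitions then forces $|X|\le 32/\varepsilon^2$. You instead pass to the standard $\pm 1/\sqrt N$ embedding, observe that the witnesses realize every sign pattern on $X$ with margin $\varepsilon$ around the shifted threshold $c-\varepsilon$, and run the classical fat-shattering argument for large-margin linear separators: summing the margin inequalities and applying Cauchy--Schwarz to the unit witness gives $\|T_Y\|\ge k\varepsilon+(c-\varepsilon)\sum_i\epsilon_i^Y$, the complement witness $s_{X\setminus Y}$ (which exists precisely because $X$ is shattered, and satisfies $T_{X\setminus Y}=-T_Y$) kills the bias term, and the exact identity $2^{-k}\sum_Y\|T_Y\|^2=k$ yields $k\varepsilon\le\sqrt k$. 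Your handling of the bias via the complementary trace is the one delicate point and it is done correctly; the closed-neighborhood convention and the ``no red neighbor in $X$'' clause are exactly what guarantee the two-sided margin for every pair $(s_Y,x_i)$. What your route buys: it avoids concentration inequalities and union bounds entirely (the averaging step is an exact second-moment computation, so no probabilistic failure events to manage), it gives the cleaner constant $k\le\varepsilon^{-2}$ rather than $32\varepsilon^{-2}$, and since it only uses that the embedded points are unit vectors, it applies verbatim to spherical-trigraphs when the sensitivity is measured in inner product, whereas the paper handles the spherical case by a separate random-hyperplane embedding into the Hamming cube. What the paper's route buys is a self-contained combinatorial picture (red edges as transversals of random-like bipartite graphs) that the authors reuse as a template in several other settings, e.g.\ for tournaments.
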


\begin{proof}
Let $T = (V, E, R)$ be a Hamming-trigraph with sensitivity $\varepsilon$, threshold $\tau$ and $V \subseteq \{0, 1\}^N$, for some integer $N$. If $xy \in E$ then $d_H(x, y) \leq \tau \cdot N$, and if $xy \notin E \cup R$ then $d_H(x, y) > (\tau + \varepsilon) \cdot N$.
Let $H_T = (V, \mE)$ be the tri-hypergraph corresponding to $T$, with $\mE = \{(B(v), R(v), W(v)) : v \in V\}$.
Let $X \subseteq V$ be a shattered set: for every $Y \subseteq X$, there exists $v_Y \in V$ such that $B(v_Y) \cap X = (B(v_Y) \cup R(v_Y)) \cap X = Y$.

Let $Z$ be a random multi-subset of $\{v_Y : Y \subseteq X\}$ of size $|X|$ obtained by picking with repetition uniformly random elements of $\{v_Y : Y \subseteq X\}$.
Consider the bipartite graph $\Gamma$ on vertex set $X \cup Z$ where $xz$ is an edge if $x\in X$, $z\in Z$, and $d_H(x, z) \leq \tau \cdot N$ (or equivalently if $d_H(x, z) \leq (\tau + \varepsilon) \cdot N$).
Then, $\Gamma$ is a random bipartite graph where each edge between $X$ and $Z$ is present independently of all others with probability $1/2$.
Let $E_0$ be the event that there exist partitions $(X_1, X_2)$ of $X$ and $(Z_1, Z_2)$ of $Z$ such that $\overline{e}(X_1, Z_2) + \overline{e}(X_2, Z_1) - e(X_1, Z_2) - e(X_2, Z_1) \geq \frac{\varepsilon}{2} \cdot |X|\cdot|Z|$. Informally, $E_0$ is the event that there exists a cut of $\Gamma$ such that among all pairs $(x, z) \in X \times Z$ which are separated by the cut, there are much more non-edges than edges.

First, by symmetry between the edges and the non-edges in $\Gamma$, we have $\mathbb{P}[e(\Gamma) \leq |X|\cdot|Z|/2] \geq 1/2$. Suppose from now on that this inequality holds.
Pick a random coordinate $i \in [N]$, and let $X_1 = \{x \in X : x_i = 0\}$ and $X_2 = \{x \in X : x_i = 1\}$, and define $Z_1$ and $Z_2$ similarly.
If $xz$ is an edge of $\Gamma$ then $x$ and $z$ differ on at most $\tau \cdot N$ coordinates so the probability that $x$ and $z$ are separated (meaning $x \in X_a$ and $z \in Z_{3-a}$) is at most $\tau$.
On the other hand, if $xz$ is not an edge of $\Gamma$ then $x$ and $z$ differ on at least $(\tau + \varepsilon) \cdot N$ coordinates so the probability that $x$ and $z$ are separated is at least $\tau + \varepsilon$.
Thus, \begin{align*}
    \mathbb{E}[e(X_1, Z_2) + e(X_2, Z_1)] &\leq \frac{|X|\cdot|Z|}{2} \cdot \tau \\
    \mathbb{E}[\overline{e}(X_1, Z_2) + \overline{e}(X_2, Z_1)] &\geq \frac{|X|\cdot|Z|}{2} \cdot (\tau + \varepsilon).
\end{align*}
Putting together the previous two inequalities, we get $$
    \mathbb{E}[\overline{e}(X_1, Z_2) + \overline{e}(X_2, Z_1) - e(X_1, Z_2) - e(X_2, Z_1)]
    \geq \frac{\varepsilon}{2} \cdot |X|\cdot|Z|.
$$
Therefore, there exist partitions $(X_1, X_2)$ of $X$ and $(Z_1, Z_2)$ of $Z$ such that $\overline{e}(X_1, Z_2) + \overline{e}(X_2, Z_1) - e(X_1, Z_2) - e(X_2, Z_1) \geq \frac{\varepsilon}{2} \cdot |X| \cdot |Z|$.
Overall, we get $\mathbb{P}[E_0] \geq 1/2$.

We now give an upper bound on $\mathbb{P}[E_0]$.
Fix a partition $(X_1, X_2)$ of $X$ and a partition $(Z_1, Z_2)$ of $Z$, and consider a random graph $\Gamma$ on vertex set $X \cup Z$.
Let $E_1$ be the event that $\overline{e}(X_1, Z_2) + \overline{e}(X_2, Z_1) - e(X_1, Z_2) - e(X_2, Z_1) \geq \frac{\varepsilon}{2} \cdot |X|\cdot|Z|$.
Consider the random variable $K = \overline{e}(X_1, Z_2) + \overline{e}(X_2, Z_1)$.
Then, $K$ is the sum of $|X_1| \cdot |Z_2| + |X_2| \cdot |Z_1|$ independent Bernoulli variables with success probability $1/2$.
Observe that $E_1$ is the event that $K \geq \frac{|X_1| \cdot |Z_2| + |X_2| \cdot |Z_1|}{2} + \frac{\varepsilon}{4} \cdot |X|\cdot|Z| = \mathbb{E}[K]  + \frac{\varepsilon}{4} \cdot |X|\cdot|Z|$.
By Hoeffding's inequality, we have \begin{align*}
    \mathbb{P}[E_1] &\leq \exp\left(-\frac{2 \cdot \varepsilon^2 \cdot |X|^2 \cdot |Z|^2}{16 \cdot (|X_1| \cdot |Z_2| + |X_2| \cdot |Z_1|)}\right) \\
        &\leq \exp\left(-\frac{\varepsilon^2 \cdot |X| \cdot |Z|}{8}\right).
\end{align*}
There are $2^{|X| + |Z|}$ possible choices for the partitions $(X_1, X_2)$ and $(Z_1, Z_2)$ so by the union bound we have $\mathbb{P}[E_0] \leq 2^{|X| + |Z|} \cdot \exp\left(-\frac{\varepsilon^2 \cdot |X| \cdot |Z|}{8}\right)$.

Putting together the upper bound and the lower bound, and using that $|X| = |Z|$, we get $1/2 \leq 2^{2|X|} \cdot \exp\left(-\frac{\varepsilon^2 \cdot |X|^2}{8}\right)$.
This inequality implies $|X| \leq \frac{32}{\varepsilon^2}$, which concludes the proof.
\end{proof}

A similar result holds for spherical-trigraphs. One way to see it is to repeat the above proof, except that to show that $\mathbb{P}[E_0] \geq 1/2$, instead of defining $X_1, X_2, Z_1, Z_2$ according to a random coordinate, we pick a random hyperplane and define $X_1$ and $Z_1$ as the vertices lying on one side of this hyperplane, and $X_2$ and $Z_2$ as the vertices lying on the other side of this hyperplane. We give an alternative proof by showing that every spherical-trigraph is a Hamming-trigraph. The embedding is obtained by taking a large number of random hyperplanes and classifying the vertices based on which side of the hyperplanes they lie. The details are given in \cref{sec:proofs-VCdim-trigraphs}.

\begin{restatable}{theorem}{VCdimspherical}\label{thm:VCdim-spherical}
Every spherical-trigraph with sensitivity $\varepsilon$ is a Hamming-trigraph with sensitivity $\frac{\varepsilon}{2\pi}$.
In particular, every spherical-trigraph with sensitivity $\varepsilon$ has VC-dimension $O(\varepsilon^{-2})$.
\end{restatable}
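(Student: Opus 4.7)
The plan is to use the Goemans--Williamson random hyperplane rounding to embed $V\subseteq \mathbb{S}^N$ into a Hamming cube $\{0,1\}^M$ (for some finite $M$) in such a way that the induced Hamming-trigraph, with a well-chosen threshold, reproduces exactly the edge/red/non-edge partition of the original spherical-trigraph with sensitivity $\varepsilon/(2\pi)$. The second assertion then follows immediately by plugging sensitivity $\varepsilon/(2\pi)$ into \cref{thm:VCdim-Hamming}.

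Concretely, let $T=(V,E,R)$ be a spherical-trigraph with threshold $\tau$ and sensitivity $\varepsilon$. First I would pick $M$ independent uniformly random unit vectors $n_1,\dots,n_M$ and define $\phi:V\to \{0,1\}^M$ by setting $\phi(v)_i = 1$ if $\langle n_i,v\rangle \geq 0$ and $0$ otherwise. The classical random hyperplane computation gives $\mathbb{P}[\phi(u)_i \neq \phi(v)_i] = d_S(u,v)/\pi$, so $d_H(\phi(u),\phi(v))$ is a sum of $M$ independent Bernoulli trials with mean $d_S(u,v)/\pi$. Then I would apply Hoeffding's inequality together with a union bound over the $\binom{|V|}{2}$ pairs: choosing $M$ of order $\varepsilon^{-2}\log|V|$ guarantees that with positive probability, every pair $u\neq v$ satisfies
\[
\left|\frac{d_H(\phi(u),\phi(v))}{M} - \frac{d_S(u,v)}{\pi}\right| \leq \frac{\varepsilon}{4\pi},
\]
and in particular $\phi$ is injective for such a configuration.

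Fixing such a $\phi$, I would set the Hamming threshold to $\tau' = \tau/\pi + \varepsilon/(4\pi)$. If $uv\in E$ then $d_S(u,v)\leq \tau$, so $d_H(\phi(u),\phi(v))/M \leq \tau'$. If $uv\notin E\cup R$ then $d_S(u,v) > \tau + \varepsilon$, so
\[
\frac{d_H(\phi(u),\phi(v))}{M} > \frac{\tau+\varepsilon}{\pi} - \frac{\varepsilon}{4\pi} = \tau' + \frac{\varepsilon}{2\pi}.
\]
Hence the Hamming-trigraph on $\phi(V)$ with threshold $\tau' M$ and sensitivity $\varepsilon/(2\pi)$ realizes exactly the same partition of pairs into edges, red edges and non-edges as $T$, showing that $T$ is (up to relabeling by $\phi$) a Hamming-trigraph with sensitivity $\varepsilon/(2\pi)$.

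The only delicate point is the bookkeeping of constants: the gap $\varepsilon/\pi$ between edge and non-edge lengths in the scaled Hamming picture must be split evenly, with slack $\varepsilon/(4\pi)$ on each side reserved for the concentration error, which then leaves exactly $\varepsilon/(2\pi)$ for the Hamming sensitivity. Everything else (the Goemans--Williamson identity, Hoeffding, and the union bound) is entirely routine, and the ``In particular'' clause follows by applying \cref{thm:VCdim-Hamming} with sensitivity $\varepsilon/(2\pi)$, which yields a VC-dimension bound of $O((2\pi/\varepsilon)^2)=O(\varepsilon^{-2})$.
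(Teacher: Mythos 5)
Your proposal is correct and follows essentially the same route as the paper's proof: a sign-pattern embedding via random hyperplanes, the Goemans--Williamson identity $\mathbb{P}[\phi(u)_i\neq\phi(v)_i]=d_S(u,v)/\pi$, Hoeffding plus a union bound over pairs, and the same $\varepsilon/(4\pi)$ slack on each side leaving a Hamming gap of $\varepsilon/(2\pi)$, followed by \cref{thm:VCdim-Hamming}. No substantive differences to report.
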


The converse also holds: every Hamming-trigraph is a spherical-trigraph. This time the embedding is even simpler: map every $0$ coordinate to $-1/\sqrt{N}$ and every $1$ coordinate to $1/\sqrt{N}$. The proof is deferred to \cref{sec:proofs-VCdim-trigraphs}.

\begin{restatable}{lemma}{Hamissph}
Every Hamming-trigraph with sensitivity $\varepsilon$ is a spherical-trigraph with sensitivity $2\varepsilon$.
\end{restatable}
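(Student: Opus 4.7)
The plan is to use the natural ``centered'' embedding suggested in the paper. Given the Hamming-trigraph $T=(V,E,R)$ with $V\subseteq\{0,1\}^N$, threshold $\tau$, and sensitivity $\varepsilon$, define $\phi:\{0,1\}^N\to\mathbb{R}^N$ by $\phi(v)_i = (2v_i-1)/\sqrt{N}$. A direct computation shows $\|\phi(v)\|_2^2 = 1$, so $\phi(v)\in\mathbb{S}^{N-1}$. Since each coordinate contributes $+1/N$ to the inner product when $u_i=v_i$ and $-1/N$ otherwise, we get $\langle\phi(u),\phi(v)\rangle = 1 - 2\,d_H(u,v)/N$. Hence
\[
d_S(\phi(u),\phi(v)) \;=\; \arccos\!\Bigl(1 - 2\,d_H(u,v)/N\Bigr) \;=\; f\!\left(\tfrac{d_H(u,v)}{N}\right),
\]
where $f(t) := \arccos(1-2t)$ is continuous and strictly increasing on $[0,1]$.

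The key analytic point is the derivative bound: for $t\in(0,1)$,
\[
f'(t) \;=\; \frac{2}{\sqrt{1-(1-2t)^2}} \;=\; \frac{1}{\sqrt{t(1-t)}} \;\geq\; 2,
\]
the minimum being attained at $t=1/2$. Integrating, $f(b)-f(a) \geq 2(b-a)$ whenever $0\le a\le b\le 1$. This is exactly the mechanism turning a Hamming gap of $\varepsilon$ into a spherical gap of $2\varepsilon$.

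With this in hand, set the spherical threshold $\tau' := f(\tau)$. Since $f$ is strictly increasing, $uv\in E$ iff $d_H(u,v)\le\tau N$ iff $f(d_H(u,v)/N)\le f(\tau)$ iff $d_S(\phi(u),\phi(v))\le \tau'$. To check the red edges, I would invoke the equivalent gap formulation of metric-trigraphs given in the preliminaries: for any pair $u_1v_1\in E$ and $u_2v_2\notin E\cup R$, we have $d_H(u_1,v_1)/N\le\tau$ and $d_H(u_2,v_2)/N > \tau+\varepsilon$, so
\[
d_S(\phi(u_2),\phi(v_2)) - d_S(\phi(u_1),\phi(v_1)) \;\geq\; 2\left(\tfrac{d_H(u_2,v_2)}{N} - \tfrac{d_H(u_1,v_1)}{N}\right) \;>\; 2\varepsilon.
\]
Combined with the edge characterization, this shows that $\phi(V)\subseteq\mathbb{S}^{N-1}$ realizes $T$ as a spherical-trigraph with threshold $\tau'$ and sensitivity $2\varepsilon$.

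The only real content of the proof is the derivative computation; everything else is bookkeeping. I expect no serious obstacle, but the one subtle point is to be careful at the endpoints $t\in\{0,1\}$ where $f'$ blows up: since all distances arising are of the form $k/N$ for integer $k$, and the inequality $f(b)-f(a)\geq 2(b-a)$ extends by continuity to the closed interval, this causes no trouble.
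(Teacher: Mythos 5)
Your proposal is correct and follows essentially the same route as the paper: the same $\pm 1/\sqrt{N}$ embedding, the same computation $\langle\phi(u),\phi(v)\rangle = 1 - 2d_H(u,v)/N$, and the same Lipschitz-type bound on $\arccos$ (the paper phrases your derivative estimate as $\arccos(a)-\arccos(b)\geq b-a$ for $-1\leq a\leq b\leq 1$), applied to the gap formulation of metric-trigraphs to get the spherical gap $2\varepsilon$. No issues.
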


We now show that the bound in \cref{thm:VCdim-spherical} is essentially tight, up to constant factors. This implies that the bound of \cref{thm:VCdim-Hamming} is also essentially tight. 

\begin{proposition}
For every $\varepsilon > 0$, there exists a spherical-trigraph with sensitivity $\varepsilon$ with VC-dimension $\Omega(\varepsilon^{-2})$.
\end{proposition}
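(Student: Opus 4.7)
The plan is to produce an explicit spherical-trigraph whose shattered set is the standard orthonormal basis of $\mathbb{R}^d$ for $d = \Theta(\varepsilon^{-2})$. I would take $V \subseteq \mathbb{S}^{d-1}$ to consist of $X = \{e_1, \ldots, e_d\}$ together with all $2^d$ \emph{sign vectors}
$$
v_Y \;=\; \frac{1}{\sqrt{d}}\left(\sum_{i \in Y} e_i \;-\; \sum_{i \notin Y} e_i\right),\qquad Y \subseteq X,
$$
which are unit vectors in $\mathbb{R}^d$. The $v_Y$'s are intended to form the shattering set $S$ witnessing that $X$ is shattered.

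The computation at the heart of the argument is $\langle v_Y, e_i\rangle = 1/\sqrt{d}$ if $i \in Y$ and $-1/\sqrt{d}$ otherwise, so the spherical distance $d(v_Y, e_i)$ takes only the two values $\tau := \arccos(1/\sqrt{d})$ and $\pi - \tau$. Choosing precisely this $\tau$ as the threshold of the spherical-trigraph turns each pair $(v_Y, e_i)$ with $i \in Y$ into a plain edge, while pairs with $i \notin Y$ sit at distance $\pi - \tau$. To ensure these latter pairs are neither edges nor red edges, I need $\pi - \tau > \tau + \varepsilon$, which, using $\arccos(x) = \pi/2 - \arcsin(x)$, rewrites cleanly as $1/\sqrt{d} > \sin(\varepsilon/2)$.

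Thus, choosing $d$ to be the largest integer satisfying $d < 1/\sin^2(\varepsilon/2)$, which gives $d = \Omega(\varepsilon^{-2})$, guarantees that no $v_Y$ has a red edge to $X$, and that $N[v_Y] \cap X = Y$ exactly; hence $X$ is shattered with $|X| = d = \Omega(\varepsilon^{-2})$. No step of this argument presents a real obstacle: the essential content is the observation that the two possible inner products $\pm 1/\sqrt{d}$ become angularly indistinguishable at scale $\varepsilon$ precisely when $d \approx \varepsilon^{-2}$, which is exactly the scale at which the upper bound of \cref{thm:VCdim-spherical} starts to bite, so the construction is tight up to constants.
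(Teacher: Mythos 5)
Your construction is correct and is essentially the paper's own proof: the paper also shatters an orthonormal basis using the $2^d$ normalized sign vectors, the only cosmetic difference being that it fixes the threshold at $\pi/2$ with sensitivity $1/(2\sqrt{N})$ and checks $\arccos(-1/\sqrt{N}) > \pi/2 + 1/(2\sqrt{N})$ asymptotically, whereas you take threshold $\arccos(1/\sqrt{d})$ and derive the exact condition $1/\sqrt{d} > \sin(\varepsilon/2)$. Both yield the same $\Omega(\varepsilon^{-2})$ bound.
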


\begin{proof}
Recall that $\arccos{x} = \pi/2 - x + o(x)$ when $x \to 0$. Let $N$ be large enough so that $\arccos{(-1/\sqrt{N})} > \pi/2 + 1/(2\sqrt{N})$. Let $\mathcal{B} = \{b_1, \ldots, b_N\}$ be an orthonormal basis of $\mathbb{R}^N$. Then, $b_1, \ldots, b_N \in \mathbb{S}^{N-1}$.
For every subset $S \subseteq [N]$, let $v_S = \frac{1}{\sqrt{N}} \left( \sum_{s \in S}b_s - \sum_{t \notin S}b_t\right) \in \mathbb{S}^{N-1}$.
Consider the $1/(2\sqrt{N})$-trigraph $T= (V,E,R)$ with $V = \mathcal{B} \cup \{v_S : S \subseteq [N]\}$, $E = \{uv: d_S(u, v) \leq \pi/2\}$ and $R = \{uv: \pi/2 < d_S(u, v) \leq \pi/2 + 1/(2\sqrt{N})\}$.

We show that $\mathcal{B}$ is shattered in $T$, which will conclude the proof.
Let $\mB' \subseteq \mathcal{B}$ and let $S = \{i \in [N] : b_i \in \mB'\}$.
Then, for every $b \in \mB'$, we have $\langle b, v_S\rangle = 1/\sqrt{N}$ and for every $b \notin \mB'$, we have $\langle b, v_S\rangle = -1/\sqrt{N}$.
Thus, for every $b \in \mB'$, we have $d_S(b, v_S) = \arccos{(1/\sqrt{N})} \leq \pi/2$ so $bv_S \in E$ and for every $b \notin \mB'$, we have $d_S(b, v_S) = \arccos{(-1/\sqrt{N})} > \pi/2 + 1/(2\sqrt{N})$ so $bv_S \notin E \cup R$. Therefore $\mathcal{B} \cap N[v_S] = \mathcal{B} \cap (N[v_S] \cup R(v_S)) = \mB'$.
\end{proof}

\begin{remark}
Using similar ideas, we can show that the bound of \cref{thm:HW-trihyper} is almost tight.
Consider the trigraph $T = (V, E, R)$ with $V = \mathbb{S}^{n-1}$, $E = \{uv: d_S(u, v) \leq \pi/2-1/2\sqrt{n}\}$ and $R = \{uv: \pi/2-1/2\sqrt{n} < d_S(u, v) \leq \pi/2-1/4\sqrt{n}\}$.
Then, the (plain) neighborhood of each vertex represents at least a quarter of the ground set, and $T$ has VC-dimension $\Theta(n)$.
We show that $\gamma(T) = \Omega(n)$.
Note that for any point $v \in \mathbb{S}^{n-1}$, its neighborhood $N[v] \cup R(v)$ only consists of vertices whose scalar product with $v$ is $> 0$.
Therefore, for any set $S \subseteq \mathbb{S}^{n-1}$ of size at most $n-1$, there exists a point $p \in \mathbb{S}^{n-1}$ which is orthogonal to all points in $S$, hence is not dominated by $S$.
This proves that $\gamma(T) \geq n$.
\end{remark}

\subsection{Operations on tri-hypergraphs}\label{subsec:inter-trihyper}

A standard result in VC-dimension theory states that if a hypergraph $\mH$ has bounded VC-dimension then the hypergraph whose edges are all pairwise intersections (resp. differences) of edges of $\mH$ also has bounded VC-dimension.
We show that these results also hold for tri-hypergraphs. Our proofs are somewhat different from the classical ones.

\begin{lemma}\label{lem:VCdim-intersection-tri-hyper}
Let $H_1=(V,\mathcal{E}_1)$ and $H_2 = (V, \mE_2)$ be two tri-hypergraphs of VC-dimension $d$.
Let $\mathcal{I} = H_1 \cap H_2$ be their intersection tri-hypergraph.
Then, $\mathcal{I}$ has VC-dimension $O(d)$.
\end{lemma}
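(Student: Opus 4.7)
Assume $X \subseteq V$ with $|X| = s$ is shattered by $\mathcal{I}$; the aim is to conclude $s = O(d)$. For each $Y \subseteq X$ pick witnesses $e_i^Y = (B_i^Y, R_i^Y, W_i^Y) \in \mE_i$ ($i = 1, 2$) whose intersection cleanly realizes $Y$ on $X$. Unfolding the definition of $e_1^Y \cap e_2^Y$ and the cleanness condition $X \cap R^Y_{\mathcal{I}} = \emptyset$ yields two constraints: $(X \cap B_1^Y) \cap (X \cap B_2^Y) = Y$, and for every $x \in X \setminus Y$ one has $x \in W_1^Y \cup W_2^Y$.

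The key structural observation is that, while neither $e_1^Y$ nor $e_2^Y$ need be clean on all of $X$, each is clean on a suitable subset. Using the second constraint, pick a function $\sigma_Y : X \setminus Y \to \{1, 2\}$ with $x \in W^Y_{\sigma_Y(x)}$ and set $X_i^Y = \sigma_Y^{-1}(i)$; then $X = Y \sqcup X_1^Y \sqcup X_2^Y$, and on $Y \cup X_i^Y$ the witness $e_i^Y$ has no red vertices (since $Y \subseteq B_i^Y$ and $X_i^Y \subseteq W_i^Y$, both disjoint from $R_i^Y$). Its clean trace there equals $Y$, so
\[Y \in tr_{H_1}(Y \cup X_1^Y) \quad \text{and} \quad Y \in tr_{H_2}(Y \cup X_2^Y).\]
Thus each $Y \in tr_\mathcal{I}(X)$ is simultaneously a clean $H_1$-trace on some $U_1 \subseteq X$ and a clean $H_2$-trace on the complementary $U_2$, with $U_1 \cap U_2 = Y$ and $U_1 \cup U_2 = X$.

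The plan now is to injectively encode $Y \mapsto (Y, X_1^Y, X_2^Y)$ and count admissible ordered partitions $(Z_0, Z_1, Z_2)$ of $X$ satisfying $Z_0 \in tr_{H_i}(Z_0 \cup Z_i)$ for $i = 1, 2$. The tri-hypergraph Sauer–Shelah bound (\cref{lem:gen-sauer-shelah}) controls each of these conditions by at most $(es/d)^d$ options on its side, and a careful double count — parametrising by, say, $Z_1$, then for each valid $Z_0 \in tr_{H_1}(Z_0 \cup Z_1)$ letting $Z_2 = X \setminus (Z_0 \cup Z_1)$ be determined and imposing the second Sauer–Shelah condition — is to yield a polynomial bound $|tr_\mathcal{I}(X)| \leq (es/d)^{cd}$. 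Combined with $|tr_\mathcal{I}(X)| = 2^s$ and the calculus trick of \cref{lem:tech}, this gives $s = O(d)$.

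The main obstacle is precisely this final counting step. Unlike the classical hypergraph case, where the intersection trace is determined by the pair $(X \cap B_1, X \cap B_2) \in tr_{H_1}(X) \times tr_{H_2}(X)$ and one simply multiplies two Sauer–Shelah counts on $X$, here the individual witnesses need not be clean on $X$, so Sauer–Shelah is only applicable after passing to the variable sub-sets $Y \cup X_i^Y$. The technical heart of the proof is to arrange the counting so that the ambient factor coming from summing over how $X \setminus Y$ splits between $X_1^Y$ and $X_2^Y$ is absorbed by the joint Sauer–Shelah savings, rather than dominating them.
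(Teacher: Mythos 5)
Your setup is sound and matches the paper's opening move: for each $Y \subseteq X$ the witnesses give $Y \subseteq B_1^Y \cap B_2^Y$ and $X \setminus Y \subseteq W_1^Y \cup W_2^Y$, and splitting $X \setminus Y$ does yield $Y \in tr_{H_1}(Y \cup X_1^Y)$ and $Y \in tr_{H_2}(Y \cup X_2^Y)$. But the counting step you defer is not a technicality — as sketched it fails. Sauer--Shelah (\cref{lem:gen-sauer-shelah}) bounds the number of clean traces on a \emph{fixed} set; in your condition $Z_0 \in tr_{H_1}(Z_0 \cup Z_1)$ the ground set moves with the unknown $Z_0$, so the bound does not apply per $Z_1$. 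Concretely, take $V = X$ and let $H_1$ have the edges $(S, X \setminus S, \emptyset)$ for all $S \subseteq X$: its VC-dimension is $0$ (no edge has a white vertex, so not even a singleton is shattered), yet for $Z_1 = \emptyset$ every $Z_0 \subseteq X$ satisfies $Z_0 \in tr_{H_1}(Z_0)$. Worse, the global quantity you propose to bound — the number of admissible ordered partitions $(Z_0, Z_1, Z_2)$ — can be $2^{|X|}$ even when $d = 0$: if $H_1$ and $H_2$ each consist of the single all-white edge $(\emptyset, \emptyset, V)$, then $(\emptyset, Z_1, Z_2)$ is admissible for every partition $(Z_1, Z_2)$ of $X$. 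Hence no bound of the form $(es/d)^{cd}$ on the number of admissible triples exists, and injecting $tr_{\mathcal{I}}(X)$ into those triples cannot conclude on its own.

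The paper's proof circumvents exactly this. It keeps only the larger white side, a set $\phi(Y) \subseteq X \setminus Y$ of size at least $|X \setminus Y|/2$ contained in $W(e)$ for a single $e \in \mE_1 \cup \mE_2$ with $Y \subseteq B(e)$, and double counts the bipartite graph between two copies of $2^X$ in which $Y$ is joined to every $Y \cup Z$ with $Z \subseteq \phi(Y)$. The number of edges is at least $\sum_k \binom{s}{k} 2^{(s-k)/2} = (1+\sqrt{2})^s$. For the upper bound, Sauer--Shelah is invoked where it is legitimate: for a fixed vertex $Z$ on the right-hand side, each neighbor $Y$ produces a distinct clean separation of the \emph{fixed} set $Z$ in $H_1$ or $H_2$, so the degree of $Z$ is at most $2\left(\frac{e|Z|}{d}\right)^d$. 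Comparing the two bounds gives $\left(\frac{1+\sqrt{2}}{2}\right)^s \leq (2es/d)^d$, whence $s = O(d)$ by \cref{lem:tech}. The idea you are missing is this transfer: rather than enumerating how $X \setminus Y$ splits (which costs a factor $2^s$ with nothing gained in return), each $Y$ is charged exponentially many supersets $Y \cup Z$, and the trace bound is applied on the superset, where the witness is genuinely clean; the gap between $(1+\sqrt{2})^s$ and $2^s \cdot s^{O(d)}$ is what closes the argument.
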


\newcommand{\fillcarre}[3]{ \fill[#3!80] (#1*\cellsize,#2*\cellsize) rectangle (#1*\cellsize+\cellsize,#2*\cellsize+\cellsize); } 
\newcommand{\fillrect}[4]{
  \fill[#4!80] (#1,#2) rectangle (#3,#2+1);
}

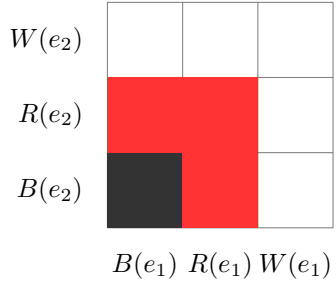
\begin{figure}[ht]
    \begin{center}
    \hspace{-1cm}
        \begin{tikzpicture} 
            \draw[step=1,gray] (0,0) grid (3,3);
        
            \fill[black!80] (0,0) rectangle (1,1);
            \fill[red!80] (1,1) rectangle (2,2);
            \fill[red!80] (0,1) rectangle (1,2);
            \fill[red!80] (1,0) rectangle (2,1);
        
            \node[anchor=center] at (0.5,-0.5) {$B(e_1)$};
            \node[anchor=center] at (1.5,-0.5) {$R(e_1)$};
            \node[anchor=center] at (2.5,-0.5) {$W(e_1)$};
        
            \node[anchor=east] at (-0.2,0.5) {$B(e_2)$};
            \node[anchor=east] at (-0.2,1.5) {$R(e_2)$};
            \node[anchor=east] at (-0.2,2.5) {$W(e_2)$};
        \end{tikzpicture}
    \caption{Edges of $\mathcal{I}$}
    \label{fig:intersection-tri-hyper}
    \end{center}
\end{figure}

\vspace{-10pt}

\begin{proof}
Let $X \subseteq V$ be a set of size $s$ which is shattered by $\mathcal{I}$. For every subset $Y \subseteq X$, there exists a pair $(e_1, e_2) \in \mE_1 \times \mE_2$ such that for every $y\in Y$, $y \in B(e_1)\cap B(e_2)$ and for every  $y\not\in Y$, $y \in W(e_1)\cup W(e_2)$ (see \cref{fig:intersection-tri-hyper}). 
So, one of $W(e_1), W(e_2)$ contains at least half of the elements of $|X\setminus Y|$. 
For every $Y\subseteq X$, fix a set $\phi(Y)$ of size at least $|X \setminus Y|/2$ such that there exists an edge $e \in \mE_1 \cup \mE_2$ with $Y \subseteq B(e)$ and $\phi(Y)\subseteq W(e)$. 

We construct a bipartite graph $\Gamma$ whose two parts, $A$ and $B$, are copies of $2^X$ and add an edge from $Y\in A$ to every set $Y \cup Z$ for $Z \subseteq \phi(Y)$ in $B$. 
Observe first that the degree of every vertex of $A$ corresponding to a subset of size $k$ is at least $2^{(s-k)/2}$. 
Consider now a vertex $z$ of $B$ corresponding to a subset $Z \subseteq X$. 
By definition of $\Gamma$, for every neighbor of $z$ in $A$ corresponding to a set $Y \subseteq X$, we have $Z \setminus Y \subseteq \phi(Y)$ so there exists an edge $e \in \mE_1 \cup \mE_2$ such that $Y \subseteq B(e)$ and $Z \setminus Y \subseteq \phi(Y) \subseteq W(e)$. 
Therefore, every neighbor of $z$ in $\Gamma$ corresponds to a clean separation on $Z$ in $H_1$ or in $H_2$, and these traces are pairwise distinct. 
Since $H_1$ and $H_2$ have VC-dimension $d$, the Sauer-Shelah Lemma implies that $z$ has at most $2\sum_{i = 0}^d\binom{|Z|}{i} \leq 2\left(\frac{e|Z|}{d}\right)^d$ neighbors in $A$. A double-counting of the edges of $\Gamma$ then yields $$\sum_{k=0}^{s} \binom{s}{k} 2^{(s-k)/2} \leq e(\Gamma) \leq \sum_{k=0}^s \binom{s}{k} \left(\frac{2ek}{d}\right)^d.$$
This implies that $(1+\sqrt{2})^s \leq  2^s \cdot (2es/d)^d$ so  $\left(\frac{1 + \sqrt{2}}{2}\right)^s \leq (2es/d)^d$.
Thus, $s \leq \frac{d}{\alpha} \cdot \ln(2es/d)$, for $\alpha = \ln\left(\frac{1 + \sqrt{2}}{2}\right)$, which implies by \cref{lem:tech} that $s \leq \frac{2d}{\alpha}\ln\left(\frac{2e}{\alpha}\right) = O(d)$.
\end{proof}

The analogue result for the difference of tri-hypergraphs follows immediately since $H_1 \setminus H_2 = H_1 \cap \overline{H_2}$ and since the VC-dimension is invariant under complementation.

\begin{lemma}\label{lem:VCdim-difference-tri-hyper}
Let $H_1=(V,\mathcal{E}_1)$ and $H_2 = (V, \mE_2)$ be two tri-hypergraphs of VC-dimension $d$.
Let $\mathcal{D} = H_1 \setminus H_2$ be their difference tri-hypergraph.
Then, $\mathcal{D}$ has VC-dimension $O(d)$.
\end{lemma}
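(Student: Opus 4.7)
The plan is to reduce this statement to \cref{lem:VCdim-intersection-tri-hyper} by observing that the difference operation on tri-hypergraphs can be rewritten as an intersection with a complement. Concretely, I aim to establish the identity
\[
H_1 \setminus H_2 \;=\; H_1 \cap \overline{H_2},
\]
and then invoke the intersection lemma applied to the pair $(H_1, \overline{H_2})$. Since the whole combinatorial work (the $\phi(Y)$-construction, the bipartite double count, and the Sauer--Shelah estimate) has already been done in the proof of \cref{lem:VCdim-intersection-tri-hyper}, the present lemma should reduce to a routine check of definitions plus one invariance observation.

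The first step is to verify the identity tri-edge by tri-edge. Let $e_1 = (B_1, R_1, W_1)$ and $e_2 = (B_2, R_2, W_2)$. By the definition of complementation in the excerpt, $\overline{e_2} = (W_2, R_2, B_2)$. Plugging this into the definition of intersection, the black part of $e_1 \cap \overline{e_2}$ is $B_1 \cap W_2$, and its red part is $((B_1 \cup R_1) \cap (W_2 \cup R_2)) \setminus (B_1 \cap W_2)$. These are precisely the black and red parts of $e_1 \setminus e_2$, so the two tri-edges coincide. Taking this over all pairs in $\mE_1 \times \mE_2$ yields $\mathcal{D} = H_1 \cap \overline{H_2}$ as tri-hypergraphs.

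The second step is to note that $\overline{H_2}$ has the same VC-dimension as $H_2$: a shattered set only depends on which subsets appear as $X \cap B$ for some tri-edge whose red part is disjoint from $X$, and the definition of a shattered set is symmetric in the roles of the black and white parts (shattering $X$ is equivalent to realizing every clean separation $(X_0, X_1)$ of $X$, and swapping $B$ with $W$ merely swaps $X_0$ with $X_1$). Hence $\overline{H_2}$ has VC-dimension $d$, and applying \cref{lem:VCdim-intersection-tri-hyper} to $H_1$ and $\overline{H_2}$ gives that $\mathcal{D} = H_1 \cap \overline{H_2}$ has VC-dimension $O(d)$.

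The only place where something could go wrong is the verification of the first step: one must be careful that the red part in the definition of $e_1 \setminus e_2$ really matches the red part of $e_1 \cap \overline{e_2}$, including the ``don't care'' zones introduced by $R_1$ and $R_2$. But once one writes the definitions side by side this is immediate, so no real obstacle arises; the lemma is essentially a corollary of the intersection result.
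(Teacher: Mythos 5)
Your proposal is correct and follows exactly the paper's argument: the paper also derives the lemma by writing $H_1 \setminus H_2 = H_1 \cap \overline{H_2}$, using invariance of the VC-dimension under complementation, and invoking \cref{lem:VCdim-intersection-tri-hyper}. Your tri-edge-by-tri-edge verification of the identity just makes explicit what the paper treats as immediate.
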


\subsection{Refined difference of metric-trigraphs}\label{subsec:VCdim-diff-metric}

For technical reasons which will become clear when trying to prove clustering results (namely \cref{lem:DNLH-cluster-intro}), we need a stronger statement about the difference of tri-hypergraphs. We could not prove this result in full generality for tri-hypergraphs of bounded VC-dimension, but the result holds for metric-trigraphs. We first need some definitions.

Let $T = (V, E, R)$ be a Hamming-trigraph with sensitivity $\varepsilon$ and threshold $\tau$. Let $N$ be such that $V \subseteq \{0, 1\}^N$.
We refine the trigraph $T$ by classifying the red edges into two types.
Formally, let $R_1 = \{vw \in R : d_H(v, w) \leq (\tau + \varepsilon/2) \cdot N\}$ and $R_2 = \{vw \in R : (\tau + \varepsilon/2) \cdot N < d_H(v, w)\}$.
We consider two refinements of $T$, which are $T_1 = (V, E, R_1)$ and $T_2 = (V, E \cup R_1, R_2)$.
The tri-hypergraph $\mD = T_1 \setminus T_2$ is the \emph{refined tri-hypergraph of differences} of $T$, see \cref{fig:diff}.
Observe that both $T_1$ and $T_2$ are Hamming-trigraphs with sensitivity $\varepsilon/2$, so they have VC-dimension $O\left(1/\varepsilon^2\right)$ by \cref{thm:VCdim-Hamming}.
It then follows from \cref{lem:VCdim-difference-tri-hyper} that $\mD$ has VC-dimension $O\left(1/\varepsilon^2\right)$.

\begin{restatable}{lemma}{VCdimdiff}\label{lem:VCdim-diff}
Let $T$ be a Hamming-trigraph with sensitivity $\varepsilon$ and let $\mD$ be its refined tri-hypergraph of differences.
Then, $\mD$ has VC-dimension $O\left(\frac{1}{\varepsilon^2}\right)$.
\end{restatable}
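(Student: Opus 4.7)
The plan is to deduce the bound directly from the two main results already established in this section: the VC-dimension bound for Hamming-trigraphs (\cref{thm:VCdim-Hamming}) and the preservation of bounded VC-dimension under differences (\cref{lem:VCdim-difference-tri-hyper}). The midpoint refinement used to define $\mD$ is precisely what makes both ingredients applicable: by splitting the red band $(\tau N,(\tau+\varepsilon)N]$ at $(\tau+\varepsilon/2)N$, one obtains two trigraphs whose sensitivity is halved.

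First I would check that $T_1$ and $T_2$ are themselves Hamming-trigraphs with sensitivity $\varepsilon/2$. For $T_1=(V,E,R_1)$, plain edges are pairs $uv$ with $d_H(u,v)\leq \tau N$, red edges are pairs with $\tau N < d_H(u,v)\leq (\tau+\varepsilon/2)N$, and non-edges are pairs with $d_H(u,v)>(\tau+\varepsilon/2)N$; hence $T_1$ has threshold $\tau$ and sensitivity $\varepsilon/2$. For $T_2=(V,E\cup R_1,R_2)$, plain edges are the pairs with $d_H(u,v)\leq(\tau+\varepsilon/2)N$, red edges are pairs with $(\tau+\varepsilon/2)N<d_H(u,v)\leq(\tau+\varepsilon)N$, and non-edges are pairs with $d_H(u,v)>(\tau+\varepsilon)N$; hence $T_2$ has threshold $\tau+\varepsilon/2$ and sensitivity $\varepsilon/2$.

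Next, I would apply \cref{thm:VCdim-Hamming} to both $T_1$ and $T_2$ to conclude that each has VC-dimension $O((\varepsilon/2)^{-2})=O(\varepsilon^{-2})$. Since a trigraph and its corresponding tri-hypergraph have the same VC-dimension, the tri-hypergraphs $H_{T_1}$ and $H_{T_2}$ also have VC-dimension $O(\varepsilon^{-2})$. Finally, because $\mD=H_{T_1}\setminus H_{T_2}$, \cref{lem:VCdim-difference-tri-hyper} yields that $\mD$ has VC-dimension $O(\varepsilon^{-2})$.

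There is no real obstacle: the substantive work was already done in \cref{thm:VCdim-Hamming} and \cref{lem:VCdim-difference-tri-hyper}, and the role of this lemma is essentially packaging. The only delicate design choice is the use of the midpoint $\tau+\varepsilon/2$, which ensures that \emph{both} $T_1$ and $T_2$ are honest Hamming-trigraphs with a uniform sensitivity (rather than only one of them); this symmetric split is what will later allow the clustering statement of \cref{lem:DNLH-cluster-intro} to distinguish vertices using the red band of $\mD$.
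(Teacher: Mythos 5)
Your proposal is correct and is essentially the paper's own argument: the paper likewise observes that $T_1$ and $T_2$ are Hamming-trigraphs with sensitivity $\varepsilon/2$, applies \cref{thm:VCdim-Hamming} to each, and then concludes via \cref{lem:VCdim-difference-tri-hyper}. Your verification of the thresholds ($\tau$ for $T_1$, $\tau+\varepsilon/2$ for $T_2$) and of the identification $\mD = H_{T_1}\setminus H_{T_2}$ matches the intended proof exactly.
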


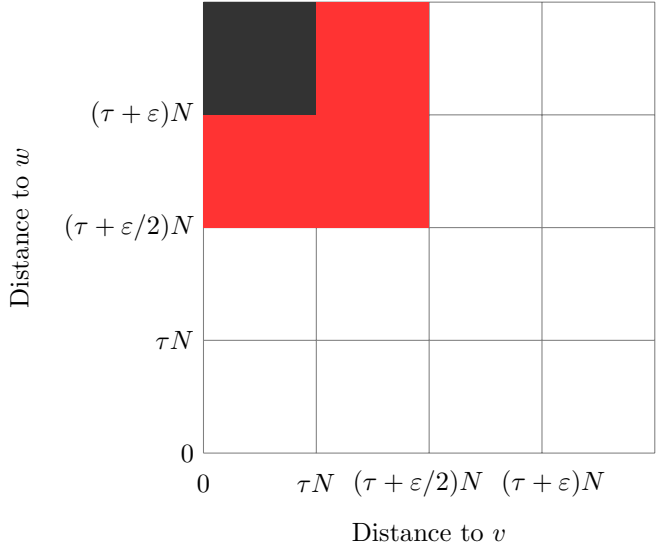
\begin{figure}[ht]
    \centering
    \hspace{-3cm}
    \begin{tikzpicture}
    \def\cellsize{1.5cm}
  
  \draw[step=\cellsize,gray] (0,0) grid (4*\cellsize,4*\cellsize);
  
  \fill[black!80] (\cellsize,4*\cellsize) rectangle (0,3*\cellsize);  
  
  \fill[red!80] (\cellsize,3*\cellsize) rectangle (0,2*\cellsize);   
  \fill[red!80] (\cellsize,3*\cellsize) rectangle (2*\cellsize,2*\cellsize);
  \fill[red!80] (\cellsize,3*\cellsize) rectangle (2*\cellsize,4*\cellsize);
  
  \node[anchor=center] at (0, -.4cm) {$0$};
  \node[anchor=center, align=center] at (1*\cellsize, -.4cm) {$\tau N$};
  \node[anchor=center, align=center] at (1.9*\cellsize, -.4cm) {$(\tau+\varepsilon/2)N$};
  \node[anchor=center, align=center] at (3.1*\cellsize, -.4cm) {$(\tau+\varepsilon)N$};
  
  \node[anchor=east, align=center] at (-0cm, 0) {$0$};
  \node[anchor=east, align=center] at (-0cm, 1*\cellsize) {$\tau N$};
  \node[anchor=east, align=center] at (-0cm, 2*\cellsize) {$(\tau+\varepsilon/2)N$};
  \node[anchor=east, align=center] at (-0cm, 3*\cellsize) {$(\tau+\varepsilon)N$};
  
  \node[anchor=north] at (2*\cellsize, -0.8) {Distance to $v$};
  \node[rotate=90, anchor=south] at (-2.2cm,2*\cellsize) {Distance to $w$};
    \end{tikzpicture}
    \caption{Edge of $\mD$ which is the difference between the tri-edge corresponding to the the neighborhood of $v$ in $T_1$ and the tri-edge corresponding to the neighborhood of $w$ in $T_2$. Observe that every vertex in the black zone is at distance at most $\tau N$ of $v$ and at least $(\tau + \varepsilon)N$ of $w$, and every vertex in the white zone is either at distance at least $(\tau + \varepsilon/2) N$ of $v$ or at distance at most $(\tau + \varepsilon/2) N$ of $w$.}
    \label{fig:diff}
\end{figure}

\subsection{Disjointness-trigraphs} \label{sec:VCdim-sets}

We now move to disjointness-trigraphs, for which we prove results analogous to Hamming-trigraphs. The proofs are not more complicated, but are less visual than for Hamming-trigraphs. Recall that a trigraph $T = (V, E, R)$ is a disjointness-trigraph with sensitivity $\varepsilon$ if there exists a set system $\mF$ on ground set $V$ such that $E = \{uv : \mF_{uv} = \emptyset\}$ and $R = \{uv : 0 < |\mF_{uv}| \leq \varepsilon |\mF|\}$.

\begin{lemma}\label{lem:VCdim-disj-trigraph}
Every disjointness-trigraph with sensitivity $\varepsilon$ has VC-dimension at most $\frac{1}{\varepsilon}$.
\end{lemma}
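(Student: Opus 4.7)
Let $T=(V,E,R)$ be a disjointness-trigraph of sensitivity $\varepsilon$ defined by a set system $\mF$ on $V$. Suppose $X\subseteq V$ is shattered with $|X|=d$ and shattering witness $S=\{v_Y : Y\subseteq X\}$ of size $2^d$. My plan is to produce, for each $y \in X$, a subfamily $T_y \subseteq \mF$ with $|T_y| > \varepsilon|\mF|$, and show that the $T_y$'s are pairwise disjoint; this yields $d\cdot \varepsilon|\mF| < |\mF|$, hence $d \leq 1/\varepsilon$.

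\textbf{Construction of the certificates $T_y$.} For each $y\in X$, take the special witness $v_y := v_{X\setminus\{y\}} \in S$. Its closed neighborhood on $X$ equals $X\setminus\{y\}$, so $y\notin N[v_y]$, and by the shattering definition $v_y$ has no red neighbor in $X$. Hence $v_y y$ is a non-edge of $T$, which by definition of a disjointness-trigraph means exactly that $|\mF_{v_y y}| > \varepsilon|\mF|$. I will take $T_y := \mF_{v_y y}$.

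\textbf{Orthogonality step.} For any $y'\in X\setminus\{y\}$ with $y'\neq v_y$, we have $y' \in N[v_y] \setminus \{v_y\}$, which forces $v_y y' \in E$ and therefore $\mF_{v_y y'} = \emptyset$. In other words, no member of $\mF$ can simultaneously contain $v_y$ and a distinct element of $X\setminus\{y\}$.

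\textbf{Disjointness and conclusion.} In the ``generic'' case where $S\cap X = \emptyset$, suppose $f\in T_y\cap T_{y'}$ for distinct $y,y'\in X$; then $v_y,y,y'\in f$, and since $v_y\notin X$ we have $v_y\neq y'$, so $f\in \mF_{v_y y'} = \emptyset$, a contradiction. Hence the $T_y$'s are pairwise disjoint subfamilies of $\mF$, and summing gives $|\mF| \geq \sum_{y\in X}|T_y| > d\cdot\varepsilon|\mF|$, i.e.\ $d < 1/\varepsilon$.

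\textbf{Main obstacle.} The only delicate point is the edge case where some $v_Y$ belongs to $X$ (which forces $v_Y\in Y$). Then ``bad pairs'' $\{y,y'\}$ with $v_y = y'$ and $v_{y'} = y$ may arise, in which case $T_y = T_{y'}$ collapses instead of being disjoint from the other. A direct accounting of these collisions (bad pairs form a partial matching on $X$) degrades the bound only by a factor $2$; to recover the sharp $1/\varepsilon$ one can either argue WLOG that $S\cap X = \emptyset$ by passing to a blown-up disjointness-trigraph (replacing each vertex by many copies sharing the same incidence in $\mF$), or refine the certificates by picking a different $v_Y$ from $S$ whenever a bad pair would appear. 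I expect this reduction to be the only technically annoying step; the counting argument above is the heart of the proof.
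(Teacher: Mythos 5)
Your core argument is a genuinely different route from the paper's, and in the main case it is correct. The paper's proof double counts bicliques in the bipartite non-adjacency graph between $X$ and all $2^{|X|}$ witnesses (a random member of $\mF$ yields a large complete bipartite subgraph, while shattering forbids bicliques with more than $2^{|X|-1}$ edges); you instead use only the $|X|$ witnesses $v_y=v_{X\setminus\{y\}}$ and extract pairwise disjoint subfamilies $\mF_{v_yy}$, each of size greater than $\varepsilon|\mF|$. When no $v_y$ lies in $X$ this is complete, simpler than the paper's argument, and even gives the strict bound $|X|<1/\varepsilon$; your fallback accounting of the bad pairs also rigorously gives $2/\varepsilon$, which is all the paper ever needs downstream.

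The gap is in the two repairs you propose to recover the constant $1/\varepsilon$; neither works as stated. First, the blow-up does not let you assume $S\cap X=\emptyset$: in a bad pair $\{y,y'\}$ one has $|\mF_{y'}|\geq|\mF_{y'y}|>\varepsilon|\mF|$, so two copies of $y'$ are non-adjacent (white) in the blown-up disjointness-trigraph, and a fresh copy of $y'$ has trace $X\setminus\{y,y'\}$ on the copy of $X$ rather than $X\setminus\{y\}$; the witness cannot be externalized. The obstruction is real: with $\mF$ consisting of the single set $\{a,b,c\}$ on ground set $\{a,b,c,z\}$, the pair $\{a,b\}$ is shattered (witnesses $c,a,b,z$) and the trigraph is a disjointness-trigraph for every sensitivity $\varepsilon<1$, so for $\varepsilon>1/2$ no reduction can recover the bound $1/\varepsilon$ under the paper's definition of shattering (the paper's own double count tacitly assumes the witnesses avoid $X$, as witnesses inside $X$ create diagonal pairs that are not edges of its bipartite graph). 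Second, ``picking a different $v_Y$ from $S$'' is not available: $|S|=2^{|X|}$ and every $Y\subseteq X$ needs a witness, so $Y\mapsto v_Y$ is a bijection and each witness is unique. What does work, for $|X|\geq 3$, is to change which \emph{set} supplies the certificate: for each bad pair $\{y,y'\}$ keep $\mF_{yy'}$ and replace the second certificate by $\mF_{wy'}$, where $w=v_{\{y\}}$ is the witness of the singleton $\{y\}$; then $w\notin X$, $\mF_{wy}=\emptyset$, and all your disjointness checks go through, restoring $|X|$ pairwise disjoint families of size greater than $\varepsilon|\mF|$ and hence $|X|<1/\varepsilon$. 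As written, though, your proof establishes only the $2/\varepsilon$ bound.
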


\begin{proof}
Let $T=(V,E,R)$ be a disjointness-trigraph with sensitivity $\varepsilon$ coming from a set system $\mF$.
Let $X \subseteq V$ be a shattered set of $T$: for every $Y \subseteq X$ there exists $v_Y \in V$ such that $N[v_Y] \cap X = (N[v_Y] \cup R(v_Y)) \cap X = Y$.
Let $\mY = \{v_Y: Y \subseteq X\}$, and consider the bipartite graph $G$ on vertex set $X \cup \mY$ where $xv_Y$ is an edge if $x \in W(v_Y)$ (or equivalently if $x \notin N[v_Y]$).
Since $\mY$ shatters $X$, the $2^{|X|}$ vertices in $\mY$ have all possible adjacencies on $X$.
Thus, for any subset $X' \subseteq X$, the number of common neighbors of the vertices of $X'$ is $2^{|X| - |X'|}$, and therefore $G$ has no complete bipartite subgraph with more than $|X'|2^{|X| - |X'|}$ edges, which is at most $2^{|X|-1}$ edges.

We show that $G$ has a complete bipartite subgraph of size $\varepsilon \cdot |X| \cdot 2^{|X|-1}$.
Consider the following experiment : take a uniformly random set $S \in \mF$ and let $Z$ be the random variable which counts the number of pairs $(x, v_Y) \in X \times \mY$ such that $S \in \mF_{xv_Y}$ (i.e. such that $\{x, v_Y\} \subseteq S$).
For every edge $xv_Y$ in $G$, we have that $x \in W(v_Y)$, which means that $|\mF_{xv_Y}| \geq \varepsilon |\mF|$, so the probability that $z \in \mF_{xv_Y}$ is at least $\varepsilon$.
Therefore, $\mathbb{E}[Z] \geq \varepsilon \cdot e(X, \mY) = \varepsilon \cdot |X| \cdot 2^{|X|-1}$, so there exists $S\in \mF$ for which $Z \geq \varepsilon \cdot |X| \cdot 2^{|X|-1}$.
Then, let $X'  = S \cap X$ and $\mY' = S \cap \mY$.
For every $x \in X'$ and $v_Y \in \mY'$ we have $S \in \mF_{xv_Y}$, so $xv_Y \notin E(T)$, i.e. $xv_Y \in E(G)$.
Since $|X'| \cdot |\mY'| = Z \geq \varepsilon \cdot |X| \cdot 2^{|X|-1}$ then $G$ has a complete bipartite subgraph with at least $\varepsilon \cdot |X| \cdot 2^{|X|-1}$ edges.

Putting together the upper bound and the lower bound, we get $\varepsilon \cdot |X| \cdot 2^{|X|-1} \leq 2^{|X|-1}$ so $|X| \leq \frac{1}{\varepsilon}$.
\end{proof}

Combining \cref{lem:VCdim-disj-trigraph} with \cref{lem:VCdim-intersection-tri-hyper} immediately gives the following result.

\begin{restatable}{lemma}{VCdimsetinter}\label{lem:VCdim-set-intersection}
If $T$ is a disjointness-trigraph with sensitivity $\varepsilon$ then $T \cap T$ has VC-dimension $O\left(\frac{1}{\varepsilon}\right)$.
\end{restatable}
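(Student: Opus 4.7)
The plan is to view this statement as a direct corollary of the two preceding results and just to glue them together correctly. By \cref{lem:VCdim-disj-trigraph}, the disjointness-trigraph $T$ has VC-dimension at most $1/\varepsilon$. Passing to the associated tri-hypergraph $H_T = (V, \{(N[v], R(v), W(v)) : v \in V\})$ preserves the VC-dimension (this is how shattering for trigraphs was defined), so $H_T$ is a tri-hypergraph on $V$ of VC-dimension at most $1/\varepsilon$.

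I would then apply \cref{lem:VCdim-intersection-tri-hyper} to the pair $(H_T, H_T)$, with $d = \lceil 1/\varepsilon \rceil$. The conclusion of that lemma is that the intersection tri-hypergraph $H_T \cap H_T$ has VC-dimension $O(d) = O(1/\varepsilon)$. Under the convention that $T \cap T$ denotes the trigraph/tri-hypergraph whose tri-edges are the intersections $(N[u] \cap N[v],\, ((N[u] \cup R(u)) \cap (N[v] \cup R(v))) \setminus (N[u] \cap N[v]),\, W(u) \cup W(v))$ for $u, v \in V$, this is exactly the object of interest, and the bound transfers back.

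The only thing that needs a moment of care is the bookkeeping: in \cref{lem:VCdim-intersection-tri-hyper} one bounds the VC-dimension of the intersection by $\frac{2d}{\alpha}\ln\!\bigl(\frac{2e}{\alpha}\bigr)$ with $\alpha = \ln\bigl(\frac{1+\sqrt{2}}{2}\bigr)$ a universal constant; substituting $d = \lceil 1/\varepsilon \rceil$ yields the desired $O(1/\varepsilon)$ bound. There is no genuine obstacle here — the statement is precisely designed so that the two ingredients already proved in the paper combine without any additional argument. The only place where one could hope for improvement would be in tightening the constant hidden in the $O(\cdot)$, which would require revisiting the proof of \cref{lem:VCdim-intersection-tri-hyper} itself rather than anything specific to disjointness-trigraphs.
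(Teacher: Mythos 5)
Your proposal is correct and is exactly how the paper obtains this lemma: it states that combining \cref{lem:VCdim-disj-trigraph} (VC-dimension at most $1/\varepsilon$ for disjointness-trigraphs) with \cref{lem:VCdim-intersection-tri-hyper} (intersection of two tri-hypergraphs of VC-dimension $d$ has VC-dimension $O(d)$) immediately gives the result. Your bookkeeping of the constant and the identification of $T \cap T$ with $H_T \cap H_T$ are also consistent with the paper's conventions, so nothing further is needed.
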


The previous result can be generalized further, with essentially the same proof as \cref{lem:VCdim-disj-trigraph}, see \cref{sec:proofs-VCdim-sets}. 
We first need a definition.
Consider a set system $\mF$ on ground set $V$, and let $\mT \subseteq V^t$ be a set of $t$-tuples of elements of $V$.
For $v \in V$, we set $B(v) = \{(w^1, \ldots, w^t) \in \mT : \forall i \in [t], \mF_{vw^i} = \emptyset\}$ and $R(v) = \{(w^1, \ldots, w^t) \in \mT : \forall i \in [t], |\mF_{vw^i}| \leq \varepsilon |\mF| \} \setminus B(v)$.
The tri-hypergraph $\mI = (\mT, \mE)$ with $\mE = \{(B(v), R(v), W(v)): v \in V\}$ is the \emph{$\varepsilon$-disjointness-tri-hypergraph} of $V$ on $\mT$.

\begin{restatable}{lemma}{VCdimsetintertuple}\label{lem:VCdim-set-intersection-tuples}
Let $\mF$ be a set system on ground set $V$ and $\mT \subseteq V^t$ be a set of $t$-tuples of elements of $V$.
Let $\mI$ be the $\varepsilon$-disjointness-tri-hypergraph of $V$ on $\mS$.
Then, $\mI \cap \mI$ has VC-dimension at most $\frac{1}{\varepsilon}$.
\end{restatable}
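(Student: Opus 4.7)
The plan is to mimic the proof of Lemma~\ref{lem:VCdim-disj-trigraph}, replacing single-vertex tri-edges by intersections of pairs and single elements of the ground set by $t$-tuples. Fix a shattered set $X\subseteq \mT$ of $\mI\cap\mI$ of size $s$; by definition, for each $Y\subseteq X$ there exist $u_Y,v_Y\in V$ such that the intersection tri-edge $e_{u_Y}\cap e_{v_Y}$ satisfies $Y = X \cap (B(u_Y)\cap B(v_Y))$ and $X\setminus Y \subseteq W(u_Y)\cup W(v_Y)$ (no $X$-vertex lies in the red part). Form the bipartite multigraph $G$ on $X\cup\mY$, where $\mY=\{(u_Y,v_Y):Y\subseteq X\}$ is indexed by subsets of $X$, with edge set $\{(w,(u_Y,v_Y)):w\in X\setminus Y\}$; then $|E(G)| = s\cdot 2^{s-1}$, and exactly as in Lemma~\ref{lem:VCdim-disj-trigraph} any biclique of $G$ has at most $2^{s-1}$ edges (since a biclique on $X'\subseteq X$ can only use indices $Y\subseteq X\setminus X'$, giving at most $|X'|\cdot 2^{s-|X'|}\leq 2^{s-1}$ edges).

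For the matching lower bound, I would draw a uniform $S\in\mF$ and call an edge $(w,(u_Y,v_Y))$ of $G$ \emph{$S$-witnessed} if $\{u_Y,w^i\}\subseteq S$ for some $i\in[t]$, or $\{v_Y,w^j\}\subseteq S$ for some $j\in[t]$. Since $w\in W(u_Y)\cup W(v_Y)$ by the shattering, at least one of the sets $\mF_{u_Yw^i}$ or $\mF_{v_Yw^j}$ has size greater than $\varepsilon|\mF|$; the corresponding single term already ensures witnessing probability $>\varepsilon$, so the expected number of witnessed edges exceeds $\varepsilon\cdot s\cdot 2^{s-1}$. The crucial observation is that the witnessed pairs form a product set: setting $X^*_S=\{w\in X : \exists i,\ w^i\in S\}$ and $\mY_S=\{Y : u_Y\in S\text{ or }v_Y\in S\}$, a pair $(w,(u_Y,v_Y))$ is witnessed if and only if $w\in X^*_S$ and $Y\in\mY_S$, because combining any $i$ with $w^i\in S$ and either $u_Y\in S$ or $v_Y\in S$ yields a witness. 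Moreover, any witnessed pair is automatically an edge of $G$: the witness supplies some $i$ with $S\in\mF_{u_Yw^i}$, so $\mF_{u_Yw^i}\neq\emptyset$, hence $w\notin B(u_Y)\cap B(v_Y)$; the no-red shattering then forces $w\in W(u_Y)\cup W(v_Y)$, i.e. $w\in X\setminus Y$. Thus $(X^*_S,\mY_S)$ is a biclique in $G$ with $|X^*_S|\cdot|\mY_S|$ edges.

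Picking an $S$ that realizes the expectation, the biclique upper bound gives $\varepsilon\cdot s\cdot 2^{s-1}<2^{s-1}$, hence $s<1/\varepsilon$. The main conceptual obstacle is identifying the right notion of witness: a naive generalization would have to choose an arbitrary coordinate $i\in[t]$ and an arbitrary side (among $u_Y$ and $v_Y$) for each edge, and the bound would seemingly deteriorate by factors of $2t$. The key point is that a single uniformly random $S$ decouples these choices through its membership, so the set of witnessed pairs acquires a clean product structure and forms a biclique; this is exactly what allows the $1/\varepsilon$ bound of Lemma~\ref{lem:VCdim-disj-trigraph} to survive both the passage to $t$-tuples and the intersection operation.
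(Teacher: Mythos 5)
Your proof is correct and follows essentially the same route as the paper: the same bipartite graph between $X$ and the shattering pairs, the same $|X'|\cdot 2^{s-|X'|}\leq 2^{s-1}$ biclique bound from the shattering structure, and the same use of a single uniformly random $S\in\mF$, whose witnessed pairs form exactly the product $X^*_S\times\mY_S$ (the paper's $X'$ and $P'$) and are automatically edges, yielding $\varepsilon\, s\, 2^{s-1}\leq 2^{s-1}$ and hence $s\leq 1/\varepsilon$. The "product structure" observation you highlight as the key point is precisely the step the paper takes, so there is no substantive difference.
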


The previous two results involved tri-hypergraphs arising from the intersection of disjointness-trigraphs. We now investigate tri-hypergraphs arising from the difference of disjointness-trigraphs.
It follows immediately from \cref{lem:VCdim-difference-tri-hyper,lem:VCdim-disj-trigraph} that the difference tri-hypergraph of a disjointness-trigraph has bounded VC-dimension.
However, to prove the clustering result for disjointness-trigraphs (\cref{lem:DNLS-cluster-intro}), we need a stronger result, analogous to \cref{lem:VCdim-diff}. We again need some definitions.

Let $T = (V, E, R)$ be a disjointess-trigraph with sensitivity $\varepsilon$, and consider a set system $\mF$ on ground set $V$ such that $E = \{uv: \mF_{uv} = \emptyset\}$ and $R = \{uv: 0 < |\mF_{uv}| \leq \varepsilon |\mF|\}$.
We refine the trigraph $T$ by classifying the red edges into two types.
Formally, let $R_1 = \{vw \in R : |\mF_{vw}| \leq \varepsilon |\mF|/2\}$ and $R_2 = \{vw \in R : |\mF_{vw}| > \varepsilon |\mF|/ 2\}$.
We consider two refinements of $T$, which are $T_1 = (V, E, R_1)$ and $T_2 = (V, E \cup R_1, R_2)$.
The tri-hypergraph $\mD = T_1 \setminus T_2$ is the \emph{refined tri-hypergraph of differences} of $T$.

To prove that $\mD$ has bounded VC-dimension, we would like to apply \cref{lem:VCdim-difference-tri-hyper,lem:VCdim-disj-trigraph}. However, even though $T_1$ is a disjointness-trigraph, $T_2$ is not, and we could only prove a bound of $O(1/\varepsilon^2)$ on the VC-dimension of $T_2$. We therefore take a different approach.

\begin{restatable}{lemma}{VCdimsetdiff}\label{lem:VCdim-set-diff}
Let $T$ be a disjointness-trigraph with sensitivity $\varepsilon$ and let $\mD$ be its refined tri-hypergraph of differences.
Then, $\mD$ has VC-dimension $O\left(\frac{1}{\varepsilon} \cdot \log\left(\frac{1}{\varepsilon}\right)\right)$.
\end{restatable}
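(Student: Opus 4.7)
The plan is to extend the random-sampling argument of \cref{lem:VCdim-disj-trigraph} by a case analysis on which half of the ``or'' in the $\mD$-shattering condition dominates. Suppose $X \subseteq V$ is shattered by $\mD$ with $|X| = s$, and for each $Y \subseteq X$ fix a witness pair $(v_Y, w_Y)$ such that $\mF_{y v_Y} = \emptyset$ and $|\mF_{y w_Y}| > \varepsilon |\mF|$ for every $y \in Y$, while for each $x \in X \setminus Y$ at least one of $|\mF_{x v_Y}| > \varepsilon |\mF|/2$ or $|\mF_{x w_Y}| \le \varepsilon |\mF|/2$ holds. Split $X \setminus Y = L(Y) \cup S'(Y)$, where $L(Y)$ collects the elements satisfying the first alternative, and note that summing the inequality $|L(Y)| + |S'(Y)| \ge s - |Y|$ over all $Y$ forces at least one of $\sum_Y |L(Y)|$ or $\sum_Y |S'(Y)|$ to be at least $s \cdot 2^{s-2}$.

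If $\sum_Y |L(Y)| \ge s \cdot 2^{s-2}$, I would mimic the proof of \cref{lem:VCdim-disj-trigraph}. Let $G$ be the bipartite graph on $X \cup \mathcal{P}$, with $\mathcal{P} = \{p_Y : Y \subseteq X\}$ and edges $\{(x, p_Y) : x \in X \setminus Y\}$; any complete bipartite subgraph of $G$ has at most $2^{s-1}$ edges, since the common $\mathcal{P}$-neighbors of any $X' \subseteq X$ are exactly the $p_Y$ with $Y \subseteq X \setminus X'$. For a uniformly random $S \in \mF$ the event $x, v_Y \in S$ deterministically certifies $x \in X \setminus Y$ (because $\mF_{y v_Y} = \emptyset$ for every $y \in Y$), and summing over $x \in L(Y)$ its expected count is at least $(\varepsilon/2) \sum_Y |L(Y)| \ge \varepsilon s 2^{s-3}$. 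These certified pairs fill the complete bipartite subgraph $(X \cap S) \times \{v_Y : v_Y \in S\}$ of $G$, so $\varepsilon s 2^{s-3} \le 2^{s-1}$ and hence $s = O(1/\varepsilon)$ in this regime.

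In the other case, $\sum_Y |S'(Y)| \ge s \cdot 2^{s-2}$, the $v$-side yields too few hard certifications, and I would instead amplify the factor-$2$ gap between $|\mF_{y w_Y}| > \varepsilon|\mF|$ (for $y \in Y$) and $|\mF_{x w_Y}| \le \varepsilon|\mF|/2$ (for $x \in S'(Y)$) by averaging over many samples. Drawing $k = \Theta(\varepsilon^{-1} \log \varepsilon^{-1})$ independent sets $S_1, \ldots, S_k \in \mF$ and recording signatures $\sigma(u) \in \{0,1\}^k$, the count $c(x, w_Y) = |\{i : x, w_Y \in S_i\}|$ concentrates around $k |\mF_{x w_Y}|/|\mF|$ by a multiplicative Chernoff bound, giving per-pair classification error $\exp(-\Omega(k\varepsilon)) = \varepsilon^{\Omega(1)}$ after tuning the hidden constant. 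Combined with the deterministic $v$-certificate, the estimator $\hat{\ell}_Y(x) = \mathds{1}[\sigma(x) \wedge \sigma(v_Y) = 0 \text{ and } c(x, w_Y) \ge 3k\varepsilon/4]$ classifies each $(x, Y)$ correctly with error $\varepsilon^{\Omega(1)}$ and depends only on $(\sigma(v_Y), \sigma(w_Y)) \in \{0,1\}^{2k}$, so at most $2^{2k}$ distinct correct labelings can arise. A first-moment comparison between the deterministic lower bound $2^s - 2^{2k}$ on the number of mislabeled $Y$'s and the expected upper bound $2^s \cdot s \cdot \varepsilon^{\Omega(1)}$ then forces $s \le 2k + 1 = O(\varepsilon^{-1} \log \varepsilon^{-1})$. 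The main obstacle is precisely this second case: tuning the hidden Chernoff constant so that $s \cdot \varepsilon^{\Omega(1)} < 1/2$ holds self-consistently throughout the target regime, while also controlling the small error contribution from the ``mixed'' elements in $L(Y) \setminus S'(Y)$, which are neither strong $v$-certifiable nor strong $w$-distinguishable. Taking the worse of the two cases yields the announced $O(\varepsilon^{-1} \log \varepsilon^{-1})$ bound.
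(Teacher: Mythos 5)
Your overall strategy is sound, and it is close in spirit to the paper's own argument: both proofs show that a shattered set of $\mD$ would force a very short ``adjacency labelling'' of a random-like bipartite structure, and both exploit exactly the factor-$2$ gap created by the refinement (this is the whole point of using the refined difference). Your Case 1 is a correct, direct adaptation of the biclique argument of \cref{lem:VCdim-disj-trigraph} and gives $s\le 4/\varepsilon$ there. Your Case 2 is where you and the paper diverge in implementation: you randomize the code (sample $k$ sets of $\mF$, record signatures, decode by the threshold test $\sigma(x)\wedge\sigma(v_Y)=0$ and $c(x,w_Y)\ge 3k\varepsilon/4$) and keep all $2^s$ witnesses, paying a Chernoff error per pair; the paper instead keeps the decoding exact and error-free by taking $\mS$ to be a greedy transversal of the at most $9|X|^2$ sets $\mF_{vv'}$ of size $\ge\varepsilon|\mF|/2$ among the $O(|X|)$ relevant vertices (so $|\mS|=O(\varepsilon^{-1}\log|X|)$), and puts the randomness into a uniform sub-multiset of only $|X|$ witnesses, so that the graph to be represented is genuinely a uniform bipartite graph; the count ``few labellings versus $2^{-|X|^2}$ per labelling'' plus \cref{lem:tech} then finishes. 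Two small remarks: your $2^{2k}$ count of possible labelings is correct only after conditioning on the (fixed) signatures of the elements of $X$ — worth saying explicitly — and your case split is actually unnecessary, since your Case-2 decoder already classifies the $L(Y)$-elements correctly with error $e^{-k\varepsilon/2}$ via its first coordinate, so the single estimator handles everything.

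The one genuine gap is the point you flag yourself: the inequality $2^s-2^{2k}\le s\,2^s\,e^{-\Omega(k\varepsilon)}$ only yields $s\le 2k+1$ when $s\cdot e^{-\Omega(k\varepsilon)}<1/2$, and with $k=\Theta(\varepsilon^{-1}\log\varepsilon^{-1})$ fixed in advance nothing rules out the alternative conclusion $s\ge \tfrac12 e^{\Omega(k\varepsilon)}$, so as written the argument does not bound $s$. This is easy to close, in either of two standard ways. (i) Use that shattering is hereditary: if $\mD$ shattered a set of size greater than $2k+1$, it would shatter one of size exactly $s_0=2k+2$; running your argument on that subset only requires $(2k+2)\,e^{-\Omega(k\varepsilon)}<1/2$, which holds once the constant in $k=C\varepsilon^{-1}\log(1/\varepsilon)$ is large enough, and yields the contradiction $s_0\le 2k+1$. (ii) Alternatively, let the sample size depend on $s$, say $k=\Theta(\varepsilon^{-1}\log s)$, so that the per-pair error is at most $1/(2s)$; the conclusion becomes $s\le 2k+1=O(\varepsilon^{-1}\log s)$, and \cref{lem:tech} converts this into $s=O(\varepsilon^{-1}\log(1/\varepsilon))$ — this is precisely how the paper's proof handles the same issue (its label length is $2\ln(9|X|^2)/\varepsilon$, which depends on $|X|$). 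With either patch your proof is complete and gives the claimed bound; your worry about ``mixed'' elements of $L(Y)\setminus S'(Y)$ is unfounded, since for those the first coordinate of the decoder already fails with probability $1-e^{-k\varepsilon/2}$.
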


\begin{proof}
Write $\mathcal{D} = (V, \mE)$, with $\mE = \{(B(vw), R(vw), W(vw)) : v, w \in V\}$ and consider a set system $\mF$ on ground set $V$ with $N \coloneqq |\mF|$ sets such that $E = \{uv: \mF_{uv} = \emptyset\}$ and $R = \{uv: 0 < |\mF_{uv}| \leq \varepsilon N\}$. 
By definition of $\mathcal{D}$, for every $v, w \in V$, if $x \in B(vw)$ then $\mF_{xv} = \emptyset$ and $|\mF_{xw}| \geq \varepsilon N/2$, and if $x \in W(vw)$ then either $|\mF_{xv}| \geq \varepsilon N/2$ or $\mF_{xw} = \emptyset$.

Let $X \subseteq V$ be a set which is shattered by $\mathcal{D}$. For every subset $Y \subseteq X$, there exists a pair $p_{Y} = (v_{Y}, w_{Y}) \in V^2$ such that $B(v_{Y}w_{Y}) \cap X = (B(v_{Y}w_{Y}) \cup R(v_{Y}w_{Y})) \cap X = Y$.
Let $U$ be a random multi-subset of $\{p_{Y} : Y \subseteq X\}$ of size $|X|$ obtained by picking with repetition uniformly random elements of $\{p_{Y} : Y \subseteq X\}$.
Consider the bipartite graph $\Gamma$ on vertex set $X \cup U$ where $xu$ is an edge if $x \in B(u)$ (or equivalently if $x \notin W(u)$).
Then, $\Gamma$ is a random bipartite graph where each edge between $X$ and $U$ is present independently of all others with probability $1/2$.

We say that a sub-set system $\mS$ of $\mF$ \emph{represents} $\Gamma$ if for every $u = vw \in U$ and $x \in X$, $ux$ is an edge of $\Gamma$ if and only if $\mS_{vx} = \emptyset$ and $\mS_{wx} \neq \emptyset$.
Let $E_0$ be the event that there exists a sub-set system $\mS$ of $\mF$ of size $\ell \coloneqq 2\ln(9|X|^2)/\varepsilon$ which represents $\Gamma$.

We start by showing that there always exists such a sub-set system $\mS$, independently of $\Gamma$.
Let $V' \subseteq V$ be the set of all vertices which either belong to $X$ or belong to a pair $v_Yw_Y \in U$. Note that $|V'| \leq |X| + 2 \cdot |U| = 3 \cdot |X|$.
Observe that if $\mS \subseteq \mF$ satisfies $\mS_{vv'} \neq \emptyset$ whenever $v, v' \in V'$ satisfy $|\mF_{vv'}| \geq \varepsilon N/2$ then $\mS$ represents $\Gamma$.
Consider the set system $\mH = \{\mF_{vv'} : v, v' \in V' \text{ and } |\mF_{vv'}| \geq \varepsilon N/2\}$ on ground set $\mF$.
Then, $\mH$ consists of at most $9|X|^2$ sets, each of size at least $\varepsilon N/2$, from some universe of size $N$. 
Therefore, there exists a transversal $\mS \subseteq \mF$ of $\mH$ of size at most $2\ln(9|X|^2)/\varepsilon$ (which can be found by a greedy algorithm).
Such an $\mS$ represents $\Gamma$.

We now give an upper bound on $\mathbb{P}[E_0]$. 
For any sub-set system $\mS$ of $\mF$ of size $\ell$, the probability that $\mS$ represents $\Gamma$ is at most $2^{-|X|^2}$ since $\Gamma$ is a random bipartite graph with $|X|^2$ edges. 
There are at most $2^{\ell|X|}$ different set systems of size $\ell$ on $X$, therefore by a union bound, the probability that one of them represents $\Gamma$ is at most $2^{\ell|X| - |X|^2}$.

Then, $1 = \mathbb{P}[E_0] \leq 2^{\ell|X| - |X|^2}$ so $|X| \leq \ell$.
Thus, $|X| \leq 2\ln(9|X|^2)/\varepsilon = \frac{4}{\varepsilon}\ln(3|X|)$.
By \cref{lem:tech}, we get $|X| \leq \frac{8}{\varepsilon} \cdot \ln\left(\frac{12}{\varepsilon}\right)$.
\end{proof}

\section{Dense Neighborhood Lemmas}\label{sec:DNL}

Putting together the results of the previous two sections, we can finally state and prove several variants of the Dense Neighborhood Lemma. In \cref{subsec:dom-trigraphs}, we prove that dense disjointness-trigraphs and dense metric-trigraphs have small dominating sets. In \cref{subsec:nets-higher-order}, we prove that the corresponding intersection tri-hypergraphs have small $\delta$-nets. In \cref{subsec:clustering}, we show that disjointness-trigraphs and metric-trigraphs admit clusterings with a bounded number of parts. Finally, in \cref{subsec:ultrastrong}, we analyze more deeply the interactions between the clusters to obtain a Regularity-Lemma-type partition into an exponential number of parts with 0/1 densities between the parts.

\subsection{Dominating dense trigraphs}\label{subsec:dom-trigraphs}

We start by deriving \cref{lem:DNL-set-anti,lem:DNL-Ham-antipodal} from the results of \cref{sec:vcdimtri} and from \cref{thm:triHW}. 

We first recall the relevant definitions for \cref{lem:DNL-set-anti}.
If $\mF$ is a set system on ground set $V$, $\mF_{x}$ is the set of hyperedges containing $x$, and $\mF_{xy}=\mF_x\cap \mF_y$. We set $D_\varepsilon(x)=\{y \in V : |\mF_{xy}|\leq \varepsilon  |\mF|\}$ and $D(x)\coloneqq D_0(x)$. 
A set $X$ is an $\varepsilon$-covering if $D_\varepsilon(X)=V$.
The disjointness-ratio of $\mF$ is the minimum of $|D(x)|/|V|$ over all $x\in V$. 

\DNLSanti*

\begin{proof}
Let $\mF$ be a set system on ground set $V$ with disjointness-ratio $\delta$.
Consider the trigraph $T = (V, E, R)$ where $E = \{uv: \mF_{uv} = \emptyset\}$ and $R = \{uv: 0 < |\mF_{uv}| \leq \varepsilon |\mF|\}$.
Then, $T$ is a disjointness-trigraph with sensitivity $\varepsilon$, so by \cref{lem:VCdim-disj-trigraph} $T$ has VC-dimension $O\left(\frac{1}{\varepsilon}\right)$. Since $N(v) = D(v)$ for every $v \in V$, we have $|N(v)| \geq \delta |V|$ by assumption. Hence, by \cref{thm:triHW}, $T$ has a dominating set  $X \subseteq V$ of size $O\left(\frac{1}{\varepsilon\delta}\log(\frac{1}{\delta})\right)$.
We argue that $X$ is an $\varepsilon$-covering of $\mF$. For every $v\in V$, there is some vertex $x \in X$ such that $x \in N(v) \cup R(v) = D_{\varepsilon}(v)$, hence $v\in  D_{\varepsilon}(X)$.
\end{proof}

We now move to Hamming-trigraphs. The next result is stated in antipodal setting, but it also holds in the local setting, with a similar proof.

\DNLHamanti*

\begin{proof}
Consider the trigraph $T = (V, E, R)$ with $E = \{uv : d_H(u, v) \geq c \cdot N\}$ and $R = \{uv : (c - \varepsilon) \cdot N \leq d(u, v) < c \cdot N\}$.
Note that the complement $\overline{T}$ of $T$ is a Hamming-trigraph with sensitivity $\varepsilon$. By \cref{thm:VCdim-Hamming} the VC-dimension of $\overline{T}$ is $O(\varepsilon^{-2})$.
Since the VC-dimension is invariant under complementation, $T$ also has VC-dimension $O(\varepsilon^{-2})$.
By \cref{thm:triHW}, $T$ has a dominating set $X$ of size $|X|= \text{poly}(\varepsilon^{-1}, \delta^{-1})$.

Then, for any $v \in V$ there exists $x \in X$ such that $x \in (B(v) \cup R(v))$, which means that $d_H(x, v) \geq (c - \varepsilon) \cdot N$.
\end{proof}

For future reference, we also state the analogous result for spherical-trigraphs in the antipodal setting. Once again, it also holds in the local setting. The proofs of these results are very similar to the proof of \cref{lem:DNL-Ham-antipodal}.

\begin{restatable}{lemma}{DNLSantipodal}\label{lem:DNL-spherical-antipodal}
Let $V \subseteq \mathbb{S}^N$ be such that for every $v\in V$ there are at least $\delta |V|$ points $u$ such that $\langle u,v\rangle \leq c$. Then for every $\varepsilon>0$, there exists a set $X\subseteq V$ of size $\textup{poly}(1/\varepsilon)$ such that for every $v\in V$ there exists $x\in X$ with $\langle x,v\rangle \leq c+\varepsilon$.
\end{restatable}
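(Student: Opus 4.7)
The plan is to mirror the proof of \cref{lem:DNL-Ham-antipodal} almost verbatim: turn the problem into a domination question on a trigraph, show that trigraph has bounded VC-dimension via the spherical case, and then apply \cref{thm:triHW}.

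Concretely, I would first define the trigraph $T=(V,E,R)$ on $V$ by
\[
E=\{uv:\langle u,v\rangle\leq c\},\qquad R=\{uv:c<\langle u,v\rangle\leq c+\varepsilon\}.
\]
The hypothesis is precisely that every vertex has plain degree at least $\delta|V|$ in $T$, and the conclusion we want says exactly that $T$ admits a small dominating set: if $v\in N[x]\cup R(x)$ then $\langle x,v\rangle\leq c+\varepsilon$, which is the required inequality.

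Second, I would bound the VC-dimension of $T$ by bounding that of its complement $\overline{T}$ (complementation preserves VC-dimension). In $\overline{T}$, edges correspond to ``close'' pairs $\langle u,v\rangle>c+\varepsilon$, red edges to the borderline $c<\langle u,v\rangle\leq c+\varepsilon$, and non-edges to ``far'' pairs $\langle u,v\rangle\leq c$. Passing from scalar product to the spherical distance via $d_S(u,v)=\arccos\langle u,v\rangle$, this is exactly a spherical-trigraph with threshold $\tau=\arccos(c+\varepsilon)$ and sensitivity $\arccos(c)-\arccos(c+\varepsilon)$. A one-line calculus observation using $|\arccos'(x)|=1/\sqrt{1-x^{2}}\geq 1$ on $(-1,1)$ shows this sensitivity is at least $\varepsilon$. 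Hence \cref{thm:VCdim-spherical} gives VC-dimension $O(\varepsilon^{-2})$ for $\overline{T}$, and therefore for $T$.

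Finally, I would apply \cref{thm:triHW} to $T$, which has minimum (plain) degree $\delta|V|$ and VC-dimension $O(\varepsilon^{-2})$, obtaining a dominating set $X$ of size $O\bigl(\tfrac{1}{\varepsilon^{2}\delta}\log\tfrac{1}{\delta}\bigr)$, polynomial in $\varepsilon^{-1}$ and $\delta^{-1}$. Each $v\in V$ is then dominated by some $x\in X$ in plain or red, and this translates to $\langle x,v\rangle\leq c+\varepsilon$. The only non-routine ingredient is the sensitivity check for $\overline{T}$, which is the sole spot where the spherical geometry enters; everything else is a direct reduction to the already-proved black boxes of \cref{sec:vcdimtri} and \cref{sec:trans}.
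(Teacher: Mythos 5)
Your proposal is correct and follows essentially the same route the paper intends for this lemma (which it only sketches as "very similar to the proof of \cref{lem:DNL-Ham-antipodal}"): build the trigraph on $V$, observe its complement is a spherical-trigraph so \cref{thm:VCdim-spherical} gives VC-dimension $O(\varepsilon^{-2})$, and apply \cref{thm:triHW} to the $\delta$-dense plain degrees; your $\arccos$ Lipschitz check correctly converts the scalar-product gap into angular sensitivity. The resulting bound $O\bigl(\tfrac{1}{\varepsilon^{2}\delta}\log\tfrac{1}{\delta}\bigr)$ matches the intended $\textup{poly}(\varepsilon^{-1},\delta^{-1})$ conclusion.
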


\subsection{\texorpdfstring{$\delta$}{delta}-nets for intersection tri-hypergraphs}\label{subsec:nets-higher-order}

The take-away message from \cref{subsec:dom-trigraphs} is ``Disjointness-trigraphs and metric-trigraphs have small $\delta$-nets''. With that in mind, the results of this section should be understood as ``Intersection tri-hypergraphs of disjointess-trigraphs and metric-trigraphs have small $\delta$-nets''. The proofs are again straightforward, simply combining results from \cref{sec:vcdimtri} with \cref{thm:HW-trihyper}.

\begin{lemma}\label{lem:DNLS-intersection}
For every set system $\mF$ on ground set $V$, there exists $X \subseteq V$ of size $O\left(\frac{1}{\varepsilon\delta}\log\left(\frac{1}{\delta}\right)\right)$ such that whenever $u, v \in V$ satisfy $|D(u) \cap D(v)| \geq \delta|V|$, there exists $x \in X$ such that $x \in D_{\varepsilon}(u) \cap D_{\varepsilon}(v)$.
\end{lemma}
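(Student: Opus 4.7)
The plan is to combine \cref{lem:VCdim-set-intersection}, which bounds the VC-dimension of the self-intersection of a disjointness-trigraph, with \cref{thm:HW-trihyper}, which provides small $\delta$-nets in tri-hypergraphs of bounded VC-dimension. No new random sampling argument is needed.

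First, I would form the disjointness-trigraph $T=(V,E,R)$ associated to $\mF$ with sensitivity $\varepsilon$, where $E=\{xy: \mF_{xy}=\emptyset\}$ and $R=\{xy: 0<|\mF_{xy}|\leq \varepsilon |\mF|\}$. By construction, the closed neighborhood of a vertex $u$ in $T$ is (essentially) $D(u)$, and $N[u]\cup R(u)$ is (essentially) $D_\varepsilon(u)$. Next, consider the corresponding tri-hypergraph $H_T$ and its self-intersection $H_T\cap H_T$ as defined in \cref{sec:preliminaries}: its tri-edges are indexed by pairs $(u,v)\in V^2$ and the tri-edge of the pair $(u,v)$ has black part $N[u]\cap N[v]=D(u)\cap D(v)$ and black-plus-red part $(N[u]\cup R(u))\cap (N[v]\cup R(v))=D_\varepsilon(u)\cap D_\varepsilon(v)$. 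By \cref{lem:VCdim-set-intersection}, this tri-hypergraph has VC-dimension $d=O(1/\varepsilon)$.

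Finally, apply \cref{thm:HW-trihyper} with parameter $\delta$ to $H_T\cap H_T$ to obtain a $\delta$-net $X$ of size
\[
O\!\left(\frac{d}{\delta}\log\frac{1}{\delta}\right)=O\!\left(\frac{1}{\varepsilon\delta}\log\frac{1}{\delta}\right).
\]
By definition of a $\delta$-net, whenever a pair $(u,v)$ satisfies $|D(u)\cap D(v)|\geq \delta |V|$, the black part of its tri-edge has size at least $\delta|V|$, so $X$ meets the black-plus-red part, i.e.\ $X\cap (D_\varepsilon(u)\cap D_\varepsilon(v))\neq \emptyset$, which is exactly what is required.

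There is essentially no obstacle here; the only thing to be careful about is the convention on whether the vertex $u$ itself lies in $N[u]$ versus in $D(u)$, which induces at most a $\pm 1$ adjustment that is absorbed into the $O(\cdot)$ notation. The substance of the argument is entirely packaged in \cref{lem:VCdim-set-intersection} and \cref{thm:HW-trihyper}; this lemma is just the appropriate instantiation of the general ``VC-dimension $+$ Haussler-Welzl'' paradigm at the level of pairs, i.e.\ with respect to the $2$-fold intersection tri-hypergraph rather than the $1$-fold one that gave \cref{lem:DNL-set-anti}.
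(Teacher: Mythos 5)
Your proof is correct and is essentially identical to the paper's: form the disjointness-trigraph with sensitivity $\varepsilon$, bound the VC-dimension of its self-intersection tri-hypergraph via \cref{lem:VCdim-set-intersection}, and apply \cref{thm:HW-trihyper} to extract a $\delta$-net of size $O\left(\frac{1}{\varepsilon\delta}\log\frac{1}{\delta}\right)$, then read off the conclusion from the definition of the tri-edges. The minor closed-neighborhood convention you flag is harmless and handled the same way in the paper.
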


\begin{proof}
Consider the trigraph $T = (V, E, R)$ with $E = \{uv: \mF_{uv} = \emptyset\}$ and $R = \{uv: 0 < |\mF_{uv}| \leq \varepsilon |\mF|\}$.
By \cref{lem:VCdim-set-intersection}, the tri-hypergraph $T \cap T$ has VC-dimension $O\left(\frac{1}{\varepsilon}\right)$.

By \cref{thm:HW-trihyper}, there exists a $\delta$-net $X \subseteq V$ of $T~\cap~T$ of size $O\left(\frac{1}{\varepsilon\delta}\log(\frac{1}{\delta})\right)$.
Consider now $u, v \in V$ such that $|D(u) \cap D(v)| \geq \delta|\mF|$.
Then, in $T \cap T$ the tri-edge $(B(uv), R(uv), W(uv))$ satisfies $|B(uv)| = |D(u) \cap D(v)| \geq \delta|\mF|$ so there exists $x \in X$ such that $x \in B(uv) \cup R(uv)$.
This means that $|\mF_{ux}| \leq \varepsilon |\mF|$ and $|\mF_{vx}| \leq \varepsilon |\mF|$ so $x \in D_{\varepsilon}(u) \cap D_{\varepsilon}(v)$.
\end{proof}

This result can be generalized to the intersection of disjointness-tri-hypergraphs. However, the statement might seem somewhat obscure. To get some intuition on how it can be used, see the discussion after the statement of \cref{thm:homomorphism-kt-free}.

If $\mF$ is a set system on ground set $V$ and $\mT \subseteq V^t$ is a set of $t$-tuples of elements of $V$, for every $v \in V$, we set $D^{\mT}(v) = \{(u^1, \ldots, u^t) \in \mT : \forall i \in [t], \mF_{vu^i} = \emptyset\}$ and $D^{\mT}_{\varepsilon}(v) = \{(u^1, \ldots, u^t) \in \mT : \forall i \in [t], |\mF_{vu^i}| \leq \varepsilon |\mF|\}$.

\begin{lemma}\label{lem:DNLS-intersection-higher-order}
Let $\mF$ be a set system on ground set $V$ and let $\mT \subseteq V^t$ be a set of $t$-tuples of elements of $V$.
There exists $\mC \subseteq \mT$ of size $O\left(\frac{1}{\varepsilon\delta}\log\left(\frac{1}{\delta}\right)\right)$ such that for every $u, v \in V$ such that $|D^{\mT}(u) \cap D^{\mT}(v)| \geq \delta|\mT|$, there exists $C \in \mC$ such that $C \in D^{\mT}_{\varepsilon}(u) \cap D^{\mT}_{\varepsilon}(v)$.
\end{lemma}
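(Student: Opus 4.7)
The plan is to mirror the proof of \cref{lem:DNLS-intersection}, but with the $\varepsilon$-disjointness-tri-hypergraph on $\mT$ (introduced just before \cref{lem:VCdim-set-intersection-tuples}) playing the role that the disjointness-trigraph played there. The structure of the argument is: build an intersection tri-hypergraph whose tri-edges record exactly the pairs $(u,v)$ we need to cover, observe via \cref{lem:VCdim-set-intersection-tuples} that it has VC-dimension $O(1/\varepsilon)$, and then apply the $\delta$-net theorem \cref{thm:HW-trihyper}.

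In more detail, I would first form the $\varepsilon$-disjointness-tri-hypergraph $\mI = (\mT, \mE)$ of $V$ on $\mT$, where for each $v \in V$ the associated tri-edge $(B(v), R(v), W(v))$ satisfies $B(v) = D^{\mT}(v)$ and $B(v) \cup R(v) = D^{\mT}_{\varepsilon}(v)$. I would then consider the intersection tri-hypergraph $\mI \cap \mI$: its tri-edges are indexed by pairs $(u,v) \in V \times V$, and by the definition of the intersection operation on tri-edges we have
\[
B_{uv} = B(u) \cap B(v) = D^{\mT}(u) \cap D^{\mT}(v),
\]
\[
B_{uv} \cup R_{uv} = (B(u) \cup R(u)) \cap (B(v) \cup R(v)) = D^{\mT}_{\varepsilon}(u) \cap D^{\mT}_{\varepsilon}(v).
\]
This is exactly the data the lemma talks about.

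Now \cref{lem:VCdim-set-intersection-tuples} tells us that $\mI \cap \mI$ has VC-dimension at most $1/\varepsilon$. Applying \cref{thm:HW-trihyper} to $\mI \cap \mI$ with parameter $\delta$ yields a $\delta$-net $\mC \subseteq \mT$ of size $O\!\left(\frac{1}{\varepsilon\delta}\log\frac{1}{\delta}\right)$, that is, $\mC$ meets $B_{uv} \cup R_{uv}$ for every tri-edge with $|B_{uv}| \geq \delta|\mT|$. Translating back, for every pair $u,v \in V$ with $|D^{\mT}(u) \cap D^{\mT}(v)| \geq \delta|\mT|$ there exists $C \in \mC$ with $C \in D^{\mT}_{\varepsilon}(u) \cap D^{\mT}_{\varepsilon}(v)$, which is exactly the conclusion.

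There is essentially no obstacle here beyond bookkeeping: the only genuine content is \cref{lem:VCdim-set-intersection-tuples}, which was already proved, and \cref{thm:HW-trihyper}. The only thing to be a bit careful about is checking that the ``black'' and ``black$\cup$red'' parts of the intersection tri-edge really match $D^{\mT}(u) \cap D^{\mT}(v)$ and $D^{\mT}_{\varepsilon}(u) \cap D^{\mT}_{\varepsilon}(v)$ respectively, which follows directly from the definitions of $B(v)$ and $R(v)$ in the $\varepsilon$-disjointness-tri-hypergraph.
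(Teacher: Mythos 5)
Your proposal is correct and matches the paper's argument: the paper proves this lemma by repeating the proof of \cref{lem:DNLS-intersection} verbatim, replacing \cref{lem:VCdim-set-intersection} with \cref{lem:VCdim-set-intersection-tuples} and then applying \cref{thm:HW-trihyper}, which is exactly what you do. The bookkeeping you check (that the black and black-union-red parts of the intersection tri-edges are $D^{\mT}(u)\cap D^{\mT}(v)$ and $D^{\mT}_{\varepsilon}(u)\cap D^{\mT}_{\varepsilon}(v)$) is indeed the only thing to verify, and your verification is right.
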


The proof is exactly the same as the previous one, except that we use \cref{lem:VCdim-set-intersection-tuples} instead of \cref{lem:VCdim-set-intersection}.

Similar results also hold for metric-trigraphs. We state the analogue of \cref{lem:DNLS-intersection} in the antipodal Hamming setting as an illustration. Analogous results also hold in the local setting, as well as for spherical-trigraphs. The proofs again follow the proof of \cref{lem:DNLS-intersection}. An analogue of \cref{lem:DNLS-intersection-higher-order} can be proved using an adaptation of \cref{lem:VCdim-set-intersection-tuples} for metric-trigraphs.

\begin{lemma}
For every set $V \subseteq \{0, 1\}^N$, there exists $X \subseteq V$ of size $\textup{poly}(\varepsilon^{-1}, \delta^{-1})$ such that for every $u, v \in V$ such that there are at least $\delta |V|$ vectors $w \in V$ at distance at least $c \cdot N$ of both $u$ and $v$, there exists $x \in X$ at distance at least $(c - \varepsilon) \cdot N$ of both $u$ and $v$.
\end{lemma}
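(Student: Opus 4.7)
The plan is to mirror the proof of \cref{lem:DNLS-intersection}, but working with the antipodal Hamming trigraph rather than the disjointness trigraph. First I would introduce the trigraph $T = (V, E, R)$ on $V$ with $E = \{uv : d_H(u, v) \geq c \cdot N\}$ and $R = \{uv : (c-\varepsilon) \cdot N \leq d_H(u, v) < c \cdot N\}$. Its complement $\overline{T}$ is a Hamming-trigraph with sensitivity $\varepsilon$, so by \cref{thm:VCdim-Hamming} it has VC-dimension $O(\varepsilon^{-2})$, and since VC-dimension is invariant under complementation, so does $T$.

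Next, I would apply \cref{lem:VCdim-intersection-tri-hyper} to conclude that $T \cap T$ still has VC-dimension $O(\varepsilon^{-2})$. Recall that the tri-edge of $T \cap T$ indexed by a pair $(u, v) \in V \times V$ is $(B(uv), R(uv), W(uv))$, where $B(uv)$ is the set of $w \in V$ such that $wu, wv \in E$, i.e.\ $d_H(w, u) \geq c \cdot N$ and $d_H(w, v) \geq c \cdot N$, and $B(uv) \cup R(uv)$ is the set of $w$ such that both $d_H(w, u) \geq (c-\varepsilon) \cdot N$ and $d_H(w, v) \geq (c-\varepsilon) \cdot N$.

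I would then invoke \cref{thm:HW-trihyper} on $T \cap T$ to obtain a $\delta$-net $X \subseteq V$ of size $O\!\left(\varepsilon^{-2} \delta^{-1} \log(\delta^{-1})\right) = \textup{poly}(\varepsilon^{-1}, \delta^{-1})$. For any pair $u, v \in V$ satisfying the hypothesis, we have $|B(uv)| \geq \delta |V|$, so by the defining property of a $\delta$-net there exists $x \in X$ with $x \in B(uv) \cup R(uv)$; that is, $d_H(x, u) \geq (c-\varepsilon) \cdot N$ and $d_H(x, v) \geq (c-\varepsilon) \cdot N$, as required.

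There is no real obstacle here: all of the heavy lifting has been done by \cref{thm:VCdim-Hamming} (VC-dimension of Hamming-trigraphs), \cref{lem:VCdim-intersection-tri-hyper} (stability of bounded VC-dimension under intersection) and \cref{thm:HW-trihyper} (existence of small $\delta$-nets). The only mild care needed is in the first step: passing through the complement to apply \cref{thm:VCdim-Hamming}, since the natural trigraph to consider puts far pairs in $E$ rather than close pairs. The same template, applied to the spherical VC-dimension bound \cref{thm:VCdim-spherical}, would yield the analogous statement on $\mathbb{S}^N$, and replacing the intersection lemma by an iterated version gives the $t$-tuple analogue of \cref{lem:DNLS-intersection-higher-order}.
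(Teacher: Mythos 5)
Your proposal is correct and follows essentially the same route the paper intends: the paper explicitly says the proof follows that of \cref{lem:DNLS-intersection}, i.e.\ form the antipodal trigraph, bound its VC-dimension via the complement and \cref{thm:VCdim-Hamming} (exactly as in \cref{lem:DNL-Ham-antipodal}), pass to $T \cap T$ via the intersection lemma, and extract a $\delta$-net with \cref{thm:HW-trihyper}. The only (harmless) deviation is that you invoke the generic \cref{lem:VCdim-intersection-tri-hyper} giving $O(\varepsilon^{-2})$ rather than a sharpened disjointness-specific bound, which is unavoidable here and still yields the stated $\textup{poly}(\varepsilon^{-1},\delta^{-1})$ size.
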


\subsection{Clustering variants}\label{subsec:clustering}

Several results on graphs of bounded VC-dimension rely on the fact that their vertex set can be partitioned into a bounded number of parts (often called \emph{clusters}), such that vertices in the same cluster almost have the same neighborhood, see e.g. \cite{FPS19,NSS24b}. This is usually done using the so-called Haussler Packing Lemma \cite{Haussler95}. For trigraphs, the standard proofs of the Packing Lemma break because the red edges cannot be used to classify the vertices. Without an analogue of the Packing Lemma for trigraphs, we could not get the clustering results in the most general form. 
Fortunately, we were able to recover these results for disjointness-trigraphs and metric-trigraphs, which is sufficient for our applications. However, it would be very interesting to determine whether similar results hold in full generality for trigraphs of bounded VC-dimension.

Note that in the statements of \cref{lem:DNLS-cluster-intro,lem:DNLH-cluster}, the bounds are of the form $2^{\text{poly}(1/\varepsilon, 1/\eta)}$, while the corresponding bounds for graphs of VC-dimension $d$ are of the form $(1/\eta)^{O(d)}$. In most of our applications we will have $\eta = \text{poly}(\varepsilon)$ and VC-dimension polynomial in $1/\varepsilon$, hence the order of our bound is actually not a bottleneck. Furthermore, \cref{lem:lower-bound-clustering} shows that this dependency on $\varepsilon$ is required, and not just an artifact of the proof.

We start with the clustering statement on disjointness-trigraphs. If $\mF$ is a set system on ground set $V$, recall that for $v \in V$, we have $D(v) = \{u \in V: \mF_{uv} = \emptyset\}$, and $D_{\varepsilon}(v) = \{u \in V: |\mF_{uv}| \leq \varepsilon |\mF|\}$. 
An \emph{$(\varepsilon, \eta)$-cluster} is a subset $X$ of $V$ such that $|D(u)\setminus D_\varepsilon (v)|\leq \eta |V|$ for all $u, v\in X$, and an \emph{$(\varepsilon, \eta)$-clustering} is a partition into $(\varepsilon, \eta)$-clusters. 
Its size is the number of clusters. 
Observe that the following result directly implies \cref{lem:DNLS-cluster-intro}.

\begin{lemma}\label{lem:DNLS-cluster}
Every set system has a $(\varepsilon, \eta)$-clustering of size $2^{\textup{poly}(\varepsilon^{-1}, \eta^{-1})}$.
\end{lemma}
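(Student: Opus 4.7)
The plan is to exploit the refined tri-hypergraph of differences $\mD = T_1 \setminus T_2$ built on the disjointness-trigraph $T$ associated with $\mF$, whose bounded VC-dimension is provided by \cref{lem:VCdim-set-diff}. Recall that the refinement introduces the midpoint threshold $\varepsilon|\mF|/2$: for the tri-edge of $\mD$ indexed by the ordered pair $(u,v)$, one checks that $B(uv) = D(u) \setminus D_\varepsilon(v)$ and that every $x \in B(uv) \cup R(uv)$ satisfies the two-sided inequality $|\mF_{ux}| \leq \varepsilon|\mF|/2$ and $|\mF_{vx}| > \varepsilon|\mF|/2$. This two-sided gap is the crucial feature that makes classification possible, and it is precisely why $\mD$ is built with the midpoint threshold.

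First I would apply \cref{thm:HW-trihyper} to $\mD$: since its VC-dimension is $d = O(\varepsilon^{-1} \log \varepsilon^{-1})$, it admits an $\eta$-net $X \subseteq V$ of size $O\bigl(\frac{d}{\eta} \log \frac{1}{\eta}\bigr) = \textup{poly}(\varepsilon^{-1}, \eta^{-1})$. Next, I would define, for every $v \in V$, the binary signature
\[
\sigma(v) \;=\; \bigl(\,\mathbb{1}[\,|\mF_{vx}| > \varepsilon|\mF|/2\,]\,\bigr)_{x \in X} \;\in\; \{0,1\}^X,
\]
and take the partition of $V$ whose blocks are the fibers $\sigma^{-1}(s)$. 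There are at most $2^{|X|} = 2^{\textup{poly}(\varepsilon^{-1}, \eta^{-1})}$ blocks, matching the target bound on the size of the clustering.

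It remains to check that each block is an $(\varepsilon,\eta)$-cluster. Given $u,v$ in the same block, suppose for contradiction that $|D(u) \setminus D_\varepsilon(v)| > \eta|V|$. Then the tri-edge of $\mD$ indexed by $(u,v)$ has black part $B(uv)$ of size exceeding $\eta|V|$, so the $\eta$-net property supplies some $x \in X$ with $x \in B(uv) \cup R(uv)$. By the structural description of the tri-edges of $\mD$ recalled in the first paragraph, this forces $|\mF_{ux}| \leq \varepsilon|\mF|/2$ while $|\mF_{vx}| > \varepsilon|\mF|/2$, contradicting $\sigma(u) = \sigma(v)$. Hence $|D(u) \setminus D_\varepsilon(v)| \leq \eta|V|$ for all $u,v$ in the same block, as required.

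The only nontrivial step is verifying the exact content of $B(uv) \cup R(uv)$ for tri-edges of $\mD$; once this is unwound from the definitions of $T_1, T_2$ and of the difference operation on tri-hypergraphs, everything else is immediate, and I expect this bookkeeping to be the main (mild) obstacle. No analogue of the Haussler Packing Lemma is invoked, which is fortunate since, as noted just before the statement, such a tool is not available here; the signature-based classification replaces it, and the midpoint refinement is exactly what makes the argument go through.
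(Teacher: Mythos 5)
Your proposal is correct and follows essentially the same route as the paper: the refined tri-hypergraph of differences, its VC-dimension bound from \cref{lem:VCdim-set-diff}, an $\eta$-net via \cref{thm:HW-trihyper}, and classification of vertices by their signatures at the midpoint threshold $\varepsilon|\mF|/2$. The identification of $B(uv)\cup R(uv)$ and the contradiction argument are exactly the ones used in the paper's proof.
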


\begin{proof}
Let $\mF$ be a set system on ground set $V$.
Consider the trigraph $T = (V, E, R)$ where $E = \{uv: \mF_{uv} = \emptyset\}$ and $R = \{uv: 0 < |\mF_{uv}| \leq \varepsilon |\mF|\}$.
Let $\mD$ be its refined tri-hypergraph of differences, whose tri-edges are the triples $\{(B(uv), R(uv), W(uv)) : u, v \in V\}$.
By \cref{lem:VCdim-set-diff}, $\mathcal{D}$ has VC-dimension $\textup{poly}(\varepsilon^{-1})$. 

By \cref{thm:HW-trihyper}, there exists an $\eta$-net $X$ of $\mD$ of size $\textup{poly}(\varepsilon^{-1}, \eta^{-1})$.
Note that by definition of $\mD$, if $x \in B(uv) \cup R(uv)$ then $|\mF_{ux}| \leq \varepsilon |\mF|/2$ and $|\mF_{vx}| > \varepsilon |\mF|/2$.
Let $\mathcal{C} = (C_1, \ldots, C_t)$ be the partition of $V$ such that $u, v \in V$ are in the same part if and only if for every $x \in X$, $|\mF_{ux}| \leq \varepsilon|\mF|/2$ if and only if $|\mF_{vx}| \leq \varepsilon|\mF|/2$.
Note that $t = 2^{\textup{poly}(\varepsilon^{-1}, \eta^{-1})}$.

Consider any $u, v \in V$ in the same cluster.
Then, there is no $x \in X$ such that $|\mF_{ux}| \leq \varepsilon|\mF|/2$ and $|\mF_{vx}| > \varepsilon|\mF|/2$. This means that no vertex of $X$ belongs to $B(uv) \cup R(uv)$. Therefore, $|B(uv)| \leq \eta \cdot |V|$ as $X$ is an $\eta$-net.
This proves that $|D(u) \setminus D_{\varepsilon}(v)| \leq \eta \cdot |V|$, so $\mC$ is a $(\varepsilon, \eta)$-clustering.
\end{proof}

The previous proof crucially relies on the fact that we considered the refined tri-hypergraph of differences of $T$, and not just the standard tri-hypergraph of differences of $T$. Indeed, in the standard hypergraph of differences of $T$, if $x \in B(uv) \cup R(uv)$ then it can be the case that $|\mF_{ux}| = |\mF_{vx}|$ or even that $|\mF_{ux}| > |\mF_{vx}|$. Then, defining the clusters according to the sizes of $\mF_{ux}$ for all $x \in X$ would not be sufficient to ensure that $|B(xy)| \leq \eta \cdot n$ whenever $u$ and $v$ are in the same cluster.

We next prove the corresponding statement for Hamming-trigraphs, which directly implies \cref{lem:DNLH-cluster-intro}. Again, using the refined tri-hypergraph of differences instead of the standard tri-hypergraph of differences is key in the proof. Note that a similar result also holds for a finite set of points in $\mathbb{S}^N$.

\begin{lemma}\label{lem:DNLH-cluster}
Let $V$ be an $n$-set of $\{0,1\}^N$ and $c, \varepsilon, \eta >0$. Then $V$ can be partitioned into $2^{\textup{poly}(\eta^{-1},\varepsilon^{-1})}$ clusters such that if $u, v$ are in the same cluster, there are at most $\eta \cdot n$ points $w$ of $V$ such that $d_H(u,w)\geq c\cdot N$ and $d_H(v,w)\leq (c-\varepsilon)\cdot N$.
\end{lemma}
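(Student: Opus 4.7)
The plan is to follow the same template as the proof of \cref{lem:DNLS-cluster}, but working with a Hamming-trigraph and its refined tri-hypergraph of differences instead of a disjointness-trigraph. The role of the ``midpoint'' $\varepsilon/2$ in the set-system proof will be played by the Hamming midpoint $(c-\varepsilon/2)N$.

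First, I would define the Hamming-trigraph $T=(V,E,R)$ with threshold $\tau = c-\varepsilon$ and sensitivity $\varepsilon$, i.e.\ $E=\{uv:d_H(u,v)\leq (c-\varepsilon)N\}$ and $R=\{uv:(c-\varepsilon)N<d_H(u,v)\leq cN\}$. Then, following \cref{subsec:VCdim-diff-metric}, split the red edges at the midpoint $(c-\varepsilon/2)N$ into $R_1$ (closer) and $R_2$ (farther), set $T_1=(V,E,R_1)$ and $T_2=(V,E\cup R_1,R_2)$, and let $\mD=T_1\setminus T_2$ be the refined tri-hypergraph of differences. By \cref{lem:VCdim-diff}, $\mD$ has VC-dimension $O(\varepsilon^{-2})$. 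A quick unfolding of the definitions shows that the tri-edge of $\mD$ indexed by a pair $(v,w)\in V^2$ has black part $B_{vw}=\{u:d_H(u,v)\leq (c-\varepsilon)N \text{ and } d_H(u,w)>cN\}$ and satisfies $B_{vw}\cup R_{vw}\subseteq\{u:d_H(u,v)\leq (c-\varepsilon/2)N \text{ and } d_H(u,w)>(c-\varepsilon/2)N\}$.

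Next, apply \cref{thm:HW-trihyper} to $\mD$ with parameter $\eta$: since $\mD$ has VC-dimension $\textup{poly}(\varepsilon^{-1})$, there is an $\eta$-net $X\subseteq V$ of $\mD$ of size $\textup{poly}(\varepsilon^{-1},\eta^{-1})$. Using $X$, define a partition $\mC$ of $V$ by declaring $u\sim v$ iff for every $x\in X$, $d_H(u,x)\leq (c-\varepsilon/2)N \iff d_H(v,x)\leq (c-\varepsilon/2)N$. Since each vertex is classified by its trace on $X$ at the midpoint threshold, $\mC$ has at most $2^{|X|}=2^{\textup{poly}(\varepsilon^{-1},\eta^{-1})}$ classes.

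Finally, I would verify the clustering property. Fix $u,v$ in the same cluster and let $S=\{w:d_H(u,w)\geq cN \text{ and } d_H(v,w)\leq (c-\varepsilon)N\}$; note $S\subseteq B_{vu}$ (up to the harmless $\geq$/$>$ boundary, which we may absorb into the threshold). Assume for contradiction that $|S|>\eta n$, so $|B_{vu}|>\eta n$. Since $X$ is an $\eta$-net of $\mD$, some $x\in X$ lies in $B_{vu}\cup R_{vu}$, which by the above inclusion means $d_H(x,v)\leq (c-\varepsilon/2)N$ and $d_H(x,u)>(c-\varepsilon/2)N$. But then $u$ and $v$ disagree on the midpoint test at $x$, contradicting $u\sim v$.

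The only genuinely subtle point, and the main obstacle I can foresee, is precisely the reason for invoking the \emph{refined} tri-hypergraph of differences rather than the naive one: the classifier we can afford is binary (``close to $x$ at the midpoint'' vs.\ ``far''), and only the midpoint refinement forces any $x$ witnessing $B_{vu}\cup R_{vu}$ to lie on strictly opposite sides of the midpoint with respect to $u$ and $v$, allowing the trace on $X$ to separate $u$ from $v$. The minor inequalities ($\geq$ vs.\ $>$ at the boundaries $cN$ and $(c-\varepsilon)N$) can be handled by taking $c$ and $\varepsilon$ slightly inside, or by noting that Hamming distances are integers, and do not affect the asymptotic bound.
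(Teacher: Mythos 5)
Your proposal is correct and follows essentially the same route as the paper's proof: form the Hamming-trigraph at thresholds $c-\varepsilon$ and $c$, take its refined tri-hypergraph of differences (VC-dimension $\textup{poly}(\varepsilon^{-1})$ by \cref{lem:VCdim-diff}), extract an $\eta$-net via \cref{thm:HW-trihyper}, classify vertices by their trace on the net at the midpoint threshold $(c-\varepsilon/2)N$, and derive the contradiction exactly as you do. The only cosmetic difference is that the paper works with the complementary orientation (edges being the far pairs), and your handling of the strict/non-strict boundary is the right (and harmless) fix.
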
 

\begin{proof}
Consider the trigraph $T = (V, E, R)$ where $E = \{uv : d_H(u, v) \geq c \cdot N\}$ and $R = \{uv : (c - \varepsilon) \cdot N \leq d_H(u, v) < c \cdot N\}$.
Let $\mD$ be its refined tri-hypergraph of differences.
Write $\mathcal{D} = (V, \mE)$ with $\mE = \{(B(uv), R(uv), W(uv)) : u, v \in V\}$, see \cref{fig:diff}.
By \cref{lem:VCdim-diff}, $\mathcal{D}$ has VC-dimension $\textup{poly}(\varepsilon^{-1})$. 

By \cref{thm:HW-trihyper}, there exists an $\eta$-net $X$ of $\mD$ of size $\textup{poly}(\varepsilon^{-1}, \eta^{-1})$.
Note that if $x \in B(vu) \cup R(vu)$ then $d_H(v, x) \leq (c - \varepsilon/2) \cdot N$ and $d_H(u, x) > (c- \varepsilon/2) \cdot N$.
Let $\mathcal{C} = (C_1, \ldots, C_t)$ be the partition of $V$ into clusters such that $u, v \in V$ are in the same cluster if and only if for every $x \in X$, $d_H(u, x) \leq (c - \varepsilon/2) \cdot N$ if and only if $d_H(v, x) \leq (c - \varepsilon/2) \cdot N$.
Observe that the number of clusters is $2^{\textup{poly}(\varepsilon^{-1}, \eta^{-1})}$.

Consider any $u, v \in V$ which belong to the same cluster.
Then, there is no $x \in X$ such that $d_H(v, x) \leq (c - \varepsilon/2) \cdot N$ and $d_H(u, x) > (c - \varepsilon/2) \cdot N$. This means that no vertex of $X$ belongs to $B(vu) \cup R(vu)$. Therefore, $|B(vu)| \leq \eta \cdot n$.
This proves that there are at most $\eta \cdot n$ vectors $w \in V$ such that $w$ is at distance at least $c \cdot N$ of $u$ and at most $(c - \varepsilon/2) \cdot N$ of $v$.
\end{proof}

We now show that the dependency of the number of clusters on $\varepsilon$ cannot be significantly improved.

\begin{lemma}\label{lem:lower-bound-clustering}
There exists $\eta > 0$ such that for every large enough $N$, if $\{0, 1\}^N$ is partitioned into clusters such that whenever $u, v$ are in the same cluster, there are at most $\eta \cdot 2^N$ points $w \in \{0, 1\}^N$ such that $d_H(u, w) \geq N/2$ and $d_H(v, w) \leq N/2-\sqrt{N}$, then the number of clusters is at least $2^{\Omega(N)}$.
\end{lemma}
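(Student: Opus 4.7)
The plan is to reduce the clustering lower bound to a diameter bound for subsets of the Hamming cube, then invoke Kleitman's classical diameter theorem. First I would establish the following geometric claim: there is an absolute constant $\eta > 0$ such that for every sufficiently large $N$ and every $u, v \in \{0, 1\}^N$ with $d_H(u, v) \geq N/2$, the set $\{w \in \{0,1\}^N : d_H(u, w) \geq N/2 \text{ and } d_H(v, w) \leq N/2 - \sqrt{N}\}$ has more than $\eta \cdot 2^N$ elements. Granting this, the clustering hypothesis immediately forces every cluster $C$ to have Hamming diameter strictly less than $N/2$, since any two cluster-mates at distance $\geq N/2$ would yield more than $\eta \cdot 2^N$ witnesses, contradicting the assumption.

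To bound the size of such a cluster, I would invoke Kleitman's diameter theorem: any family $F \subseteq \{0,1\}^N$ with all pairwise Hamming distances at most $D$ satisfies $|F| \leq 2 \sum_{k=0}^{\lceil D/2 \rceil} \binom{N}{k}$. Applied with $D \leq N/2 - 1$, this gives $|F| \leq 2^{(H(1/4) + o(1))N}$, where $H$ denotes the binary entropy function. Since $H(1/4) \approx 0.811 < 1$, each cluster has size at most $2^{(H(1/4)+o(1))N}$, so the number of clusters is at least $2^N / 2^{(H(1/4)+o(1))N} = 2^{\Omega(N)}$.

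The crux of the proof is the geometric claim. Let $S$ be the set of positions where $u$ and $v$ differ, of size $d = d_H(u,v)$. For $w \in \{0,1\}^N$, let $a$ and $b$ denote the Hamming distances from $u$ to $w$ on $S$ and on $[N] \setminus S$, respectively; then $d_H(u,w) = a+b$ and $d_H(v,w) = (d-a)+b$. For uniform random $w$, the variables $a \sim \mathrm{Bin}(d,1/2)$ and $b \sim \mathrm{Bin}(N-d,1/2)$ are independent. Setting $A = a - d/2$ and $B = b - (N-d)/2$, the witness event rewrites as $A+B \geq 0$ and $A-B \geq \sqrt{N}$. Both $A+B$ and $A-B$ have mean $0$ and variance $N/4$, and their covariance equals $(d-N/2)/2$, which is nonnegative precisely when $d \geq N/2$. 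By the bivariate central limit theorem, together with the standard fact that nonnegative correlation between two centered bivariate Gaussians only increases their joint upper-orthant probability, the joint probability is bounded below by its value at zero correlation, which tends to $\tfrac{1}{2}\Phi(-2) \approx 0.0114$. Hence any $\eta$ slightly below $0.011$ works for all sufficiently large $N$.

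The main technical point I anticipate is ensuring uniformity of the tail estimate across the entire range $d \in [N/2, N]$. The key observation is that nonnegativity of the covariance reduces the uniformity question to the single boundary case $d = N/2$, which factors into two independent one-dimensional Gaussian tail estimates and is therefore easy to control explicitly.
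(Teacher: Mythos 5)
Your proof is correct in outline but follows a genuinely different route from the paper's. The paper proves a much stronger structural fact: if $u,v$ lie in the same cluster then $d_H(u,v)\leq 6\sqrt{N}$. It does so by taking $u=0^N$, looking at the annulus of points $x$ with $\norm{x}_1\in[N/2-2\sqrt N,\,N/2-\sqrt N]$ (which has measure $\geq 2\eta\cdot 2^N$ by Stirling), and using a hypergeometric tail bound to show that when $d_H(u,v)>6\sqrt N$ more than half of this annulus lies outside $B(v,N/2)$; cluster sizes are then at most $\sum_{i\leq 6\sqrt N}\binom{N}{i}=2^{O(\sqrt N\log N)}$ by trivial ball counting, giving $2^{N(1-o(1))}$ clusters. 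You instead only exclude same-cluster pairs at distance $\geq N/2$, via the orthant estimate $\mathbb{P}[A+B\geq 0,\ A-B\geq\sqrt N]\gtrsim\tfrac12\Phi(-2)$ (your variance and covariance computations are right), and you compensate for the weaker diameter bound by invoking Kleitman's diameter theorem, getting cluster size at most $2^{(H(1/4)+o(1))N}$ and hence $2^{(1-H(1/4)-o(1))N}$ clusters. This buys a simpler geometric claim and a much larger admissible $\eta$ (about $10^{-2}$ versus the paper's $\approx\sqrt{2/\pi}\,e^{-8}/4$), at the cost of a weaker exponent in $\Omega(N)$ and of importing Kleitman's theorem, which the paper's argument avoids entirely. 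One caveat: your uniformity fix via "positive covariance reduces to the case $d=N/2$" is not quite a proof, since Slepian's inequality applies to the Gaussian limit while the bivariate CLT error blows up when $N-d=o(N)$ (the limit degenerates as the correlation tends to $1$). This is easily patched: either treat $d$ close to $N$ separately (there $B$ is negligible and the event essentially reads $A\geq\sqrt N+|B|$, giving probability $\geq\Phi(-2)-o(1)$), or bypass Slepian altogether by conditioning on the event $-\sqrt N\leq B\leq 0$ (probability at least $3/8$ by symmetry and Chebyshev) on which the event contains $\{A\geq\sqrt N\}$, whose probability is at least $\Phi(-2\sqrt2)-o(1)$ uniformly in $d\geq N/2$; either patch yields a uniform constant $\eta>0$ and completes your argument.
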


\begin{proof}
First, note that $\binom{N}{N/2 - 2\sqrt{N}} \sim \binom{N}{N/2} \cdot e^{-8} \sim \sqrt{\frac{2}{\pi \cdot N}} \cdot e^{-8} \cdot 2^N$ (see e.g. \cite{asymptopia}).
Set $\eta = \sqrt{\frac{2}{\pi}} \cdot \frac{e^{-8}}{4}$ and let $N$ be large enough so that $\binom{N}{N/2 - 2\sqrt{N}} \geq 2\eta \cdot 2^N/\sqrt{N}$.
Observe then that $$\left|\left\{x \in \{0, 1\}^N : N/2-2\sqrt{N} \leq \norm{x}_1 \leq N/2-\sqrt{N}\right\}\right| \geq 2\eta \cdot 2^N.$$

Suppose that $\{0, 1\}^N$ is partitioned into clusters which satisfy the desired property, and consider $u, v \in \{0, 1\}^N$ which belong to the same cluster.
We first prove that $d_H(u, v) \leq 6\sqrt{N}$.
Suppose by contradiction that $d_H(u, v) > 6\sqrt{N}$.
Without loss of generality, we can assume that $u = 0^N$.
Let $r \in \left[N/2-2\sqrt{N}, N/2-\sqrt{N}\right]$, and pick $x \in \{0, 1\}^N$ uniformly at random such that $d_H(u, x) = r$, or equivalently $\norm{x}_1 = r$.
Let $K$ be the number of 1-coordinates shared by $x$ and $v$.
Then, $K$ follows a hypergeometric distribution, so by standard tail bounds (see \cite{skala13} for instance), $\mathbb{P}\left[K - \mathbb{E}[K] \geq \sqrt{N}\right] \leq \exp\left(-2 \cdot \left(\sqrt{N}/r\right)^2 \cdot r\right) = \exp\left(-2N/r\right) \leq \exp(-4)$ since $r \leq N/2$.

Observe that $\mathbb{E}[K] = r \cdot \norm{v}_1 /N= r \cdot d_H(u, v)/N \leq d_H(u, v)/2$ and $d_H(x, v) = r + d_H(u, v) - K$.
Suppose that $d_H(x, v) \leq N/2$.
Then, $K \geq r + d_H(u, v) - N/2$ so $K - \mathbb{E}[K] \geq r + d_H(u, v)/2 - N/2 \geq \sqrt{N}$.
Thus, $\mathbb{P}[d_H(x, v) \leq N/2] \leq \mathbb{P}\left[K - \mathbb{E}[K] \geq \sqrt{N}\right] \leq \exp(-4) < 1/2$.

This proves that $$\left|B\left(u, N/2-\sqrt{N}\right) \setminus B\left(v, N/2\right)\right| > 1/2 \cdot \left|\left\{x \in \{0, 1\}^N : N/2-2\sqrt{N} \leq \norm{x}_1 \leq N/2-\sqrt{N}\right\}\right| \geq \eta \cdot 2^N.$$
This contradicts the fact that $u$ and $v$ belong to the same cluster.

Therefore, the number of points in a cluster is at most $$\sum_{i = 0}^{6\sqrt{N}}\binom{N}{i} \leq \left(\frac{eN}{6\sqrt{N}}\right)^{6\sqrt{N}} \leq 2^{6\sqrt{N} \cdot \log\left(e\sqrt{N}/6\right)}.$$
This in turn implies that the number of clusters is at least $2^{N - 6\sqrt{N} \cdot \log\left(e\sqrt{N}/6\right)} = 2^{\Omega(N)}$.
\end{proof}

\subsection{Ultra-Strong Regularity Lemma}\label{subsec:ultrastrong}

Given a graph $G$ and two vertex subsets $V_1$ and $V_2$ of $G$, the \emph{edge density} between $V_1$ and $V_2$ is $\frac{|E(V_1, V_2))|}{|V_1| \cdot |V_2|}$, where $E(V_1, V_2)$ is the set of edges of $G$ with one endpoint in $V_1$ and the other in $V_2$.
A pair of subsets $(V_1, V_2)$ is \emph{$\varepsilon$-homogeneous} if the edge density between $V_1$ and $V_2$ is at most $\varepsilon$ or at least $1 - \varepsilon$.

Lov\'asz and Szegedy \cite{LS10}  proved that the Regularity Lemma can be strengthened for graphs of bounded VC-dimension. More precisely they proved that for every integer $d$ and every $\varepsilon > 0$, there exists an integer $M = M(\varepsilon)$ such that for every graph $G$ of VC-dimension at most $d$, there exists a partition of $V(G)$ into at most $M$ parts such that all but at most an $\varepsilon$-fraction of the pairs are $\varepsilon$-homogeneous.
They obtained a bound of the form $M(\varepsilon) \leq (1 / \varepsilon)^{O(d^2)}$.
Alon, Fischer and Newman \cite{AFN07} later improved the bound to $(d/\varepsilon)^{O(d)}$ for bipartite graphs. Fox, Pach and Suk \cite{FPS19} generalized these results to hypergraphs, and obtained the bound $M(\varepsilon) \leq O_d\left(1/\varepsilon^{2d+1}\right)$ for graphs.

A straightforward adaptation of their proof yields the same result for trigraphs of bounded VC-dimension.
Given a trigraph $T = (V, E, R)$ and two vertex subsets $V_1$ and $V_2$ of $V$, the \emph{edge density} between $V_1$ and $V_2$ is $\frac{|E(V_1, V_2))|}{|V_1| \cdot |V_2|}$.
The \emph{non-edge density} between $V_1$ and $V_2$ is $\frac{|W(V_1, V_2))|}{|V_1| \cdot |V_2|}$, where $W(V_1, V_2)$ is the set of non-edges of $T$ with one endpoint in $V_1$ and the other in $V_2$.
A pair of subsets $(V_1, V_2)$ is \emph{$\varepsilon$-homogeneous} if either the edge density between $V_1$ and $V_2$ is at most $\varepsilon$, or the non-edge density between $V_1$ and $V_2$ is at most $\varepsilon$.

A key tool in the proof of Fox, Pach and Suk is Haussler's Packing Lemma, which they use to do clustering. We will mimick this by using \cref{lem:DNLH-cluster,,lem:DNLS-cluster}. In a bipartite trigraph $B=(V_1\cup V_2, E ,R)$, a \emph{$V_1$-conflict} is a triple $(v_1, v_2, v_2') \in V_1 \times V_2 \times V_2$ such that one of $v_1v_2, v_1v_2'$ is in $E$, while the other is not in $E \cup R$, i.e. one is an edge and the other is a non-edge. A \emph{conflict} is a $V_1$-conflict or a $V_2$-conflict.
The other key Lemma of the proof of Fox, Pach and Suk can be adapted as follows.

\begin{lemma}\label{lem:technical-usrl}
Let $0 < \eta < 1/4$ and $B = (V_1 \cup V_2, E, R)$ be a bipartite trigraph such that $|V_1| = |V_2| = m$.
If $(V_1, V_2)$ is not $\eta$-homogeneous and has at most $\eta m^2/2$ red edges then there are at least $\eta m^3/4$ conflicts.
\end{lemma}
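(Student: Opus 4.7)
The plan is to express the total number of conflicts $C$ through an algebraic identity and then bound each side. Writing $e(v)$, $w(v)$, $r(v)$ for the numbers of plain edges, non-edges and red edges at a vertex $v$ to the opposite side, the identity $2ab = (a+b)^2 - a^2 - b^2$ combined with $e(v)+w(v)+r(v) = m$ gives
\[
2\, e(v)\, w(v) \;=\; (m - r(v))^2 - e(v)^2 - w(v)^2.
\]
Summing over $v \in V_1 \cup V_2$ and noting that $C = 2\sum_v e(v) w(v)$ (each $V_1$- and $V_2$-conflict contributes) yields the key identity
\[
C \;=\; \sum_{v \in V_1 \cup V_2}(m - r(v))^2 \;-\; \sum_{v \in V_1 \cup V_2} e(v)^2 \;-\; \sum_{v \in V_1 \cup V_2} w(v)^2.
\]
The first sum is easy to lower bound: Cauchy--Schwarz gives $\sum_{v \in V_1}(m-r(v))^2 \ge (m^2 - |R|)^2/m = (1-\rho)^2 m^3$, where $\rho = |R|/m^2 \le \eta/2$ by hypothesis, and similarly on $V_2$.

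The main obstacle is a matching upper bound on $\sum_{V_1} e(v)^2 + \sum_{V_2} e(v)^2$ (and the analogous quantity with $w$ in place of $e$). I would isolate the following variance inequality as a lemma: for every $m \times m$ $0/1$ matrix $A$ of density $\gamma$, with row sums $u_i$ and column sums $v_j$,
\[
\sum_i u_i^{\,2} \;+\; \sum_j v_j^{\,2} \;\le\; \gamma(1+\gamma)\, m^3.
\]
This is sharp when all rows are either full or empty (or all columns are), which matches the tight extremal configurations. The cleanest proof is via ANOVA/orthogonality: regard $A_{ij}$ as a random variable over a uniform $(i,j) \in [m]^2$ and decompose $A_{ij} = \gamma + (u_i/m - \gamma) + (v_j/m - \gamma) + r(i,j)$. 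A short check shows that the three non-constant pieces are pairwise orthogonal in $L^2$, so taking variances gives $\gamma(1-\gamma) = \operatorname{Var}(A) = \operatorname{Var}(u)/m^2 + \operatorname{Var}(v)/m^2 + \operatorname{Var}(r) \ge \operatorname{Var}(u)/m^2 + \operatorname{Var}(v)/m^2$, which rearranges to the displayed bound.

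Applying this variance inequality to the edge-indicator matrix (density $\alpha = |E|/m^2$) and to the non-edge-indicator matrix (density $\beta = |W|/m^2$), and substituting everything into the identity above, yields
\[
C \;\ge\; 2(1-\rho)^2 m^3 - \alpha(1+\alpha)\, m^3 - \beta(1+\beta)\, m^3.
\]
Using $\alpha + \beta = 1 - \rho$ and $\alpha^2 + \beta^2 = (1-\rho)^2 - 2\alpha\beta$, this simplifies cleanly to $C \ge m^3 \bigl(2\alpha\beta - \rho(1-\rho)\bigr)$. It remains to plug in the hypotheses. Non-homogeneity forces $\alpha, \beta > \eta$, so $\alpha\beta \ge \eta(1-\rho-\eta)$ (the minimum on the feasible boundary); combined with $\rho \le \eta/2$ and $\eta < 1/4$, a short calculation gives $2\alpha\beta - \rho(1-\rho) \ge 3\eta/4$, so $C \ge 3\eta m^3/4 \ge \eta m^3/4$, with room to spare.
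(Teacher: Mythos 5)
Your proof is correct, and it takes a genuinely different route from the paper. The paper argues probabilistically: it samples $v_1,v_1'\in V_1$, $v_2,v_2'\in V_2$ uniformly, notes that $\{v_1,v_2\}$ and $\{v_1',v_2'\}$ are independent uniform pairs, lower-bounds the probability that one is an edge and the other a non-edge by $2p(1-p-\eta/2)$ (using the red-edge budget), and then charges this event either to a conflict or to the middle pair being red --- a sampling-and-charging argument in the same spirit as the randomness-transversal arguments elsewhere in the paper. You instead do an exact deterministic double count: $C=2\sum_v e(v)w(v)=\sum_v (m-r(v))^2-\sum_v e(v)^2-\sum_v w(v)^2$, bound the red term by Cauchy--Schwarz, and control the edge and non-edge terms by the ANOVA-type inequality $\sum_i u_i^2+\sum_j v_j^2\le \gamma(1+\gamma)m^3$ for $0/1$ matrices of density $\gamma$ (your orthogonality check is right, and the inequality is indeed tight for full/empty rows). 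This yields the clean closed form $C\ge m^3\bigl(2\alpha\beta-\rho(1-\rho)\bigr)$, and your final optimization ($\alpha\beta\ge\eta(1-\rho-\eta)$, $\rho\le\eta/2$, $\eta<1/4$) does give at least $3\eta/4\cdot m^3$, comfortably above the claimed $\eta m^3/4$ (the paper's route gives $\eta m^3/2$). What your approach buys is a sharper, structurally informative bound identifying the extremal configurations and avoiding any probabilistic language; what the paper's buys is brevity and a template that matches its other proofs. One cosmetic caution: conflicts here are ordered triples $(v_1,v_2,v_2')$, which is exactly what your factor $2e(v)w(v)$ counts, so the bookkeeping is consistent with the paper's normalization by $m^3$.
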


\begin{proof}
Let $\eta_1$ be the fraction of triples $(v_1, v_2, v_2') \in V_1 \times V_2 \times V_2$ which are $V_1$-conflicts, and define $\eta_2$ similarly.
It suffices to show that $\eta_1 + \eta_2 \geq \eta/2$.
Pick $v_1, v'_1 \in V_1$ and $v_2, v'_2 \in V_2$ uniformly at random with repetition.
Let $e_0 = \{v_1, v_2\}$, $e_1 = \{v_1, v'_2\}$ and $e_2 = \{v'_1, v'_2\}$.
Let $X$ be the event that one of $e_0, e_2$ is in $E$ while the other is not in $E \cup R$.
Observe that $e_0$ and $e_2$ are independent uniformly random pairs of vertices from $V_1 \times V_2$.

Let $p = \mathbb{P}[e_0 \in E]$. Then, $\mathbb{P}[e_0 \notin E \cup R] \geq 1 - p - \eta/2$ since there are at most $\eta m^2/2$ red edges between $V_1$ and $V_2$.
By symmetry between $e_0$ and $e_2$ we have $\mathbb{P}[e_2 \in E] = p$ and $\mathbb{P}[e_2 \notin E \cup R] \geq 1 - p - \eta/2$.
By independence between $e_0$ and $e_2$ we get $\mathbb{P}[X] \geq 2p(1-p-\eta/2)$.
Since $(V_1, V_2)$ is not $\eta$-homogeneous then $p \geq \eta$ and $1 - p - \eta/2 \geq \eta$ so $p \leq 1-3\eta/2$.
The value $2p(1-p-\eta/2)$ is minimized either for $p = \eta$ or for $p = 1-3\eta/2$. 
In both cases we have $2p(1-p-\eta/2) = 2\eta(1-3\eta/2)$ so $\mathbb{P}[X] \geq 2\eta(1 - 3\eta/2) \geq \eta$ since $\eta \leq 1/4$.

Let $X_1$ be the event that $(v_1, v_2, v'_2)$ is a $V_1$-conflict and $X_2$ be the event that $(v'_2, v_1, v'_1)$ is a $V_2$-conflict. Let $Y$ be the event that at least one of $X_1, X_2$ occurs. Then, $\mathbb{P}[Y] \leq \mathbb{P}[X_1] + \mathbb{P}[X_2] = \eta_1 + \eta_2$.
On the other hand, if $X$ occurs then either $Y$ occurs or $e_1 \in R$, so $\eta_1 + \eta_2 \geq \mathbb{P}[Y] \geq \mathbb{P}[X] - \mathbb{P}[e_1 \in R] \geq \eta - \eta/2 = \eta/2$.
\end{proof}

We can now prove the Ultra-Strong Regularity Lemma for disjointness-trigraphs.
Replacing the application of \cref{lem:DNLS-cluster} by \cref{lem:DNLH-cluster} would show that the same result also holds for Hamming-trigraphs.

\begin{theorem}
Let $\varepsilon \in (0, 1)$ and $\eta \in (0, 1/4)$. Let $T = (V, E,R)$ be an $n$-vertex disjointness-trigraph with sensitivity $\varepsilon$ such that $|R| \leq \eta^2 \cdot n^2/8$.
Then, there is an equitable partition $(V_1, \ldots, V_K)$ of $V$ with $8/\eta \leq K \leq 2^{\textup{poly}(\varepsilon^{-1}, \eta^{-1})}$ parts such that all but an $\eta$-fraction of the pairs of parts are $\eta$-homogeneous.
\end{theorem}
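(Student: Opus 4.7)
My strategy adapts the Fox--Pach--Suk framework for hypergraphs of bounded VC-dimension to the trigraph setting, using \cref{lem:DNLS-cluster} in place of Haussler's packing lemma and \cref{lem:technical-usrl} as the conflict-counting primitive. The overall picture is: cluster $V$ into parts consisting of near-twins; refine this clustering into an equitable partition; then show that most pairs of parts must be $\eta$-homogeneous via a double count of conflicts.

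I will first apply \cref{lem:DNLS-cluster} with sensitivity $\varepsilon$ and twin-parameter $\eta_0 = \eta^2/128$ to the set system underlying $T$. This produces a partition $(C_1,\dots,C_t)$ of $V$ with $t = 2^{\mathrm{poly}(\varepsilon^{-1},\eta^{-1})}$ such that for every $u, v$ in the same $C_a$, $|N(u) \setminus (N(v) \cup R(v))| \leq \eta_0 n$. I then fix an integer $K = \max\{\lceil 8/\eta\rceil,\, \lceil 16 t/\eta\rceil\}$ and set $m = \lfloor n/K \rfloor$. Extracting $\lfloor |C_a|/m \rfloor$ blocks of size $m$ from each $C_a$ leaves a total dust of size less than $tm$; spreading this dust (together with the rounding slack needed to make every part of size $m$ or $m+1$) into at most $t$ additional parts yields an equitable partition $(V_1,\dots,V_K)$. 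At most $t$ parts are \emph{dust}; the remaining parts, which I call \emph{clean}, are each contained in a single $C_a$.

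The next step is a double count of conflicts. Whenever $V_j$ is clean with $V_j \subseteq C_a$, the twin property applied to any $v_2, v_2' \in V_j$ gives at most $2\eta_0 n$ vertices $v_1 \in V$ forming a conflict $(v_1,v_2,v_2')$ with respect to $V_j$. Summing $(v_2, v_2')$ over $V_j^2$, clean $V_j$ over all clean parts, and $i \neq j$ captures both $V_i$-conflicts and $V_j$-conflicts for every clean-clean pair, and yields a total of at most $2\eta_0 n \cdot K m^2 \leq 2\eta_0 n^3/K$ conflicts.

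Finally, I will partition the non-$\eta$-homogeneous pairs into three groups. Pairs involving a dust part number at most $tK \leq \eta K^2/16$ by the choice of $K$. Among clean-clean pairs, those with more than $\eta m^2/2$ red edges between them number at most $(\eta^2 n^2/8)/(\eta m^2/2) \leq \eta K^2/4$, by the hypothesis $|R|\leq \eta^2 n^2/8$. The remaining non-homogeneous clean-clean pairs contribute at least $\eta m^3/4$ conflicts each by \cref{lem:technical-usrl}, so by the conflict bound they number at most $(2\eta_0 n^3/K)/(\eta m^3/4) = 8\eta_0 K^2/\eta \leq \eta K^2/16$. Summing, the total number of non-homogeneous pairs is at most $3\eta K^2/8 \leq \eta \binom{K}{2}$ for $K \geq 4$, which closes the argument. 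The main obstacle I anticipate is arranging the equitable refinement so that the twin property survives in the overwhelming majority of parts; the conservative constants above are chosen precisely so that the dust contribution, the red-heavy contribution, and the non-homogeneous clean-clean contribution each fit inside a fraction of the $\eta\binom{K}{2}$ budget.
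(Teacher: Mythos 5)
Your proof is correct and follows essentially the same route as the paper's: cluster $V$ via \cref{lem:DNLS-cluster} with a $\textup{poly}(\eta)$ twin parameter, refine into an equitable partition with $K = \Theta(t/\eta)$ parts so that at most $t$ parts fail to sit inside a cluster, discard the few red-heavy pairs using $|R| \leq \eta^2 n^2/8$, and double-count conflicts (lower bound from \cref{lem:technical-usrl}, upper bound from the near-twin property) to bound the remaining non-homogeneous pairs. The only differences are your more conservative constants ($\eta_0 = \eta^2/128$, $K \geq 16t/\eta$) and a slightly more careful final comparison against $\eta\binom{K}{2}$, which are immaterial.
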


\begin{proof}
Set $\delta = \frac{\eta^2}{16}$.
By \cref{lem:DNLS-cluster}, there exists a partition $\mathcal{Q} = (U_1, \ldots, U_{t})$ of $V$ with $t = 2^{\text{poly}(\varepsilon^{-1}, \delta^{-1})} = 2^{\text{poly}(\varepsilon^{-1}, \eta^{-1})}$ such that whenever $x,y$ belong to the same part, $|(N(x) \cap W(y)) \cup (W(x) \cap N(y))| \leq 2\delta \cdot n$.
Set $K = 4t/\eta$, and partition arbitrarily each $U_i$ into parts of size $n/K$ and possibly one additional part of size less than $n/K$.
Put these additional parts together and partition the resulting set into parts of size $n/K$ to obtain an equitable partition $\mathcal{P} = (V_1, \ldots, V_K)$ into $K$ parts.
By construction, there are at most $t$ parts $V_i$ which are not contained in a part of $\mathcal{Q}$.
Thus, the fraction of unordered pairs $(V_{i}, V_{j})$ such that at least one of them is not contained in a part of $\mathcal{Q}$ is at most $t/K = \eta/4$.
Since there are at most $\eta^2 \cdot n^2/8$ red edges, the number of unordered pairs $(V_i, V_j)$ with at least $\eta \cdot (n/K)^2/2$ red edges between them is at most $\eta K^2/4$.
Let $X$ denote the set of unordered pairs $(V_i, V_j)$ such that both $V_i$ and $V_j$ are contained in a part of $\mathcal{Q}$ (not necessarily the same), which are not $\eta$-homogeneous and such that there are at most $\eta \cdot (n/K)^2/2$ red edges between them.

Let $C$ be the set of all conflicts $(u, v, w)$ appearing in the pairs $(V_i, V_j) \in X$.
For every $(V_i, V_j) \in X$, we have that $(V_i, V_j)$ is not $\eta$-homogeneous and that there are at most $\eta \cdot (n/K)^2/2$ red edges between $V_i$ and $V_j$ so by \cref{lem:technical-usrl}, $(V_i, V_j)$ yields at least $\eta/4 \cdot (n/K)^3$ conflicts $(u, v, w)$ in $C$.
Therefore, $|C| \geq |X| \cdot \eta/4 \cdot (n/K)^3$.
On the other hand, given $V_i$ and $b,b'\in V_i$, consider all the $V_j$-conflicts $(a, b, b')\in C$ where $a \in V_j$ and $(V_i, V_j) \in X$. Since $(V_i, V_j) \in X$ then $V_i$ is contained in a part of $\mathcal{Q}$, so $|(N(b) \cap W(b')) \cup (W(b) \cap N(b'))| \leq 2\delta \cdot n$.
Therefore, given $b$ and $b'$, there are at most $2\delta \cdot n$ corresponding conflicts $(a, b, b') \in C$.
Thus, $|C| \leq K \cdot \left(\frac{n}{K}\right)^2 \cdot 2\delta \cdot n$.

Putting together the previous two bounds, we get $|X| \leq \frac{\eta}{2} \cdot K^2$.
Therefore, the number of unordered pairs $(V_i, V_j)$ which are not $\eta$-homogeneous is at most $\eta K^2$, which is an $\eta$-fraction of all pairs $(V_i, V_j)$.
\end{proof}

\section{Chromatic Threshold of \texorpdfstring{$K_t$}{Kt}-Free Graphs}\label{sec:TFG}

The main reason why DNL is particularly suited to triangle-free graphs is that adjacent vertices are pairwise far apart. Indeed whenever $uv$ is an edge, the corresponding vectors $V_u$ and $V_v$ of the adjacency matrix of $G$ are at Hamming distance $\deg(u)+\deg(v)$. From a different angle, in the neighborhood set system $\mF$, if $uv$ is an edge then $\mF_{uv} = \emptyset$. The main difference when using DNL, compared with the existing proofs of chromatic thresholds, is that we do not need to assume that our graphs are maximal triangle-free (i.e. that every pair of vertices is at distance at most 2). Indeed, the trigraph point of view offers the opportunity to consider as red edges the non-edges which are close to being edges (for instance which could be replaced by an edge without creating a triangle). 

In general, DNL does not provide better bounds than other methods based on variants of VC-dimension, and we merely reprove already known results. However, it can yield strikingly short and simple proofs, as illustrated in the introduction. Furthermore, in the case of $K_t$-free dense \emph{regular} graphs, assuming edge maximality is not possible, as the edge-closure would violate the regularity condition.  Our main contribution here is to give a full characterization of chromatic thresholds of $K_t$-free graphs in the regular case.

All the results discussed here produce very simple randomized algorithms. Our proofs only rely on uniform sampling, sometimes with a round of classification. The main interest of this section is to offer a (hopefully) simple explanation for why uniform sampling works so well in dense triangle-free graphs.

\subsection{A spherical viewpoint and Borsuk-Hajnal extremal examples}

We revisit the proof given in the introduction stating that a triangle-free graph with minimum degree at least $(1/3+\varepsilon)n$ has bounded chromatic number. The question was first raised in 1973 by Erd\H{o}s and Simonovits~\cite{Erd73}, who conjectured that minimum degree $n/3$ would imply 3-colorable. This was disproved almost 10 years later by Häggkvist in \cite{Hag82}, with a 10-regular, 4-chromatic graph on 29 vertices, based on the Grötzsch graph. In 1997, Chen, Jin and Koh~\cite{Che97} showed that minimum degree greater than $10n/29$ implies 3-colorable. In 2004, Brandt and Thomassé \cite{BT2004} finally proved that 4 is the correct bound. See Łuczak, Polcyn and Reiher \cite{Luc24} for the most recent results on this subject. Our first use of DNL for the $(1/3+\varepsilon)n$ bound was inspired by vector coloring, see~\cite{Kar98} for a general introduction. We used the following antipodal spherical version of DNL.

\DNLSantipodal*

The idea proposed in~\cite{CT07} is to associate to every vertex $v_i$ a unit vector in such a way that adjacent vertices form an angle of more than $2\pi/3$. The remarkable fact is that this representation certifies by itself that $G$ is triangle-free. 

Specifically, when the vertex set of $G$ is $\{v_1,\dots ,v_n\}$, we select for each $v_i$ a set $Z_i$ of $(1/3+\varepsilon)n$ of its neighbors. Associate to $v_i$ the $n$-vector $V_i$ whose $j$-th coordinate is $a\coloneqq2/3-\varepsilon$ if $v_j\in Z_i$, and is $a-1=-1/3-\varepsilon$ otherwise. Note that $\langle V_i, V_i\rangle=a^2(1-a)n+a(a-1)^2n=a(1-a)n$ for all $v_i$, and that $\langle V_i, V_j\rangle=2(1-a)(a-1)an+(1-2(1-a))(a-1)^2n=-(1-a)^2n$ for every edge $v_iv_j$. When $v_iv_j$ is an edge, the angle $V_i0V_j$ is $\arccos\left((a-1)/a\right)$, where $(a-1)/a=-(1+3\varepsilon)/(2-3\varepsilon)<-1/2$. Therefore edges form angles greater than $2\pi/3+c$ for some $c>0$. 
It was asked in~\cite{CT07} if this representation could be used to obtain a bound on the chromatic number. This is indeed the case, and this was our main motivation for the introduction of DNL. Observe that the proof in the spherical setting does not achieve a $O(1/\varepsilon)$ bound but rather a $O\left(1/\varepsilon ^2\right)$ one. This is because the VC-dimension is quadratic in $1/\varepsilon$, and explains why we use set systems rather than metric spaces.

\thomassen*

\begin{proof} To get a $O\left(\frac{1}{\varepsilon^2}\right)$ bound, 
associate to every vertex $v_i$ the $n$-vector $V_i$ as above. Since $G$ has linear degree,  \cref{lem:DNL-spherical-antipodal} provides a bounded size subset $X$ of vectors such that every $V_i$ forms an angle at least $2\pi/3$ with some vector $V_j\in X$. As the set of vectors forming an angle at least $2\pi/3$ with some fixed vector $V_j$ is a stable set, $G$ has chromatic number at most $|X|$.
\end{proof}

This spherical point of view is particularly attractive since the extremal examples with degree less than $n/3$ can also be expressed as points on the sphere, and the large chromatic number follows from the Borsuk-Ulam Theorem. Here is the construction of \emph{Borsuk-Hajnal graphs} $BH_\varepsilon$ with arbitrarily large chromatic number and minimum degree $(1/3-\varepsilon)n$. Consider the $d$-dimensional unit sphere $\mathbb{S}^d$ and pick two finite well-distributed sets $X$ and $Y$ in $\mathbb{S}^d$ with $|X|<\!\!<|Y|$. Consider also an (abstract) set $Z$ of size $|Y|/2$, unrelated to $\mathbb{S}^d$. 
The vertex set of $BH_\varepsilon$ is $X\cup Y\cup Z$. 
Add all edges between $Y$ and $Z$, all edges inside $X$ between vertices at spherical distance at least $\pi-\varepsilon$ in $\mathbb{S}^d$, and all edges between vertices of $X$ and $Y$ at spherical distance at most $\pi/2-\varepsilon$. The resulting graph is triangle-free, its minimum degree is arbitrarily close to $|Y|/2$ (hence can be fixed to a $(1/3-\varepsilon)$-fraction of the number of vertices) and its chromatic number is $d+2$ by applying the Borsuk-Ulam Theorem to the subgraph induced on $X$. 

Adding balanced complete multipartite graphs completely joined to $BH_\varepsilon$ results in $K_t$-free graphs with minimum degree arbitrary close to $\frac{2t-5}{2t-3}n$. These bounds are sharp and discussed in \cref{subsec:homthreshold}. In \cref{subsec:regthreshold}, we present analogous constructions, proposed by O'Rourke, for regular $K_t$-free graphs with degree almost $\frac{3t-8}{3t-5}n$.

\subsection{The vicinity of the \texorpdfstring{$n/3$}{n/3} chromatic threshold}\label{subsec:digging}

The $n/3$ degree threshold is an intriguing value with respect to triangles in graphs. Notably, it also appears as a critical value for the Triangle Removal Lemma. More precisely, the dependence between the parameters is linear above that threshold, and super-polynomial below that threshold, see \cite{https://doi.org/10.1002/jgt.22891}.  
We explore in this section the chromatic number of triangle-free graphs whose minimum degree is slightly less than $n/3$. In particular, we show that minimum degree $n/3-n^{1-\varepsilon}$ suffices to bound the chromatic number of a triangle-free graph. This is essentially best possible.

It was already known that $n/3$ is not only a soft threshold but also a hard one: every triangle-free graph with minimum degree more than $n/3$ has bounded chromatic number. The first proof was given in~\cite{BT2004} where the full structure of these graphs was provided, showing that their chromatic number is at most 4. A much simpler proof (giving the bound 1665 instead of 4) was given in~\cite{Luczak2010coloringdensegraphsvcdimension}. 
Both proofs critically use that every \emph{maximal} triangle-free graph with minimum degree more than $n/3$ does not induce a cube (this was proved by Brandt in~\cite{BRANDT200233}). 
Therefore, they have bounded VC-dimension, hence bounded domination since the degree is linear, and finally bounded chromatic number (this gives the 1665 bound of~\cite{Luczak2010coloringdensegraphsvcdimension}).
Brandt's cube-free Lemma is a very nice and finely crafted graph theoretical argument. 
It asserts that if a maximal triangle-free graph with minimum degree more than $n/3$ contains an induced cube, then it must contain the graph in \cref{fig:Brandt's-graph-2}, which is a contradiction since Brandt's graph has no independent set of size 5 and by double counting some vertex must be adjacent to more than 12/3 of its vertices.

\begin{figure}[ht]
    \centering
\begin{tikzpicture}[scale=2, every node/.style={draw, circle, minimum size=8mm, inner sep=1pt},
    bend angle=20]

\node (u1) at (0,2) {$u_1$};
\node (u2) at (1.5,2) {$u_2$};
\node (u3) at (3,2) {$u_3$};
\node (u4) at (4.5,2) {$u_4$};

\node (x) at (0,1) {$x$};
\node (v2) at (1.5,1) {$v_2$};
\node (v3) at (3,1) {$v_3$};
\node (y) at (4.5,1) {$y$};

\node (w1) at (0,0) {$w_1$};
\node (w2) at (1.5,0) {$w_2$};
\node (w3) at (3,0) {$w_3$};
\node (w4) at (4.5,0) {$w_4$};

\draw (u1) -- (w4);
\draw (u2) -- (w3);
\draw (u3) -- (w2);
\draw (u4) -- (w1);

\draw (u1) to (w2);
\draw[bend right=20] (u1) to (w3);
\draw (u2) to (w1);
\draw[bend left=20] (u2) to (w4);
\draw[bend right=20] (u3) to (w1);
\draw (u3) to (w4);
\draw[bend left=20] (u4) to (w2);
\draw (u4) to (w3);

\draw (v2) -- (u2);
\draw (v2) -- (w2);
\draw (v2) -- (x);
\draw[bend right=20] (v2) to (y); 

\draw (v3) -- (u3);
\draw (v3) -- (w3);
\draw (v3) -- (y);
\draw[bend right=20] (v3) to (x); 

\draw (x) -- (u1);
\draw (x) -- (w1);

\draw (y) -- (u4);
\draw (y) -- (w4);

\end{tikzpicture}
    \caption{Brandt's graph.}
    \label{fig:Brandt's-graph-2}
\end{figure}
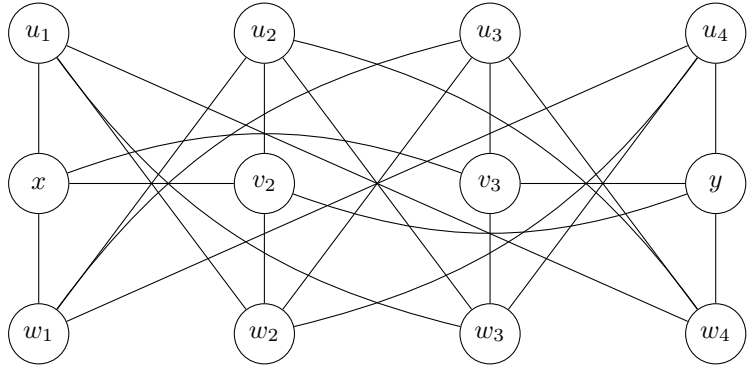

Brandt's graph is very useful to characterize the full structure, but is too razor-edged to dig very far below the threshold. 
The proof of 4-colorability needs the $n/3$ bound, and the VC-dimension proof allows to consider minimum degree $n/3$ minus a constant, to the price of increasing the 1665 bound. 
Our goal here is to provide a ``cube-free free'' proof of the $n/3$ threshold in a much smoother way, which is reminiscent of stability techniques in Regularity Lemma. 
In fact, the whole argument of this subsection could have been done using the Regularity Lemma, albeit with less ease of use and significantly worse bounds. 
The idea is to first find a large approximate structure, here a large almost-complete bipartite graph of size close to $(2n/3,n/3)$, and then discuss the structure of the remaining vertices.

This allows to shave off a function from $n/3$ rather than a constant. The function itself is a bit surprising and involves the \emph{odd girth} parameter (length of a shortest odd cycle).  
Let $og_c(n)$ be the minimum integer $k$ such that every $n$-vertex graph with odd girth at least $k$ has chromatic number at most $c$. 
We show that every triangle-free graph with minimum degree at least $n/3-n/8 og_c(n)$ has chromatic number bounded in terms of $c$. 
Answering a question of Erd\H{o}s~\cite{Erdos79}, Kierstead, Szemerédi and Trotter~\cite{DBLP:journals/combinatorica/KiersteadST84} showed that $og_c(n)=O\left(n^{\frac{1}{c-1}}\right)$. This implies that triangle-free graphs with minimum degree  $n/3-n^{1-\varepsilon}$ have bounded chromatic number, when $\varepsilon>0$ is fixed.

The following result explores via DNL the landscape of dense triangle-free graphs with minimum degree between $n/4$ and $n/3$. This is inspired by the argument of O'Rourke in the regular case. Here, $\alpha (G)$ denotes the maximum size of an independent set of $G$. The proof is identical to the one of \cref{thm:regulartrianglefree}.

\begin{theorem}\label{thm:alphachro}
Let $\varepsilon >0$ and $\delta \geq 1/4+\varepsilon$. If $G$ is a triangle-free graph with $\delta(G)=\delta n$ and such that  $\alpha (G)<(2\delta-\varepsilon)n$, then $G$ has bounded chromatic number. 
\end{theorem}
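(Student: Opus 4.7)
The plan is to copy, with minimal modifications, the short proof of \cref{thm:regulartrianglefree}. The only role played there by regularity was to guarantee $\alpha(G) \leq n/2$, which is exactly the kind of upper bound on $\alpha(G)$ that is now handed to us as a hypothesis. Concretely, I apply \cref{lem:DNLS-cluster-intro} to the neighborhood set system $\mF = \{N(v) : v \in V\}$ with sensitivity $\varepsilon/2$, obtaining an $(\varepsilon/2)$-clustering $\mathcal{P}$ of $V$ with $|\mathcal{P}| = 2^{\textup{poly}(\varepsilon^{-1})}$. The whole game is to show that every cluster is independent, which immediately yields $\chi(G) \leq |\mathcal{P}|$.

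Assume for contradiction that some cluster contains an edge $xy$. Since $G$ is triangle-free, $N(x) \subseteq D(x)$ and $N(y) \subseteq D(y)$, so the sets
$$X' \coloneqq N(x) \setminus D_{\varepsilon/2}(y), \qquad Y' \coloneqq N(y) \setminus D_{\varepsilon/2}(x)$$
are each contained in a corresponding ``bad set'' of the cluster and therefore have size at most $\varepsilon n/2$. Moreover $N(x) \cap N(y) = \emptyset$ (triangle-freeness applied to edge $xy$), so
$$\bigl|(N(x) \setminus X') \cup (N(y) \setminus Y')\bigr| \geq |N(x)| + |N(y)| - |X'| - |Y'| \geq 2\delta n - \varepsilon n > \alpha(G),$$
where the final inequality is precisely the hypothesis $\alpha(G) < (2\delta - \varepsilon)n$. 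Consequently, this set is not independent and contains some edge $zt$.

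I now derive the same contradiction as in the regular case. By construction $z \in D_{\varepsilon/2}(y)$ and $t \in D_{\varepsilon/2}(x)$, so $|N(y) \cap N(z)|, |N(x) \cap N(t)| \leq \varepsilon n/2$; the other four pairwise intersections among $N(x), N(y), N(z), N(t)$ are empty by triangle-freeness applied to the edges $xy$, $zt$, $xz$, $yt$. Therefore
$$|N(x) \cup N(y) \cup N(z) \cup N(t)| \geq 4\delta n - 2 \cdot \varepsilon n/2 = 4\delta n - \varepsilon n \geq (1 + 3\varepsilon)n > n,$$
using $\delta \geq 1/4 + \varepsilon$ in the last step. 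This is absurd, so every cluster is independent and $\chi(G) \leq 2^{\textup{poly}(\varepsilon^{-1})}$.

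There is essentially no obstacle beyond bookkeeping: the two slack budgets have to match the sensitivity $\varepsilon/2$. The first (the $\alpha(G)$ argument) consumes exactly the $\varepsilon$-gap in the hypothesis on $\alpha$, and the second (the four-neighborhoods-cover-more-than-$n$ argument) consumes exactly the $\varepsilon$-gap in the hypothesis on $\delta$. Both hypotheses are used once, and in a completely symmetric way: the condition $\alpha(G) < (2\delta - \varepsilon)n$ plays the role that regularity played in \cref{thm:regulartrianglefree} for producing the second edge $zt$, while $\delta \geq 1/4 + \varepsilon$ is what makes the $K_4$-configuration fill up more than $n$ vertices.
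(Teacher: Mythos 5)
Your proof is correct and is essentially identical to the paper's own argument: the same $\varepsilon/2$-clustering of $\{N(v):v\in V\}$ via \cref{lem:DNLS-cluster-intro}, the same sets $X',Y'$, the bound $2\delta n-\varepsilon n>\alpha(G)$ to find the edge $zt$, and the same four-neighborhood counting contradiction (your accounting of the pairwise intersections is in fact slightly sharper than the paper's, which generously bounds all six by $\varepsilon n/2$). The only implicit step is the harmless WLOG that $z\in N(x)\setminus X'$ and $t\in N(y)\setminus Y'$, which holds since triangle-freeness forbids both endpoints of $zt$ from lying in the same neighborhood.
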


\begin{proof}
Let $P$ be a $\varepsilon/2$-clustering of $\mF=\{N(v):v\in V\}$ as in \cref{lem:DNLS-cluster-intro}. Assume for contradiction that some cluster contains $x$ and $y$ such that $xy$ is an edge. Note that if $z\in N(x)$ then $z\in D(x)$. Then, $X'=N(x)\setminus D_{\varepsilon/2}(y)$ has size at most $\varepsilon n/2$ by definition of $\varepsilon/2$-cluster.
Define $Y'$ analogously and note that $(N(x)\setminus X')\cup (N(y)\setminus Y')$ has size at least $2\delta n -2\varepsilon n/2>\alpha(G)$, so contains an edge $zt$.
As $N(x), N(y), N(z), N(t)$  all pairwise intersect on at most $\varepsilon n/2$ elements their union has size at least $4(n/4+\varepsilon n)-\binom{4}{2}\varepsilon n/2>n$, a contradiction. Hence each cluster is an independent set, thus $\chi(G)\leq |P|$. 
\end{proof}

We can now state the main result of this section, which holds for any small $\eta > 0$.

\begin{theorem}\label{thm:hardthresh2}
Every triangle-free graph $G$ with $\delta(G)\geq n/3-n^{1-\eta}$ has bounded chromatic number. 
\end{theorem}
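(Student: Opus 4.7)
The plan is to reduce \cref{thm:hardthresh2} to an intermediate claim stated right before it: every triangle-free $n$-vertex graph $G$ with $\delta(G) \geq n/3 - n/(8\, og_c(n))$ satisfies $\chi(G) \leq f(c)$ for some function $f$. Given this claim, I would choose $c \geq 1 + 1/\eta$, and apply the Kierstead-Szemerédi-Trotter bound $og_c(n) = O(n^{1/(c-1)}) = o(n^\eta)$ to get $n^{1-\eta} \leq n/(8\, og_c(n))$ for all sufficiently large $n$; for smaller $n$, $\chi(G) \leq n$ is trivially bounded in terms of $\eta$. Hence the intermediate claim suffices.

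For the intermediate claim, set $s = 1/(8\, og_c(n))$ so $\delta(G) \geq (1/3 - s)n$, fix a small constant $\varepsilon_0 > 0$ (independent of $n$), and split based on $\alpha(G)$. If $\alpha(G) < (2\delta(G)/n - \varepsilon_0)n$, \cref{thm:alphachro} applied with parameter $\varepsilon_0$ gives $\chi(G) \leq f_A(\varepsilon_0)$, bounded in the constant $\varepsilon_0$. Otherwise, let $I$ be a maximum independent set with $|I| \geq (2\delta(G)/n - \varepsilon_0)n$, and $B = V \setminus I$, so $|B| \leq (1/3 + 2s + \varepsilon_0)n$. Each $v \in I$ has all of its $\geq \delta(G)$ neighbors in $B$, hence at most $(3s + \varepsilon_0)n$ non-neighbors in $B$, so the bipartite graph $G[I,B]$ is near-complete.

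The structural heart is to exploit this near-completeness. Fix $\gamma \in (0, 1/2)$ and split $B = B_1 \sqcup B_2$ with $B_1 = \{b \in B : \deg_I(b) \geq (1-\gamma)|I|\}$; averaging the non-edge budget gives $|B_2| = O((s + \varepsilon_0)n/\gamma)$. Moreover, $G[B_1]$ is bipartite: if $C = b_0 b_1 \ldots b_{2\ell}$ were an odd cycle in $G[B_1]$, then setting $A_i = N(b_i) \cap I$, triangle-freeness gives $A_i \cap A_{i+1} = \emptyset$, so each $v \in I$ belongs to at most $\ell$ of the $A_i$'s (an independent set in $C_{2\ell+1}$), yielding $\sum_i |A_i| \leq \ell|I|$. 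Since $|A_i| \geq (1-\gamma)|I|$ for each $i$, this forces $(2\ell+1)(1-\gamma) \leq \ell$, which is impossible for $\gamma < 1/2$. Consequently $G[I \cup B_1]$ is 3-colorable, and one is left with bounding $\chi(G[B_2])$: then $\chi(G) \leq 3 + \chi(G[B_2])$.

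The main obstacle is to bound $\chi(G[B_2])$ by a constant depending only on $c$: since $|B_2|$ is only known to be $O((s + \varepsilon_0)n)$, still a linear-in-$n$ fraction when $\varepsilon_0$ is a constant, no trivial bound on triangle-free graphs suffices. The natural route is to iterate the whole argument, either by applying the intermediate claim recursively to $G[B_2]$ on a well-chosen induced subgraph that preserves the degree-density hypothesis, or by re-embedding a hypothetical short odd cycle in $B_2$—forced by $\chi(G[B_2]) > c$ via Kierstead-Szemerédi-Trotter—back into the ambient structure, using the near-completeness of $G[I,B]$ to derive a quantitative contradiction with $\delta(G) \geq n/3 - n/(8\, og_c(n))$. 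Closing this coupling step cleanly, which is what gives the precise $n/(8\, og_c(n))$ slack in the hypothesis, is the technically delicate part of the proof.
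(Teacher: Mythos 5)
Your reduction to the intermediate claim via the Kierstead--Szemer\'edi--Trotter bound is exactly the paper's first step, the use of \cref{thm:alphachro} to dispose of the case $\alpha(G)<(2\delta(G)/n-\varepsilon_0)n$ is also as in the paper, and your observation that $G[B_1]$ contains no odd cycle is correct (in fact, for $\gamma<1/2$ two adjacent vertices of $B_1$ would share a neighbor in $I$, so $B_1$ is already independent). But the proof stops precisely where the real work begins: bounding $\chi(G[B_2])$. With $\varepsilon_0$ a fixed constant, $|B_2|$ is a linear fraction of $n$, and neither of your two suggested routes is carried out. The recursion route does not work as stated, because the hypothesis $\delta \geq n/3 - n/(8\,og_c(n))$ is a global minimum-degree condition that does not pass to the induced subgraph $G[B_2]$ (vertices of $B_2$ may have almost all their neighbors in $I\cup B_1$). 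More tellingly, in your argument the quantity $n/(8\,og_c(n))$ and the odd-girth function never actually enter the structural analysis, whereas they are the crux of the theorem; so the "coupling step" you defer is not a technicality but the missing idea.

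For comparison, the paper closes this gap as follows. It takes $A$ to be the vertices with at least $(1-\varepsilon')|I|$ neighbors in $I$ (your $B_1$ with $\gamma=\varepsilon'$ small) and $R=V\setminus(A\cup I)$, and first shows, using the \emph{maximality} of $I$ (not just its size), that there are \emph{no} edges between $R$ and $A$: a vertex $r\in R$ with a neighbor in $A$ has few neighbors in $I$, hence is adjacent to most of $A$, hence to no vertex of $I$, contradicting maximality. Consequently every vertex of $R$ has close to $|I|/2$ neighbors in $I$, which kills $C_5$'s in $G[R]$, and an edge-counting argument between a hypothetical short odd cycle and $I$ shows that, after removing the set $S$ of vertices of $R$ with fewer than $|I|/2-2g$ neighbors in $I$ (where $g=n/(8\,og_c(n))$), the graph $G[R\setminus S]$ has odd girth at least $og_c(n)$ and hence chromatic number at most $c$ --- this is exactly where the $n/(8\,og_c(n))$ slack is spent. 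Finally $S$ has linear degree inside $G[R]$, so it is covered by the first and second neighborhoods of a distance-$3$ packing of $O(1/\varepsilon')$ vertices, and $C_5$-freeness makes each such second neighborhood independent. You would need to supply arguments of this kind (maximality of $I$, the $C_5$-free structure of the leftover set, and the odd-girth counting that consumes the $g$ slack) to complete your proposal.
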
 

\begin{proof}
We show in this proof that $G$ has chromatic number bounded in terms of $c$ if
$\delta(G)\geq n/3-n/8og_c(n)$. Since $og_c(n)=O\left(n^{\frac{1}{c-1}}\right)$, we then choose $c$ such that $1/(c-1)<\eta$ and consider a large enough value of $n$ to conclude.

Denote $n/8og_c(n)$ by $g$. Pick some small $\varepsilon, \varepsilon' > 0$ with  $1/g <\!\!<\varepsilon <\!\!< \varepsilon'$.
By \cref{thm:alphachro}, we can conclude if $\alpha(G)<(2/3-\varepsilon)n$, so we can assume that $G$ contains a maximum independent set $I$ of size at least $(2/3-\varepsilon)n$. 
Let $A$ be the set of vertices with at least $(1-\varepsilon ')|I|$ neighbors in $I$, and $R=V\setminus (A\cup I)$. Note that $|I| \cdot (n/3-g) \leq e(I, V\setminus I) \leq |R| \cdot (1 - \varepsilon') |I| + (n - |R| - |I|) \cdot |I|$, hence $n/3-g\leq -|R| \varepsilon' + n - |I|$, and finally $|R| \leq (2n/3 - |I|+g)/\varepsilon'\leq \varepsilon n/\varepsilon' +g/\varepsilon '$. Thus $R$ is a very small set. Moreover, vertices in $I$ are adjacent to all but at most $\varepsilon n+g$ vertices outside of $I$, so they see most of $A$.

Assume for contradiction that a vertex $r$ of $R$ has a neighbor $x$ in $A$. Since $x$ is adjacent to at least $(1 - \varepsilon')|I|$ elements of $I$, the vertex $r$ is adjacent to at most $\varepsilon'|I|$ elements of $|I|$. 
Therefore $r$ is adjacent to at least $(n/3-g)-\varepsilon'|I|-\varepsilon n/\varepsilon'-g/\varepsilon '$ vertices outside of $I\cup R$, hence to most vertices in $A$. In particular, $r$ cannot be adjacent to a vertex of $I$ since it would then form a triangle with some vertex in $A$. This would contradict the maximality of $I$. 

Since there are no edges between $R$ and $A$, every vertex of $R$ is adjacent to at least nearly half of the vertices of $I$. Thus, there is no $C_5$ (cycle of length 5) in $G[R]$, as some vertex of $I$ would have three neighbors on this cycle, and form a triangle. 
This means that the second neighborhood (in $G[R]$) of every vertex of $R$ forms an independent set.
Let $S$ be the set of vertices in $R$ with at most $|I|/2-2g$ neighbors in $I$. 
Note that $G[R \setminus S]$ does not contain an odd cycle $C$ of length $2k+1 \leq og_c(n)$. Indeed, the total number of edges between $C$ and $I$ would be at least $(2k+1)|I|/2-og_c(n)\cdot 2g= k|I|+|I|/2-n/4>k|I|$, so some vertex of $I$ would have at least $k+1$ neighbors in $C$, thereby forming a triangle. 
So the odd girth of $G[R \setminus S]$ is at least $og_c(n)$, therefore $G[R \setminus S]$ has chromatic number at most $c$.

Now, every vertex $s \in S$ has at least $n/3 - g -|I|/2+2g$ neighbors in $R$ and since $|R| \leq (2n/3 - |I|+g)/\varepsilon '$, the size of the neighborhood of $s$ in $R$ divided by the size of $R$ is at least $\frac{n/3 - |I|/2+g}{2n/3 - |I|+g} \cdot \varepsilon '\geq \varepsilon '/2$. Hence $S$ consists of vertices with linear degree in $G[R]$ while $G[R]$ does not induce a $C_5$. 
Consider a maximum packing $P$ of vertices of $S$ at pairwise distance at least 3 in $G[R]$. 
Observe that $|P| \leq 2/\varepsilon'$. 
The union of the first and the second neighborhoods in $G[R]$ of the vertices of $P$ (which are independent sets for every $v\in P$) covers $S$ and has chromatic number at most $2|P| \leq 4/\varepsilon'$. 

Finally, G has bounded chromatic number if $\alpha(G)<(2/3-\varepsilon)n$, and when $\alpha(G)\geq (2/3-\varepsilon)n$, we have $\chi(G)\leq 2+c+4/\varepsilon'$ by bounding the chromatic number of $G[I],G[A],G[R\setminus S]$ and $G[S]$.
\end{proof}

This bound is essentially best possible as having minimum degree $n/3-n^{1-o(1)}$ is not sufficient to bound the chromatic number. 
Recall for this the construction of the Schrijver graph $S(\ell,k)$ (see~\cite{Sch78}): its vertices are the independent sets of size $\ell$ of the cycle $C$ of length $2\ell +k$ and its edges are the pairs $X,Y$ such that $X\cap Y=\emptyset$.
The chromatic number of $S(\ell,k)$ is $k+2$, identical to that of its supergraph, the Kneser graph (see~\cite{Lov78}). Assume that the cycle $C$ is on vertex set $[2\ell+k]$ with $\ell >\!\!> k$.
Select some large constant $K>\!\!>k$ and form the following \emph{Schrijver-Hajnal} graph $SH(\ell,k)$: add to $S(\ell,k)$ an independent set $A$ of size $(2\ell+k)\ell^K$ divided into subsets $A_1,\dots,A_{2\ell+k}$ of size $\ell ^K$. 
Connect every vertex $X$ of $S(\ell,k)$, seen as a subset of $[2\ell+k]$, to every vertex of $A_i$ for every $i\in X$. 
Finally, add an independent set $B$ of size $|A|/2$ completely connected to $A$.
By construction, $SH(\ell,k)$ is triangle-free, has chromatic number at least $k+2$ and has roughly $n\coloneqq3|A|/2$ vertices (we neglect the size $O((2\ell +k)^k)$ of the copy of $S(\ell,k)$ in this argument).
Let us compute the difference $d$ between $n/3$ and the minimum degree of $SH(\ell,k)$. 
The vertices of $B$ have degree close to $2n/3$ and the vertices of $A$ have degree more than $n/3$, so we just have to consider the vertices of $S(\ell,k)$. 
Then, $d=|A|/2-d_A(X)=|A| \cdot (1/2-\ell/(2\ell+k))= |A|\cdot k/(4\ell+2k)=k\ell^{K}/2$. 
Thus $d$ has order of magnitude $\ell^K$ while the order of magnitude of $n=3(2\ell+k)\ell^{K}/2$ is $\ell^{K+1}$. 
In particular, we can reach chromatic number $k$ with the minimum degree being $n^{1-\frac{1}{K+1}}$-close to $n/3$. 
This shows that shaving $n^{1-o(1)}$ from $n/3$ cannot guarantee bounded chromatic number.

\subsection{Homomorphism threshold of \texorpdfstring{$K_t$}{Kt}-free graphs}\label{subsec:homthreshold}

Answering a question of Thomassen, Łuczak~\cite{Luc06} showed that there exists a finite family ${\mathcal G}_{\varepsilon}$ of triangle-free graphs such that every triangle-free graph with minimum degree $(1/3+\varepsilon)n$ has a homomorphism to some graph in ${\mathcal G}_{\varepsilon}$. Brandt and Thomassé \cite{BT2004} completely described triangle-free graphs with minimum degree $(1/3 + \varepsilon)n$, and proved that they are all homomorphic to one of two graphs of size $O(1/\varepsilon)$, called Vega graph and Andr\'asfai graph. Goddard and Lyle~\cite{God11}, and independently Nikiforov \cite{nikiforov10} extended the result of Łuczak to arbitrary cliques. Oberkampf and Schacht~\cite{Obe20} gave a bound on the size of ${\mathcal G}_{\varepsilon}$ doubly exponential in $1/\varepsilon$, and showed that this can be achieved by a simple sampling, followed by two rounds of classification. This bound was recently reduced to singly exponential by Liu, Shangguan, Skokan and Xu \cite{LSSZ24}. Their ingenious proof relies on VC-dimension theory and classical tools from discrete geometry. 
Even more recently, Huang, Liu, Rong and Xu \cite{HLRX25} proved that for every $t \geq s \geq 3$, dense $K_s$-free graphs with bounded VC-dimension have a homomorphic $K_t$-free image of bounded size, where the density threshold depends on $s$ and $t$. Interestingly, for $t=s$ this is the usual homomorphism threshold, and when $t \to \infty$ this corresponds to the usual chromatic threshold (for $K_s$-free graphs of bounded VC-dimension).

The main goal of this section is to give an alternative, arguably more natural, proof of a bound singly exponential in $\text{poly}(1/\varepsilon)$, using DNL. The proof is direct: take a uniformly random sample of vertices, of size polynomial in $1/\varepsilon$, and classify the vertices according to the size of their common neighborhood with the sampled vertices. Then, the classes are independent sets and the graph obtained after contraction is $K_t$-free. 

To highlight the simplicity of the argument, we start by presenting it on triangle-free graphs.

\begin{theorem}\label{thm:oberkampf}
For every $\varepsilon >0$, there exists a family ${\mathcal G}_{\varepsilon}$ of triangle-free graphs, each of size exponential in $\text{poly}(1/\varepsilon)$, such that every $n$-vertex triangle-free graph with minimum degree $(1/3+\varepsilon)n$ has a homomorphism to some graph in ${\mathcal G}_{\varepsilon}$.
\end{theorem}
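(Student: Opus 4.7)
The plan is to construct a target graph $H$ of size $2^{\textup{poly}(1/\varepsilon)}$ as a quotient of $G$ via a type classification, and take $\mathcal{G}_\varepsilon$ to be the (finite) family of all triangle-free graphs on at most that many vertices. The whole argument follows the two-line recipe already sketched in the paper: \emph{(i)} sample a small set $X$ via DNL, \emph{(ii)} classify each vertex by its pattern of common neighborhoods with $X$.

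Concretely, apply \cref{lem:DNL-graph-anti} with $\delta=1/3$ (which is valid since triangle-freeness gives $N(v)\subseteq D_0(v)$, so $|D_0(v)|\geq (1/3+\varepsilon)n$) and sensitivity $\eta=\varepsilon$ to obtain a set $X\subseteq V$ of size $O(1/\varepsilon)$ such that $\bigcup_{x\in X}D_\eta(x)=V$. For each $v\in V$, define its \emph{type} $\tau(v)\in\{0,1\}^X$ by $\tau(v)[x]=1$ iff $|N(v)\cap N(x)|\leq \eta n$; there are at most $2^{|X|}=2^{O(1/\varepsilon)}$ types. The first point is that each type class is independent: if $uv\in E$ with $\tau(u)=\tau(v)$, then triangle-freeness and minimum degree force, for every $x\in X$,
\[
n \;\geq\; |N(u)\cup N(v)\cup N(x)| \;\geq\; 3(1/3+\varepsilon)n - |N(u)\cap N(x)| - |N(v)\cap N(x)|,
\]
so $|N(u)\cap N(x)|+|N(v)\cap N(x)|\geq 3\varepsilon n > 2\eta n$. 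Hence for every $x$ at least one of $\tau(u)[x],\tau(v)[x]$ must be $0$, and since the types agree, both are $0$. Then $\tau(u)=0^X$, contradicting the domination property of $X$.

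Next, let $H$ have vertex set the set of types occurring, with $\tau_1\tau_2\in E(H)$ whenever some $u\in C_{\tau_1}$, $v\in C_{\tau_2}$ satisfy $uv\in E(G)$. The map $v\mapsto \tau(v)$ is by construction a homomorphism $G\to H$, and $|V(H)|\leq 2^{O(1/\varepsilon)}$. The remaining point, which is the main obstacle, is that $H$ must be triangle-free. The same inclusion-exclusion shows that edges in $G$ produce pairwise \emph{disjoint} types (viewing types as subsets of $X$), so a triangle $\tau_1\tau_2\tau_3$ of $H$ would consist of three pairwise disjoint non-empty subsets of $X$. The plan for deriving a contradiction is to fix witnesses $x_i\in \tau_i$ and exploit that every vertex of $C_{\tau_i}$ has $\leq \eta n$ common neighbors with $x_i$ while every vertex of $C_{\tau_j}$ for $j\neq i$ has $>\eta n$ common neighbors with $x_i$; combining these quantitative constraints with inclusion-exclusion on $N(u)\cup N(v)\cup N(w)\cup N(x_1)\cup N(x_2)\cup N(x_3)$ for candidate triangle vertices $u\in C_{\tau_1}, v\in C_{\tau_2}, w\in C_{\tau_3}$ should force many edges between the classes, and then a triangle inside $G$.

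The hardest step will be this last extraction: clustering only gives control on common neighborhoods, not on actual adjacencies, so bare type-equality does not \emph{a priori} prevent a configuration where every pair of classes has only a single edge. To overcome this I expect to need a refined three-valued encoding of the type using both thresholds $\eta/2$ and $\eta$, with $X$ drawn as a net for the refined difference tri-hypergraph of \cref{lem:VCdim-diff} so that it still has VC-dimension $\textup{poly}(1/\varepsilon)$ and the sample size remains $\textup{poly}(1/\varepsilon)$. The finer encoding effectively pins down the bipartite structure between classes well enough that a triangle in $H$ produces sufficiently many witness edges to trigger a triangle-counting contradiction inside the triangle-free graph $G$.
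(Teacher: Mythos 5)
Your first two steps (the net $X$ and the independence of the type classes) are fine, but the proof has a genuine gap exactly where you say it does: triangle-freeness of the quotient is not established, and the plan you sketch for it would not close it. The obstruction is that the net you take from \cref{lem:DNL-graph-anti} only guarantees, for each single vertex $v$, some $x\in X$ with $|N(v)\cap N(x)|\le \eta n$; it gives no \emph{common} witness for a pair of vertices. The paper closes precisely this gap by using a $\delta$-net for the intersection tri-hypergraph $T\cap T$ (\cref{lem:DNLS-intersection}, built on \cref{lem:VCdim-set-intersection}): whenever $|D(u)\cap D(v)|\ge \varepsilon n$ there is a single $x\in X$ with $x\in D_\varepsilon(u)\cap D_\varepsilon(v)$. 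With that pairwise property, one shows that if \emph{some} edge $p_ip_j$ joins two classes, then \emph{every} cross pair $(v_i,v_j)$ has $|N(v_i)\cap N(v_j)|<\varepsilon n$ (otherwise the common witness $x$ for $(v_i,v_j)$ transfers, by type-equality, to $(p_i,p_j)$, and $|N(p_i)\cup N(p_j)\cup N(x)|\ge 3(1/3+\varepsilon)n-2\varepsilon n>n$). A triangle in the quotient then dies by the same union-of-neighborhoods count applied to \emph{arbitrary} representatives $u,v,w$ of the three classes: $|N(u)\cup N(v)\cup N(w)|>3(1/3+\varepsilon)n-3\varepsilon n=n$. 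Note in particular that you never need to extract an actual triangle of $G$ from the witness edges, which is the (correct) worry that stalls your argument; the contradiction is purely a counting one, so the ``single edge between each pair of classes'' configuration is ruled out without producing more edges.

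Your proposed fallback (a finer encoding via the refined difference tri-hypergraph, i.e.\ heading toward the clustering statement \cref{lem:DNLS-cluster-intro}) could in fact be turned into a correct alternative: if the classes are $(\varepsilon',\eta')$-clusters with $\varepsilon',\eta'\ll\varepsilon$, then an edge $u_1v_1$ between two clusters forces every cross pair $(a,b)$ to satisfy $|N(a)\cap N(b)|\le 2\eta' n$, because all but $2\eta' n$ of the common neighbours $w\in D(a)\cap D(b)$ would lie in $D_{\varepsilon'}(u_1)\cap D_{\varepsilon'}(v_1)$ and then $|N(w)\cup N(u_1)\cup N(v_1)|>n$; after that the same triple-union count finishes the job. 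But as written your proposal neither carries out this computation nor identifies the intersection-net lemma, and its stated mechanism (``force many edges between the classes, and then a triangle inside $G$'') is not a viable route, so the crucial step of the theorem remains unproved.
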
 

\begin{proof}
Let $G = (V, E)$ be an $n$-vertex triangle-free graph with minimum degree $(1/3+\varepsilon)n$.
Consider the set system $\mF = \{N(v) : v \in V\}$.
By \cref{lem:DNLS-intersection}, there exists $X \subseteq V$ of size $\text{poly}(1/\varepsilon)$ such that for every $u, v \in V$ such that $|D(u) \cap D(v)| \geq \varepsilon|V|$, there exists $x \in X$ such that $x \in D_{\varepsilon}(u) \cap D_{\varepsilon}(v)$.
Let $\mathcal{P} = (P_1, \ldots, P_t)$ be the partition of $V$ such that $u, v \in V$ are in the same part if and only if for every $x \in X$, $x \in D_{\varepsilon}(u)$ if and only if $x \in D_{\varepsilon}(v)$.
Note first that $t \leq 2^{\text{poly}(1/\varepsilon)}$. Note also that by definition, if $x \in D_{\varepsilon}(u)$ then $|N(u) \cap N(x)| \leq \varepsilon n$.

Observe that for every $v \in V$ we have $N(v) \subseteq D(v)$ so $|D(v)| \geq \varepsilon n$. Thus, by definition of $X$ (by taking $u = v$), for every $v \in V$ there exists $x \in X \cap D_{\varepsilon}(v)$.
Now, consider any two vertices $u, v$ in the same part $P_i$ and suppose that $uv \in E(G)$. 
Consider $x \in X \cap D_{\varepsilon}(v)$. Since $u$ and $v$ are in the same part then $x \in D_{\varepsilon}(u)$.
Thus, $|N(u) \cup N(v) \cup N(x)| \geq 3 \cdot (1/3 + \varepsilon)n - 2\varepsilon n > n$, a contradiction. This proves that each $P_i$ is an independent set.

Thus, $G \to G/\mathcal{P}$ is a homomorphism. To conclude, it suffices to show that $G/\mathcal{P}$ is triangle-free. To do so, we argue that whenever there is an edge between $P_i$ and $P_j$, then $|N(v_i) \cap N(v_j)| < \varepsilon n$ for every $v_i \in P_i$ and $v_j \in P_j$. This implies that $G/\mathcal{P}$ is triangle-free, otherwise we would have three vertices $u, v, w$ such that $|N(u) \cup N(v) \cup N(w)| > 3 \cdot (1/3 + \varepsilon)n - 3\varepsilon n = n$, a contradiction.

Consider two parts $P_i, P_j$ such that there is an edge $p_ip_j \in E(G)$ between $P_i$ and $P_j$, and suppose by contradiction that there exist $v_i \in P_i$ and $v_j \in P_j$ such that $|N(v_i) \cap N(v_j)| \geq \varepsilon n$.
Then, $|D(v_i) \cap D(v_j)| \geq \varepsilon n$ so by definition of $X$, there exists $x \in X$ such that $x \in D_{\varepsilon}(v_i) \cap D_{\varepsilon}(v_j)$.
Since $v_i$ and $p_i$ are in the same part and $x \in X$ then $x \in D_{\varepsilon}(p_i)$, and similarly $x \in D_{\varepsilon}(p_j)$.
Then, $|N(p_i) \cup N(p_j) \cup N(x)| \geq 3 \cdot (1/3 + \varepsilon)n - 2\varepsilon n > n$, a contradiction.
\end{proof}

Observe that in the previous proof, a uniformly random sample of $\text{poly}(1/\varepsilon)$ vertices satisfies the same properties as $X$ with constant probability. This gives a simple polynomial-time randomized algorithm for constructing a homomorphism from $G$ to a graph in $\mathcal{G}_{\varepsilon}$.

We now state and prove the statement for abitrary cliques.

\begin{theorem}\label{thm:homomorphism-kt-free}
For every $\varepsilon>0$, every $K_t$-free $n$-vertex graph with minimum degree $\left(\frac{2t-5}{2t-3} + \varepsilon\right)n$ has a $K_t$-free homomorphic image of size $2^{\textup{poly}(1/\varepsilon)}$.
\end{theorem}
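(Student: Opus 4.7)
I would follow the template of the proof of \cref{thm:oberkampf} (the triangle-free case), replacing single vertices by $(t-2)$-cliques. Consider the set system $\mathcal{F} = \{N(K) : K \text{ is a } (t-2)\text{-clique of } G\}$ on ground set $V$. The structural key is that in a $K_t$-free graph, an edge $uv$ forbids any $(t-2)$-clique from lying in $N(u) \cap N(v)$ (else $K \cup \{u,v\}$ would form a $K_t$), so $\mathcal{F}_{uv} = \emptyset$ and $N(v) \subseteq D(v)$. Consequently, the disjointness-ratio of $\mathcal{F}$ is at least $\delta(G)/n \geq (2t-5)/(2t-3) + \varepsilon$.

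Applying \cref{lem:DNLS-intersection} to $\mathcal{F}$ yields a sample $X \subseteq V$ of size $\textup{poly}(1/\varepsilon)$ such that for every pair $u, v$ of vertices with $|D(u) \cap D(v)|$ linearly large, there exists $x \in X$ with $x \in D_\varepsilon(u) \cap D_\varepsilon(v)$; in graph terms, both $N(u) \cap N(x)$ and $N(v) \cap N(x)$ contain at most $\varepsilon|\mathcal{F}|$ $(t-2)$-cliques. Classify each vertex $v$ by $T(v) := \{x \in X : x \in D_\varepsilon(v)\}$, producing $2^{|X|} = 2^{\textup{poly}(1/\varepsilon)}$ classes. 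Independence of the classes and $K_t$-freeness of the quotient should follow from an inclusion-exclusion count on $(t-2)$-cliques contained in the union of the relevant neighborhoods, using $|\mathcal{F}_{uv}| = 0$ whenever $uv$ is an edge (by $K_t$-freeness) and the DNL-type pairwise bounds for pairs involving $X$. The quotient argument is the usual propagation: if two classes are joined by some edge, the triangle-free-style argument upgrades this to $|\mathcal{F}_{v_iv_j}| \leq \varepsilon|\mathcal{F}|$ for \emph{all} pairs of representatives, after which a count on $t$ representatives from $t$ pairwise-adjacent classes closes the proof.

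The principal obstacle lies in making the inclusion-exclusion contradict $K_t$-freeness. A vertex-wise lower bound on $c(v) := |\mathcal{F}_v|$ is needed, but the natural double-counting $\sum_v c(v) = \sum_K |N(K)| \geq |\mathcal{F}| \cdot (n/(2t-3) + \Omega(\varepsilon)n)$ gives only $c(v) \geq |\mathcal{F}|/(2t-3)$ on average, which for $t \geq 4$ falls short of the $1/t$ threshold required by a straightforward $t$-vertex union bound. A complete proof therefore has to invoke more of the $K_t$-free structure, for instance by exploiting $|\mathcal{F}_{v_1 \cdots v_{t-1}}| = 0$ whenever $v_1, \ldots, v_{t-1}$ form a $(t-1)$-clique (which comes for free from $K_t$-freeness at the top of the alternating inclusion-exclusion), or by replacing \cref{lem:DNLS-intersection} with \cref{lem:DNLS-intersection-higher-order} applied to the set $\mathcal{T}$ of $(t-2)$-tuples forming a $(t-2)$-clique in $G$. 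In the latter approach, each adjacent pair $u, v$ in a common class would directly yield a sample clique $K = (x_1, \ldots, x_{t-2}) \in \mathcal{C}$ with all $|N(u) \cap N(x_i)|, |N(v) \cap N(x_i)| \leq \varepsilon n$, and one would then argue, via a $t$-fold inclusion-exclusion combining the min-degree bound $d \geq (2t-5)/(2t-3)+\varepsilon$ and the clique structure of $K$, that $\{u,v\} \cup K$ is forced to extend to a genuine $K_t$, contradicting $K_t$-freeness. This is where the exact value $(2t-5)/(2t-3)$ of the threshold enters the arithmetic, and matching it cleanly is the crux of the proof.
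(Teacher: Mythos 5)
Your overall architecture matches the paper's: the set system of $(t-2)$-clique extensions $\mF=\{E(K): K \text{ a } (t-2)\text{-clique}\}$, the observation that $\mF_{uv}=\emptyset$ for edges $uv$, the switch from \cref{lem:DNLS-intersection} to \cref{lem:DNLS-intersection-higher-order} over the family of $(t-2)$-cliques, and the classification of vertices by which sampled cliques they are almost-disjoint from. That is exactly the route the paper takes, and your diagnosis that vertex-wise sampling alone is too weak for $t\geq 4$ is also correct.

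However, there is a genuine gap at precisely the point you flag as ``the crux,'' and the fix you sketch would not work. The engine of the paper's proof is a standalone combinatorial claim: under minimum degree $\left(\frac{2t-5}{2t-3}+\varepsilon\right)n$, there cannot exist $t$ vertices that are pairwise $\gamma\varepsilon$-disjoint (for a suitable $\gamma=\gamma(t)$). Your proposal to obtain a contradiction by a ``$t$-fold inclusion-exclusion'' forcing $\{u,v\}\cup K$ to extend to a genuine $K_t$ cannot be made to work directly: with this degree threshold each vertex has $\frac{2-\varepsilon(2t-3)}{2t-3}n$ non-neighbors, so $t$ vertices can have no common neighbor at all, and no union bound over $t$ neighborhoods closes the argument — this is the same obstruction you already identified for the vertex-wise version, and it does not disappear when the $t$ vertices include a clique $K$. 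The paper instead proves the claim by an extremal device: among all $t$-sets of pairwise $\gamma\varepsilon$-disjoint vertices, pick one, $X$, spanning the maximum number of edges; maximality forbids any vertex adjacent to all of $X$, a degree count shows the set $Y$ of vertices with exactly $t-1$ neighbors in $X$ has size at least $\left(\frac{2t-6}{2t-3}+\varepsilon t\right)n$, and maximality again forces a non-adjacent pair $x,x'\in X$ both complete to $Y\setminus X$; then \cref{lem:find-many-cliques} produces $\Omega_t(\varepsilon n^{t-2})$ cliques of size $t-2$ inside $Y\setminus X$, all of which lie in $\mF_{xx'}$, contradicting $\gamma\varepsilon$-disjointness of $x$ and $x'$. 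Without this claim (or an equivalent), your independence-of-classes step and the $K_t$-freeness of the quotient do not follow: the paper's propagation lemma (if $u,v$ share a cluster and $w$ is almost-disjoint from $u$, then also from $v$) and the independence of clusters (apply it with $w=v$ and use \cref{lem:find-many-cliques} inside $N(v)$) both invoke the no-$t$-pairwise-disjoint claim, so the missing ingredient is load-bearing rather than a technical detail to ``match cleanly.''
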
 

The proof of this result is exactly the same as the proof of \cref{thm:oberkampf}, with a key subtlety, which is that in the classification step, instead of considering the neighborhoods we consider the \emph{$(t-2)$-clique neighborhoods}.
Formally, for every vertex $v$, we denote by $N_{k}(v)$ the set of cliques $K$ of size $k$ such that $\{v\}\cup K$ is a clique of size $k+1$. Observe that $N_1$ is the usual neighborhood of a vertex.
The key object is the \emph{clique incidence matrix} $A_{k}=(a_{K,v})$ of $G$, whose columns are indexed by the vertices $v$, and whose rows are indexed by the cliques $K$ of size $k$ of $G$, such that $a_{K,v}=1$ if $K\in N_{k}(v)$ and $a_{K,v}=0$ otherwise.
Note that $A_1$ is the adjacency matrix of $G$.
For every clique $K$ of size $k$, let $E(K) = \{v \in V : K \in N_k(v)\}$ be the set of vertices which extend $K$ to a clique of size $k+1$.
When $G$ is $K_t$-free, the matrix $A_{t-2}$ is the incidence matrix of the set system $\mF = \{E(K) : K \text{ is a clique of size }t-2\}$, which has the property that every set is an independent set, i.e. that $\mF_{uv} = \emptyset$ whenever $uv \in E(G)$.

To give some intuition, we say that two vertices $u, v$ are distant if $\mF_{uv}$ is small.
Once more, if $u$ and $v$ are distant, we think of them as being adjacent, and otherwise we think of them as being non-adjacent.
Using that $G$ is $K_t$-free with large minimum degree, we show that there cannot be $t$ pairwise distant vertices.
Using DNL, we then find a family $\mathcal{C}$ of cliques of size $t-2$ in $G$, of size $\textup{poly}(1/\varepsilon)$, such that whenever two vertices are distant to many cliques of size $t-2$, at least one of these cliques is in $\mathcal{C}$.
We then classify the vertices according to their distances to the cliques in $\mathcal{C}$.
Using the properties of $\mathcal{C}$ and the fact that there are no $t$ pairwise distant vertices, it follows that each partition class is an independent set and that the quotient graph is $K_t$-free.

Once again, such a family $\mC$ can actually be obtained with constant probability by sampling uniformly random vertices. Hence, this proof also gives a simple polynomial-time randomized algorithm for constructing the desired homomorphism.

We start with a Lemma stating that in a graph with large minimum degree, every large subset of vertices contains many large cliques. We will use multiple times in the course of the proofs of \cref{thm:homomorphism-kt-free,thm:regularkt}.

\begin{lemma}\label{lem:find-many-cliques}
If $G$ is an $n$-vertex graph with minimum degree $(1-c)n$ and $X$ is a subset of vertices of size at least $(sc + \varepsilon)n$, there are at least $\frac{\varepsilon \cdot c^s \cdot n^{s+1}}{(s+1)!}$ cliques of size $s+1$ in $G[X]$.
\end{lemma}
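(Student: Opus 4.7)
The plan is a straightforward induction on $s$. For $s=0$, we simply need at least $\varepsilon n$ cliques of size $1$ whenever $|X|\geq \varepsilon n$, which is immediate since the vertices of $X$ themselves are the desired $1$-cliques, and $\frac{\varepsilon c^0 n^1}{1!} = \varepsilon n$.

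For the inductive step, suppose the statement holds for $s-1$, and let $X \subseteq V(G)$ with $|X| \geq (sc+\varepsilon)n$. The key observation is that every vertex $v \in X$ has at most $cn$ non-neighbors in $G$, so $|N(v) \cap X| \geq |X| - cn \geq ((s-1)c + \varepsilon)n$. The induced subgraph on $N(v) \cap X$ has the same minimum degree bound (vertices still have at most $cn$ non-neighbors in the whole graph), so by the induction hypothesis it contains at least $\frac{\varepsilon c^{s-1} n^s}{s!}$ cliques of size $s$. Each such $s$-clique, together with $v$, forms an $(s+1)$-clique inside $G[X]$.

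I would finish by a double counting of pairs $(v,K)$ where $v \in X$ and $K$ is an $s$-clique of $G[X]$ such that $\{v\}\cup K$ is a clique of size $s+1$. Summing the lower bound above over all $v \in X$, the number of such pairs is at least $|X|\cdot \frac{\varepsilon c^{s-1} n^s}{s!}$. Each $(s+1)$-clique of $G[X]$ is counted exactly $s+1$ times in this sum, so the number of $(s+1)$-cliques in $G[X]$ is at least
$$\frac{|X|}{s+1}\cdot \frac{\varepsilon c^{s-1} n^s}{s!} \;\geq\; \frac{(sc+\varepsilon)n}{s+1}\cdot \frac{\varepsilon c^{s-1} n^s}{s!} \;\geq\; \frac{\varepsilon c^{s}\, n^{s+1}}{(s+1)!},$$
where the last inequality uses $sc+\varepsilon \geq c$ for $s\geq 1$.

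There is no real obstacle here; the only thing to be careful about is that the parameters line up cleanly across the induction, i.e.\ that shrinking $X$ to $N(v)\cap X$ costs exactly $cn$ in size, which is precisely what the induction hypothesis can absorb by dropping $s$ by one while keeping $\varepsilon$ unchanged.
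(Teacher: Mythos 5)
Your proof is correct. The base case $s=0$ needs no degree hypothesis; in the inductive step you correctly use that every vertex misses at most $cn$ vertices, so $|N(v)\cap X|\ge |X|-cn\ge ((s-1)c+\varepsilon)n$ and the hypothesis applies to $N(v)\cap X$ with the same $\varepsilon$; the double count is valid because each $(s+1)$-clique $C$ of $G[X]$ corresponds to exactly the $s+1$ pairs $(v,C\setminus\{v\})$ with $v\in C$; and the final inequality only needs $sc+\varepsilon\ge c$, which holds for $s\ge 1$. The paper proves the same bound in one shot by a probabilistic argument: sample an ordered $(s+1)$-tuple of vertices uniformly with repetition, bound via the chain rule the probability that it lies in $X$ and induces a clique by $\prod_{i=1}^{s+1}\bigl((s-i+1)c+\varepsilon\bigr)\ge \varepsilon c^{s}$, and divide the resulting count of ordered tuples, at least $\varepsilon c^{s}n^{s+1}$, by the $(s+1)!$ orderings of each clique. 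The two arguments rest on the same observation — each further vertex of the clique excludes at most $cn$ candidates — and your telescoping divisions by $s+1,s,\ldots$ reassemble exactly the $(s+1)!$ that the paper gets from ordered tuples; indeed, unrolling your induction gives the same intermediate product $\varepsilon(c+\varepsilon)(2c+\varepsilon)\cdots(sc+\varepsilon)$ before relaxing to $\varepsilon c^{s}$, so the bounds are identical. The paper's version is slightly more compact, while yours is more elementary (no probabilistic language) and makes the recursive structure explicit.
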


\begin{proof}
Take a random tuple $S$ of $s+1$ vertices of $G$ by sampling with repetition uniformly random vertices.
Let $E_0$ be the event that $S \subseteq X$ and induces a clique in $G$.
Write $S = (x_1, \ldots, x_{s+1})$.
Since $|X| \geq (sc+\varepsilon)n$ and since every vertex has at most $cn$ non-neighbors, for every $i \geq 1$ we have $\mathbb{P}[x_i \in X \cap N(x_1, \ldots, x_{i-1})] \geq sc + \varepsilon - (i-1)c = (s-i+1)c + \varepsilon$.
By the chain rule, $\mathbb{P}[E_0] \geq \prod_{i = 1}^{s+1} \left((s-i+1)c + \varepsilon\right) \geq c^s \cdot \varepsilon$.
Since every clique of size $s+1$ is created exactly $(s+1)!$ times in this process, the number of cliques of size $s+1$ in $G[X]$ is at least $\frac{\varepsilon \cdot c^s \cdot n^{s+1}}{(s+1)!}$.
\end{proof}

We now move to the proof of \cref{thm:homomorphism-kt-free}.

\begin{proof}[Proof of \cref{thm:homomorphism-kt-free}.]
Let $G$ be such a graph and set $\gamma = \frac{\left(2/2t-3\right)^{t-3}}{(t-2)!}$.
We can assume that $n \geq \frac{t}{(t-1) \varepsilon}$, otherwise $G$ has size $2^{\textup{poly}(1/\varepsilon)}$.
Let $\mK$ be the set of all cliques of size $t-2$ in $G$.

Consider the set system $\mF = \{E(K) : K \in \mK\}$.
Note that $|\mF| = |\mK| < n^{t-2}$ (assuming $t \geq 4$).
For $u, v \in V$ and $\varepsilon' > 0$, say that $u$ and $v$ are $\varepsilon'$-disjoint if $|\mF_{uv}| \leq \varepsilon' \cdot |\mF|$.
Note that if $uv \in E(G)$ then $u$ and $v$ are 0-disjoint. 

\begin{claim}
There cannot be $t$ vertices which are pairwise $\gamma\varepsilon$-disjoint.
\end{claim}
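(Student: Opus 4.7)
The plan is to argue by contradiction: assume that vertices $v_1,\dots,v_t$ are pairwise $\gamma\varepsilon$-disjoint and play a lower bound on $\sum_i |\mF_{v_i}|$ coming from the minimum-degree assumption against an upper bound that exploits both the pairwise $\gamma\varepsilon$-disjointness and the $K_t$-freeness of $G$.

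The first step is to obtain a lower bound on $|\mF_{v_i}|$ by applying \cref{lem:find-many-cliques} to the induced subgraph on $N(v_i)$ with $s=t-3$ and $c=\tfrac{2}{2t-3}$. Since $|N(v_i)|\geq\bigl(\tfrac{2t-5}{2t-3}+\varepsilon\bigr)n=\bigl(sc+\tfrac{1}{2t-3}+(t-2)\varepsilon\bigr)n$, the lemma yields
\[
|\mF_{v_i}|\geq \frac{\bigl(\tfrac{1}{2t-3}+(t-2)\varepsilon\bigr)\bigl(\tfrac{2}{2t-3}\bigr)^{t-3}}{(t-2)!}\,n^{t-2},
\]
together with $|\mF|\leq \binom{n}{t-2}\leq \tfrac{n^{t-2}}{(t-2)!}$. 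The second step is the key structural observation coming from $K_t$-freeness: for every $(t-2)$-clique $K$, the set $I(K)\coloneqq \{i:K\subseteq N(v_i)\}$ is an independent set of $G$, since any edge $v_iv_j$ inside $I(K)$ would complete $K$ to a $K_t$. In particular, $|\mF_{v_iv_j}|=0$ whenever $v_iv_j\in E(G)$, so the pairwise $\gamma\varepsilon$-disjointness is a nontrivial constraint only on non-edge pairs.

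The third step is a counting argument. Writing $T=\sum_i|\mF_{v_i}|=\sum_K f(K)$ with $f(K)=|I(K)|$, the pairwise $\gamma\varepsilon$-disjointness yields $\sum_K \binom{f(K)}{2}=\sum_{\{i,j\}}|\mF_{v_iv_j}|\leq \binom{t}{2}\gamma\varepsilon|\mF|$. Combining a Bonferroni lower bound $|\bigcup_i\mF_{v_i}|\geq T-\binom{t}{2}\gamma\varepsilon|\mF|$ with the trivial upper bound $|\bigcup_i\mF_{v_i}|\leq|\mF|$ gives $T\leq|\mF|(1+\binom{t}{2}\gamma\varepsilon)$, which after substituting the bounds on $|\mF_{v_i}|$ and $|\mF|$ rearranges to an inequality of the shape $t\alpha(t-2)!\leq 1+\binom{t}{2}\gamma\varepsilon$ where $\alpha$ is the ratio above. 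The calibration $\gamma=\tfrac{(2/(2t-3))^{t-3}}{(t-2)!}$ is chosen precisely so that the $\varepsilon$-free part of this inequality is an equality, and the $\Theta(\varepsilon)$ gain on the lower-bound side (coming from the $(t-2)\varepsilon$ improvement in the clique count in $N(v_i)$) must be absorbed by the $\Theta(\varepsilon)$ slack on the upper-bound side (coming from the overlap term); a direct comparison of leading $\varepsilon$-coefficients gives the contradiction.

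The main obstacle is the tightness of these constants. A crude inclusion–exclusion is exactly on the knife edge, so the argument has to extract the linear-in-$\varepsilon$ gain with care. When the residual $\varepsilon$-gap is too small, the slack can be reinforced by excluding from $\bigcup_i\mF_{v_i}$ the $(t-2)$-cliques that contain at least one $v_i$ — by the $K_t$-free structural step, such a clique lies in no $\mF_{v_j}$ unless some $v_iv_j$ is an edge, and the number of such cliques can be lower-bounded by another application of \cref{lem:find-many-cliques} with $s=t-4$. I expect this combination (sharp lower bound on $|\mF_{v_i}|$, $K_t$-free independence restriction on $I(K)$, and Bonferroni refined by the exclusion of cliques hitting $\{v_1,\dots,v_t\}$) to yield the claim for all $t\geq 3$.
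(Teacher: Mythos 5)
Your global averaging argument has a genuine quantitative gap, and it is not a matter of sharpening constants: the two sides of your inequality are not on a knife edge for any $t\geq 4$. Applying \cref{lem:find-many-cliques} inside $N(v_i)$ with $c=\frac{2}{2t-3}$ and $s=t-3$ gives $|\mF_{v_i}|\geq \frac{(\frac{1}{2t-3}+\varepsilon)\,c^{t-3}}{(t-2)!}\,n^{t-2}$ (note that your slack $(t-2)\varepsilon$ is only obtained if you also replace $c^{t-3}$ by $(c-\varepsilon)^{t-3}$, so you cannot have both), while your upper bound uses $|\mF|\leq \frac{n^{t-2}}{(t-2)!}$. After clearing the common factor $\frac{n^{t-2}}{(t-2)!}$, the $\varepsilon$-free parts you need to compare are $\frac{t}{2t-3}\left(\frac{2}{2t-3}\right)^{t-3}$ against $1$. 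These coincide only at $t=3$; for $t=4$ the left side is $\frac{8}{25}$, for $t=5$ it is $\frac{20}{343}$, and it decays super-exponentially in $t$. The choice $\gamma=\frac{(2/(2t-3))^{t-3}}{(t-2)!}$ has no bearing on this: $\gamma$ only multiplies the $O(\varepsilon)$ overlap term on the right, so it cannot repair a constant-factor (in fact factorial-size) deficit on the $\varepsilon$-free level. Your proposed reinforcement — excluding the $(t-2)$-cliques meeting $\{v_1,\dots,v_t\}$, counted via \cref{lem:find-many-cliques} with $s=t-4$ — only removes $O(n^{t-3})$ sets, which is lower order in $n$ and again cannot close a constant-factor gap. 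Concretely, in the balanced complete $(t-1)$-partite graph (which satisfies the degree hypothesis), $t\cdot\frac{(1/(2t-3))\,c^{t-3}}{(t-2)!}n^{t-2}$ is well below the actual number of $(t-2)$-cliques already for $t=4$, so the inequality you are trying to violate simply holds there: minimum degree alone does not force $\sum_i|\mF_{v_i}|$ to exceed $|\mF|$, and no Bonferroni refinement over an arbitrary pairwise-disjoint $t$-set can manufacture a contradiction.

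The missing idea is to exploit the freedom in choosing \emph{which} $t$ pairwise $\gamma\varepsilon$-disjoint vertices to look at, and then to derive the contradiction from a single well-chosen pair rather than from a global count. The paper takes a set $X$ of $t$ pairwise $\gamma\varepsilon$-disjoint vertices maximizing $e(G[X])$; since adjacent vertices are automatically $0$-disjoint, any vertex adjacent to all of $X$ (or to $t-1$ vertices of $X$ whose non-neighbor has degree $t-2$ in $X$) could be swapped in to increase the edge count, and this maximality forces the existence of a non-adjacent pair $x,x'\in X$ that is complete to the set $Y\setminus X$ of vertices with $t-1$ neighbors in $X$. A degree count shows $|Y\setminus X|\geq\left(\frac{2t-6}{2t-3}+\varepsilon\right)n$, and then \cref{lem:find-many-cliques} applied \emph{inside this common neighborhood} (with $c=\frac{2}{2t-3}$, $s=t-3$, for which $sc=\frac{2t-6}{2t-3}$ — this is where $\gamma$ is calibrated) produces at least $\gamma\varepsilon n^{t-2}>\gamma\varepsilon|\mF|$ cliques of size $t-2$ extended by both $x$ and $x'$, contradicting their $\gamma\varepsilon$-disjointness. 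Your independence observation about $I(K)$ is correct but is not, by itself, enough to rescue the averaging route; the structural swap argument (or something playing its role) is essential.
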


\begin{proof}
By contradiction, suppose that there are $t$ vertices $x_1, \ldots, x_t$ which are pairwise $\gamma \varepsilon$-disjoint.
Take such a set $X = \{x_1, \ldots, x_t\}$ which induces as many edges as possible.
Since $G$ is $K_t$-free, there is a non-edge in $G[X]$.
Furthermore, by maximality of $e(G[X])$, no vertex is adjacent to all vertices of $X$.

Let $Y$ be the set of vertices of $G$ which have $t-1$ neighbors in $X$, and let $|Y| = \beta \cdot n$.
On the one hand, $\sum_{x \in X}d(x) \geq \left(\frac{2t-5}{2t-3} + \varepsilon\right) \cdot n \cdot t$.
On the other hand, $\sum_{x \in X}d(x) = \sum_{v \in V}d_X(v) \leq (t-1) \cdot \beta \cdot n + (t-2) \cdot (1 - \beta) \cdot n$.
This implies $\beta \geq \frac{2t-6}{2t-3} + \varepsilon \cdot t$.
Since $n \geq \frac{t}{(t-1)\varepsilon}$ then $|Y - X| \geq |Y| - t \geq \left(\frac{2t-6}{2t-3} + \varepsilon\right)n$.

Since there exists a vertex outside of $X$ with degree $t-1$ on $X$, the maximality of $e(G[X])$ implies that every vertex of $X$ has degree $t-1$ or $t-2$ on $X$.
Furthermore, if a vertex outside of $X$ has degree $t-1$ on $X$, its non-neighbor cannot be a vertex of degree $t-2$, again by maximality of $e(G[X])$.
Consider $x \in X$ of degree $t-2$ in $X$ (which exists since there is a non-edge in $G[X]$), and let $x' \in X$ be the non-neighbor of $x$ in $X$.
Then, $x$ and $x'$ are adjacent to all the vertices in $Y-X$.
Since $|Y - X| \geq \left(\frac{2t-6}{2t-3} + \varepsilon\right)n$, \cref{lem:find-many-cliques} applied with $c = \frac{2}{2t-3}$ and $s = t-3$ implies that there are at least $\gamma \cdot \varepsilon \cdot n^{t-2}$ many cliques of size $t-2$ in $G[Y-X]$.
This means that $|\mF_{uv}| \geq \gamma \cdot \varepsilon \cdot n^{t-2} > \gamma \cdot \varepsilon \cdot |\mF|$ so $u$ and $v$ are not $\gamma \varepsilon$-disjoint, a contradiction.
\end{proof}

For $\varepsilon' > 0, u \in V$ and $C= \{v_1, \ldots, v_{t-2}\} \in \mK$, say that $u$ is $\varepsilon'$-disjoint of $C$ if $u$ is $\varepsilon'$-disjoint of every $v_i$.
We denote by $D^{\mK}(v)$ the set of all $C \in \mK$ which are $0$-disjoint of $v$ and by $D^{\mK}_{\varepsilon'}(v)$ the set of all $C \in \mK$ which are $\varepsilon'$-disjoint of $v$.
By \cref{lem:DNLS-intersection-higher-order} there exists $\mC \subseteq \mK$, a family of $(t-2)$-cliques of size $\text{poly}(1/\varepsilon)$ such that for every $u, v \in V$ such that $|D^{\mK}(u) \cap D^{\mK}(v)| \geq \gamma\varepsilon|\mK|$, there exists a clique $C \in \mC$ such that $C \in D^{\mK}_{\gamma\varepsilon}(u) \cap D^{\mK}_{\gamma\varepsilon}(v)$.  

Write $\mathcal{C} = \{C_1, \ldots, C_s\}$ with $s = |\mathcal{C}| = \textup{poly}(1/\varepsilon)$.
Consider the partition of $V$ into clusters, such that $u, v \in V$ belong to the same cluster if and only if for every $C \in \mC$ we have $C \in D^{\mK}_{\gamma\varepsilon}(u)$ if and only if $C \in D^{\mK}_{\gamma\varepsilon}(v)$. Observe that the number of clusters is $2^{\textup{poly}(1/\varepsilon)}$.

\begin{claim}\label{cl:cluster-all-far-or-all-close}
If $u, v \in V$ belong to the same cluster and $w \in V$ is $\gamma\varepsilon$-disjoint of $u$ then $w$ is $\gamma\varepsilon$-disjoint of $v$.
\end{claim}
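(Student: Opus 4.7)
The plan is to prove the claim by contradiction: assume $u,v$ lie in the same cluster, $w$ is $\gamma\varepsilon$-disjoint of $u$, but $|\mF_{vw}| > \gamma\varepsilon \cdot |\mF|$. The goal will be to extract $t$ pairwise $\gamma\varepsilon$-disjoint vertices, contradicting the first Claim in the proof of \cref{thm:homomorphism-kt-free}.

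The first step is to translate the hypothesis $|\mF_{vw}| > \gamma\varepsilon \cdot |\mF|$ (which is about the set system $\mF$) into a statement about $D^{\mK}$ (which concerns cliques). I would observe that every $K \in \mF_{vw}$ is a $(t-2)$-clique which both $v$ and $w$ extend to a $(t-1)$-clique, so each $k \in K$ is adjacent to both $v$ and $w$ in $G$. Since $G$ is $K_t$-free, adjacency between two vertices already forces $\mF$-disjointness (otherwise a common clique extension would produce a $K_t$), giving $\mF_{vk}=\emptyset$ and $\mF_{wk}=\emptyset$ for every $k \in K$. Hence $K \in D^{\mK}(v) \cap D^{\mK}(w)$, yielding (via the natural identification of $\mF$ with $\mK$) the inclusion $\mF_{vw} \subseteq D^{\mK}(v) \cap D^{\mK}(w)$, and in particular $|D^{\mK}(v)\cap D^{\mK}(w)| > \gamma\varepsilon \cdot |\mK|$.

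Next, I would invoke the defining property of $\mC$ from \cref{lem:DNLS-intersection-higher-order} on the pair $(v,w)$: since $|D^{\mK}(v)\cap D^{\mK}(w)| > \gamma\varepsilon \cdot |\mK|$, there exists $C \in \mC$ with $C \in D^{\mK}_{\gamma\varepsilon}(v) \cap D^{\mK}_{\gamma\varepsilon}(w)$. Because $u$ and $v$ lie in the same cluster, the membership $C \in D^{\mK}_{\gamma\varepsilon}(v)$ is transferred to $C \in D^{\mK}_{\gamma\varepsilon}(u)$, so $u$ is $\gamma\varepsilon$-disjoint of every vertex of $C$.

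Finally, I would consider the $t$ vertices $\{u,w\} \cup C$ and verify that they are pairwise $\gamma\varepsilon$-disjoint: $u$ and $w$ by hypothesis; $u$ and each $c_i \in C$ because $C \in D^{\mK}_{\gamma\varepsilon}(u)$; $w$ and each $c_i \in C$ because $C \in D^{\mK}_{\gamma\varepsilon}(w)$; and pairs inside $C$ because $C$ is a clique of $G$, so $K_t$-freeness gives $\mF_{c_ic_j} = \emptyset$ exactly as in the first step. This is the desired contradiction with the earlier Claim that $G$ contains no $t$ pairwise $\gamma\varepsilon$-disjoint vertices. The only non-routine point is the inclusion $\mF_{vw} \subseteq D^{\mK}(v) \cap D^{\mK}(w)$, which is where $K_t$-freeness of $G$ is essential; once this inclusion is in hand, the rest of the argument is a direct combination of the defining property of $\mC$, the cluster hypothesis, and the previous Claim.
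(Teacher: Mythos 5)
Your proposal is correct and follows essentially the same route as the paper: translate $|\mF_{vw}|>\gamma\varepsilon|\mF|$ into $|D^{\mK}(v)\cap D^{\mK}(w)|\geq\gamma\varepsilon|\mK|$ via $K_t$-freeness, invoke the defining property of $\mC$ to get a common clique $C\in D^{\mK}_{\gamma\varepsilon}(v)\cap D^{\mK}_{\gamma\varepsilon}(w)$, transfer it to $u$ by the cluster definition, and contradict the claim that there are no $t$ pairwise $\gamma\varepsilon$-disjoint vertices using $\{u,w\}\cup V(C)$. Your write-up is in fact slightly more careful than the paper's, which leaves the inclusion $\mF_{vw}\subseteq D^{\mK}(v)\cap D^{\mK}(w)$ and the disjointness of pairs inside $C$ implicit.
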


\begin{proof}
Suppose by contradiction that $w \in V$ is $\gamma\varepsilon$-disjoint of $u$ and not $\gamma\varepsilon$-disjoint of $v$.
Then, $|\mF_{vw}| \geq \gamma \cdot \varepsilon \cdot |\mF| = \gamma \cdot \varepsilon \cdot |\mK|$.
This means that $v$ and $w$ extend at least $\gamma \cdot \varepsilon \cdot |\mK|$ common cliques. 
However, if $v$ and $w$ extend a clique $C \in \mK$ then $u$ and $w$ are 0-disjoint of $C$.
Thus, $|D^{\mK}(v) \cap D^{\mK}(w)| \geq \gamma \cdot \varepsilon \cdot |\mK|$ so there exists $C_i \in \mathcal{C}$ which is $\gamma\varepsilon$-disjoint of both of them.
Since $u$ and $v$ are in the same cluster, $u$ is also $\gamma\varepsilon$-disjoint of $C_i$, and $u, w, V(C_i)$ give $t$ points pairwise $\gamma\varepsilon$-disjoint, a contradiction.
\end{proof}

\begin{claim}
Every cluster is an independent set.
\end{claim}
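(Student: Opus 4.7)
The plan is to argue by contradiction: I would assume that $u$ and $v$ lie in the same cluster with $uv \in E(G)$ and construct a set of $t$ pairwise $\gamma\varepsilon$-disjoint vertices, contradicting the first claim of the proof. The strategy directly lifts the triangle-free case (\cref{thm:oberkampf}), replacing ``neighborhoods'' by ``$(t-2)$-clique extensions''.

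The first preparatory step is to record the key consequence of $K_t$-freeness: whenever $xy \in E(G)$ the set $\mF_{xy}$ must be empty, for otherwise any $(t-2)$-clique $K \in \mF_x \cap \mF_y$ would extend on both sides of the edge $xy$ to a $K_t$ in $G$. In particular, every $(t-2)$-clique $C \subseteq N(w)$ satisfies $C \in D^{\mK}(w)$, yielding the inclusion $N_{t-2}(w) \subseteq D^{\mK}(w)$ for every $w$. The second preparatory step is to show $|D^{\mK}(v)| \geq \gamma\varepsilon|\mK|$ for every vertex $v$. Writing $c = 2/(2t-3)$, the minimum degree assumption rewrites as $|N(v)| \geq \left(\tfrac{2t-5}{2t-3}+\varepsilon\right)n > ((t-3)c + \varepsilon)n$, so I would apply \cref{lem:find-many-cliques} inside $N(v)$ with $s = t-3$ to obtain $|N_{t-2}(v)| \geq \gamma\varepsilon n^{t-2} > \gamma\varepsilon|\mK|$, where the constant $\gamma$ is calibrated exactly to fit this computation.

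With this bound in hand, invoking the defining property of $\mC$ (from \cref{lem:DNLS-intersection-higher-order}) with the pair $(v,v)$ produces a clique $C \in \mC$ such that $C \in D^{\mK}_{\gamma\varepsilon}(v)$. Since $u$ and $v$ lie in the same cluster, the partition rule forces $C \in D^{\mK}_{\gamma\varepsilon}(u)$ as well. I would then examine the set $S = \{u,v\} \cup V(C)$ of $t$ vertices: $u$ and $v$ are $0$-disjoint because $uv \in E(G)$; each $c \in V(C)$ is $\gamma\varepsilon$-disjoint from both $u$ and $v$ by the choice of $C$; and any two vertices of $V(C)$ are adjacent in $G$, hence $0$-disjoint. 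This exhibits $t$ pairwise $\gamma\varepsilon$-disjoint vertices, contradicting the first claim. The only delicate point of the argument is the second preparatory step --- establishing $|D^{\mK}(v)| \geq \gamma\varepsilon|\mK|$ --- which is precisely where the threshold $(2t-5)/(2t-3)$ and the value of $\gamma$ enter; once that inequality is in place the rest is bookkeeping.
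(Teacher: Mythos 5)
Your proof is correct, but it is organized differently from the paper's. The paper disposes of this claim in two lines by reusing \cref{cl:cluster-all-far-or-all-close} with $w=v$: since $uv\in E(G)$ makes $v$ $\gamma\varepsilon$-disjoint from $u$, that claim forces $v$ to be $\gamma\varepsilon$-disjoint from itself, which is immediately contradicted by the count of $(t-2)$-cliques in $G[N(v)]$ given by \cref{lem:find-many-cliques} (with $c=2/(2t-3)$, $s=t-3$). You instead bypass \cref{cl:cluster-all-far-or-all-close} and lift the triangle-free argument of \cref{thm:oberkampf} directly: the same clique count shows $|D^{\mK}(v)|\geq\gamma\varepsilon|\mK|$, the defining property of $\mC$ applied to the pair $(v,v)$ yields $C\in\mC$ with $C\in D^{\mK}_{\gamma\varepsilon}(v)$, the cluster rule transfers this to $u$, and $\{u,v\}\cup V(C)$ contradicts the first claim. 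The two routes rest on the same ingredients --- your argument is essentially an inlined, specialized version of the proof of \cref{cl:cluster-all-far-or-all-close}, so the paper's packaging is shorter (it needs that claim anyway for the quotient graph), while yours is self-contained at this point and makes the parallel with the triangle-free case more transparent. One small point you should add (the paper is equally terse at the analogous step inside \cref{cl:cluster-all-far-or-all-close}): to obtain $t$ \emph{distinct} pairwise $\gamma\varepsilon$-disjoint vertices you need $u,v\notin V(C)$; this follows from your second preparatory step, since $u\in V(C)$ (say) together with $C\in D^{\mK}_{\gamma\varepsilon}(u)$ would make $u$ $\gamma\varepsilon$-disjoint from itself, contradicting $|\mF_u|=|N_{t-2}(u)|>\gamma\varepsilon|\mF|$.
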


\begin{proof}
By contradiction, suppose that $u, v$ belong to the same cluster and $uv \in E(G)$.
Then, $u$ and $v$ are $\gamma\varepsilon$-disjoint. By \cref{cl:cluster-all-far-or-all-close} applied with $w = v$, $v$ is $\gamma \varepsilon$-disjoint of $v$. However, by \cref{lem:find-many-cliques} applied with $c = \frac{2}{2t-3}$ and $s = t-3$, there are at least $\gamma \cdot \varepsilon \cdot n^{t-2} > \gamma \cdot \varepsilon \cdot |\mF|$ cliques in $G[N(v)]$, contradicting that $v$ is $\gamma\varepsilon$-disjoint of itself.
\end{proof}

The clusters define a partition $\mathcal{P}$ of $V(G)$ into independent sets, let $G/\mathcal{P}$ denote the corresponding quotient graph.

\begin{claim}
The graph $G/\mathcal{P}$ is $K_t$-free.
\end{claim}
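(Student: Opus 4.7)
\medskip\noindent\textbf{Proof plan.}
The plan is to argue by contradiction: suppose that $G/\mathcal{P}$ contains a clique on clusters $P_{i_1}, \ldots, P_{i_t}$. I will then exhibit $t$ vertices of $G$ that are pairwise $\gamma\varepsilon$-disjoint, contradicting the first Claim of the proof. The key observation I will use repeatedly is \cref{cl:cluster-all-far-or-all-close}, which says that $\gamma\varepsilon$-disjointness is insensitive to swapping one of its endpoints for a cluster-mate.

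First, for each pair $j<k$, since there is an edge of $G/\mathcal{P}$ between $P_{i_j}$ and $P_{i_k}$, there exist $u_{jk}\in P_{i_j}$ and $v_{jk}\in P_{i_k}$ with $u_{jk}v_{jk}\in E(G)$. As noted in the setup, $u_{jk}v_{jk}\in E(G)$ implies $\mF_{u_{jk}v_{jk}}=\emptyset$, hence $u_{jk}$ and $v_{jk}$ are $\gamma\varepsilon$-disjoint.

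Next, pick an arbitrary representative $r_j\in P_{i_j}$ for each $j\in[t]$. I will show that any two representatives $r_j, r_k$ are $\gamma\varepsilon$-disjoint. Since $u_{jk}$ and $r_j$ lie in the same cluster $P_{i_j}$, and $v_{jk}$ is $\gamma\varepsilon$-disjoint of $u_{jk}$, \cref{cl:cluster-all-far-or-all-close} yields that $v_{jk}$ is $\gamma\varepsilon$-disjoint of $r_j$. Applying \cref{cl:cluster-all-far-or-all-close} once more in the cluster $P_{i_k}$ (which contains both $v_{jk}$ and $r_k$), this time using that $r_j$ is $\gamma\varepsilon$-disjoint of $v_{jk}$, we obtain that $r_j$ is $\gamma\varepsilon$-disjoint of $r_k$.

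Therefore $r_1, \ldots, r_t$ are $t$ pairwise $\gamma\varepsilon$-disjoint vertices of $G$, contradicting the first Claim of the proof. The main conceptual step is simply the double application of \cref{cl:cluster-all-far-or-all-close} that transfers disjointness from the edge-witnessing pair $(u_{jk},v_{jk})$ to the representative pair $(r_j,r_k)$; no additional computation is required.
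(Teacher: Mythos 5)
Your proof is correct and follows essentially the same route as the paper: the paper first uses \cref{cl:cluster-all-far-or-all-close} to state that between any two clusters disjointness is all-or-nothing, which is exactly your double application of that claim, and then both arguments produce $t$ pairwise $\gamma\varepsilon$-disjoint representatives contradicting the first Claim.
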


\begin{proof}
First, \cref{cl:cluster-all-far-or-all-close} implies that if $P_i, P_j$ are two different parts of the partition (i.e. two different clusters) then either every vertex in $P_i$ is $\gamma \varepsilon$-disjoint of every vertex in $P_j$, or every vertex in $P_i$ is not $\gamma \varepsilon$-disjoint of every vertex in $P_j$.

If there exists a $K_t$ in $G/\mathcal{P}$, there are $t$ parts $P_1, \ldots, P_t$ with an edge between $P_i$ and $P_j$ for every $i \neq j$. Thus, for every $i \neq j$, some point of $P_i$ is $\gamma \varepsilon$-disjoint of some point of $P_j$, so all points of $P_i$ are $\gamma \varepsilon$-disjoint of all points of $P_j$. Therefore, there would be $t$ vertices pairwise $\gamma \varepsilon$-disjoint, a contradiction.
\end{proof}
\end{proof}

\subsection{Chromatic Threshold of Regular \texorpdfstring{$K_t$}{Kt}-free Graphs}\label{subsec:regthreshold}

The goal of this section is to extend \cref{thm:regulartrianglefree}, which shows that the chromatic threshold for regular triangle-free graphs is 1/4, to arbitrary cliques. 
This was investigated in the Master's Thesis of O'Rourke \cite{O'R14}, in which he generalized the constructions of $(1/4-\varepsilon)n$-regular triangle-free graphs $G_{\varepsilon}$ with arbitrarily large chromatic number to the $K_t$-free case. 
As a first step, adding an independent set of size $3n/4+\varepsilon n$ completely joined to $G_\varepsilon$ yields an $n$-regular $K_4$-free graph with $7n/4+\varepsilon n$ vertices. 
This construction shows that the regular chromatic threshold of $K_4$-free graphs is at least $4/7$.
O'Rourke investigated whether $4/7$ is indeed the threshold for $K_4$, and more generally whether $r_t \coloneqq \frac{3t-8}{3t-5}$ could be the regular chromatic threshold for $K_t$-free graphs.
These bounds are matched by constructions obtained from $G_{\varepsilon}$ by adding a balanced complete multipartite graph, see \cref{fig:Regular-Borsuk-Hajnal}. 

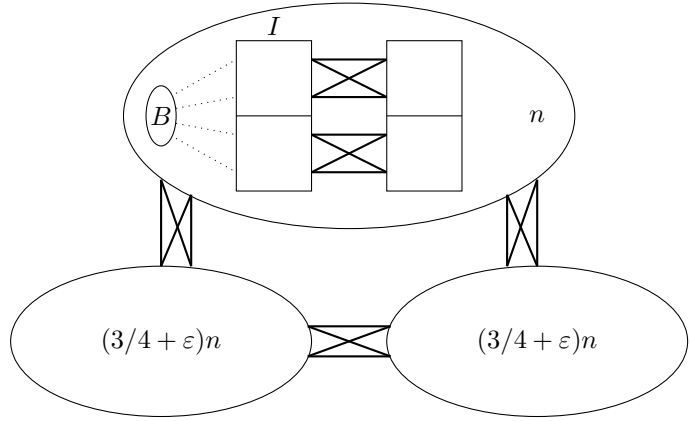
\begin{figure}[ht]
    \centering
    \begin{tikzpicture}[every node/.style={font=\sffamily}]
    \draw (1,0) ellipse (0.2 and 0.4);
    \node at (1,0) {$B$};
    \node at (6,0) {$n$};
    \draw (2,-1) rectangle (3,1);
    \node at (2.5, 1.2) {$I$};
    \draw (2,0) -- (3,0);
    \draw (4,-1) rectangle (5,1);
    \draw (4,0) -- (5,0);
    \draw[dotted] (1.2,0.3) -- (2,0.75);
    \draw[dotted] (1.2,0.1) -- (2,0.25);
    \draw[dotted] (1.2,-0.1) -- (2,-0.25);
    \draw[dotted] (1.2,-0.3) -- (2,-0.75);
    \draw[thick] (3,0.75) -- (4,0.75);
    \draw[thick] (3,0.25) -- (4,0.25);
    \draw[thick] (3,0.75) -- (4,0.25);
    \draw[thick] (3,0.25) -- (4,0.75);
    \draw[thick] (3,-0.25) -- (4,-0.25);
    \draw[thick] (3,-0.75) -- (4,-0.75);
    \draw[thick] (3,-0.25) -- (4,-0.75);
    \draw[thick] (3,-0.75) -- (4,-0.25);
    \draw (3.5,0) ellipse (3 and 1.5);
    \draw (1,-3) ellipse (2 and 1);
    \node at (1,-3) {$(3/4 + \varepsilon)n$};
    \draw (6,-3) ellipse (2 and 1);
    \node at (6,-3) {$(3/4 + \varepsilon)n$};
    \coordinate (E121) at (1,-0.85);
    \coordinate (E122) at (1.4,-1.05);
    \coordinate (E131) at (5.6,-1.05);
    \coordinate (E132) at (6,-0.85);
    \coordinate (E211) at (1,-2);
    \coordinate (E212) at (1.4,-2);
    \coordinate (E231) at (2.95,-2.8);
    \coordinate (E232) at (2.95,-3.2);
    \coordinate (E311) at (5.6,-2);
    \coordinate (E312) at (6,-2);
    \coordinate (E321) at (4.05,-2.8);
    \coordinate (E322) at (4.05,-3.2);
    \draw[thick] (E121) -- (E211);
    \draw[thick] (E121) -- (E212);
    \draw[thick] (E122) -- (E211);
    \draw[thick] (E122) -- (E212);
    \draw[thick] (E131) -- (E311);
    \draw[thick] (E131) -- (E312);
    \draw[thick] (E132) -- (E311);
    \draw[thick] (E132) -- (E312);
    \draw[thick] (E231) -- (E321);
    \draw[thick] (E231) -- (E322);
    \draw[thick] (E232) -- (E321);
    \draw[thick] (E232) -- (E322);
    \end{tikzpicture}

    \caption{O'Rourke's construction of a $(7/10 - \varepsilon)N$-regular $K_5$-free $N$-vertex graph with large chromatic number. The graph $B$ is a Borsuk graph, triangle-free with large chromatic number. The neighborhoods of the vertices of $I$ in $B$ give a $2 + \varepsilon'$-fractional coloring of $B$.}
    \label{fig:Regular-Borsuk-Hajnal}
\end{figure}

We confirm that O'Rourke's constructions are optimal:

\begin{theorem}\label{thm:regularkt}
For every $\varepsilon>0$, every $(r_t+\varepsilon)n$-regular $n$-vertex $K_t$-free graph has bounded chromatic number.
\end{theorem}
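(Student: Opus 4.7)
The plan is to extend the proof of \cref{thm:regulartrianglefree} to $K_t$-free graphs by replacing vertex neighborhoods with $(t-2)$-clique incidences, as announced in the introduction. Fix $G$ an $n$-vertex $(r_t+\varepsilon)n$-regular $K_t$-free graph, let $\mK$ denote its set of $(t-2)$-cliques, and for each $K\in\mK$ set $E(K)=\{v\in V(G):K\cup\{v\}\text{ is a }K_{t-1}\}$. The set system to consider is $\mF=\{E(K):K\in\mK\}$, for which $\mF_v$ is precisely the family of $(t-2)$-cliques contained in $N(v)$. Two properties are key. First, the $K_t$-freeness of $G$ enforces $\mF_{uv}=\emptyset$ whenever $uv\in E(G)$: otherwise a common $(t-2)$-clique would complete a $K_t$. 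In particular $N(v)\subseteq D(v)$. Second, \cref{lem:find-many-cliques} applied to $X=N(v)$ (of size at least $(r_t+\varepsilon)n$) yields $|\mF_v|\geq \gamma|\mF|$ for some constant $\gamma=\gamma(t,\varepsilon)>0$ controlled by the density slack.

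Apply \cref{lem:DNLS-cluster-intro} with a sensitivity $\eta$ polynomially small in $\gamma$ to obtain a partition $\mathcal{P}$ of $V(G)$ into $2^{\textup{poly}(1/\varepsilon)}$ clusters such that $|D(u)\setminus D_\eta(v)|\leq \eta n$ whenever $u$ and $v$ belong to the same cluster. The strategy is to show that every cluster is an independent set, which yields $\chi(G)\leq|\mathcal{P}|$.

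Suppose for contradiction that $uv\in E(G)$ with $u,v$ in a common cluster. The inclusion $N(u)\subseteq D(u)$ and the cluster condition give $|N(u)\setminus D_\eta(v)|\leq\eta n$, so the sets $S_u=N(u)\cap D_\eta(v)$ and $S_v=N(v)\cap D_\eta(u)$ each have size at least $(r_t+\varepsilon-\eta)n$. The heart of the argument is to locate a family $F=\{v_0,v_1,\dots,v_{k-1}\}$ with $v_0=u$, $v_1=v$, and $k>1/\gamma$, such that every pair $(v_i,v_j)$ is either an edge of $G$ (so $\mF_{v_iv_j}=\emptyset$) or satisfies $v_i\in D_\eta(v_j)$ (so $|\mF_{v_iv_j}|\leq\eta|\mF|$). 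Once such an $F$ is produced, a Bonferroni-type inequality gives
\[
k\gamma\,|\mF|\leq \sum_{i=0}^{k-1}|\mF_{v_i}|\leq \Bigl|\bigcup_i\mF_{v_i}\Bigr|+\sum_{i<j}|\mF_{v_iv_j}|\leq |\mF|+\binom{k}{2}\eta\,|\mF|,
\]
a contradiction for $\eta$ small enough, closing the proof.

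The serious obstacle is producing $F$. For $t=3$ the family is just $\{u,v,z,t\}$ where $zt$ is an edge found in $S_u\cup S_v$ using $\alpha(G)\leq n/2$, matching $1/\gamma=4$ via the $\varepsilon$-slack, exactly as in the proof of \cref{thm:regulartrianglefree}. For $t\geq 4$, however, $1/\gamma$ grows like $(3t-5)^{t-2}/3^{t-3}$, so a single $(t-1)$-clique extracted from $S_u\cup S_v$ via \cref{lem:find-many-cliques} — which is then necessarily split between $S_u$ and $S_v$ by $K_t$-freeness, giving $k=t+1$ good vertices — is far from sufficient. One must iteratively harvest further $(t-1)$-cliques inside the intersection of $S_u$, $S_v$ and the $D_\eta$-neighborhoods of the vertices already selected, invoking \cref{lem:find-many-cliques} at each step in a residual region whose size shrinks as more vertices are added. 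Since $r_t$ lies strictly below the non-regular threshold $(2t-5)/(2t-3)$ and no inductive reduction on $t$ seems to preserve the $\varepsilon$-slack (the relative minimum degree of $G[N(v)]$ is only $r_{t-1}+\Theta(\varepsilon)$, which is itself below the non-regular threshold for $K_{t-1}$-free graphs), tracking the slack through these nested clique extractions is where the argument becomes technical.
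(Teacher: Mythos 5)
Your setup coincides with the paper's first steps: the set system $\mF=\{E(K):K\in\mK\}$ of $(t-2)$-clique extensions, the observation that $\mF_{uv}=\emptyset$ for edges, the lower bound $|\mF_v|\geq\gamma|\mF|$ via \cref{lem:find-many-cliques}, and a clustering via \cref{lem:DNLS-cluster} with the aim of showing every cluster is independent. But the entire content of the theorem lies in the step you leave open: deriving a contradiction from an edge $uv$ inside a cluster when $t\geq 4$. Your proposed mechanism — build a family $F\ni u,v$ of more than $1/\gamma$ vertices that are pairwise adjacent or $\eta$-disjoint and apply a Bonferroni bound — is never carried out, and you acknowledge that you cannot track the slack through the ``iterative harvesting'' of $(t-1)$-cliques. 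This is a genuine gap, not a technicality: since $\gamma$ is only of order $3^{t-3}/(3t-5)^{t-2}$, you would need a family of size far exceeding $t+1$, while the Bonferroni contradiction forces $\eta$ to be taken below roughly $\gamma^2$; nothing in your scheme gives any lower bound on the size of the iterated regions $S_u\cap S_v\cap\bigcap_i D_\eta(v_i)$ in which \cref{lem:find-many-cliques} must be reapplied, and these can collapse after a bounded number of steps. Moreover, the threshold $r_t$ is sharp (O'Rourke's constructions), so any proof must be tight at $r_t$; a union-type bound using the crude constant $\gamma$ from \cref{lem:find-many-cliques} has no reason to single out the value $\frac{3t-8}{3t-5}$.

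The paper's proof goes in the opposite direction from the family you seek. Its \cref{lem:farvertices} shows that in an $(r_t+\varepsilon)n$-regular $K_t$-free graph there are \emph{no} $t+1$ pairwise $\varepsilon'$-disjoint vertices (proved by taking such a set spanning the maximum number of edges, a degree-sum count locating a huge set $X$ of vertices with $t$ neighbours in it, and extracting many common $(t-2)$-cliques for two members of the set). Assuming an edge $uv$ in a cluster, this lemma makes $U=N(u)\cup N(v)$ (minus a small exceptional set) $K_{t-1}$-free, while $I=N(u)\cap N(v)$ is $K_{t-2}$-free; the contradiction then comes from a delicate count of the edges among $I$, $D=U\setminus I$ and $R=V\setminus U$, parametrized by $|I|=\frac{3t-9-a}{3t-5}n$, combining Tur\'an's theorem inside $I$ with the Nemhauser--Trotter half-integrality of fractional vertex covers applied to $G[D]$ (\cref{lem:vertexcoverKt}). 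So while your reduction to ``clusters are independent'' matches the paper, the heart of the argument — the only part where the regularity hypothesis and the exact constant $r_t$ enter — is missing from your proposal, and the route you sketch for it is unlikely to be completable as stated.
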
 

Similarly to the proof of \cref{thm:homomorphism-kt-free}, the key object is the \emph{clique incidence matrix} $A_{t-2}$. 
We recall the relevant definitions.
For every vertex $v$, let $N_{k}(v)$ be the set of cliques $K$ of size $k$ such that $\{v\} \cup K$ is a clique of size $k+1$. 
For every clique $K$ of size $k$, let $E(K) = \{v \in V : K \in N_k(v)\}$ be the set of vertices which extend $K$ to a clique of size $k+1$.
An important observation is that if $G$ is $K_{t}$-free then the set system $\mF = \{E(K) : K \text{ is a clique of size }t-2\}$ has the property that every set is an independent set, i.e. that $\mF_{uv} = \emptyset$ whenever $uv$ is an edge of $G$. 

Given $\varepsilon'>0$ and a $K_t$-free graph $G$, we say that two vertices $u,v \in V(G)$ are \emph{$\varepsilon'$-disjoint} if $|\mF_{uv}| \leq \varepsilon'|\mF|$. Note that adjacent vertices are $0$-disjoint since they cannot extend the same $K_{t-2}$. As usual for DNL, almost-disjoint vertices will emulate the edges of the graph. 
A key tool in the proof of~\cref{thm:regulartrianglefree} was that $n$-vertex triangle-free $(1/4+\varepsilon)n$-regular graphs do not have four vertices which are pairwise $\varepsilon /2$-disjoint. This is indeed a general feature:

\begin{lemma}\label{lem:farvertices}
For every $\varepsilon>0$, there exists $\varepsilon '>0$ such that every $n$-vertex $(r_t+\varepsilon)n $-regular $K_t$-free graph does not contain $t+1$ vertices which are pairwise $\varepsilon'$-disjoint.
\end{lemma}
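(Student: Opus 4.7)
The plan is to adapt the maximality argument used for the ``no $t$ pairwise $\gamma\varepsilon$-disjoint vertices'' claim inside the proof of \cref{thm:homomorphism-kt-free}, taking advantage throughout of the elementary observation that every edge of $G$ is automatically $0$-disjoint. Suppose for contradiction that $X = \{x_0, \ldots, x_t\}$ is a set of $t+1$ pairwise $\varepsilon'$-disjoint vertices of $G$, chosen so that $e(G[X])$ is maximum. Since $G$ is $K_t$-free and $|X| = t+1$, Turán's theorem forces $G[X]$ to contain at least two non-edges.

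The maximality of $e(G[X])$ then constrains how vertices outside $X$ interact with $X$. If $v \notin X$ had $d_X(v) = t+1$, the swap $X - x_i + v$ would automatically be pairwise $\varepsilon'$-disjoint (since $v$ is adjacent to every remaining member of $X$), and comparing edge counts would either produce a $K_{t+1}$ (excluded by $K_t$-freeness) or contradict maximality; hence $d_X(v) \leq t$. A refined swap further shows that if $d_X(v) = t$ then $v$'s unique non-neighbor $u \in X$ satisfies $d_X(u) = t$ as well. Setting $Y = \{v \notin X : d_X(v) \geq t-1\}$ and bounding $\sum_{v \notin X}d_X(v) = (t+1)(r_t+\varepsilon)n - 2e(G[X])$ from above by $(t-2)(n - t - 1) + 2|Y|$ yields
\[
    |Y| \;\geq\; \Bigl(\tfrac{3(t-3)}{3t-5} + \tfrac{(t+1)\varepsilon}{2}\Bigr) n - O_t(1) \;=\; \bigl((t-3)(1 - r_t) + \Theta(\varepsilon)\bigr) n.
\]
This matches, with a linear-in-$\varepsilon$ slack, the threshold $(sc+\eta)n$ required by \cref{lem:find-many-cliques} applied to $G$ with $s = t-3$ and $c = 1 - r_t = 3/(3t-5)$.

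I would then pick a non-edge $\{x_i, x_j\}$ of $G[X]$ whose two endpoints realize the smallest $d_X$-values among all non-edge endpoints, and use a finer swap argument to show that for every $v \in Y$ the (at most two) non-neighbors of $v$ in $X$ have $d_X$-value strictly greater than those of $x_i$ and $x_j$; this places $Y$ inside $N(x_i) \cap N(x_j)$. Then \cref{lem:find-many-cliques} produces $\Omega(\varepsilon \cdot n^{t-2})$ cliques of size $t-2$ inside $G[Y]$, each of which belongs to $\mF_{x_ix_j}$. Since $|\mF| \leq \binom{n}{t-2}$, choosing $\varepsilon'$ small enough in terms of $\varepsilon$ and $t$ then contradicts $|\mF_{x_ix_j}| \leq \varepsilon' |\mF|$.

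The main obstacle is the last step, because in the Turán-extremal configuration $G[X] = T(t+1, t-1) = K_{1,\ldots,1,2,2}$ all non-edge endpoints have degree exactly $t-1$, sitting precisely at the boundary of the swap-argument lower bound on the $d_X$-value of non-neighbors of vertices of $Y$. The clean ``low-degree non-edge'' strategy from \cref{thm:homomorphism-kt-free} therefore does not suffice on its own, and the missing piece must be supplied through a case analysis on the isomorphism type of $G[X]$ together with a careful book-keeping of the $\varepsilon'$-disjointness budget spent on pairs $(v, u)$ with $v \in Y$ and $u \in X$ a non-neighbor of $v$. This case analysis is the technical core of the proof and is the reason the authors defer it to \cref{subsec:regthreshold}.
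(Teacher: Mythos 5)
Your setup is sound and matches the paper's opening moves: choosing the $(t+1)$-set maximizing $e(G[X])$, ruling out vertices complete to $X$ by a swap (using that adjacency implies $0$-disjointness), and the degree count showing that the set of outside vertices with at most two non-neighbors in $X$ has size about $\frac{3(t-3)}{3t-5}n + \Theta(\varepsilon)n$, exactly the threshold needed for \cref{lem:find-many-cliques} with $s=t-3$, $c=3/(3t-5)$. But the endgame you propose — fix one minimum-degree non-edge $\{x_i,x_j\}$ of $G[X]$ and prove $Y\subseteq N(x_i)\cap N(x_j)$ — is not established, and as you yourself observe it cannot be pushed through by swaps alone: for $v\in Y$ with two non-neighbors in $X$, the exchange $X-u+v$ need not be pairwise $\varepsilon'$-disjoint (the pair formed by $v$ and its second non-neighbor is uncontrolled), and in the Tur\'an-extremal configuration all non-edge endpoints sit exactly at the boundary where the comparison is an equality. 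Deferring this to an unspecified "case analysis on the isomorphism type of $G[X]$ with disjointness book-keeping" is precisely the missing content; as written, the proof does not go through.

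The paper closes the argument by a different device that sidesteps fixing a non-edge in advance. With $F$ the $(t+1)$-set, it partitions $V$ by degree on $F$ ($X$: degree $t$, $Y$: degree $t-1$, $Z$: the rest) and does a case analysis on $|Z|$, not on the structure of $G[F]$. In each case it manufactures $\Omega_{\varepsilon,t}(n^{t-1})$ cliques of size $t-1$ (or $\Omega_{\varepsilon,t}(n^{t-2})$ cliques of size $t-2$) avoiding $Z$ and having most vertices in $X$ — e.g.\ when $|Z|$ is small, first $(t-4)$-cliques inside $X$ via \cref{lem:find-many-cliques}, then extended by triangles in their common neighborhood outside $Z$. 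For each such clique, since its vertices have at most one or two non-neighbors in $F$, a pigeonhole inside $F$ (a small $K_{2,2}$-type count against a $5$-subset of $F$ complete to the $X$-part) produces \emph{some} pair of vertices of $F$ that jointly extend one of its $(t-2)$-subcliques; a second pigeonhole over the $\binom{t+1}{2}$ pairs then fixes a single pair $x,y\in F$ with $|\mF_{xy}|=\Omega_{\varepsilon,t}(n^{t-2})$, contradicting $\varepsilon'$-disjointness. Letting the witnessing pair vary with the clique is the idea your approach lacks, and it is what makes the Tur\'an-extremal obstruction irrelevant.
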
 

\begin{proof}
Consider for contradiction a set $F=\{x_1,\ldots ,x_{t+1}\}$ of vertices which are pairwise $\varepsilon'$-disjoint (with $\varepsilon'$ to be defined later), and assume that $F$ spans as many edges as possible under this constraint. 
Since $G$ is $K_t$-free, some vertices in $F$ are non-adjacent.
If some vertex $u$ were adjacent to all vertices in $F$, we could exchange it with any vertex in $F$ incident to a non-edge in $G[F]$, and this would increase the number of edges of $G[F]$.
Hence there is no vertex adjacent to all $t+1$ vertices of $F$. 
Denote by $X$ the set of vertices with $t$ neighbors in $F$, by $Y$ the set of vertices with $t-1$ neighbors in $F$, and by $Z$ the other vertices.

On the one hand, $\sum_{u \in F}d(u) = (t+1)\cdot \left(\frac{3t-8}{3t-5}+\varepsilon\right) \cdot n = \left(t-1+\frac{3t-13}{3t-5}+(t+1)\cdot\varepsilon\right) \cdot n$. On the other hand, $\sum_{u \in F}d(u) = \sum_{v \in V}d_F(v) \leq t \cdot |X| + (t-1) \cdot (n - |X|)$, hence the size of $X$ is at least $\left(\frac{3t-13}{3t-5}+(t+1)\cdot\varepsilon\right) \cdot n=n-\frac{8n}{3t-5}+(t+1)\cdot\varepsilon \cdot n$. 
Note that this does not give anything for $t \leq 4$, but we will not need it.

Using the condition on the minimum degree and the large size of $X$, we will show that there are many cliques of size $t-2$ which are extended by two vertices in $F$, which will contradict that all vertices in $F$ are pairwise almost disjoint.

Assume first that $|Z|\leq \frac{n}{3t-5}$.
By \cref{lem:find-many-cliques}, applied with $c = 3/(3t-5)$ and $s = t-5$ there exists $\gamma_1 > 0$ such that $G[X]$ has at least $\gamma_1 \cdot n^{t-4}$ cliques of size $t-4$.
Consider any clique $K$ of size $t-4$ in $G[X]$.
Then, the common neighborhood of the vertices of $K$ has size at least $n - (t-4) \cdot \frac{3- \varepsilon}{3t-5} \cdot n \geq \frac{7}{3t-5} \cdot n$. Thus, since  $|Z| \leq \frac{n}{3t-5}$, they have at least $\frac{6}{3t-5} \cdot n$ common neighbors outside of $Z$.
By \cref{lem:find-many-cliques} applied with $c = \frac{3-\varepsilon}{3t-5}$ and $s = 2$, there exists $\gamma_2 > 0$ such that there are at least $\varepsilon \cdot \gamma_2 \cdot n^3$ cliques of size 3 inside this common neighborhood outside of $Z$.
Therefore, any clique $K$ of size $t-4$ in $X$ can be extended to a clique of size $t-1$ avoiding $Z$ in at least $\varepsilon \cdot \gamma_2 \cdot n^3$ different ways.
Overall, there exists a constant $\gamma_3$ such that there are at least $\varepsilon \cdot \gamma_3 \cdot n^{t-1}$ cliques of size $t-1$ which avoid $Z$ and have at least $t-4$ vertices in $X$.
Note that all this previous reasoning also applies for $t=4$.

Consider any such clique $K'$ of size $t-1$, and fix a subset $K$ of $K' \cap X$ of size $t-4$.
Consider $F' \subseteq F$ of size 5 which is complete to $K$. Note that each of the three vertices of $K'\setminus K$ has at least three neighbors in $F'$, hence there are two vertices $x, y$ in $F'$ which are adjacent to two vertices $u, v$ of $K'\setminus K$. In particular, both $x$ and $y$ extend the clique $K \cup \{u, v\}$ of size $t-2$, i.e. $E(K\cup\{u, v\}) \in \mF_{xy}$. 

Iterating this process over all such cliques $K'$ of size $t-1$, we obtain that there exist two vertices $x, y \in F$ for which there are at least $\frac{\varepsilon \cdot \gamma_3 \cdot n^{t-1}}{\binom{t+1}{2}}$ cliques $K'$ for which there exists a subclique $K''$ of size $t-2$ such that $E(K'') \in \mF_{xy}$. Note that two different cliques $K'$ of size $t-1$ can yield the same clique $K''$ of size $t-2$, but at most $n$ of them can yield it. Therefore, $|\mF_{xy}| \geq \frac{\varepsilon \cdot \gamma_3}{\binom{t+1}{2}} \cdot n^{t-2} > \frac{\varepsilon \cdot \gamma_3}{\binom{t+1}{2}} \cdot |\mF|$, which is a contradiction if $\varepsilon' \leq \frac{\varepsilon \cdot \gamma_3}{\binom{t+1}{2}}$.

Assume now that $\frac{n}{3t-5}<|Z|\leq \frac{4n}{3t-5}$. Since there are at least $\frac{n}{3t-5}$ vertices with degree at most $t-2$ on $F$, the lower bound on the size of $X$ is increased by at least $\frac{n}{3t-5}$. 
Therefore, $|X| \geq \left(\frac{3t-12}{3t-5} + \varepsilon\right)n$. 
As before, using \cref{lem:find-many-cliques} twice, we get that there exists a constant $\gamma_4$ such that there are at least $\varepsilon^2 \cdot \gamma_4 \cdot n^{t-2}$ cliques $K'$ of size $t-2$ with at least $t-3$ vertices in $X$ and the last vertex outside of $Z$. 
Now observe that there are two vertices $x, y$ in $F$ which are complete to $K'$. We conclude as previously.

If $|Z|> \frac{4n}{3t-5}$, again by \cref{lem:find-many-cliques}, there exists a constant $\gamma_5 > 0$ such that there are at least $\varepsilon \cdot \gamma_5 \cdot n^{t-2}$ cliques $K'$ of size $t-2$ in $X$. Again, there are two vertices $x, y$ in $F$ which are complete to $K'$ and we can conclude as previously. 
\end{proof}

Before proving \cref{thm:regularkt}, we prove a technical Lemma whose only positive side (other than being useful in the proof) is that it relies on a very nice fact about the minimum fractional vertex cover of a graph. 
Recall that a minimum \emph{vertex cover} of a graph $G$ is a minimum size subset of vertices which intersects all edges. 
Its fractional relaxation, a minimum \emph{fractional vertex cover}, is a non-negative weight function $\omega$ on the vertices of $G$ such that $\omega(x)+\omega(y)\geq 1$ for every edge $xy$ of $G$. 
Nemhauser and Trotter \cite{NT74} proved that the polytope of minimum fractional vertex covers is half-integral and moreover has the following property: 
Every graph has a minimum vertex cover $C$ and a minimum fractional vertex cover $\omega$ with values $0, 1/2$ and $1$ such that every vertex $v$ with $\omega(v)=1$ is in $C$, and every vertex $v$ with $\omega(v)=0$ is not in $C$.

\begin{lemma}\label{lem:vertexcoverKt}
Let $0\leq b\leq a\leq 1$ be some values. Let $G$ be a graph on $(2+2a+2\varepsilon)n$ vertices and $m$ edges with minimum vertex cover value $bn$. Then the minimum value $vc^*(G)$ of its fractional vertex cover satisfies $$D(G)\coloneqq(3-a) \cdot vc^*(G) \cdot n-m+3a^2n^2/2 + 3a\varepsilon n^2 \geq 0$$
\end{lemma}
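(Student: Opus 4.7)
The plan is to apply the Nemhauser--Trotter half-integrality theorem to obtain a fractional vertex cover of $G$ with a very specific structure, use it to derive a sharp upper bound on $m$ in terms of $s := vc^*(G)$ and $N := (2+2a+2\varepsilon)n$, and then verify the resulting elementary polynomial inequality.

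First, I would invoke the theorem of Nemhauser and Trotter to obtain an optimal fractional vertex cover $\omega\colon V(G) \to \{0, 1/2, 1\}$, with preimages $V_0, V_{1/2}, V_1$ of sizes $n_0, n_{1/2}, n_1$. The constraint $\omega(u)+\omega(v)\geq 1$ for every edge $uv$ immediately forces $V_0$ to be an independent set and forbids any edge between $V_0$ and $V_{1/2}$. Counting the remaining edge-types and using $n_0+n_{1/2}+n_1 = N$, I would obtain $m \leq n_1 N - n_1^2/2 + n_{1/2}^2/2$.

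Next, I would eliminate $n_{1/2}$ via $n_{1/2} = 2(s-n_1)$, turning the right-hand side into a convex quadratic in $n_1 \in [0, s]$. A comparison of its values at the two endpoints, together with $s \leq an \leq 2N/5$ (which holds because $a \leq 1 \leq 4+4\varepsilon$), shows that the maximum is attained at $n_1 = s$, giving the clean bound $m \leq sN - s^2/2$. Plugging this back into $D(G)$ and expanding $N$ reduces the claim to showing $g(t) := t^2 + 2t(1-3a-2\varepsilon) + 3a(a+2\varepsilon) \geq 0$ for every $t := s/n \in [0, a]$.

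Finally, since $g$ opens upwards, it suffices to check $g(0), g(a) \geq 0$ and, when the vertex $t_{\min} = 3a+2\varepsilon-1$ of the parabola lies in $[0,a]$, also $g(t_{\min}) \geq 0$. One computes $g(0) = 3a(a+2\varepsilon) \geq 0$ and $g(a) = 2a(1-a+\varepsilon) \geq 0$ (using $a \leq 1$). The interior case corresponds to $3a+2\varepsilon \geq 1$ and $2a+2\varepsilon \leq 1$; the latter combined with $\varepsilon \geq 0$ forces $a \leq 1/2$. Setting $q := 1-2a-2\varepsilon \in [0, a]$, one checks $g(t_{\min}) = 3a - 4a^2 - aq - q^2$, and since $q \leq a$ yields $aq + q^2 \leq 2a^2$, we obtain $g(t_{\min}) \geq 3a - 6a^2 = 3a(1-2a) \geq 0$.

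The main obstacle will be extracting the structural bound $m \leq sN - s^2/2$ in a form sharp enough to produce the precise coefficient $(3-a)$ in front of $vc^*(G)\cdot n$ in the target inequality; the subsequent verification of $g(t) \geq 0$ becomes routine once the change of variables $q = 1-2a-2\varepsilon$ makes the relevant regime transparent.
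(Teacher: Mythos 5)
Your proof is correct, and it takes a genuinely different route from the paper's. The paper proves the lemma by induction on $m$: it invokes the full Nemhauser--Trotter statement (an optimal half-integral fractional cover compatible with an optimal integral cover), repeatedly deletes the at most $bn$ edges incident to a half-weight vertex outside the integral cover (losing at most $1/2$ in $vc^*$ and gaining $(3-a)n/2 \geq bn$ in the bound), and terminates in the extremal configuration where the cover has all weights $1$ and is complete to the rest of the graph, which is handled by a direct computation in terms of $b$. You instead avoid induction entirely: using only half-integrality of the vertex-cover LP, the structure of the classes $V_0, V_{1/2}, V_1$ gives $m \leq n_1 N - n_1^2/2 + n_{1/2}^2/2$, and a short convexity argument in $n_1$ (valid because $s = vc^*(G) \leq bn \leq an \leq 2N/5$, which is where the hypothesis $b \leq a$ enters) yields the clean closed-form bound $m \leq sN - s^2/2$; the lemma then reduces to the single quadratic inequality $g(t) \geq 0$ on $[0,a]$, which you verify correctly at the endpoints and at the vertex (your algebra $g(a) = 2a(1-a+\varepsilon)$ and $g(t_{\min}) = 3a - 4a^2 - aq - q^2$ with $q = 1-2a-2\varepsilon \in [0,a]$ checks out). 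Both arguments ultimately rest on the same extremal picture (a fractional cover of weight $s$ complete to the whole vertex set), but your version needs only the weaker half-integrality fact and replaces the induction by a transparent optimization, at the cost of a slightly longer case analysis of the quadratic; the paper's version keeps the algebra shorter but relies on the stronger compatibility form of Nemhauser--Trotter and an inductive edge-deletion argument.
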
 

\begin{proof}
We prove the result by induction on $m$. This is clearly true if $m=0$.
Let $C$ be a minimum vertex cover of size $bn$ and $\omega$ be a minimum fractional vertex cover such that every vertex $v$ with $\omega(v)=1$ is in $C$, and every vertex $v$ with $\omega(v)=0$ is not in $C$. We denote by $C_1$ the vertices of $C$ with weight 1, and by $C_{1/2}$ the vertices of $C$ with weight 1/2.
If some vertex $v\notin C$ has weight 1/2, consider the graph $G'$ obtained by removing all its incident edges (no more than $bn$ of them as $v$ can only be adjacent to vertices of $C$, and at least one otherwise $v$ would have weight $0$). 
Note that $vc^*(G') \leq vc^*(G) - 1/2$ since we can simply set $\omega(v)=0$. 
Then, $D(G)\geq D(G')-bn+(3-a)n/2$. 
By induction, $G'$ satisfies $D(G')\geq 0$. 
Observe that $(3-a)/2-b=(3-a-2b)/2\geq 0$ since $a,b$ are at most 1, hence $D(G)\geq 0$.

Otherwise, all vertices outside of $C$ have weight 0. In particular, every vertex $v$ with $\omega(v)=1/2$ is only adjacent to vertices of $C$. In that case, we can apply the same argument as before. Finally, we can assume that $C=C_1$ and $\omega$ is a 0,1 function. In the worst case, all vertices of $C$ are adjacent to all vertices of $G$. This gives:
\belowdisplayskip=-12pt\begin{align*}
    D(G)&=(3-a)\cdot bn \cdot n-bn \cdot (2n+2an+2\varepsilon n-bn)-\binom{bn}{2}+3a^2n^2/2 + 3a\varepsilon n^2\\
    &\geq n^2(b-3ab-2\varepsilon b+b^2-b^2/2+3a^2/2 +3a\varepsilon) \\
    &\geq n^2(b-3ab+b^2/2+3a^2/2 + \varepsilon(3a-2b))\\
    &=n^2(3(a-b)^2/2+b-b^2 + \varepsilon(3a-2b)) \\
    &\geq 0
\end{align*}
\end{proof}

We now have all the tools for the proof of \cref{thm:regularkt}:

\begin{proof}[Proof of \cref{thm:regularkt}.]
To simplify the calculations, we slightly change the value of $\varepsilon$ and assume that $G$ is $K_t$-free and $\frac{3t-8+\varepsilon}{3t-5} \cdot n$-regular with $n$ vertices. 
We denote by $\varepsilon'$ the constant of \cref{lem:farvertices} and consider a very small  $\eta<\!\!<\varepsilon '$.

Let $\mK$ denote the set of all cliques of size $t-2$ in $G$. Consider the set system $\mF = \{E(K) : K \in \mK\}$. By \cref{lem:DNLS-cluster}, $\mF$ has a $(\varepsilon', \eta)$-clustering of size $2^{\textup{poly}(1/\varepsilon', 1/\eta)}$.

Let us now consider two vertices $u,v$ in the same cluster, and assume for contradiction that they are adjacent.
We denote by $I$ the intersection of their neighborhoods, by $U$ their union, by $D$ the set $U\setminus I$ and finally by $R$ the set $V\setminus U$. 
By definition of a $(\varepsilon', \eta)$-clustering, apart from a subset $Q$ of at most $2\eta n$ vertices, all vertices of $U$ are $\varepsilon'$-disjoint of both $u$ and $v$. 
Therefore by \cref{lem:farvertices}, there is no clique of size $t-1$ inside $U\setminus Q$. 
In the rest of the proof, we will disregard the set $Q$ in all computations (one can think of it as deleted from the graph). 
Since the contradiction involves values much larger than $\eta n$, this will not change the final conclusion, and avoids writing corrective terms $2\eta n$ in all (already tedious) future equations.
From now on, we simply assume that $U$ is $K_{t-1}$-free. Note that since $G$ is $K_{t}$-free, $I$ is also $K_{t-2}$-free. 

Assume for contradiction that $|I| \leq \frac{3t-10}{3t-5} \cdot n$, thus $|U| \geq |N(u)|+|N(v)|-\frac{3t-10}{3t-5} \cdot n=\frac{3t-6+2\varepsilon}{3t-5} \cdot n$. 
By \cref{lem:find-many-cliques} applied with $c = (3-\varepsilon)/(3t-5)$ and $s = t-2$, there is a clique of size $t-1$ in $G[U]$, a contradiction. 
Assume for contradiction that $|I| \geq \frac{3t-9}{3t-5} \cdot n$. In this case, by \cref{lem:find-many-cliques}, there is a clique of size $t-2$ in $G[I]$, again a contradiction.
We can then suppose that $|I| =  \frac{3t-9-a}{3t-5} \cdot n$ for some $0\leq a \leq 1$ (in fact $0<a<1$ as $\varepsilon$ offers some slack in the previous contradictions).
Therefore $|D| = 2 \cdot \frac{3t-8+\varepsilon}{3t-5} \cdot n -2|I| =\frac{2+2a+2\varepsilon}{3t-5} \cdot n$ and $|R| = \frac{2-a-2\varepsilon}{3t-5} \cdot n$.

Let us denote by $x$ the number of edges between $I$ and $D$, by $y$ the number of edges between $I$ and $R$ and by $z$ the number of edges between $D$ and $R$, see \cref{fig:setup-regular-threshold}.

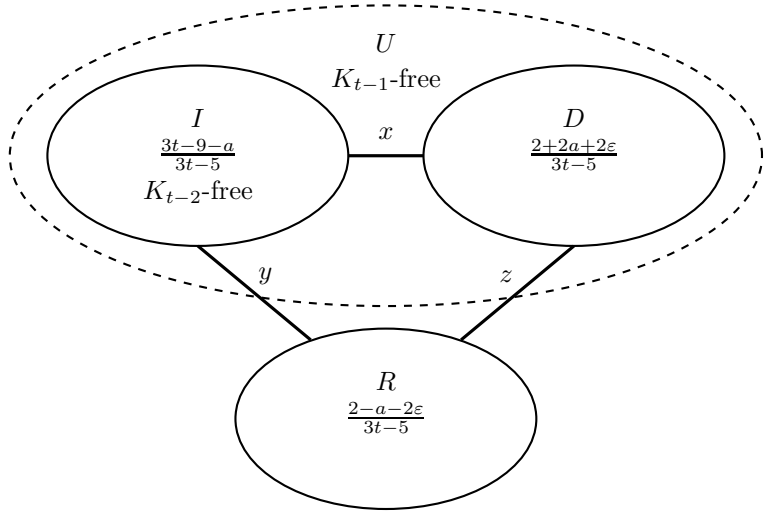
\begin{figure}[ht]
    \centering
    \begin{tikzpicture}

    \draw[thick, dashed] (0,0.5) ellipse [x radius=5cm, y radius=2cm];
    \node at (0,2) {$U$};
    \node at (0, 1.5) {$K_{t-1}$-free};
    
    \draw[thick] (-2.5,0.5) ellipse [x radius=2cm, y radius=1.2cm];
    \node at (-2.5,1) {$I$};
    \node at (-2.5, 0.5) {$\frac{3t - 9 - a}{3t-5}$};
    \node at (-2.5, 0) {$K_{t-2}$-free};
    
    \draw[thick] (2.5,0.5) ellipse [x radius=2cm, y radius=1.2cm];
    \node at (2.5,1) {$D$};
    \node at (2.5, 0.5) {$\frac{2 + 2a + 2\varepsilon}{3t-5}$};
    
    \draw[thick] (0,-3) ellipse [x radius=2cm, y radius=1.2cm];
    \node at (0,-2.5) {$R$};
    \node at (0, -3) {$\frac{2 - a - 2\varepsilon}{3t-5}$};
    
    \draw[very thick] (-0.5, 0.5) -- (0.5,0.5);
    \node at (0, .8) {$x$};
    \draw[very thick] (-2.5, -0.7)--(-1,-1.95);
    \node at (-1.6, -1.1) {$y$};
    \draw[very thick] (2.5, -0.7)--(1,-1.95);
    \node at (1.6, -1.1) {$z$};
    
    \end{tikzpicture}
    \caption{Setup for the end of the proof of \cref{thm:regularkt}.}
    \label{fig:setup-regular-threshold}
\end{figure}

The total number of edges leaving $R$ is simply upper bounded by the product of the size of $R$ and the degree of the vertices:

$$y+z\leq \frac{(2-a-2\varepsilon)n}{3t-5}\cdot \frac{(3t-8+\varepsilon)n}{3t-5}$$

Which we restate as:

\begin{equation}\label{eq:yz}
 (2-a-2\varepsilon)(3t-8+\varepsilon) \geq \frac{(3t-5)^2}{n^2} \cdot (y+z)
\end{equation}

By Tur\'an's Theorem, since $I$ is $K_{t-2}$-free, the maximum number of edges inside $I$ is achieved by a balanced $(t-3)$-partite graph. So the number of edges leaving $I$ is at least the sum of the degrees of the vertices in $I$ minus the total degree of a balanced $(t-3)$-partite graph inside $I$. In total this gives:

\begin{align*}    \frac{(3t-5)^2}{n^2} \cdot (x+y)&\geq (3t-9-a)(3t-8+\varepsilon)-\left(1-\frac{1}{t-3}\right)(3t-9-a)^2\\
&\geq (3t-9-a)\left(1+a+\varepsilon+\frac{3t-9-a}{t-3}\right)
\end{align*}

Which we restate as:
\begin{equation}\label{eq:xy}
\frac{(3t-5)^2}{n^2}(x+y)\geq (3t-9-a)\left(4+a\cdot\frac{t-4}{t-3}+\varepsilon\right)
\end{equation}

We now turn to a lower bound on the edges leaving $D$. Denoting by $\frac{en^2}{(3t-5)^2}$ the number of edges inside $D$, we then have:

\begin{equation}\label{eq:xz}
\frac{(3t-5)^2}{n^2}(x+z)= (2+2a+2\varepsilon)(3t-8+\varepsilon)-2e
\end{equation}

Adding the inequalities (\ref{eq:yz}), (\ref{eq:xy}) and (\ref{eq:xz}), we get:
\begin{align*}    \frac{(3t-5)^2}{n^2}2x&\geq
 (3t-9-a)\left(4+a\cdot\frac{t-4}{t-3}+\varepsilon\right)+(2+2a+2\varepsilon)(3t-8+\varepsilon)-2e-(2-a-2\varepsilon)(3t-8+\varepsilon) \\
 &\geq (3t-9-a)\left(4+a\cdot\frac{t-4}{t-3}+\varepsilon\right) + (3t-8+\varepsilon)(3a+4\varepsilon) -2e \\
&= 12t-36-4a+(3t-9-a) \cdot a \cdot \frac{t-4}{t-3}+3a(3t-8)-2e+\varepsilon(3t-9-a+12t-32+4\varepsilon + 3a) \\
&\geq 12t-36-28a+9at+(3t-9-a)\cdot a \cdot\frac{t-4}{t-3}-2e+\varepsilon(15t-41+2a) \\
&= 12t-36-40a+12at-a^2+\frac{a^2}{t-3}-2e+\varepsilon(15t-41+2a) \\
\end{align*}

Thus
\begin{equation}\label{eq:xmino}
\frac{(3t-5)^2}{n^2}x\geq 6t-18-20a+6at-a^2/2-e +\varepsilon(15t/2-41/2+a)
\end{equation}

We now turn to an upper bound on $x$, which is here simply expressed as the number of pairs of vertices $(i, d) \in I \times D$ minus the number of non-edges between these two sets, which we denote by $\frac{e'n^2}{(3t-5)^2}$. Hence:

$$\frac{(3t-9-a)n}{3t-5}\cdot \frac{(2+2a+2\varepsilon)n}{3t-5}-\frac{e'n^2}{(3t-5)^2}\geq x$$

Which we restate as:
$$
(3t-9-a)(2+2a+2\varepsilon)-e'\geq \frac{(3t-5)^2}{n^2}x
$$

Finally giving the equation:

\begin{equation}\label{eq:xmajo}
6t-18+6at-20a-2a^2-e'+\varepsilon(6t-18-2a)\geq \frac{(3t-5)^2}{n^2}x 
\end{equation}

We then sum the equations (\ref{eq:xmino}) and (\ref{eq:xmajo}) to get:
\begin{equation}\label{eq:finalcontra}
e'-e+3a^2/2+\varepsilon(3t/2-5/2+3a)\leq 0 
\end{equation} 

Since $t\geq 3$, we have $\varepsilon(3t/2 - 5/2) > 0$. Thus, to reach a contradiction, it suffices to show that the number of edges $\frac{en^2}{(3t-5)^2}$ inside $D$ cannot exceed the number of non-edges $\frac{e'n^2}{(3t-5)^2}$ between $D$ and $I$ plus $\frac{3a^2n^2}{2(3t-5)^2} + \frac{3a\varepsilon n^2}{(3t-5)^2}$. Let us analyse $e'$ in more details. 

Let $x$ be a vertex of $D$. Denote by $\omega(x)$ the number of non-neighbors of $x$ in $I$. If $xy \in E(G[D])$ is such that $\omega(x)+\omega(y)< \frac{(3-a)n}{3t-5}$, the vertices $x$ and $y$ have more than $\frac{3t-9-a-(3-a)}{3t-5} \cdot n= \frac{3t-12}{3t-5} \cdot n$ common neighbors in $I$. By \cref{lem:find-many-cliques}, this implies that $xy$ can be extended with a clique of size $t-3$ in $I$ to form a clique of size $t-1$ in $U$, which is impossible. 

So we can assume that $\omega(x)+\omega(y)\geq \frac{(3-a)n}{3t-5}$ for every edge $xy \in E(G[D])$. In other words, the function $\omega_f=\frac{(3t-5)\omega}{(3-a)n}$ is a fractional vertex cover of the induced subgraph $G[D]$. Denoting by $c_f$ the total sum of $\omega _f$, we get $\frac{e'n^2}{(3t-5)^2} = \frac{c_f(3-a)n}{3t-5}$.

Let us now consider a clique $K$ in $I$ of size $t-3$ (which exists by \cref{lem:find-many-cliques}). 
The intersection $I_K$ of the neighborhoods of the vertices of $K$ has size at least $n-\frac{3(t-3)}{3t-5} \cdot n= \frac{4n}{3t-5}$. 
Note that $I_K$ is disjoint from $I$ since $I$ is $K_{t-2}$-free. 
In particular $I_K$ contains all but at most $\frac{an}{3t-5}$ vertices of $D$ (since $|V \setminus I|  =|D \cup R| = \frac{4+a}{3t-5} \cdot n$). Observe that $D\cap I_K$ is an independent set since otherwise $K$ could be extended to a clique of size $t-1$ in $U$. 
Denoting by $\frac{bn}{3t-5}$ the size of a minimum vertex cover of $G[D]$, we get $\frac{bn}{3t-5}\leq\frac{an}{3t-5}$.  

Finally, $G[D]$ has $\frac{2+2a+2\varepsilon}{3t-5} \cdot n$ vertices, has $\frac{en^2}{(3t-5)^2}$ edges, fractional vertex cover $vc^*(D)$ and vertex cover size $\frac{bn}{3t-5}$ with $0 \leq b\leq a \leq 1$. So we can apply~\cref{lem:vertexcoverKt} to obtain that 

$$D(G[D])=(3-a) \cdot vc^*(D) \cdot \frac{n}{3t-5}-\frac{en^2}{(3t-5)^2}+\frac{3a^2n^2}{2(3t-5)^2}+ \frac{3a\varepsilon n^2}{(3t-5)^2}\geq 0$$

Since $\frac{3-a}{3t-5} \cdot vc^*(D) \leq \frac{3-a}{3t-5} \cdot c_f = \frac{e'n}{(3t-5)^2}$, we finally get $e'-e+3a^2/2+3a\varepsilon\geq 0$ which contradicts (\ref{eq:finalcontra}).
\end{proof}

\section{Domination versus fractional chromatic number}\label{sec:domtour}

We start by giving the proof of the central result on tri-tournaments.

\HWtournaments*

\begin{proof}
From a tournament $T$, form the tri-hypergraph $H_T$ on vertex set $V$ by adding for every $v\in V$ the hyperedge $(B(v),R(v),W(v))$ such that $B(v)=N_A^-[v]$ and $R(v)=N_R^-(v)$.
Observe that $H_T$ has VC-dimension $d$, and fractional transversal value at most 2 by \cref{thm:fisherryan}. By \cref{thm:bounded-integrality-gap}, $H_T$ has a transversal $X$ (thus intersecting $B(v)\cup R(v)$ for every $v \in V$) of size $O(d)$. In particular, $X$ is a dominating set of $T$.
\end{proof}

We now turn to the main result of this section, whose main point is to illustrate how easily the randomness transversal argument comes into play. Since the proof repeatedly uses a density increase argument, we do not try to get the best estimates in all computations. The proof gives a bound of the form $\gamma^+ \leq 2^{O\left(\chi^a_f \cdot \log \chi^a_f\right)}$.

\domfracchi*

\begin{proof}
We show that there exists a function $h$ such that every tournament $T=(V,A)$ satisfying $\chi_f^a(T)\leq x$ has a dominating set of size at most $h(x)$. We can set $h(x)= 1$ when $x< 3/2$ since every tournament with $\chi_f^a(T)< 3/2$ does not contain a circuit of length 3, and hence is transitive. 
We now assume that $h\left(\chi_f^a(T)-1/2\right)$ exists, and we show that we can bound $\gamma ^+(T)$ in terms of $h\left(\chi_f^a(T)-1/2\right)$. We write $c\coloneqq1/\chi_f^a(T)$, where $c$ is a rational number since it is a solution of a linear program with integer coefficients.

Let ${\mathcal F}$ be a family of transitive sub-tournaments $T_1,\dots ,T_t$ such that every vertex belongs to $ct$ of them. 
Let $s = \frac{4}{c^2}$.
If $T$ has no shattered set of size $s$ then its VC-dimension is bounded and therefore its domination as well.

If $T$ contains shattered sets of size $s$, our strategy is then to find an arc transversal of all such shatters, in order to reduce to a tri-tournament $T'$ with bounded VC-dimension. 
To do so, consider a shattered set $X$ of size $s$, and let $S$ be a set of $2^{|X|}$ vertices with all possible adjacencies on $X$.
Observe that in a $T_i$ which contains $k$ vertices of $X$, there are at most $k+1$ possible adjacency types on $X$, so $T_i$ contains at most $\frac{k+1}{2^k} \cdot 2^s$ vertices of $S$.
Therefore, $T_i$ contains at most $\frac{k(k+1)}{2^k} \cdot 2^s \leq 2^{s+1}$ arcs between $X$ and $S$.

Hence there is an arc of $T$ (in any direction) between $X$ and $S$ which is contained in at most $\frac{2^{s+1}}{s2^s} \cdot t = \frac{2}{s} \cdot t = c^2t/2$ of the $T_i$'s. 

To form the tri-tournament $T'=(V,A,R)$, add to $R$ all the arcs $yx$ such that $xy$ is an arc of $T$ contained in at most $c^2t/2$ of the $T_i$'s. By construction, $T'$ has VC-dimension at most $s$.

Therefore, by \cref{thm:HWtournaments}, $T'$ has a bounded size dominating set $X$. Thus for every vertex $y \notin X$, there exists $x\in X$ such that $xy$ is an arc of $T$, or $xy$ is a red arc of $T'$.
Hence, the only vertices which are left to dominate are in the red out-neighborhood of $X$. To conclude, we just have to dominate (in $T$) each set $R^+(x)$, for $x\in X$. Let ${\mathcal F}_x$ be the subfamily of all $T_i$'s which contain $x$. By construction of $R$, given $x\in X$, every $y\in R^+(x)$ appears in at most $c^2t/2$ elements of ${\mathcal F}_x$.  In particular, ${\mathcal F}_y \setminus {\mathcal F}_x$
has size at least $(c-c^2/2)t$. Thus $y$ appears at least in a  $\frac{c-c^2/2}{1-c}$-fraction of ${\mathcal F}\setminus {\mathcal F}_x$. Since $\frac{1-c}{c-c^2/2}<\frac{1}{c}-\frac{1}{2}$, the family  ${\mathcal F}\setminus {\mathcal F}_x$ is a $\frac{1}{c}-\frac{1}{2}$-fractional acyclic coloring of $R^+(x)$.

Finally, we obtain that $h\left(\chi_f^a(T)\right)\leq |X|+|X| \cdot h\left(\chi_f^a(T)-1/2\right)$.
\end{proof}

An obvious open problem is to investigate whether the function in \cref{thm:domfracchi} can be reduced. Is it true that for tournaments the domination is polynomially bounded in terms of the fractional acyclic chromatic number?

This result has a direct corollary in the field of ``local to global'' properties. We can rephrase our discussion on chromatic thresholds of triangle-free graphs as: Under some minimum degree condition, if the neighborhood of every vertex is an independent set (hence has chromatic number 1), then the whole graph has bounded chromatic number.
The direct generalization of this result to tournaments (dropping any requirement on degree, since tournaments are already dense) was conjectured in \cite{BCCFLSST13}: If a tournament $T$ satisfies that for every vertex $v$, the out-neighborhood $N^+(v)$ satisfies $\chi^a \left(T[N^+(v)]\right)\leq k$ (say that $T$ is \emph{locally $k$-bounded}), then  $\chi^a(T)$ is bounded by a function of $k$. This was positively answered in \cite{HLTW19}. \cref{thm:domfracchi} sheds some light on this question, and shows that the fractional acyclic chromatic number is a versatile tool for tournaments:

\begin{theorem}\label{thm:locfracchi}
Every locally $k$-bounded tournament $T$ satisfies $\chi_f^a(T)\leq 2k$. 
\end{theorem}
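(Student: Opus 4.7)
The plan is to combine the Fisher–Ryan-type winning strategy of \cref{thm:fisherryan} with the local acyclic partitions guaranteed by the hypothesis, exactly along the lines sketched in the introduction.

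First, I would invoke \cref{thm:fisherryan} to obtain a fractional in-domination function $\omega : V \to \mathbb{R}^+$ of total weight $\omega(V) \leq 2$ such that $\omega(N^-[v]) \geq 1$ for every vertex $v$. Next, using the local $k$-boundedness, for each $v \in V$, I would fix an acyclic partition $A_v^1,\dots,A_v^k$ of $T[N^+(v)]$ into $k$ transitive sub-tournaments. Since $v$ beats every vertex of $N^+(v)$, prepending $v$ preserves transitivity, so $\widetilde{A}_v^i := A_v^i \cup \{v\}$ is a transitive sub-tournament of $T$ for each $i \in [k]$, and the family $\{\widetilde{A}_v^1,\dots,\widetilde{A}_v^k\}$ covers $N^+[v]$.

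Then I would define the candidate fractional acyclic coloring $w$ by setting $w(\widetilde{A}_v^i) = \omega(v)$ for every $v \in V$ and every $i \in [k]$ (and $0$ on all other transitive sub-tournaments). The total weight is
\[
\sum_{v \in V}\sum_{i=1}^{k} w(\widetilde{A}_v^i) \;=\; k \cdot \omega(V) \;\leq\; 2k,
\]
which is the desired bound.

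The only thing left is to check coverage, and this is where the in-domination property of $\omega$ comes in. For a fixed vertex $u \in V$, the sub-tournaments of positive weight that contain $u$ are, on the one hand, all $\widetilde{A}_u^i$ for $i \in [k]$ (contributing $k\cdot\omega(u)$), and on the other hand, for each in-neighbor $v$ of $u$, at least one of the $\widetilde{A}_v^i$ (since $u \in N^+(v)$ is covered by the partition), contributing at least $\omega(v)$. Summing, the total weight on $u$ is at least $k\omega(u) + \sum_{v \in N^-(u)} \omega(v) \geq \omega(N^-[u]) \geq 1$. Hence $w$ is a fractional acyclic coloring of weight at most $2k$, yielding $\chi_f^a(T)\leq 2k$. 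The whole proof is essentially a one-line verification; there is no real obstacle, as the hypothesis hands us the local acyclic partitions and \cref{thm:fisherryan} supplies the distribution that tells us how to weight them globally.
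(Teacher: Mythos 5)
Your proof is correct and follows essentially the same route as the paper: you weight the local transitive partitions of the closed out-neighborhoods by the Fisher--Ryan fractional dominating function (the paper's winning strategy $2p$), giving total weight $2k$, and coverage follows from $\omega(N^-[u]) \geq 1$. The only cosmetic difference is that you prepend $v$ to all $k$ parts of its out-neighborhood partition rather than placing it in a single part, which changes nothing.
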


\begin{proof}
Let $p$ be the winning strategy of $T$. For every vertex $v$, assign weight $2p(v)$ to each of the transitive tournaments $T_v^1,\dots,T_v^k$ which partition  $\{v\}\cup N^+(v)$. The total weight is $2k$, and since $p(v\cup N^-(v))\geq 1/2$, every vertex is covered with weight at least 1.
\end{proof}

\begin{corollary}
Every locally $k$-bounded tournament has bounded acyclic chromatic number.
\end{corollary}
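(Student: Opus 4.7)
The plan is to combine the two preceding results with a simple partitioning argument. First, apply Theorem \ref{thm:locfracchi} to deduce that a locally $k$-bounded tournament $T$ satisfies $\chi^a_f(T)\leq 2k$. Then apply Theorem \ref{thm:domfracchi} to get a dominating set $X\subseteq V(T)$ of size at most $h(2k)$, where $h$ is the function produced by that theorem.

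Next, I would exploit the dominating set to color $T$ by pieces. For every vertex $y\in V\setminus X$, fix some $x_y\in X$ such that $x_yy\in A(T)$, and set $Y_x=\{y\in V\setminus X : x_y=x\}$ for each $x\in X$. The family $(Y_x)_{x\in X}$ is a partition of $V\setminus X$, and by construction $Y_x\subseteq N^+(x)$, so $T[Y_x]$ is an induced subtournament of $T[N^+(x)]$; local $k$-boundedness and the monotonicity of $\chi^a$ under taking induced subtournaments therefore give $\chi^a(T[Y_x])\leq k$. Finally, since $x$ dominates every vertex of $Y_x$, adjoining $x$ as a source to any transitive subtournament of $T[Y_x]$ keeps it transitive, so $\chi^a(T[\{x\}\cup Y_x])\leq k$.

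Taking the union of such acyclic $k$-colorings over all $x\in X$ yields a partition of $V(T)$ into at most $k\cdot|X|\leq k\cdot h(2k)$ transitive subtournaments, which bounds $\chi^a(T)$ by a function of $k$ alone.

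There is no real obstacle here: everything rests on the two theorems already proved, plus the elementary fact that $x$ can be absorbed into a transitive piece of $N^+(x)$. The only (very mild) subtlety is to make the partition of $V\setminus X$ disjoint before invoking $\chi^a$ on each part, which is handled by the arbitrary assignment $y\mapsto x_y$.
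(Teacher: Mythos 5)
Your proposal is correct and follows the paper's argument exactly: apply \cref{thm:locfracchi} to get $\chi^a_f(T)\leq 2k$, then \cref{thm:domfracchi} to get a bounded dominating set $X$, and color each piece of $V$ assigned to a dominator $x\in X$ inside $N^+(x)$ with $k$ transitive sets, giving $\chi^a(T)\leq k|X|$. The extra details you supply (the explicit partition $(Y_x)_{x\in X}$ and absorbing $x$ as a source into one transitive part) are just a careful spelling-out of the paper's one-line conclusion.
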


\begin{proof}
Apply \cref{thm:locfracchi}, and then \cref{thm:domfracchi} to find a bounded size dominating set $X$. Since every out-neighborhood has acyclic chromatic number $k$, we have $\chi^a(T)\leq k|X|$. 
\end{proof}

The following was proposed to generalize locally $k$-bounded tournaments.
Given an arc $uv$ of $T$, we denote by $DT(uv)$ the set of vertices $w$ forming a directed triangle $uvw$. We say that $T$ is \emph{arc-locally $k$-bounded} if $\chi^a(DT(uv))\leq k$  for all arcs $uv$. 
It was independently proved by Klingelh{\"{o}}fer and Newman \cite{Kli24}, and by Nguyen, Scott and Seymour \cite{NSS24} that arc-locally $k$-bounded tournaments have bounded chromatic number. 
The proof of Klingelh{\"{o}}fer and Newman relies on ideas from \cite{KN23} to conclude in the case where the domination is bounded.
We show how to combine these ideas with \cref{thm:domfracchi} to simplify the proof. The following proof was suggested to us by Alantha Newman.

\begin{theorem}\label{thm:k-arc-bdd}
Every arc-locally $k$-bounded tournament has bounded acyclic chromatic number.
\end{theorem}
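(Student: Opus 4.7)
The approach combines \cref{thm:domfracchi} with a recursive argument mirroring its proof. I would first show that every arc-locally $k$-bounded tournament $T$ satisfies $\chi_f^a(T) \leq g(k)$, for some function $g$ depending only on $k$. Given a winning strategy $p$ for $T$ (from \cref{thm:fisherryan}) and, for each arc $uv \in A(T)$, a partition of $DT(uv)$ into at most $k$ transitive sub-tournaments (guaranteed by arc-local $k$-boundedness), I would build a fractional acyclic cover by assigning each such piece a weight proportional to $p(u)p(v)\mathbf{1}[uv \in A]$. The total weight is $O(k)$; a vertex $x$ receives total weight proportional to $\sum_{u \in N^+(x),\, v \in N^-(x),\, uv \in A} p(u)p(v)$, which the winning-strategy inequalities $p(N^-[v]) \geq 1/2$ should bound from below by a constant fraction, up to augmenting the cover with singleton transitive sub-tournaments weighted by $p$ to handle vertices lying in nearly-transitive regions.

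Once $\chi_f^a(T) \leq g(k)$ is established, \cref{thm:domfracchi} gives $\gamma^+(T) \leq \Gamma$ for some $\Gamma = \Gamma(k)$. I would then conclude by strong induction on $\chi_f^a$, mirroring the recursion in the proof of \cref{thm:domfracchi}. Specifically, build the tri-tournament $T'$ by adding red arcs to $T$ coming from the ``random-like'' arcs that appear in few transitive pieces of a witnessing cover, and find a dominating set $X$ of $T'$ of size at most $\Gamma$ via \cref{thm:HWtournaments}. The parts $V_x := R^+_{T'}(x)$ (for $x \in X$) partition $V \setminus X$, each sub-tournament $T[V_x]$ is arc-locally $k$-bounded (the property passes to sub-tournaments), and the density-increase argument of \cref{thm:domfracchi} ensures $\chi_f^a(T[V_x]) \leq \chi_f^a(T) - 1/2$. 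By induction, $\chi^a(T[V_x])$ is bounded by a function of $k$ and the smaller parameter, and combining with a singleton acyclic coloring of $X$ gives $\chi^a(T) \leq |X| + |X| \cdot \max_x \chi^a(T[V_x])$. The recursion terminates at $\chi_f^a < 3/2$, which forces $T$ to be transitive and hence $\chi^a(T) = 1$.

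The main obstacle is Step 1. Unlike in \cref{thm:locfracchi}, where every out-neighborhood directly yields a full transitive partition of $\{v\}\cup N^+(v)$, arc-local $k$-boundedness only controls the sets $DT(uv)$. For tournaments with large ``transitive'' parts (where $DT(uv) = \emptyset$ for many arcs $uv$), the $DT$-based cover is inadequate on its own and must be supplemented. Verifying that the winning-strategy-weighted cover, possibly augmented with singletons, actually covers every vertex with weight at least $1$ is the crux of the argument and requires delicate manipulation of the sum $\sum_{u \in N^+(x),\, v \in N^-(x),\, uv \in A} p(u)p(v)$ using the bounds $p(N^-(x)) \geq 1/2$ and complementary estimates on $p(N^+(x))$.
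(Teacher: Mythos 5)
Your plan has two genuine gaps, one in each step. The crux you flag in Step~1 is a real obstruction, and the augmentation you propose does not resolve it. Weighting only the pieces $DT(uv)$ over arcs $uv$ by $p(u)p(v)$ gives no coverage at all in transitive regions (in a transitive tournament every $DT(uv)$ is empty, and more generally the sum $\sum_{u \in N^+(x),\, v \in N^-(x),\, uv \in A} p(u)p(v)$ ranges only over ``back arcs'' across $x$ and can be $0$), while adding $p$-weighted singletons only gives a vertex $x$ weight $p(x)$, which can be arbitrarily small. The paper's proof of \cref{lem:k-arc:bdd-frac} fixes exactly this with two ingredients you are missing: it covers by the sets $N^-[s]\cap N^+[t]$ over \emph{all} pairs $(s,t)$ (not just arcs), which requires the nontrivial lemma of Klingelh\"ofer--Newman (Lemma~2.4 of \cite{Kli24}, cf.\ \cref{lem:chi-KN23}'s companion) that such sets have chromatic number at most $5k$ in an arc-locally $k$-bounded tournament --- this is precisely what handles the nearly-transitive pairs; and it uses \emph{two} distributions, a winning strategy $p$ and a losing strategy $q$, assigning weight $4q(s)p(t)$, so that the coverage of $v$ factorizes as $\left(\sum_{s\in N^+[v]}2q(s)\right)\cdot\left(\sum_{t\in N^-[v]}2p(t)\right)\geq 1$. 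With only $p$ and only arcs, the coverage does not factorize and cannot be bounded below.

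Step~2 is also broken as stated. In the tri-tournament $T'$, the sets $R^+_{T'}(x)$, $x\in X$, do \emph{not} partition (or even cover) $V\setminus X$: a vertex dominated by a plain arc lies in $N^+_T(x)$, and in the arc-locally $k$-bounded setting you have no bound on $\chi^a(T[N^+(x)])$ --- that is exactly the difference between arc-local and local boundedness --- nor does $N^+(x)$ enjoy the density-increase drop in $\chi^a_f$; only the red out-neighborhoods do. So the recursion $\chi^a(T)\leq |X|+|X|\cdot\max_x\chi^a(T[V_x])$ leaves the plainly-dominated vertices uncolored, and the induction does not close. The paper avoids any such recursion: once $\chi^a_f(T)\leq 20k$ is established, it applies \cref{thm:domfracchi} to $T$ and to its reversal to bound both $\gamma^+(T)$ and $\gamma^-(T)$, and then invokes the second Klingelh\"ofer--Newman result $\chi^a(T)\leq 5k\cdot\gamma^-(T)\cdot\gamma^+(T)$ (\cref{lem:chi-KN23}) to conclude. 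If you want to keep a recursion-style Step~2 you would need, at minimum, a way to handle the vertices dominated by plain arcs, and no bounded-chromatic structure is available for them from arc-local boundedness alone.
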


We will use the following result from \cite{Kli24}.

\begin{lemma}\label{lem:chi-KN23}
If $T$ is an arc-locally $k$-bounded tournament then $\chi^a(T) \leq 5k \cdot \gamma^-(T) \cdot \gamma^+(T)$.
\end{lemma}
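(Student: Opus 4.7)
The plan is to fix a minimum out-dominating set $D^+$ of size $\gamma^+(T)$ and a minimum in-absorbing set $D^-$ of size $\gamma^-(T)$, and to assign to each vertex $v \notin D^+ \cup D^-$ a pair $(f(v), g(v)) \in D^+ \times D^-$ with $f(v) \to v$ and $v \to g(v)$. Setting $V_{d, d'} = \{v : f(v)=d,\ g(v)=d'\}$ yields the partition $V(T) = (D^+ \cup D^-) \sqcup \bigsqcup_{(d,d')} V_{d, d'}$. Since $\chi^a$ is the minimum number of transitive subtournaments needed to cover $V$, it suffices to bound $\chi^a(V_{d,d'}) \leq 5k$ for every pair $(d, d')$: the $\gamma^+(T) + \gamma^-(T)$ singleton transitive subtournaments on $D^+ \cup D^-$ are absorbed in $5k \cdot \gamma^+(T) \cdot \gamma^-(T)$ as soon as $k \geq 1$.

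For a pair with the arc $d' \to d$ present in $T$ (the easy case), every $v \in V_{d, d'}$ closes a directed triangle $d' \to d \to v \to d'$, so $V_{d, d'} \subseteq DT(d'd)$; the arc-local $k$-boundedness of $T$ then gives $\chi^a(V_{d, d'}) \leq k$ at once.

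The main obstacle is the opposite case, where the arc $d \to d'$ is present: the triangle $d \to v \to d'$ is transitive, so no arc based on $\{d, d'\}$ alone has a $DT$-set capturing $V_{d, d'}$. Here the plan is to exhibit at most five arcs $e_1, \ldots, e_5 \in A(T)$ such that $V_{d, d'} \subseteq \bigcup_{i=1}^{5} DT(e_i)$. Combining the arc-local bound on each $DT(e_i)$ with subadditivity of $\chi^a$ over unions then yields $\chi^a(V_{d, d'}) \leq 5k$. Producing these arcs is where the real work lies: one would select one or two auxiliary vertices inside $V_{d, d'}$ (for instance extremal for in-degree or out-degree within the class) and pair them with $d$, with $d'$, or with one another, arranging that every remaining vertex of $V_{d, d'}$ sits at the apex of a directed triangle based on at least one of $e_1, \ldots, e_5$. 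Showing that the constant five is always sufficient, regardless of the internal tournament structure of $V_{d, d'}$, is the crux of the argument and is precisely what produces the factor $5$ in the stated inequality.
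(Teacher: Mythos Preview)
Your framework is the right one and matches the approach in the cited reference (the paper itself does not prove this lemma; it quotes it from \cite{Kli24}): cover $V(T)$ by the sets $N^+[d]\cap N^-[d']$ over $(d,d')\in D^+\times D^-$ and show that each has acyclic chromatic number at most $5k$. Your treatment of the case $d'\to d$ is correct and in fact yields the better bound~$k$.

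The proposal is not a proof, however, because the case $d\to d'$ is the entire substance of the lemma and you explicitly leave it open. Two concrete issues with your sketch for that case:
\begin{itemize}
\item The plan of covering $V_{d,d'}$ by five sets $DT(e_i)$ with $e_i$ built from ``one or two auxiliary vertices inside $V_{d,d'}$ paired with $d$, with $d'$, or with one another'' cannot work as stated. When $T[V_{d,d'}]$ is transitive (which is perfectly compatible with $d\to d'$), every such arc has $DT$-set disjoint from $V_{d,d'}$: for $dv$, $vd'$, and $dd'$ this is because $d$ dominates and $d'$ absorbs all of $V_{d,d'}$, and for arcs inside $V_{d,d'}$ it is because a transitive tournament has no directed triangle. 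So either some of the five pieces must be allowed to be transitive subtournaments outright, or one must use auxiliary vertices from outside $V_{d,d'}\cup\{d,d'\}$. (For instance, if some $w\in DT(dd')$ exists, then already $V_{d,d'}\subseteq DT(wd)\cup DT(d'w)$, giving $2k$; the genuinely delicate sub-case is $DT(dd')=\emptyset$.)
\item The accounting for $D^+\cup D^-$ is off: with open neighborhoods your color count is $5k\gamma^+\gamma^- + \gamma^+ + \gamma^-$, not $5k\gamma^+\gamma^-$. Working with closed neighborhoods $N^+[d]\cap N^-[d']$, as the cited proof does, makes every vertex land in some class and removes the additive term.
\end{itemize}
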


To bound $\gamma^+(T)$ and $\gamma^-(T)$, by \cref{thm:domfracchi}, it suffices to prove that arc-locally $k$-bounded tournaments have bounded fractional acyclic chromatic number.

\begin{lemma}\label{lem:k-arc:bdd-frac}
Every arc-locally $k$-bounded tournament $T$ satisfies $\chi^a_f(T)\leq 20k$.
\end{lemma}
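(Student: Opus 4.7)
The plan is a two-layer application of \cref{thm:fisherryan}. Applying \cref{thm:fisherryan} to $T$ yields a weighting $\omega:V\to\mathbb{R}_{\ge 0}$ with $\omega(V)\le 2$ and $\omega(N^-[v])\ge 1$ for every $v$. If for each $u\in V$ one exhibits a fractional acyclic cover of $T[N^+[u]]$ of weight at most $10k$, combining these covers with coefficients $\omega(u)$ gives a fractional acyclic cover of $T$ of weight at most $2\cdot 10k = 20k$, with every vertex $v$ covered with weight at least $\omega(N^-[v])\ge 1$. So it suffices to show $\chi^a_f(T[N^+[u]])\le 10k$ for every $u$, and since $u$ is a universal source, this reduces to $\chi^a_f(T[N^+(u)])\le 10k$ (the source $u$ can be appended to any transitive subtournament of $T[N^+(u)]$).

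For this inner bound I would apply \cref{thm:fisherryan} to the still arc-locally $k$-bounded tournament $T':=T[N^+(u)]$ to obtain a winning strategy $p'$. For every arc $vw$ of $T'$ and every part $P$ of the $k$-partition of $DT_T(vw)\cap N^+(u)$ guaranteed by arc-local-$k$-boundedness, both $P\cup\{v\}$ and $P\cup\{w\}$ are transitive because each $x\in DT_T(vw)$ satisfies $w\to x\to v$. Assigning weight $\alpha\,p'(v)p'(w)$ to each of these $2k$ transitive subtournaments gives a candidate cover of total weight at most $k\alpha$ by the Fisher--Ryan inequality $\sum_{vw\in A(T')} p'(v)p'(w)\le 1/2$. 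The coverage of $x\in T'$ equals $k\alpha\,p'(x)(1-p'(x)) + 2\alpha\,t^x_{p'}$, where $t^x_{p'}$ denotes the $p'$-weighted count of directed triangles through $x$ in $T'$; a second Fisher--Ryan exchange, applied to the bipartite arcs between $N^+_{T'}(x)$ and $N^-_{T'}(x)$, gives the quadratic lower bound $2t^x_{p'}\ge p'(N^-_{T'}(x))\bigl(p'(N^+_{T'}(x))+p'(x)\bigr)$. Combined with $p'(N^-_{T'}[x])\ge 1/2$, this suffices to cover every vertex whose $p'$-mass is not entirely concentrated on the ``double-sink'' vertices of $T'$.

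The delicate step, which is the main obstacle of the proof, is to cover the residual double-sink vertices, namely those $x\in T'$ with $p'(x) = p'(N^+_{T'}(x)) = 0$. Such $x$ lie entirely in the common out-neighborhood of $\mathrm{supp}(p')$, so they can be covered by an auxiliary family of transitive subtournaments exploiting the universal source $u$, for instance transitive triples $\{u,s,x\}$ with $s\in\mathrm{supp}(p')$ and $s\to x$, or the larger transitive sets $\{u,s\}\cup P$ where $P$ is a transitive subset of the residual sub-tournament $T[R]$. Balancing the weight of this auxiliary cover against the main Fisher--Ryan cover -- using a recursive bound on $\chi^a_f(T[R])$ combined with tight constant optimization so that the universal-source overhead stays linear in $k$ rather than scaling with $|R|$ -- is what eventually yields the inner bound $\chi^a_f(T[N^+[u]]) \le 10k$ and hence the stated bound $\chi^a_f(T) \le 20k$.
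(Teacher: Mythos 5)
There is a genuine gap, and it sits exactly where you flag the ``delicate step''. Your outer layer is correct but buys nothing: $T[N^+[u]]$ is again an arbitrary arc-locally $k$-bounded tournament (with a universal source appended, which gives no leverage), so the inner claim $\chi^a_f(T[N^+[u]])\le 10k$ is a statement of the same kind as the lemma you are trying to prove. In the inner argument, your quadratic bound $2t^x_{p'}\ge p'(N^-_{T'}(x))\bigl(p'(N^+_{T'}(x))+p'(x)\bigr)$ is in fact true (it follows by summing the winning-strategy inequalities $p'(N^+(a))\le p'(N^-(a))$ over $a\in N^-_{T'}(x)$ with weights $p'(a)$), but it does not rescue the construction: the coverage of $x$ is $k\alpha\,p'(x)(1-p'(x))+2\alpha\,t^x_{p'}$, which becomes arbitrarily small --- and can be exactly zero --- when $p'(x)$ and $p'(N^+_{T'}(x))$ are small. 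Concretely, take $x$ beaten by a $3$-cycle $a_1a_2a_3$ and let $x$'s unique out-neighbor $b$ be beaten by everything; this tournament is arc-locally $1$-bounded, yet every winning strategy is forced to satisfy $p'(x)=p'(b)=0$, and $t^x_{p'}=0$, so no scaling $\alpha$ covers $x$. Hence the entire burden falls on your auxiliary cover of the residual set $R$, and that part is not a proof: triples $\{u,s,x\}$ weighted by $p'(s)$ have total weight growing with $|R|$, while the alternative sets $\{u,s\}\cup P$ with $P$ transitive in $T[R]$ require a bound on $\chi^a_f(T[R])$ --- but $T[R]$ is just another arc-locally $k$-bounded tournament with no decreasing parameter identified, so the ``recursive bound'' you invoke is circular. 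Note also that the difficulty is not confined to the exactly-residual vertices: any vertex with small but nonzero $p'(x)+p'(N^+_{T'}(x))$ receives arbitrarily small coverage, so a two-case split ``main cover plus residual patch'' cannot close with bounded total weight.

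The ingredient you are missing is the structural fact the paper imports from Klingelh\"ofer--Newman \cite{Kli24} (their Lemma 2.4): in a strongly connected arc-locally $k$-bounded tournament, every set $N^-[s]\cap N^+[t]$ has acyclic chromatic number at most $5k$. With that in hand the paper applies \cref{thm:fisherryan} twice --- a winning strategy $p$ (so $2p$ is fractionally dominating) and a losing strategy $q$ (so $2q$ is fractionally absorbing) --- and assigns weight $4q(s)p(t)$ to each of the at most $5k$ transitive pieces of $N^-[s]\cap N^+[t]$. Every vertex $v$ is then covered with weight $\sum_{s\in N^+[v]}2q(s)\cdot\sum_{t\in N^-[v]}2p(t)\ge 1$, uniformly over all vertices (sink-like ones included), with total weight at most $20k$. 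This uniform coverage is precisely what your triangle-based cover, which pays each vertex only in proportion to its weighted triangle count, cannot deliver; to complete your route you would essentially have to reprove the $5k$-coloring of the intervals $N^-[s]\cap N^+[t]$ or find a substitute for it.
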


\begin{proof}
Since the fractional acyclic chromatic number of a tournament is the maximum of that of its strongly connected components, we can assume that $T$ is strongly connected. Consider a winning strategy $p$ (so that $2p$ is fractionally dominating) and a losing strategy $q$ (so that $2q$ is fractionally absorbing) for $T$. 
For each pair of (non-necessarily distinct) vertices $(s,t)$ of $T$, Lemma 2.4 from \cite{Kli24} implies that $N^-[s] \cap N^+[t]$ can be colored with at most $5k$ colors (see also Lemma 2.6 in \cite{KN23}).
Assign fractional weight $4q(s)p(t)$ to each of the corresponding transitive subtournaments.
Notice that some transitive subtournaments might be encountered more than once; in this case, their total weight is the sum of the weights they receive.

We claim that each vertex is covered by at least 1 with respect to the total weight function $w$.  For a vertex $v$, the weight of pairs $(s,t)$ for which $v \in N^-[s] \cap N^+[t]$ is $$\sum_{\substack{s \in N^+[v]\\t \in N^-[v]}} 4q(s)p(t) = \sum_{s \in   N^+[v]} 2q(s) \sum_{t \in N^-[v]} 2p(t) \geq 1.$$ It remains to compute the total weight over all transitive subtournaments encountered.  We have $$\chi^a_f(T) \leq \sum_{s \in V} \sum_{t \in V} 2q(s) 2p(t) \cdot 5k \leq 20k.\vspace{-10mm}$$
\end{proof}

\begin{proof}[\textbf{Proof of \cref{thm:k-arc-bdd}}.]
If $T$ is arc-locally $k$-bounded then \cref{lem:k-arc:bdd-frac} implies that $\chi^a_f(T)$ is bounded, and \cref{thm:domfracchi} implies in turn that $\gamma^+(T)$ is bounded.
Let $T'$ be the tournament obtained by reversing the orientation of all arcs of $T$. Then, $\chi^a_f(T') = \chi^a_f(T)$ and any dominating set of $T'$ is an absorbing set of $T$. Therefore, applying \cref{thm:domfracchi} to $T'$ gives that $\gamma^-(T)$ is bounded.
We conclude using \cref{lem:chi-KN23}.
\end{proof}

\section{Dominating majority digraphs}\label{sec:majo}

Our goal in this section is to show that every $(1/2-\varepsilon)$-majority digraph has bounded domination number. Let us first introduce some definitions.
Let $\leq_1, \ldots, \leq_m$ be total orders on the same ground set $V$.
For $c \in [0, 1]$, the \emph{$c$-majority digraph} $D_c$ of $(\leq_1, \ldots, \leq_m)$ is the digraph on vertex set $V$ where $xy$ is an arc if $|\{i \in [m] : x <_i y\}| \geq c \cdot m$. In other words, if $xy$ is an arc of $D_c$, the vertex $x$ comes before $y$ in at least a $c$-fraction of the orders $\leq_i$. Given $\varepsilon>0$, observe that the $(c-\varepsilon)$-majority directed graph  $D_{c-\varepsilon}$ on vertex set $V$ based on the same orders $\leq_1, \ldots, \leq_m$ is a supergraph of $D_c$. For more about majority digraphs, see~\cite{Alon02}.

The following result was proved independently by Charikar, Ramakrishnan and Wang \cite{CRW25}.

\majosloppy*

To prove \cref{thm:majosloppy}, we need some preliminary results. Given a directed graph $D = (V, A)$, an arc $uv$ is \emph{consistent} with a linear order $\preceq$ on $V$ if $u \prec v$.

For a directed graph $D = (V, A)$ and a linear order $\preceq$ on $V$, we denote by $h(D, \preceq)$ the number of arcs of $D$ which are consistent with $\prec$, and by $h(D)$ the maximum over all linear orders $\preceq$ on $V$ of $h(D, \preceq)$.
For an integer $n$, we denote by $f(n)$ the minimum value over all $n$-vertex tournaments $T$ of $h(T)$.
Erd\H{o}s and Moon \cite{EM65} asked about the asymptotics of the function $f(n)$.
The following result, giving an upper bound on $f(n)$, is due to De La Vega \cite{DLV83}, improving on a result of Spencer \cite{Spencer80}.

\begin{lemma}\label{lem:delavega}
If $T_n$ is a uniformly random $n$-vertex tournament, $\mathbb{P}\left[h(T_n) \leq \frac{1}{2} \cdot \binom{n}{2} + 1.73n^{3/2}\right] \to 1$ when $n \to \infty$.
\end{lemma}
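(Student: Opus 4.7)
The plan is two-step: first concentrate $h(T_n)$ around its expectation, then bound that expectation.

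\textbf{Step 1 (Concentration).} View $h(T_n)$ as a function of the $\binom{n}{2}$ independent Bernoulli$(1/2)$ random variables encoding the orientation of each arc. Reversing a single arc changes $h(T_n,\preceq)$ by at most $1$ for every linear order $\preceq$, and hence changes $h(T_n) = \max_\preceq h(T_n,\preceq)$ by at most $1$. The bounded-differences (Azuma--Hoeffding) inequality therefore yields $|h(T_n) - \mathbb{E} h(T_n)| = O(n\log n)$ with probability $1-o(1)$, which is $o(n^{3/2})$ and so can be absorbed into the target slack. It thus suffices to prove $\mathbb{E}[h(T_n)] \le \tfrac{1}{2}\binom{n}{2} + (1.73 - o(1))\,n^{3/2}$.

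\textbf{Step 2 (Bounding the expectation via bucketing).} Write $h(T_n,\preceq) = \tfrac{1}{2}\binom{n}{2} + \tfrac{1}{2} S_\preceq$, where $S_\preceq = \sum_{\{i,j\}} Y_{ij}^\preceq$ and $Y_{ij}^\preceq = \pm 1$ records whether the arc between $i$ and $j$ agrees with $\preceq$. For an integer $k$, any $\preceq$ induces a partition of $V(T_n)$ into $k$ consecutive blocks $B_1,\ldots,B_k$ of size $n/k$, and
\[
S_\preceq \; \le \; 2\sum_{i=1}^{k}\binom{n/k}{2} \; + \; \sum_{i<j}\bigl(e(B_i,B_j) - e(B_j,B_i)\bigr).
\]
The first (deterministic) term is $O(n^2/k)$. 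The second depends only on the \emph{ordered} block partition induced by $\preceq$; there are at most $k^n$ such ordered partitions, and for each of them the cross-bucket sum is a sum of $\le \binom{n}{2}$ independent Rademacher variables. Hoeffding's inequality combined with a union bound over ordered bucket partitions controls the supremum over $\preceq$ of this second term. With $k \sim \sqrt{n}$, the within-block loss is $O(n^{3/2})$ and the entropy term reads $O(n^{3/2}\sqrt{\log k})$.

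\textbf{Main obstacle and refinement.} The one-shot bucket + union bound above gives $\mathbb{E}[h(T_n)] \le \tfrac{1}{2}\binom{n}{2} + O(n^{3/2}\sqrt{\log n})$, which falls short of $1.73\,n^{3/2}$ by a parasitic $\sqrt{\log n}$ factor. Removing this factor is the technical core and I expect to be the hardest step: the plan is to iterate the bucketing at geometrically shrinking scales $k_\ell = 2^\ell$ in a Dudley-style chaining argument, writing $S_\preceq$ as a telescoping sum of cross-scale discrepancies and applying a Hoeffding tail at each level. The resulting entropy series is summable rather than logarithmic, yielding a clean $O(n^{3/2})$ bound. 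Careful bookkeeping of the constants across scales, following De La Vega's original analysis \cite{DLV83} (which refines Spencer's \cite{Spencer80} earlier argument), pins the implicit constant down to the advertised value $1.73$ (close to $\sqrt{3}$).
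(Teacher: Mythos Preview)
The paper does not prove this lemma: it is stated as a known result of De La Vega \cite{DLV83} (sharpening Spencer \cite{Spencer80}) and invoked as a black box in the proof of \cref{thm:majosloppy}. There is therefore no in-paper argument to compare your proposal against.

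As for the outline itself, Step~1 and the one-shot bucket argument of Step~2 are standard and correct, and they do yield $\mathbb{E}[h(T_n)] \le \tfrac{1}{2}\binom{n}{2} + O(n^{3/2}\sqrt{\log n})$. Your diagnosis of the $\sqrt{\log n}$ obstacle is accurate, and a multi-scale refinement is indeed the right cure. But your final paragraph is a plan rather than a proof: you defer both the chaining and the constant $1.73$ to ``following De La Vega's original analysis'', which is circular since the lemma \emph{is} De La Vega's result. If you want a self-contained argument you must actually carry out the dyadic decomposition and track the constants; be aware that generic Dudley-type chaining rarely produces sharp numerical constants, so recovering exactly $1.73$ will likely require the specific estimates in \cite{DLV83} rather than the scheme you have sketched.
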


\begin{proof}[\textbf{Proof of \cref{thm:majosloppy}}.]
Let $\leq_1, \ldots, \leq_m$ be a set of total orders on a vertex set $V$ of size $n$.
Let $D_{1/2}$ and $D_{1/2-\varepsilon}$ be the corresponding majority digraphs.
Note that when $m$ is even, $D_{1/2}$ is a semi-complete digraph (i.e. a tournament with possibly some additional arcs forming circuits of length 2 in case of a draw in the vote), but we will think of it as a tournament $T=(V,A)$ for simplicity (the argument is unchanged). We form the tri-tournament $T'=(V,A,R)$ where $R$ consists of all the arcs of $D_{1/2-\varepsilon}$ which are not arcs of $T$. By \cref{thm:HWtournaments}, we just have to show that the VC-dimension of $T'$ is at most $O\left(\frac{1}{\varepsilon^2}\right)$.

To do so, consider a shattered set $X \subseteq V$ of $T'$ of size $k$. There exists a set $S$ of $2^k$ vertices such that there is no red arc between $X$ and $S$ in $T'$ (in any direction), and such that for every $Y \subseteq X$, there exists a unique  $y \in S$ such that $Y = N_T^-[y] \cap X$ (the closed in-neighborhood of $y$ in $X$ in $T$).
Consider a multiset $Z$ of size $|X|$ obtained by picking uniformly random vertices in $S$, with repetition. Then, the edges between $X$ and $Z$ are oriented independently uniformly at random.
Consider now the tournament $R$ obtained by adding an arc between every pair of vertices inside of $X$, and between every pair of vertices inside of $Z$, each oriented uniformly at random independently of all others.
Then, $R$ is a uniformly random tournament on $2k$ vertices.

For a linear order $\preceq$ on $X \cup Z$, let $C_{\preceq}$ be the number of arcs of $R$ which are consistent with $\preceq$.
Let $C_{\preceq}^1$ be the number of all such arcs between $X$ and $Z$, and $C_{\preceq}^2$ be the set of all such arcs inside of $X$ and inside of $Z$.
Observe that $C_{\preceq} = C_{\preceq}^1 + C_{\preceq}^2$.
Let $s = \sum_{z \in Z}\left| N^-_{T}[z] \cap X\right|$. By definition of ${S}$ and since each element $z$ of $Z$ is chosen uniformly at random, for every $x\in X$, we have $x \in N^-_{T}[z]$ with probability $1/2$, and all these events are independent. 
Therefore, $\mathbb{P}\left[s \leq \frac{|X| \cdot |Z|}{2} \right] \geq 1/2$.
We show that if $s \leq \frac{|X| \cdot |Z|}{2}$ then there exists an order $\preceq$ such that $C_{\preceq}^1 \geq \left(\frac{1 + \varepsilon}{2} \right) \cdot |X| \cdot |Z|$.
Pick $i \in [m]$ uniformly at random and consider the order $\preceq \ \coloneqq \ \leq_i$.
Since every $z\in Z$ is in ${S}$, we have $X \cap N^-_{D_{1/2}}[z] = X \cap N^-_{D_{1/2-\varepsilon}}[z]$.
Thus, if $x \in N^-_{D_{1/2}}[z]$ then $\mathbb{P}[x \preceq z] \geq 1/2$ by definition of $D_{1/2}$.
On the other hand, if $x \notin N^-_{D_{1/2}}[z]$ then $x \notin N^-_{D_{1/2-\varepsilon}}[z]$ so $\mathbb{P}[x \preceq z] \leq 1/2-\varepsilon$ by definition of $D_{1/2-\varepsilon}$.
Thus, if $x \in N^+_{D_{1/2}}[z]$ then $\mathbb{P}[z \preceq x] \geq 1/2 + \varepsilon$.
Summing over all $(x, z) \in X \times Z$, we have 
\begin{align*}
    \mathbb{E}\left[C^1_{\preceq}\right] &\geq s/2 + (1/2+\varepsilon) \cdot (|X| \cdot |Z| - s) \\ &=(1/2+\varepsilon) \cdot |X| \cdot |Z|-s\varepsilon \\
    &\geq \left(\frac{1 + \varepsilon}{2}\right) \cdot |X| \cdot |Z|.
\end{align*}

Therefore there exists some $i \in [m]$ such that $C^1_{\leq_i} \geq (\frac{1 + \varepsilon}{2}) \cdot |X| \cdot |Z|$. 

There are $2 \cdot \binom{k}{2}$ arcs which are either inside of $X$ or inside of $Z$, and each of them is consistent with $\leq_i$ with probability $1/2$, independently of all others. Therefore, $\mathbb{P}\left[C_{\leq_i}^2 \geq \binom{k}{2}\right] \geq 1/2$.
Furthermore, the events that $C_{\leq_i}^2 \geq \binom{k}{2}$ and that $s \leq \frac{|X| \cdot |Z|}{2}$ are independent, with probability at least $1/2$ each.
Thus, both happen conjointly with probability at least $1/4$, in which case we have $C_{\leq_i} = C_{\leq_i}^1 + C_{\leq_i}^2 \geq \left(\frac{1 + \varepsilon}{2}\right) \cdot k^2 + \binom{k}{2}$.
We just showed that with probability at least $1/4$, there exists a linear order $\preceq$ on $X \cup Z$ such that $C_{\preceq} \geq \left(\frac{1 + \varepsilon}{2}\right) \cdot k^2 + \binom{k}{2}$.

By \cref{lem:delavega}, there exists an integer $N$ such that for every $n \geq N$, if $T_n$ is a uniformly random $n$-vertex tournament, $\mathbb{P}\left[h(T_n) \leq \frac{1}{2} \cdot \binom{n}{2} + 1.73n^{3/2}\right] < 1/4$.
Since $R$ is a uniformly random tournament on $2k$ vertices, either $2k \leq N$ or $\left(\frac{1 + \varepsilon}{2}\right) \cdot k^2 + \binom{k}{2} \leq \frac{1}{2} \cdot \binom{2k}{2} + 1.73 \cdot (2k)^{3/2}$.

In the second case, we obtain \begin{align*}
\frac{1+\varepsilon}{2} \cdot k^2 + \binom{k}{2} \leq \frac{1}{2} \cdot \binom{2k}{2} + 1.73 \cdot (2k)^{3/2} &\iff \frac{1+\varepsilon}{2} \cdot k^2 + \frac{k^2}{2} - \frac{k}{2} \leq \frac{2k(2k-1)}{4} +  1.73 \cdot (2k)^{3/2} \\
&\iff k^2 - \frac{k}{2} + \frac{\varepsilon k^2}{2} \leq k^2 - \frac{k}{2} + 1.73 \cdot (2k)^{3/2} \\
&\iff \sqrt{k} \leq \frac{1.73 \cdot 4\sqrt{2}}{\varepsilon}
\end{align*}

This concludes the proof that $k = O\left(\frac{1}{\varepsilon^2}\right)$.
\end{proof}

We do not know if this quadratic upper bound is optimal.

\begin{remark} We could have proved that the tri-tournament $T'$ has bounded VC-dimension in a more geometric fashion, using that $T'$ is ``gap-representable'' on the torus.
To see it, consider the $m$-dimensional torus $T^m = S^1 \times \ldots \times S^1$ and fix an embedding $\phi : V \to T^m$ such that for every coordinate $i \in [m]$, all the $\phi(v)_i$ belong to an interval of size $\varepsilon/2$, and are ordered along this interval according to $\leq_i$.
Then, if $u$ comes before $v$ in at least a $(1/2)$-fraction of the orders, the oriented distance between $\phi(u)$ and $\phi(v)$ is at most $1/2 + \varepsilon/4$, whereas if $u$ comes before $v$ in at most a $(1/2-\varepsilon)$-fraction of the orders, the oriented distance between $\phi(u)$ and $\phi(v)$ is at least $(1/2 + \varepsilon) \cdot (1 - \varepsilon/2) \geq 1/2 + \varepsilon/2$ if $\varepsilon$ is small enough.
Then, we could have used arguments similar to the proof of \cref{thm:VCdim-spherical} to show that tri-tournaments gap-representable on the torus have bounded VC-dimension.
\end{remark}

A closely related question was raised by Elkind, Lang and Saffidine \cite{ELS11,ELS15}: say that a set $X$ of applicants is \emph{$\alpha$-undominated} is for every applicant $y \notin X$, less than an $\alpha$-fraction of the referees rank $y$ before all $x \in X$. They gave examples where any $1/2$-undominated set has size at least $3$, and proved that if there are $n$ applicants, there is always a $1/2$-undominated set of size $\log(n)$. Recently, Charikar, Lassota, Ramakrishnan, Vetta and Wang \cite{CLRVW24} managed to reduce this bound to only $6$. 
\section*{Acknowledgements}
Many thanks: to Raphael Steiner for pointing out the Kierstead-Szemerédi-Trotter theorem  \cite{DBLP:journals/combinatorica/KiersteadST84}, to Yuval Widgerson for linking tri-hypergraphs and partial concept classes, to Noga Alon for suggesting the use of \cref{lem:delavega} (allowing us to shave a $\log$ factor from \cref{thm:majosloppy}), and for informing us of the independent proof of this result by Moses Charikar, Prasanna Ramakrishnan and Kangning Wang. We are very grateful to Alantha Newman for allowing us to include her proof of \cref{thm:k-arc-bdd}.
\bibliographystyle{plain}

\appendix

\section{The Euclidean Setting}\label{sec:euclidean}

In this section, we prove the Euclidean variants of DNL, which we restate below.

\DNLeuclidean*

\DNLEclusterintro*

To prove \cref{lem:DNLE-cluster-intro}, we actually prove the following stronger statement.

\begin{restatable}{lemma}{DNLEcluster}\label{lem:DNLE-cluster}
Let $V$ be a finite subset of $\mathbb{R}^N$. There is a partition of $V$ into $2^{\textup{poly}(\varepsilon^{-1}, \eta^{-1})}$ clusters such that if $u, v$ are in the same cluster, there are at most $\eta \cdot |V|$ points $w$ of $V$ such that $w \in B(u, 1) \setminus B(v, 1+ \varepsilon)$.
\end{restatable}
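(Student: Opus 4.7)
The plan is to mirror the strategy used for Hamming-trigraphs in the proof of \cref{lem:DNLH-cluster}, adapted to the Euclidean metric. First, I would associate to $V$ the Euclidean metric-trigraph $T=(V,E,R)$ with $E=\{uv: d(u,v)\leq 1\}$ and $R=\{uv: 1<d(u,v)\leq 1+\varepsilon\}$. I would then refine $T$ by splitting the red edges at the midpoint: $R_1=\{uv\in R:d(u,v)\leq 1+\varepsilon/2\}$ and $R_2=R\setminus R_1$, forming $T_1=(V,E,R_1)$ and $T_2=(V,E\cup R_1,R_2)$. Both are metric-trigraphs with sensitivity $\varepsilon/2$, and I let $\mD=T_1\setminus T_2$ be the refined tri-hypergraph of differences. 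A direct computation (as in \cref{fig:diff}) shows that the black part of the tri-edge associated to the pair $(u,v)$ is exactly $B(uv)=(B(u,1)\setminus B(v,1+\varepsilon))\cap V$, which is the quantity we need to control.

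The key combinatorial input is that $T_1$ and $T_2$ have VC-dimension $\textup{poly}(\varepsilon^{-1})$, independent of the ambient dimension $N$. I would obtain this by reducing to the spherical case: fix $C$ large enough compared to $\max_{v\in V}\|v\|$ and consider the map $\Phi:v\mapsto (v,C)/\sqrt{\|v\|^2+C^2}\in\mathbb{S}^N$. For $C$ sufficiently large, $\langle \Phi(u),\Phi(v)\rangle$ is a strictly decreasing affine function of $d(u,v)^2$ up to lower-order corrections, so the thresholds $1$ and $1+\varepsilon/2$ on $d$ translate into two thresholds on the spherical distance whose gap is $\Omega(\varepsilon)$ (after rescaling). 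This realizes $T_i$ as a spherical-trigraph with sensitivity $\Omega(\varepsilon)$, and \cref{thm:VCdim-spherical} gives the desired bound. Once both $T_1$ and $T_2$ have VC-dimension $\textup{poly}(\varepsilon^{-1})$, \cref{lem:VCdim-difference-tri-hyper} yields the same bound for $\mD$.

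With $\mD$ of bounded VC-dimension, I would apply \cref{thm:HW-trihyper} to obtain an $\eta$-net $X\subseteq V$ of $\mD$ of size $\textup{poly}(\varepsilon^{-1},\eta^{-1})$. I would then define the partition $\mathcal{C}=(C_1,\ldots,C_t)$ of $V$ by putting $u,v$ in the same cluster exactly when, for every $x\in X$, $d(u,x)\leq 1+\varepsilon/2$ iff $d(v,x)\leq 1+\varepsilon/2$. This gives at most $2^{|X|}=2^{\textup{poly}(\varepsilon^{-1},\eta^{-1})}$ clusters. For $u,v$ in a common cluster, any $x\in X\cap B(uv)$ would satisfy $d(u,x)\leq 1\leq 1+\varepsilon/2$ while $d(v,x)>1+\varepsilon>1+\varepsilon/2$, contradicting that $u$ and $v$ agree on the $(1+\varepsilon/2)$-profile of $x$. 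Hence $X\cap B(uv)=\emptyset$, and since $X$ is an $\eta$-net of $\mD$ this forces $|B(uv)|\leq \eta|V|$, which is precisely the conclusion of \cref{lem:DNLE-cluster}.

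The main obstacle I foresee is step two, namely proving that Euclidean metric-trigraphs have VC-dimension independent of $N$; the random-coordinate cut used for Hamming does not transfer directly, so the spherical embedding (or, equivalently, a random-hyperplane argument mimicking the proof of \cref{thm:VCdim-Hamming} with a random Gaussian direction replacing the random coordinate) is the load-bearing ingredient. Care is needed to ensure that the distance-to-inner-product correspondence loses at most a $(1\pm o(1))$ factor, which costs only a constant factor in the effective sensitivity and is harmless in the final $\textup{poly}(\varepsilon^{-1},\eta^{-1})$ bound.
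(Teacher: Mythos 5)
Your overall scheme (refined difference tri-hypergraph split at the midpoint $1+\varepsilon/2$, an $\eta$-net of it via \cref{thm:HW-trihyper}, and classification by the $(1+\varepsilon/2)$-profile on the net) is exactly the mechanism of \cref{lem:DNLH-cluster}, and that part of your argument is sound. The genuine gap is the step you yourself identify as load-bearing: the claim that $T_1$ and $T_2$, built on an \emph{arbitrary} finite $V\subseteq\mathbb{R}^N$, have VC-dimension $\textup{poly}(\varepsilon^{-1})$ via the lift $\Phi(v)=(v,C)/\sqrt{\|v\|^2+C^2}$ with $C\gg\max_v\|v\|$. For that lift one has $\langle\Phi(u),\Phi(v)\rangle\approx 1-d(u,v)^2/(2C^2)$, so the spherical distance is roughly $d(u,v)/C$ and the angular gap between the two thresholds is $\Theta(\varepsilon/C)$, not $\Omega(\varepsilon)$: there is no ``rescaling'' available on the unit sphere, whose diameter is fixed. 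Hence \cref{thm:VCdim-spherical} only gives VC-dimension $O(C^2/\varepsilon^2)$, and since $C$ must dominate $\max_v\|v\|$, this bound grows with the diameter of $V$, which the lemma does not allow you to control in terms of $\varepsilon,\eta$. (With bounded norms, say $\|v\|\leq 3$, the lift idea is fine up to constants; the problem is purely the unbounded point set.)

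The paper's proof resolves exactly this issue by localizing before reducing to Hamming: it takes a maximal family of centers $z_1,\dots,z_c$ at pairwise distance more than $2$ whose unit balls each contain at least $\eta|V|$ points (so $c\leq 1/\eta$), restricts to the charts $V_i=V\cap B(z_i,3)$ where norms are bounded after translation, applies a tailored randomized embedding (\cref{lem:euclidean-to-hamming}, Gaussian directions plus random thresholds, yielding Hamming sensitivity $\varepsilon^2$), invokes \cref{lem:DNLH-cluster} inside each chart, and takes the common refinement of the $\leq 1/\eta$ pre-clusterings; the verification uses that whenever $|B(u,1)|\geq\eta|V|$, the ball $B(u,1)$ lies entirely in some chart, and the case $|B(u,1)|<\eta|V|$ is trivial. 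Your proposal has no analogue of this localization, and as written the key VC-dimension claim is unsupported. A possible rescue along your lines would be to observe that shattered configurations in a Euclidean metric-trigraph are automatically local (a shattered set has a common plain neighbor, hence diameter at most $2$, and all shattering witnesses except the one for $\emptyset$ lie within distance $3$ of it), so the VC-dimension question reduces to bounded-norm configurations; but you would need to make that argument explicitly, and it is not the route you describe.
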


We start by recalling some well-known properties of the normal distribution, the last item is just a Chernoff bound.

\begin{lemma}\label{lem:properties-normal}
If $X \sim \mN\left(0, \frac{\pi}{2n}\right)$ then: \begin{itemize}
    \item $\mathbb{E}[|X|] = \frac{1}{\sqrt{n}}$, 
    \item $\mathbb{E}[X^2] = \frac{\pi}{2n}$, and
    \item For every $\beta > 0$, $\mathbb{P}[|X| \geq \beta] \leq 2\exp\left(-\frac{\beta^2 \cdot n}{\pi}\right)$.
\end{itemize}
\end{lemma}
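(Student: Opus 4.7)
\medskip\noindent\textbf{Plan.} The lemma collects three standard facts about a centered real Gaussian with variance $\sigma^2 = \pi/(2n)$, so the plan is simply to instantiate well-known formulas for the second moment, the first absolute moment, and the Gaussian tail, and then check that the constants come out exactly as stated. The variance has been chosen precisely so that $\mathbb{E}[|X|] = 1/\sqrt{n}$; this normalization is what will make the random projection argument underlying the Euclidean DNL mimic the Hamming case (each coordinate of the projection contributes, on average, distance $1/\sqrt{n}$, so the $\ell^1$-length of an $n$-coordinate projection concentrates around the Euclidean length).

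For the second bullet, the statement is immediate: by definition of the $\mN(0,\sigma^2)$ distribution, $\mathrm{Var}(X) = \sigma^2$, and since $\mathbb{E}[X] = 0$ this gives $\mathbb{E}[X^2] = \sigma^2 = \pi/(2n)$.

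For the first bullet, I would compute the absolute first moment directly. Writing $\phi$ for the density of $X$,
\[
\mathbb{E}[|X|] \;=\; 2\int_0^\infty x\,\phi(x)\,dx \;=\; \frac{2}{\sigma\sqrt{2\pi}}\int_0^\infty x\,\exp\!\left(-\frac{x^2}{2\sigma^2}\right)dx.
\]
The change of variable $u = x^2/(2\sigma^2)$ turns the remaining integral into $\sigma^2\int_0^\infty e^{-u}\,du = \sigma^2$, so $\mathbb{E}[|X|] = \sigma\sqrt{2/\pi}$. Substituting $\sigma = \sqrt{\pi/(2n)}$ gives $\mathbb{E}[|X|] = \sqrt{\pi/(2n)}\cdot\sqrt{2/\pi} = 1/\sqrt{n}$, as required.

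For the third bullet, I would use the standard Chernoff argument. For every $\lambda > 0$, Markov's inequality and the Gaussian MGF give
\[
\mathbb{P}[X \geq \beta] \;\leq\; e^{-\lambda\beta}\,\mathbb{E}[e^{\lambda X}] \;=\; \exp\!\left(-\lambda\beta + \tfrac{1}{2}\lambda^2\sigma^2\right).
\]
Optimizing at $\lambda = \beta/\sigma^2$ yields $\mathbb{P}[X \geq \beta] \leq \exp\!\left(-\beta^2/(2\sigma^2)\right)$, and by the symmetry of $X$ we get $\mathbb{P}[|X| \geq \beta] \leq 2\exp\!\left(-\beta^2/(2\sigma^2)\right)$. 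Plugging in $2\sigma^2 = \pi/n$ gives exactly the claimed bound $2\exp(-\beta^2 n/\pi)$. The main (and only) obstacle is just this bookkeeping of constants under the substitution $\sigma^2 = \pi/(2n)$; there is no combinatorial content in this lemma, and its role is purely as a technical bookkeeping tool for the Euclidean DNL proofs.
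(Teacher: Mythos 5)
Your proposal is correct: all three identities check out (the change of variables gives $\mathbb{E}[|X|]=\sigma\sqrt{2/\pi}=1/\sqrt{n}$, the second moment is just the variance, and the optimized Chernoff/Markov bound gives $2\exp(-\beta^2/(2\sigma^2))=2\exp(-\beta^2 n/\pi)$). The paper states this lemma without proof, remarking only that these are well-known properties and that the tail estimate is a Chernoff bound, so your derivation is exactly the intended standard argument.
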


The following result is the key ingredient for the proofs of \cref{lem:realdensecor,lem:DNLE-cluster}. Intuitively, given a set $V$ of points in $\mathbb{R}^n$ with small norm, it provides an embedding $\phi$ of $V$ into $\{0, 1\}^N$ so that vertices which are at distance at most $1$ in $\mathbb{R}^n$ will be significantly closer in this embedding than vertices which are at distance at least $1 + \varepsilon$ in $\mathbb{R}^n$. We will then be able to apply all the DNL tools on the corresponding Hamming-trigraph.

The idea of the proof is the following: we pick many uniformly random vectors $U^1, \ldots, U^N$ from the standard normal distribution in $\mathbb{R}^N$, and random thresholds $t^1, \ldots, t^N$ from some interval $I$. We set the $i$-th coordinate of $\phi(v)$ to $0$ if $\langle U^i, v\rangle \leq t^i$ and to $1$ otherwise.
Then, the probability that $\phi(v)$ and $\phi(w)$ disagree on the $i$-th coordinate is the probability that $t^i$ lies in the interval between $\langle U^i, v\rangle$ and $\langle U^i, w\rangle$, whose expected length is $\mathbb{E}[|\langle U^i, v-w\rangle|]$, which is proportional to $d(v,w)$.
Thus, intuitively the proportion of coordinates on which $\phi(v)$ and $\phi(w)$ will differ will be proportional to $d(v, w)$. The proof is mainly technical calculations to show that this intuition is indeed correct.

However, we run into many technical complications in the proof, mainly because we need to handle the cases where some of the scalar products $\langle U^i, v\rangle$ do not lie in the interval $I$.

\begin{lemma}\label{lem:euclidean-to-hamming}
Let $V$ be a finite set of $\mathbb{R}^n$ such that $\norm{v} \leq 3$ for every $v \in V$.
For every small enough $\varepsilon > 0$, there exists $N \in \mathbb{N}$ and an embedding $\phi : V \to \{0, 1\}^N$ such that whenever $v_1, v_2, w_1, w_2 \in V$ satisfy $d(v_1, w_1) \leq 1$ and $d(v_2, w_2) \geq 1 + \varepsilon$ then $d_H(\phi(v_2), \phi(w_2)) - d_H(\phi(v_1), \phi(w_1)) \geq \varepsilon^2 \cdot N$.
\end{lemma}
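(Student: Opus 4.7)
The plan is to sample a random embedding by hyperplane rounding and verify the separation with positive probability. Fix $M = 1/(4\varepsilon)$ and let $N$ be chosen later. For each $i \in [N]$ independently, sample $U^i \in \mathbb{R}^n$ with i.i.d.\ coordinates $\mN(0, \pi/2)$ together with a threshold $t^i$ uniform on $[-M, M]$, and set $\phi(v)_i = \mathds{1}[\langle U^i, v\rangle > t^i]$. Geometrically, the affine hyperplane $\{x : \langle U^i, x\rangle = t^i\}$ separates $v$ from $w$ in coordinate $i$ precisely when $t^i$ lies between $\langle U^i, v\rangle$ and $\langle U^i, w\rangle$.

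For fixed $v, w \in V$, set $p(v,w) = \mathbb{P}[\phi(v)_i \neq \phi(w)_i]$. Conditional on $U^i$, the probability that $t^i$ separates $\langle U^i, v\rangle$ from $\langle U^i, w\rangle$ equals the length of the intersection of the interval between them with $[-M, M]$, divided by $2M$. On the event that both $\langle U^i, v\rangle, \langle U^i, w\rangle$ lie in $[-M, M]$ this equals $|\langle U^i, v-w\rangle|/(2M)$; since $\langle U^i, v\rangle \sim \mN(0, \tfrac{\pi}{2}\norm{v}^2)$ with $\norm{v} \leq 3$, the complementary event $E$ satisfies $\mathbb{P}[E] \leq 4\exp(-M^2/(9\pi))$ by the Gaussian tail bound. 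Using the identity $\mathbb{E}[|Z|] = \sigma\sqrt{2/\pi}$ for $Z \sim \mN(0, \sigma^2)$, one computes $\mathbb{E}[|\langle U^i, v-w\rangle|] = \norm{v-w}$, and a short Cauchy--Schwarz truncation bound on $\mathbb{E}[|\langle U^i, v-w\rangle|\mathds{1}_E]$ then yields
$$p(v,w) = \frac{\norm{v-w}}{2M} \pm \eta, \qquad \text{with } \eta = O\!\left(\exp(-\Omega(1/\varepsilon^2))\right).$$
Hence, whenever $d(v_1, w_1) \leq 1$ and $d(v_2, w_2) \geq 1+\varepsilon$, we obtain $p(v_2, w_2) - p(v_1, w_1) \geq \varepsilon/(2M) - 2\eta = 2\varepsilon^2 - 2\eta$.

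The Hamming distance $d_H(\phi(v), \phi(w))$ is a sum of $N$ independent Bernoulli variables with mean $p(v,w)$, so Hoeffding's inequality gives $\mathbb{P}[|d_H(\phi(v), \phi(w)) - Np(v,w)| > \varepsilon^2 N/4] \leq 2\exp(-\varepsilon^4 N/8)$. Choosing $N = \Theta(\log|V|/\varepsilon^4)$ and union-bounding over the $|V|^2$ pairs, with positive probability every Hamming distance lies within $\varepsilon^2 N/4$ of its mean, whence
$$d_H(\phi(v_2), \phi(w_2)) - d_H(\phi(v_1), \phi(w_1)) \geq (2\varepsilon^2 - 2\eta)N - \varepsilon^2 N/2 \geq \varepsilon^2 N$$
holds for all admissible quadruples, and any deterministic realization of $(U^i, t^i)_{i \leq N}$ witnessing this event gives the desired $\phi$. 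The main difficulty is calibrating $M$: it must be large enough for $E$ to be negligible (so that $p(v,w)$ is genuinely linear in $\norm{v-w}$), yet small enough that the per-coordinate gap $\varepsilon/(2M)$ safely exceeds $\varepsilon^2$ after absorbing the Hoeffding slack. The choice $M = \Theta(1/\varepsilon)$ threads this needle: the tails are sub-Gaussian of order $\exp(-\Omega(1/\varepsilon^2))$, which is smaller than any polynomial in $\varepsilon$ and hence harmless.
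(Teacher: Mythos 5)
Your proposal is correct and follows essentially the same route as the paper's proof: Gaussian random directions with uniformly random thresholds on a bounded window, truncation error controlled by Gaussian tails and Cauchy--Schwarz, then Hoeffding plus a union bound over pairs. The only difference is the window width (your $M=\Theta(1/\varepsilon)$ versus the paper's $\Theta(\sqrt{\log(1/\varepsilon)})$ after rescaling), a parameter choice that makes the per-coordinate gap exactly $2\varepsilon^2$ and the final inequality immediate, where the paper instead verifies a slightly more delicate estimate.
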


\begin{proof}
Set $\alpha = \varepsilon^2/(4\pi)$ and $\beta = \sqrt{\pi \cdot \ln(2/\alpha)}$.
Let $N$ be large enough so that $\exp\left(-2 \cdot \varepsilon^4 \cdot N\right) < 1/|V|^2$.

Let $U^1, \ldots, U^N$ be uniformly random vectors from $\mathcal{N}\left(0, \frac{\pi}{2n} \cdot I_n\right)$ (where $I_n$ is the $n \times n$ identity matrix) and $t^1, \ldots, t^N$ be uniformly random values in $[-3\beta/\sqrt{n}, 3\beta/\sqrt{n}]$.
For every $i \in [N]$ and $v \in V$, set $\phi(v)_i = 0$ if $\langle U^i, v\rangle \leq t^i$ and $\phi(v)_i = 1$ if $\langle U^i, v\rangle > t^i$. This defines an embedding $\phi : V \to \{0, 1\}^N$. We now show that $\phi$ satisfies the desired properties with positive probability.

Let $v, w \in V$ and $i \in [N]$.
Then, $\phi(v)_i \neq \phi(w)_i$ if and only if $t^i$ lies in the interval between $\langle U^i, v\rangle$ and $\langle U^i, w\rangle$.
Let $A$ be the event that $|\langle U^i, w \rangle| \leq 3 \cdot \beta/\sqrt{n}$ and $|\langle U^i, w \rangle| \leq 3 \cdot \beta/\sqrt{n}$.
First, by rotational invariance of the normal distribution in $\mathbb{R}^n$, the variables $|\langle U^i, v \rangle|$ and $\norm{v} \cdot |(U^i)_1|$ have the same distribution.
Observe that since $U^i \sim \mN\left(0, \frac{\pi}{2n} \cdot I_n \right)$ then $(U^i)_1 \sim \mN\left(0, \frac{\pi}{2n}\right)$.
If $v = 0$ then $\mathbb{P}\left[|\langle U^i, v \rangle| \geq \frac{3 \cdot \beta}{\sqrt{n}}\right] = 0 \leq \alpha$, and if $v \neq 0$ then by \cref{lem:properties-normal}, using that $\norm{v} \leq 3$ and by definition of $\beta$, we have \begin{align*}
    \mathbb{P}\left[|\langle U^i, v \rangle| \geq \frac{3 \cdot \beta}{\sqrt{n}}\right] 
    &\leq \mathbb{P}\left[|\langle U^i, v \rangle| \geq \frac{\norm{v} \cdot \beta}{\sqrt{n}}\right] \\
    &= \mathbb{P}\left[\norm{v} \cdot |(U^i)_1| \geq \frac{\norm{v} \cdot \beta}{\sqrt{n}}\right] \\
    &= \mathbb{P}\left[|(U^i)_1| \geq \frac{\beta}{\sqrt{n}}\right] \\
    &\leq \alpha.
\end{align*} 
Thus, \begin{equation}\label{eq:probaA}
    \mathbb{P}[A] \geq 1 - 2\alpha.
\end{equation}

Consider $v, w \in V$ such that that $d(v, w) \leq 1$.
Let $Z$ be the random variable $|\langle U^i, v - w \rangle|$.
Then,
\begin{align*}
    \mathbb{P}[\phi(v)_i \neq \phi(w)_i] &= \mathbb{P}[A] \cdot \mathbb{P}[\phi(v)_i \neq \phi(w)_i~|~A] + \mathbb{P}[\overline{A}] \cdot \mathbb{P}[\phi(v)_i \neq \phi(w)_i \ | \ \overline{A}]\\
    &\leq \mathbb{P}[\phi(v)_i \neq \phi(w)_i \ | \ A] + \mathbb{P}[\overline{A}].
\end{align*}
However, since $t^i$ is a uniformly random vector in $[-3\beta/\sqrt{n}, 3\beta/\sqrt{n}]$, $$\mathbb{P}[\phi(v)_i \neq \phi(w)_i~|~A] = \frac{\mathbb{E}[|\langle U^i, v\rangle-\langle U^i, w\rangle|~|~A]|}{6\beta/\sqrt{n}} = \frac{\mathbb{E}[Z~|~A]}{6\beta/\sqrt{n}}.$$
Using \cref{eq:probaA}, we get $$\mathbb{E}[Z] = \mathbb{P}[A] \cdot \mathbb{E}[Z~|~A] + \mathbb{P}[\overline{A}] \cdot \mathbb{E}[Z~|~\overline{A}] \geq (1-2\alpha) \cdot \mathbb{E}[Z~|~A].$$
Thus, using \cref{lem:properties-normal} and that $\norm{v-w} \leq 1$, $$\mathbb{E}[Z~|~A] \leq \frac{\mathbb{E}[Z]}{1-2\alpha} = \frac{\norm{v-w} \cdot \mathbb{E}[|(U^i)_1|]}{1-2\alpha} \leq \frac{1}{(1-2\alpha)\sqrt{n}}.$$
Using \cref{eq:probaA}, we finally obtain 
\begin{equation}\label{eq:upperboundproba}
    \mathbb{P}[\phi(v)_i \neq \phi(w)_i] \leq \frac{1}{(1-2\alpha)6\beta} + 2\alpha.
\end{equation}

Consider now $v, w \in V$ such that that $d(v, w) \geq 1 + \varepsilon$.
Let $Z$ be the random variable $|\langle U^i, v - w \rangle|$.
Then,
\begin{align*}
    \mathbb{E}[Z] &= \mathbb{E}[Z \cdot \mathds{1}_A] + \mathbb{E}\left[Z \cdot \mathds{1}_{\overline{A}}\right] \\
    &= \mathbb{P}[A] \cdot \mathbb{E}[Z~|~A] + \mathbb{E}\left[Z \cdot \mathds{1}_{\overline{A}}\right] \\
    &\leq \mathbb{E}[Z~|~A] + \mathbb{E}\left[Z \cdot \mathds{1}_{\overline{A}}\right]
\end{align*}
Thus, $$\mathbb{E}[Z~|~A] \geq \mathbb{E}[Z] - \mathbb{E}\left[Z \cdot \mathds{1}_{\overline{A}}\right].$$
By the Cauchy-Schwarz inequality, using \cref{lem:properties-normal} and \cref{eq:probaA} we get
\begin{align*}
    \mathbb{E}\left[Z \cdot \mathds{1}_{\overline{A}}\right] &\leq \sqrt{\mathbb{E}[Z^2] \cdot \mathbb{E}\left[\left(\mathds{1}_{\overline{A}}\right)^2\right]} \\
    &= \norm{v-w}\sqrt{\mathbb{E}[|(U^i)_1|^2] \cdot \mathbb{P}\left[\overline{A}\right]} \\
    &\leq \norm{v-w} \cdot \sqrt{\frac{\pi \cdot 2\alpha}{2n}}.
\end{align*}
Thus, again using \cref{lem:properties-normal} and \cref{eq:probaA}, we have $$\mathbb{E}[Z~|~A] \geq \frac{\norm{v-w}}{\sqrt{n}} \cdot \left(1 - \sqrt{\pi\alpha}\right).$$
Note that 
\begin{align*}
    \mathbb{P}[\phi(v)_i \neq \phi(w)_i~|~A] &= \frac{\mathbb{E}[|\langle U^i, v\rangle-\langle U^i, w\rangle|~|~A]}{6\beta/\sqrt{n}} \\
    &= \frac{\mathbb{E}[Z~|~A]}{6\beta/\sqrt{n}} \\
    &\geq \frac{\norm{v-w}}{6\beta} \cdot \left(1 - \sqrt{\pi\alpha}\right).
\end{align*}
Then,
\begin{align*}
    \mathbb{P}[\phi(v)_i \neq \phi(w)_i] &= \mathbb{P}[A] \cdot \mathbb{P}[\phi(v)_i \neq \phi(w)_i~|~A] + \mathbb{P}[\overline{A}] \cdot \mathbb{P}[\phi(v)_i \neq \phi(w)_i \ | \ \overline{A}] \\
    &\geq \mathbb{P}[A] \cdot \mathbb{P}[\phi(v)_i \neq \phi(w)_i~|~A] \\
    &\geq (1-2\alpha) \cdot \frac{\norm{v-w}}{6\beta} \cdot \left(1 - \sqrt{\pi\alpha}\right).
\end{align*}

Using that $\norm{v-w} \geq 1 + \varepsilon$, we finally obtain \begin{equation}\label{eq:lowerboundproba}
    \mathbb{P}[\phi(v)_i \neq \phi(w)_i] \geq \frac{1+\varepsilon}{6\beta} \cdot \left(1 - \sqrt{\pi\alpha}\right) \cdot (1-2\alpha).
\end{equation}

We now show that \begin{equation}\label{eq:euclidean}
    \frac{1}{(1-2\alpha)6\beta} + 2\alpha + 3\varepsilon^2 \leq \frac{1+\varepsilon}{6\beta} \cdot \left(1 - \sqrt{\pi\alpha}\right) \cdot (1-2\alpha).
\end{equation}
Using the definition of $\alpha$, we get
\begin{align*}
    \frac{1+\varepsilon}{6\beta} \cdot \left(1 - \sqrt{\pi\alpha}\right) \cdot (1-2\alpha) &= \frac{(1+\varepsilon)(1-\varepsilon/2)}{6\beta} \cdot \left(1-\frac{\varepsilon^2}{2\pi}\right) \\
    &= \frac{(1 + \varepsilon/2-\varepsilon^2/2) \cdot (1-\varepsilon^2/(2\pi))}{6\beta} \\
    &= \frac{1 + \varepsilon/2 - \varepsilon^2/2 -\varepsilon^2/(2\pi) -\varepsilon^3/(4\pi) + \varepsilon^4/(4\pi)}{6\beta} \\
    &\geq \frac{1 + \varepsilon/2 - \varepsilon^2}{6\beta}.
\end{align*}
Using that $\frac{1}{1-x} \leq 1+2x$ for $0 \leq x \leq 1/2$, and the definition of $\alpha$, we also have $$\frac{1}{(1-2\alpha)6\beta} + 2\alpha \leq \frac{1+4\alpha}{6\beta} + 2\alpha \leq \frac{1+\varepsilon^2/\pi}{6\beta} + \frac{\varepsilon^2}{2\pi} \leq \frac{1+\varepsilon^2}{6\beta} + \varepsilon^2.$$
To prove \cref{eq:euclidean}, it then suffices to prove that $$\frac{1+\varepsilon^2}{6\beta} + \varepsilon^2 + 3\varepsilon^2 \leq \frac{1 + \varepsilon/2-\varepsilon^2}{6\beta}.$$
We have: \begin{align*}
    \frac{1+\varepsilon^2}{6\beta} + \varepsilon^2 + 3\varepsilon^2 \leq \frac{1 + \varepsilon/2-\varepsilon^2}{6\beta} &\iff 1 + \varepsilon^2 + 24\varepsilon^2\beta \leq 1 + \frac{\varepsilon}{2} - \varepsilon^2 \\
    &\iff 2\varepsilon + 24\varepsilon\beta \leq \frac{1}{2}.
\end{align*}

Since $\beta = \sqrt{\pi \cdot \ln(2/\alpha)} = \sqrt{\pi \cdot \ln(8\pi/\varepsilon^2)} = \sqrt{2\pi \cdot \ln(\sqrt{8\pi}/\varepsilon)}$ then $\varepsilon\beta \xrightarrow[\varepsilon \to 0]{} 0$ so for $\varepsilon$ small enough, \cref{eq:euclidean} indeed holds.

Let $v, w \in V$ and let $D$ be the random variable $d_H(\phi(v), \phi(w))$.
If $d(v, w) \leq 1$ then $D$ is a sum of $N$ independent Bernoulli random variables with success probability at most $\frac{1}{(1-2\alpha)6\beta} + 2\alpha$ by \cref{eq:upperboundproba}.
By Hoeffding's inequality, and using the definition of $N$, we get
\begin{align*}
    \mathbb{P}\left[D \geq \left(\frac{1}{(1-2\alpha)6\beta} + 2\alpha + \varepsilon^2\right) \cdot N\right]
    &\leq \mathbb{P}[D - \mathbb{E}[D] \geq \varepsilon^2 \cdot N] \\
    &\leq \exp\left(-\frac{2\varepsilon^4 \cdot N^2}{N}\right) \\
    &= \exp\left(-2\varepsilon^4 \cdot N\right) \\
    &< \frac{1}{|V|^2}.
\end{align*}

If $d(v, w) \geq 1+\varepsilon$ then $D$ is a sum of $N$ independent Bernoulli random variables with success probability at least $\frac{1+\varepsilon}{6\beta} \cdot \left(1 - \sqrt{\pi\alpha}\right) \cdot (1-2\alpha)$ by \cref{eq:lowerboundproba}.
By Hoeffding's inequality, and using the definition of $N$, we get 
\begin{align*}
    \mathbb{P}\left[D \leq \left(\frac{1+\varepsilon}{6\beta} \cdot \left(1 - \sqrt{\pi\alpha}\right) \cdot (1-2\alpha) - \varepsilon^2\right) \cdot N\right]
    &\leq \mathbb{P}[D - \mathbb{E}[D] \leq -\varepsilon^2 \cdot N] \\ &\leq \exp\left(-\frac{2\varepsilon^4 \cdot N^2}{N}\right) \\
    &= \exp\left(-2\varepsilon^4 \cdot N\right)\\
    &< \frac{1}{|V|^2}.
\end{align*}

By a union bound, the probability that $d_H(\phi(v), \phi(w)) \leq \left(\frac{1}{(1-2\alpha)6\beta} + 2\alpha + \varepsilon^2\right) \cdot N$ whenever $d(v, w) \leq 1$ and that $d_H(\phi(v), \phi(w)) \geq \left(\frac{1+\varepsilon}{6\beta} \cdot \left(1 - \sqrt{\pi\alpha}\right) \cdot (1-2\alpha) - \varepsilon^2\right) \cdot N$ whenever $d(v, w) \geq 1 + \varepsilon$, is positive so there exists $\phi$ which satisfies these conditions. 

Given such a $\phi$, consider $v_1, v_2, w_1, w_2 \in V$ such that $d(v_1, w_1) \leq 1$ and $d(v_2, w_2) \geq 1 + \varepsilon$. It then follows from \cref{eq:euclidean} that $d_H(\phi(v_2), \phi(w_2)) - d_H(\phi(v_1), \phi(w_1)) \geq \varepsilon^2$.
\end{proof}

We now have all the tools to prove \cref{lem:realdensecor,lem:DNLE-cluster}.

\begin{proof}[Proof of \cref{lem:realdensecor}.]
Let $\{z_1, \ldots, z_c\}$ be a maximal set of elements of $V$ so that $d(z_i, z_j) > 2$ for every $i \neq j \in [c]$.
Observe that the radius 1 balls around the $z_i$ are pairwise disjoint, and by assumption each of them contains at least $\delta|V|$ elements of $V$, which implies $c \leq 1/\delta$.
Fix $i \in [c]$ and let $V_i = \{v \in V : d(v, z_i) \leq 3\}$. Up to translating the elements of $V_i$, we can assume that $\norm{v} \leq 3$ for every $v \in V_i$. By \cref{lem:euclidean-to-hamming}, there exists $N \in \mathbb{N}$ and an embedding $\phi_i : V_i \to \{0, 1\}^N$ such that whenever $u_1, u_2, v_1, v_2 \in V_i$ satisfy $d(u_1, v_1) \leq 1$ and $d(u_2, v_2) \geq 1 + \varepsilon$ then $d_H(\phi_i(u_2), \phi_i(v_2)) - d_H(\phi_i(u_1), \phi_i(v_1)) \geq \varepsilon^2 \cdot N$.

Consider the trigraph $T_i = (V_i, E_i, R_i)$ with $E_i = \{uv : d(u, v) \leq 1\}$ and $R_i = \{uv : 1 < d(u, v) \leq 1 + \varepsilon\}$.
The embedding $\phi_i$ certifies that $T_i$ is a Hamming-trigraph with sensitivity $\varepsilon^2$.
By DNL, $T_i$ has a $\delta$-net $X_i$ of size $\text{poly}(\varepsilon^{-1}, \delta^{-1})$.

Let $X = X_1 \cup \ldots \cup X_c$. Then, $|X| = \text{poly}(\varepsilon^{-1}, \delta^{-1})$.
Consider any $v \in V$. By maximality of $\{z_1, \ldots, z_c\}$, there exists $i \in [c]$ such that $d(v, z_i) \leq 2$, and thus $B(v, 1)$, the ball of radius 1 centered in $v$, is entirely contained in $V_i$. Then, $B(v, 1) \subseteq N_{T_i}[v]$ so $\left|N_{T_i}[v]\right| \geq |B(v, 1)| \geq \delta|V| \geq \delta|V_i|$ so there exists $x \in X_i$ such that $x \in N_{T_i}[v] \cup R_{T_i}(v)$, i.e. $d(v, x) \leq 1 + \varepsilon$.
Then, for every $v \in V$, there exists $x \in X$ such that $d(v, x) \leq 1+\varepsilon$, so $V$ is covered by the $X$-balls with radius $1 + \varepsilon$.
\end{proof}

\begin{proof}[Proof of \cref{lem:DNLE-cluster}.]
Let $\{z_1, \ldots, z_c\}$ be a maximal set of elements of $V$ so that $d(z_i, z_j) > 2$ for every $i \neq j \in [c]$, and $|B(z_i, 1)| \geq \eta \cdot |V|$ for every $i \in [c]$.
Observe that the radius 1 balls around the $z_i$ are pairwise disjoint, and by assumption each of them contains at least $\eta \cdot |V|$ elements of $V$, which implies $c \leq 1/\eta$.
Fix $i \in [c]$ and let $V_i = \{v \in V : d(v, z_i) \leq 3\}$. Up to translating the elements of $V_i$, we can assume that $\norm{v} \leq 3$ for every $v \in V_i$. By \cref{lem:euclidean-to-hamming}, there exists $N_i \in \mathbb{N}$ and an embedding $\phi_i : V_i \to \{0, 1\}^{N_i}$ such that whenever $u_1, u_2, v_1, v_2 \in V_i$ satisfy $d(u_1, v_1) \leq 1$ and $d(u_2, v_2) \geq 1 + \varepsilon$ then $d_H(\phi_i(u_2), \phi_i(v_2)) - d_H(\phi_i(u_1), \phi_i(v_1)) \geq \varepsilon^2 \cdot N_i$.

Consider the trigraph $T_i = (V_i, E_i, R_i)$ with $E_i = \{uv : d(u, v) \leq 1\}$ and $R_i = \{uv : 1 < d(u, v) \leq 1 + \varepsilon\}$.
The embedding $\phi_i$ certifies that $T_i$ is a Hamming-trigraph with sensitivity $\varepsilon^2$, for some threshold $c_i - \varepsilon^2$.
By \cref{lem:DNLH-cluster}, $V_i$ can be partitioned into $2^{\textup{poly}(\varepsilon^{-1}, \eta^{-1})}$ pre-clusters such that if $u, v \in V_i$ are in the same pre-cluster, there are at most $\eta \cdot |V_i|$ points $w$ of $V_i$ such that $d_H(\phi_i(v), \phi_i(w)) \geq c_i \cdot N_i$ and $d_H(\phi_i(u), \phi_i(w)) \leq (c_i - \varepsilon^2) \cdot N_i$.

We then partition $V$ into clusters such that $u, v$ are in the same cluster if and only for every $i \in [c]$, either both $u$ and $v$ are not in $V_i$, or they are both in $V_i$ and they belong to the same pre-cluster for $T_i$.
Observe that the number of clusters is $\left(2^{\textup{poly}(\varepsilon^{-1}, \eta^{-1})}\right)^{c} = 2^{\textup{poly}(\varepsilon^{-1}, \eta^{-1})}$.

Consider now $u, v \in V$ which belong to the same cluster. If $|B(u, 1)| < \eta \cdot |V|$ then $|B(u, 1) \setminus B(v, 1 + \varepsilon)| \leq \eta \cdot |V|$ and we are done. Assume now that $|B(u, 1)| \geq \eta \cdot |V|$. By maximality of $\{z_1, \ldots, z_c\}$, there exists $i \in [c]$ such that $d(u, z_i) \leq 2$, and thus $B(u, 1)$, the ball of radius 1 centered in $u$, is entirely contained in $V_i$.
Since $u$ and $v$ are in the same cluster then $v \in V_i$ and $u$ and $v$ are in the same pre-cluster for $T_i$.
Consider $w \in V$ such that $w \in B(u, 1) \setminus B(v, 1 + \varepsilon)$.
Then, $w \in V_i$ and $uw \in E_i$ so $d_H(\phi_i(u), \phi_i(w)) \leq (c_i - \varepsilon^2) \cdot N_i$. 
Furthermore, $vw \notin E_i \cup R_i$ so $d_H(\phi_i(v), \phi_i(w)) \geq c_i \cdot N_i$.
By definition of the pre-clusters in $T_i$, there are at most $\eta \cdot |V_i| \leq \eta \cdot |V|$ such $w \in V$.
\end{proof}

\section{Proofs of \texorpdfstring{\cref{sec:VCdim-trigraphs}}{Section 4.1}} \label{sec:proofs-VCdim-trigraphs}

\VCdimspherical*

\begin{proof}
Let $T = (V, E, R)$ be a spherical-trigraph with sensitivity $\varepsilon$. 
There exists an integer $N$ so that $V \subseteq \mathbb{S}^{N-1} \subseteq \mathbb{R}^N$, and a threshold $\tau$ such that if $uv \in E$ then $d_S(u, v) \leq \tau$, and if $uv \notin E \cup R$ then $d_S(u, v) \geq \tau + \varepsilon$.

Let $m$ be such that $\exp\left(-\frac{\varepsilon^2 \cdot m}{8 \cdot \pi^2}\right) < \frac{1}{|V|^2}$.
Pick $m$ uniformly random points $p_1, \ldots, p_m \in \mathbb{S}^{N-1}$.
Define $\phi : V \to \{0, 1\}^m$ by setting $\phi(v)_i = 0$ if $\langle v, p_i\rangle \leq 0$ and $\phi(v)_i = 1$ if $\langle v, p_i\rangle > 0$.

For $u, v \in V$ and $i \in [m]$, $\phi(u)_i \neq \phi(v)_i$ if and only if the hyperplane orthogonal to $p_i$ separates $u$ and $v$. This happens with probability $d_S(u, v)/\pi$, and these events are pairwise independent.
Let $D$ be the random variable $d_H(\phi(u), \phi(v))$.
Thus, $D$ is the sum of $m$ independent Bernoulli random variables with success probability $d_S(u, v)/\pi$.

Suppose first $d_S(u, v) \leq \tau$.
Then, by Hoeffding's inequality, and using the definition of $m$, we get \begin{align*}
    \mathbb{P}\left[D \geq \frac{\tau + \varepsilon/4}{\pi} \cdot m\right] &\leq \mathbb{P}\left[D - \mathbb{E}[D] \geq \frac{\varepsilon}{4\pi} \cdot m\right] \\
    &\leq \exp\left(-\frac{2 \cdot \varepsilon^2 \cdot m^2}{16\pi^2 \cdot m}\right) \\
    & \exp\left(-\frac{\varepsilon^2 \cdot m}{8 \cdot \pi^2}\right) \\
    &< \frac{1}{|V|^2}.
\end{align*}
Suppose now that $d_S(u, v) \geq \tau + \varepsilon$.
Again, Hoeffding's inequality gives \begin{align*}
    \mathbb{P}\left[D \leq \frac{\tau + 3\varepsilon/4}{\pi} \cdot m\right] &\leq \mathbb{P}\left[D - \mathbb{E}[D] \leq -\frac{\varepsilon}{4\pi} \cdot m\right] \\
    &\leq \exp\left(-\frac{2 \cdot \varepsilon^2 \cdot m^2}{16\pi^2 \cdot m}\right) \\
    & \exp\left(-\frac{\varepsilon^2 \cdot m}{8 \cdot \pi^2}\right) \\
    &< \frac{1}{|V|^2}.
\end{align*}

By a union bound, with positive probability we have $d_H(\phi(u), \phi(v)) \leq \frac{\tau + \varepsilon/4}{\pi} \cdot m$ whenever $d(u, v) \leq~\tau$, and $d_H(\phi(u), \phi(v)) \geq \frac{\tau + 3\varepsilon/4}{\pi} \cdot m$ whenever $d(u, v) \geq \tau + \varepsilon$.
This proves that $G$ is a $\frac{\varepsilon}{2\pi}$-Hamming-trigraph.

The last assertion then follows immediately from \cref{thm:VCdim-Hamming}.
\end{proof}

\Hamissph*

\begin{proof}
Let $T = (V, E, R)$ be a Hamming-trigraph with sensitivity $\varepsilon$. Let $N$ be such that $V \subseteq \{0, 1\}^N$, and $\tau$ be such that if $uv \in E$ then $d_H(u, v) \leq \tau \cdot N$ and if $uv \notin E \cup R$ then $d_H(u, v) \geq (\tau + \varepsilon) \cdot N$.
For every $v \in V$, let $\phi(v) \in \mathbb{S}^{N-1}$ be defined by $\phi(v)_i = \frac{1}{\sqrt{N}}$ if $v_i = 1$ and $\phi(v)_i = \frac{-1}{\sqrt{N}}$ if $v_i = 0$.
For every $u, v \in V$, we have $\langle \phi(u), \phi(v)\rangle = d_H(u, v) \cdot \frac{-1}{N} + (N - d_H(u, v)) \cdot \frac{1}{N} = 1 - \frac{2 \cdot d_H(u, v)}{N}$, and thus $d_S(\phi(u), \phi(v)) = \arccos\left(1 - \frac{2 \cdot d_H(u, v)}{N}\right)$.
Let $u_1, u_2, v_1, v_2 \in V$ be such that $u_1v_1 \in E$ and $u_2v_2 \notin E \cup R$. Then, $d_H(u_1, v_1) \leq \tau \cdot N$ and $d_H(u_2, v_2) \geq (\tau + \varepsilon) \cdot N$.
Using that if $-1 \leq a \leq b \leq 1$ then $\arccos(a) - \arccos(b) \geq b-a$, we get 
\belowdisplayskip=-12pt\begin{align*}
    d_S(\phi(u_2), \phi(v_2)) - d_S(\phi(u_1), \phi(v_1)) &= \arccos\left(1 - \frac{2 \cdot d_H(u_2, v_2)}{N}\right) - \arccos\left(1 - \frac{2 \cdot d_H(u_1, v_1)}{N}\right) \\
    &\geq 1 - \frac{2 \cdot d_H(u_1, v_1)}{N} - \left(1 - \frac{2 \cdot d_H(u_2, v_2)}{N}\right) \\
    &\geq \frac{2 \cdot d_H(u_2, v_2) - 2 \cdot d_H(u_1, v_1)}{N} \\
    &\geq \frac{2 \cdot (\tau + \varepsilon) \cdot N - 2 \cdot \tau \cdot N}{N} \\
    &= 2\varepsilon. 
\end{align*}
\end{proof}

\section{Proofs of \texorpdfstring{\cref{sec:VCdim-sets}}{Section 4.4}} \label{sec:proofs-VCdim-sets}

\VCdimsetintertuple*

\begin{proof}
Write $\mI \cap \mI = (\mT, \mE)$ with $\mE = \{(B(vw), R(vw), W(vw)): v, w \in V\}$.
Observe that if $(x^1, \ldots, x^t) \in B(vw)$ then for every $i \in [t]$ we have $\mF_{vx^i} = \emptyset$ and $\mF_{wx^i} = \emptyset$. On the other hand, if $(x^1, \ldots, x^t)\in W(vw)$ then there exists $i \in [t]$ such that either $|\mF_{vx^i}| \geq \varepsilon |\mF|$ or $|\mF_{wx^i}| \geq \varepsilon |\mF|$.

Let $X \subseteq \mT$ be a shattered set: for every $Y \subseteq X$ there exists $p_Y = (v_Y, w_Y) \in V^2$ such that $B(v_Yw_Y) \cap X = (B(v_Yw_Y) \cup R(v_Yw_Y)) \cap X = Y$.
Let $P = \{p_Y: Y \subseteq X\}$, and consider the bipartite graph $G$ on vertex set $X \cup P$ where $xp_Y$ is an edge if $x \in W(p_Y)$ (or equivalently if $x \notin B(p_Y)$).
Since $P$ shatters $X$, the $2^{|X|}$ vertices in $P$ have all possible adjacencies with respect to $X$.
Thus, for any subset $X' \subseteq X$, the number of common neighbors in $P$ of the vertices of $X'$ is $2^{|X| - |X'|}$, and therefore there is no complete bipartite subgraph of $G$ with more than $2^{|X|-1}$ edges.

We show that $G$ has a complete bipartite subgraph with at least $\varepsilon \cdot |X| \cdot 2^{|X|-1}$ edges.
Consider the following experiment : take a uniformly random set $S \in \mF$ and let $Z$ be the random variable which counts the number of pairs $(x, p_Y) \in X \times P$ with $x = (x^1, \ldots, x^t)$ and $p_Y = (v_Y, w_Y)$ such that there exists $i \in [t]$ such that $S \in \mF_{v_Yx^i} \cup \mF_{w_Yx^i}$.
For every edge $xp_Y$ in $\Gamma$ with $x = (x^1, \ldots, x^t)$ and $p_Y = (v_Y, w_Y)$, we have that $x \in W(p_Y)$, which means that there exists $i \in [t]$ such that either $|\mF_{v_Yx^i}| \geq \varepsilon |\mF|$ or $|\mF_{w_Yx^i}| \geq \varepsilon |\mF|$, so the probability that $S \in \mF_{v_Yx^i} \cup \mF_{w_Yx^i}$ is at least $\varepsilon$.
Therefore, $\mathbb{E}[Z] \geq \varepsilon \cdot e(X, P) = \varepsilon \cdot |X| \cdot 2^{|X|-1}$, so there exists $S \in \mF$ such that $Z \geq \varepsilon \cdot |X| \cdot 2^{|X|-1}$.
Then, let $X'  = \{(x^1, \ldots, x^t) \in X : S \cap \{x^1, \ldots, x^t\} \neq \emptyset\}$ and $P' = \{(v_Y, w_Y) \in P : S \cap \{v_Y, w_Y\} \neq \emptyset\}$.
For every $x = (x^1, \ldots, x^t) \in X'$ and $p_Y = (v_Y, w_Y) \in P'$, there exists $i \in [t]$ such that $x^i \in S$, so $\mF_{v_Yx^i} \cup \mF_{w_Yx^i} \neq \emptyset$ so $x \notin B(p_Y)$, i.e. $xp_Y \in E(\Gamma)$.
Since $|X'| \cdot |P'| = Z \geq \varepsilon \cdot |X| \cdot 2^{|X| - 1}$ then $G$ has a complete bipartite subgraph with at least $\varepsilon \cdot |X| \cdot 2^{|X|-1}$ edges.

Putting together the upper bound and the lower bound, we get $\varepsilon \cdot |X| \cdot 2^{|X|-1} \leq 2^{|X|-1}$ so $|X| \leq \frac{1}{\varepsilon}$.
\end{proof}

\end{document}